\documentclass[11pt]{article}

\usepackage{geometry}
\usepackage[T1]{fontenc}
\usepackage{graphicx}
\usepackage[linesnumbered,boxed,ruled,vlined,algo2e,resetcount,algosection]{algorithm2e}

\usepackage{xspace}
\usepackage[utf8]{inputenc}
\usepackage{amsmath, amssymb, amsfonts, amsthm,mathtools}
\usepackage{booktabs}
\usepackage{enumerate}
\usepackage{xcolor}
\usepackage[colorlinks,linkcolor=magenta,citecolor=blue,bookmarks,bookmarksopen,bookmarksnumbered]{hyperref}
\usepackage{bbm}
\usepackage{ebproof}
\usepackage{paralist}
\usepackage{thmtools, thm-restate}
\usepackage{rotating}
\usepackage{mdframed}
\usepackage{multirow}
\usepackage{fancyvrb}
\usepackage{mathrsfs}
\usepackage{indentfirst}
\usepackage{mleftright}
\usepackage[nottoc]{tocbibind}%
\usepackage[langfont=roman]{complexity}
\usepackage{tablefootnote}
\usepackage{relsize}
\usepackage{exscale}
\usepackage{tikz}
\usetikzlibrary{positioning}
\usepackage{afterpage}

\usepackage{enumitem}

\usepackage{soul}

\usepackage[backend=biber, style=alphabetic, backref=true, maxbibnames=99, maxalphanames=4, minalphanames=3]{biblatex}

\newbibmacro{string+doiurl}[1]{%
  \iffieldundef{doi}{%
    \iffieldundef{url}{%
      #1%
    }{%
      \href{\thefield{url}}{#1}%
    }%
  }{%
    \href{https://doi.org/\thefield{doi}}{#1}%
  }%
}

\DeclareFieldFormat{title}{\usebibmacro{string+doiurl}{\mkbibemph{#1}}}
\DeclareFieldFormat[article,incollection,inproceedings,thesis,book]{title}{\usebibmacro{string+doiurl}{\mkbibquote{#1}}}

\DeclareFieldFormat
  [article,inbook,incollection,inproceedings,patent,thesis,unpublished]
  {titlecase}{%
    \ifcurrentfield{title}{\MakeSentenceCase*{#1}}{%
      \ifcurrentfield{subtitle}{\MakeSentenceCase*{#1}}{#1}%
    }%
  }

\usepackage[capitalize]{cleveref}%

\usepackage{centernot}

\newtheorem{theorem}{Theorem}[section]
\newtheorem{lemma}[theorem]{Lemma}
\newtheorem{corollary}[theorem]{Corollary}
\newtheorem{claim}[theorem]{Claim}

\newtheorem{observation}[theorem]{Observation}
\newtheorem{fact}[theorem]{Fact}
\newtheorem{question}[theorem]{Question}

\newtheorem{proposition}[theorem]{Proposition}

\declaretheoremstyle[
spaceabove=\topsep, spacebelow=\topsep,
headfont=\normalfont\bfseries,
notefont=\bfseries, notebraces={}{},
bodyfont=\normalfont\itshape,
postheadspace=0.5em,
name={\ignorespaces},
numbered=no,
headpunct=.]
{mystyle}

\theoremstyle{definition}
\newtheorem{definition}[theorem]{Definition}

\newtheorem{example}[theorem]{Example}

\newtheorem{assumption}[theorem]{Assumption}

\theoremstyle{remark}
\newtheorem{remark}[theorem]{Remark}

\makeatletter
\def\moverlay{\mathpalette\mov@rlay}
\def\mov@rlay#1#2{\leavevmode\vtop{%
		\baselineskip\z@skip \lineskiplimit-\maxdimen
		\ialign{\hfil$\m@th#1##$\hfil\cr#2\crcr}}}
\newcommand{\charfusion}[3][\mathord]{
	#1{\ifx#1\mathop\vphantom{#2}\fi
		\mathpalette\mov@rlay{#2\cr#3}
	}
	\ifx#1\mathop\expandafter\displaylimits\fi}
\makeatother



\DeclareMathOperator{\GF}{GF}

\newcommand{\Ex}{\operatornamewithlimits{\mathbb{E}}}
\renewcommand{\poly}{\mathrm{poly}}

\renewcommand{\polylog}{\mathrm{polylog}}

\newcommand{\eps}{\varepsilon}

\newlang{\MCSP}{MCSP}
\newlang{\MOCSP}{MOCSP}
\newlang{\MFSP}{MFSP}
\newlang{\MKtP}{MKtP}
\newlang{\MKTP}{MKTP}
\newlang{\itrMCSP}{itrMCSP}
\newlang{\itrMKTP}{itrMKTP}
\newlang{\itrMINKT}{itrMINKT}
\newlang{\MINKT}{MINKT}
\newlang{\MINK}{MINK}
\newlang{\MINcKT}{MINcKT}
\newlang{\CMD}{CMD}
\newlang{\DCMD}{DCMD}
\newlang{\CGL}{CGL}
\newlang{\PARITY}{PARITY}
\renewlang{\Gap}{Gap}
\newlang{\Avoid}{\textnormal{\textsc{Avoid}}}
\newlang{\RemotePoint}{\textnormal{\textsc{Remote-Point}}}
\newlang{\LossyCode}{\textsc{Lossy-Code}}
\newlang{\MissingString}{\textsc{Missing-String}}
\newlang{\SinkOfDAG}{\textsc{Sink-Of-DAG}}
\newlang{\Iter}{\textsc{Iter}}
\newlang{\Palindromes}{\textsc{Palindromes}}
\newlang{\Sparsification}{\textsc{Sparsification}}
\newlang{\HamEst}{\mathsf{HammingEst}}
\newlang{\HamHit}{\mathsf{HammingHit}}
\newlang{\CktEval}{\textsc{Circuit-Eval}}
\newlang{\Hard}{\textsc{Hard}}
\newlang{\cHard}{\textsc{cHard}}
\newlang{\Lin}{\textsc{Lin}}
\newlang{\CAPP}{CAPP}
\newlang{\GapUNSAT}{GapUNSAT}
\newlang{\OV}{OV}
\newlang{\PRIMES}{PRIMES}
\renewlang{\PCP}{PCP}
\newlang{\PCPP}{PCPP}
\newclass{\FMA}{FMA}
\newclass{\Avg}{Avg}
\newclass{\ZPEXP}{ZPEXP}
\newclass{\DLOGTIME}{DLOGTIME}
\newclass{\ALOGTIME}{ALOGTIME}
\newclass{\ATIME}{ATIME}%
\newclass{\SZKA}{SZKA}
\newclass{\Laconic}{Laconic\text{-}}
\newclass{\APEPP}{APEPP}
\newclass{\SAPEPP}{SAPEPP}
\newclass{\TFSigma}{TF\Sigma}
\newclass{\NTIMEGUESS}{NTIMEGUESS}
\newclass{\FZPP}{FZPP}
\newclass{\SearchNP}{SearchNP}
\newclass{\SearchAM}{SearchAM}
\newclass{\UEoPL}{UEoPL}
\newclass{\EoPL}{EoPL}
\newclass{\SoPL}{SoPL}
\newclass{\CLS}{CLS}
\newclass{\PWPP}{PWPP}

\newlang{\Formula}{Formula}
\newlang{\THR}{THR}
\newcommand{\DNF}{\mathsf{DNF}}

\newlang{\ETHR}{ETHR}
\newlang{\Midbit}{Midbit}
\newlang{\LCS}{LCS}
\newlang{\TAUT}{TAUT}
\newcommand{\PartialHard}{\textnormal{\textsc{Partial-Hard}}}
\newcommand{\PartialAvgHard}{\textnormal{\textsc{Partial-AvgHard}}}

\newcommand{\Proj}{\mathrm{Proj}}

\newcommand{\AND}{\mathsf{AND}}

\newcommand{\XOR}{\mathsf{XOR}}

\newcommand{\Supp}{\mathrm{Supp}}

\newcommand{\calA}{\mathcal{A}}
\newcommand{\calB}{\mathcal{B}}

\newcommand{\calD}{\mathcal{D}}

\newcommand{\calF}{\mathcal{F}}
\newcommand{\calG}{\mathcal{G}}
\newcommand{\calH}{\mathcal{H}}

\newcommand{\calP}{\mathcal{P}}

\newcommand{\calR}{\mathcal{R}}

\newcommand{\calX}{\mathcal{X}}

\newcommand{\calZ}{\mathcal{Z}}

\newcommand{\N}{\mathbb{N}}

\newcommand{\F}{\mathbb{F}}

\newcommand{\ckt}{\mathscr{C}}

\newcommand{\Eval}{\mathsf{Eval}}%

\newcommand{\Range}{\mathrm{Range}}

\newcommand{\Adv}{\mathsf{Adv}}

\newcommand{\TT}{\mathsf{TT}}

\newcommand{\Krajicek}{Kraj\'{\i}\v{c}ek\xspace}

\newcommand{\Prover}{\mathsf{Prover}}
\newcommand{\Verifier}{\mathsf{Verifier}}

\newcommand{\val}{\mathrm{val}}

\newcommand{\cbra}[1]{\mleft\{ #1 \mright\}}
\newcommand{\sbra}[1]{\mleft[ #1 \mright]}
\newcommand{\pbra}[1]{\mleft( #1 \mright)}
\newcommand{\abra}[1]{\mleft< #1 \mright>}

\usepackage[inline,marginclue,index]{fixme}  %
\fxsetup{theme=color,mode=multiuser}

\definecolor{color1}{RGB}{46,134,193}
\FXRegisterAuthor{hanlin}{ahanlin}{\colorbox{color1}{\color{white}Hanlin}}
\newcommand{\hanlin}[1]{{\hanlinnote{#1}}}

\FXRegisterAuthor{yan}{ayan}{\colorbox{cyan}{\color{white}Yan}}
\newcommand{\yan}[1]{{\yannote{#1}}}

\FXRegisterAuthor{xin}{axin}{\colorbox{magenta}{\color{white}Xin}}

\FXRegisterAuthor{afterddl}{aafterddl}{\colorbox{blue}{\color{white}After Deadline}}

\usepackage{aliascnt} %

\bibliography{ref}
\def\Anonymity{0}%

\begin{document}

\newgeometry{margin=0.85in}
\title{Many Proof Complexity Generators Inside One Demi-Bits Generator}
\ifnum\Anonymity=0
\author{
    Xin Li\\ \small{Johns Hopkins University}\\ \small{\texttt{\href{mailto:lixints@cs.jhu.edu}{lixints@cs.jhu.edu}}}
    \and
	Hanlin Ren\footnote{Hanlin Ren is supported by the Massive Dynamics Member Fund at the Institute for Advanced Study.}\\ \small{Institute for Advanced Study} \\ \small{\texttt{\href{mailto:h4n1in.r3n@gmail.com}{h4n1in.r3n@gmail.com}}}
	\and
    Yan Zhong\\ \small{Johns Hopkins University} \\ \small{\texttt{\href{mailto:yanzhong.cs@gmail.com}{yanzhong.cs@gmail.com}}}
}
\fi

\pagenumbering{gobble}

\maketitle
\begin{abstract}
    For a propositional proof system $\mathcal{P}$ and a polynomial-time function $G: \{0, 1\}^n \to \{0, 1\}^N$ ($N > 10n$), we say that $G$ is a \emph{proof complexity generator} against $\mathcal{P}$ if $\mathcal{P}$ cannot efficiently prove the (suitably encoded) statement ``$y\not\in\mathrm{Range}(G)$'' for every $y \in \{0, 1\}^N$, and $G$ is a \emph{demi-bits generator} against $\mathcal{P}$ if $\mathcal{P}$ cannot efficiently prove the statement ``$y \not\in\mathrm{Range}(G)$'' for a noticeable fraction of $y \in \{0, 1\}^N$. As can be seen from the definitions, proof complexity generators are qualitatively stronger objects than demi-bits generators.

    Our main result is that, perhaps counter-intuitively, every demi-bits generator ``contains'' exponentially many proof complexity generators. In fact, a random subset of output bits of a demi-bits generator forms a proof complexity generator with constant probability. This result is an extremely simple corollary of Pajor's Lemma (a strengthening of the well-known Sauer--Shelah Lemma).

    This result allows us to exhibit the hardness of the Range Avoidance problem ($\Avoid$) in several new, restricted settings of interest:
    \begin{itemize}
        \item We show that demi-bits generators computable in $\NC^0$ imply the hardness of $\NC^0$-$\Avoid$ up to constant factors in the stretch, demonstrating a barrier to further improvements on the recent progress on this problem (Korten--Pitassi--Impagliazzo, FOCS'25; Guruswami--Lyu--Yuan, SODA'26).
        \item Given a linear space $V\subseteq \GF(2)^n$ of dimension $k$, the $\XOR$-$\RemotePoint$ problem asks to find a vector far from $V$ (Alon--Panigrahy--Yekhanin, RANDOM'09). Assuming a demi-hardness version of LPN (Learning Parity with Noise), we show that $\XOR$-$\RemotePoint$ cannot be solved by efficient nondeterministic algorithms.
        \item An intriguing challenge in circuit complexity is to build a partial Boolean function on a given domain that has high circuit complexity (Arvind--Srinivasan, ICS'10; Chen--Huang--Li--Ren, STOC'23). Even for hardness against \emph{polynomial-size DNFs}, no efficient algorithm is known for this task. We show that under a version of the random $k$-SAT Hypothesis against $\AM$ algorithms, such hard functions cannot be constructed in nondeterministic polynomial time.
    \end{itemize}
    
    Along the way, we introduce the notion of \emph{zero-error disperser families} that can transform demi-bits generators into proof complexity generators, and show that this family can be computed by projections (i.e., without any circuit complexity overhead). Using a different instantiation of this family, we construct proof complexity generators from \emph{barely non-trivial} demi-bits generators $G: \{0, 1\}^n \to \{0, 1\}^N$, where the number of hard-to-prove statements of the form ``$y \not \in \Range(G)$'' just slightly exceeds $2^n$ (which is the number of \emph{false} statements of this form).
\end{abstract}

\hanlin{I will go through the paper in the next weeks. I'll \emph{italicize} everything that I think are good: \emph{Abstract}, \emph{Section 1}, \emph{Section 2}, \emph{Section 3}, \emph{Section 4}, \emph{Section 5} (up to the comments in it), \emph{Section 6} (I left a comment at the end of sec 6.1; corollary 6.16 is incorrect; everything else seems good), \emph{Appendix A} (since we may want to add more proofs inside appendix A, let me write down what I verified precisely: the proofs of \autoref{fact: AM adversaries to NPpoly adversaries} and \autoref{thm: reduction in DS16}), \emph{appendix D} (I plan to move it to appendix A somehow) other appendices}

\newpage
\tableofcontents
\newpage

\pagenumbering{arabic}

\section{Introduction}

The \emph{Range Avoidance} problem ($\Avoid$) is the following computational problem: Given the description of a circuit $C: \{0, 1\}^n \to \{0, 1\}^\ell$ as input where $\ell > n$, the task is to find some $\ell$-bit string $y$ that is not in the range of $C$. Introduced in~\cite{KKMP21}, this problem has received significant attention recently due to its connections to explicit construction problems~\cite{Korten21, RenSW22, GuruswamiLW22, GajulapalliGNS23, ChenHLR23, LZ25} and circuit lower bounds~\cite{Korten21, CHR24, Li24, KortenP24}. We refer to~\cite{Korten-EATCS} for a comprehensive survey on this problem.%

It was observed in~\cite{Korten21} that the explicit construction problems for many important objects in mathematics and theoretical computer science reduce to $\Avoid$; hence, a deterministic algorithm for $\Avoid$ would immediately resolve a plethora of them in one shot. These include constructions of Ramsey graphs~\cite{Erdos59}, rigid matrices~\cite{Valiant77}, two-source extractors~\cite{ChorG88, CZ19, Li23}, good binary codes~\cite{Gilbert52, Varshamov57, Ta-Shma17}, and much more, all with near-optimal parameters. %

\begin{mdframed}[innertopmargin = -0.3em, skipabove=-0.3em]
\small
\begin{example}\label{ex: circuit lower bounds}
    The most notorious example of explicit construction problems is, perhaps, \emph{circuit lower bounds}. Already in 1949, Shannon~\cite{Shannon49} proved that almost all Boolean functions $f: \{0, 1\}^n \to \{0, 1\}$ require circuits of $\Omega(2^n / n)$ size to compute. However, despite decades of efforts, the best lower bounds for explicit functions against general circuits are still only slightly above $3n$~\cite{FindGHK16, LY22}.

    Fix a size bound $s \ll 2^n / n$ and consider the following \emph{truth table generator} $\TT: \{0, 1\}^{O(s \log s)} \to \{0, 1\}^{2^n}$ that takes the description of a size-$s$ circuit as input and outputs the length-$2^n$ truth table of this circuit. It is easy to see that $\Range(\TT)$ consists of exactly the truth tables with circuit complexity at most $s$. Hence, proving a circuit lower bound is equivalent to solving $\Avoid$ on the particular input instance $\TT$. For example, a deterministic algorithm for solving $\Avoid$ on $\TT$ is equivalent to a circuit lower bound for $\E := \DTIME[2^{O(n)}]$~\cite{Korten21, RenSW22}.
\end{example}
\end{mdframed}

\begin{mdframed}[innertopmargin=-0.3em, skipabove=-0.3em]
    \small
    \begin{example}\label{ex: rigid matrix}
        Fix a finite field $\F$. A matrix $A \in \F^{n\times n}$ is \emph{$(r, \Delta)$-rigid} if it has Hamming distance at least $\Delta$ to every matrix of rank at most $r$. Rigid matrices were introduced by Valiant~\cite{Valiant77} where it was shown that a random matrix is rigid with high probability and that an explicit construction of rigid matrices would imply several new circuit lower bounds. Since then, matrix rigidity has become an important topic in complexity theory (for a recent survey, see~\cite{RamyaSurvey}). However, it is still open to construct a rigid matrix with good parameters, say a $(0.1n, 0.1n^2)$-rigid matrix over $\F = \GF(2)$.

        Constructing a rigid matrix in the above parameter regime reduces to the following instance $C_{\rm Rigid}$ of $\Avoid$~\cite{Korten21, GuruswamiLW22}. The input of $C_{\rm Rigid}$ consists of two matrices $X, Y \in \F^{n\times 0.1n}$ and a sparse matrix $A$ containing at most $0.1n^2$ nonzeros, and the output is the $n^2$-bit string describing $XY^T + A$. One can encode $A$ efficiently so that the total input length is less than $n^2$, hence $C_{\rm Rigid}$ is a valid $\Avoid$ instance.
    \end{example}
\end{mdframed}

Ren, Santhanam, and Wang~\cite{RenSW22} initiated the study of $\Avoid$ for \emph{restricted} circuit classes. One motivation for such a study is that reductions from concrete explicit construction problems to $\Avoid$ tend to produce restricted circuits: For example, using tools from succinct data structures~\cite{Patrascu08, Yu22}, Guruswami, Lyu, and Wang~\cite{GuruswamiLW22} showed that the circuit $C_{\rm Rigid}$ in \autoref{ex: rigid matrix} can be computed in $\NC^1$. Furthermore, applying a result in~\cite{RenSW22} (proven by the \emph{randomized encoding} technique~\cite{IshaiK00, IshaiK02, ApplebaumIK06}), they further reduced the circuit complexity to $\NC^0_4$ (i.e., each output bit in the $\Avoid$ instance only depends on \emph{four} input bits). The circuit complexity of $C_{\rm Rigid}$ can be reduced even further to $\NC^0_3$~\cite{GajulapalliGNS23}. Thus, a seemingly promising research program towards explicit constructions is:
\begin{quote}
    For each explicit construction problem $\Pi$ of interest, reduce $\Pi$ to an $\Avoid$ instance $C_\Pi: \{0, 1\}^n \to \{0, 1\}^\ell$ with the \emph{smallest possible circuit complexity} and the \emph{largest possible stretch function} $\ell = \ell(n)$. Then, design efficient deterministic algorithms for $\Avoid$ that can handle larger and larger circuit classes and smaller and smaller stretch functions.
\end{quote}

This program turns out to be highly non-trivial to realize even for the simplest circuit class one may consider: functions that only depend on $k = O(1)$ many input bits ($\NC^0_k$). Guruswami, Lyu, and Wang~\cite{GuruswamiLW22} made the first progress on the $\NC^0_k$-$\Avoid$ problem by providing an $\FP^\NP$ algorithm when the stretch is $\ell(n) \ge \Omega_k(n^{k-1})$. The complexity of $\NC^0_k$-$\Avoid$ has received much attention recently~\cite{GajulapalliGNS23, KuntewarS25, LZ25}, and the state-of-the-art algorithm runs in deterministic polynomial time when the stretch is $\ell(n) \ge \Omega_k(n^{(k-1)/2}\log n)$~\cite{KPI25, GLY26}.

\subsection{Special Cases of Range Avoidance}
Prior to the introduction of the Range Avoidance problem~\cite{KKMP21}, there were already several special cases studied in the literature. We discuss two of them studied in this paper. In some sense, both were motivated by (our inability to prove) circuit lower bounds and explicit constructions.

\paragraph{The remote point problem.} Given a linear space $L\subseteq \GF(2)^\ell$ of dimension $n < \ell$, the \emph{Remote Point} problem ($\RemotePoint$) asks to find a vector $v \in \GF(2)^\ell$ that is far from $L$. This problem was introduced by Alon, Panigrahy, and Yekhanin~\cite{AlonPY09}, motivated by the task of explicitly constructing rigid matrices~\cite{Valiant77}. They proposed the remote point problem as ``an intermediate challenge that is less daunting than'' that of constructing rigid matrices and provided a deterministic algorithm achieving certain parameters. Unfortunately, to the best of our knowledge, algorithmic progress on $\RemotePoint$ so far has not resulted in better constructions of rigid matrices.

For any linear space $L \subseteq \GF(2)^\ell$ of dimension $n$, there is a circuit $C: \{0, 1\}^n \to \{0, 1\}^\ell$ consisting of only $\XOR$ gates such that $\Range(C) = L$. This motivates a more general version of the remote point problem~\cite{KKMP21}: Given the description of a circuit $C: \{0, 1\}^n \to \{0, 1\}^\ell$ (from some circuit class $\ckt$) and a parameter $\delta \in (0, 1/2)$ as input, find a string $y\in\{0, 1\}^\ell$ that is $\delta$-far from $\Range(C)$. Under suitable parameter regimes, the existence of such $y$ can be proven by a union bound, and the proof gives a reduction from $\RemotePoint$ to $\Avoid$. Following~\cite{ChenHLR23}, in the rest of this paper, we will use $\RemotePoint$ to denote the general version with a circuit as input, and refer to the special case studied in~\cite{AlonPY09} as $\XOR$-$\RemotePoint$.

\paragraph{Finding hard partial truth tables.} Fix a circuit class $\ckt$ and a size bound $s$. In the \emph{Hard Partial Truth Table} problem, given a ``domain'' $x_1, x_2, \dots, x_L \in \{0, 1\}^n$ ($L \gg s$), the goal is to output a ``partial truth table'' $b_1, b_2, \dots, b_L \in \{0, 1\}$ such that the partial function $\{x_i \mapsto b_i\}$ is hard against size-$s$ $\ckt$ circuits; that is, for any such circuit $C: \{0, 1\}^n \to \{0, 1\}$, there exists an index $i\in [L]$ such that $C(x_i) \ne b_i$. It is easy to see that under suitable choice of parameters, this problem also reduces to $\Avoid$ (see, e.g.,~\cite{ChenHLR23}). We denote this problem by $\ckt$-$\PartialHard$.

This problem was introduced in~\cite{ArvindS10} under the name ``circuit lower bounds with help functions,'' where it was observed that for $\ckt = \AC^0$, this problem reduces to the remote point problem in certain parameter regimes. This problem turns out to be stubbornly hard even for simple circuit classes $\ckt$ such as \emph{polynomial-size DNFs}. Even though it is trivial to prove a $2^{\Omega(n)}$ size lower bound for DNFs, we are not aware of any polynomial-time algorithm for $\DNF$-$\PartialHard$ in the regime of $s \ll L \le \poly(n)$.

Besides being interesting in its own right, this problem was also motivated by closures of non-uniform circuit classes under uniform reductions: a polynomial-time algorithm for $\ckt$-$\PartialHard$ implies a problem $L \in \E$ that cannot be (polynomial-time) mapping reduced to any problem in (the non-uniform complexity class) $\ckt$ (see \cite[Theorem 5]{ArvindS10}). The reduction is allowed to run in (uniform) polynomial time (which can potentially be more powerful than $\ckt$), hence such statements seem strictly stronger than circuit lower bounds for $\ckt$. By providing an $\FP^\NP$ algorithm for $\ACC^0$-$\PartialHard$ in certain parameter regimes, \cite{ChenHLR23} proved that there is a language in $\E^\NP$ that does not mapping reduce to (non-uniform) $\ACC^0$. However, due to our inability to solve $\DNF$-$\PartialHard$ in polynomial time, even the following innocent-looking question is open:
\begin{question}
    Is it possible that for every language $L \in \E$, there is a language $L'$ computable by a (non-uniform family of) polynomial-size DNF, such that $L \le_m L'$?
\end{question}

\subsection{Is \texorpdfstring{$\Avoid$}{AVOID} Hard?}\label{sec: intro hardness of avoid}

Our inability to solve restricted versions of $\Avoid$ leads to the possibility that these versions might be \emph{hard} to solve. For example, could it be possible that there exists no deterministic polynomial-time algorithm for $\NC^0_{100}$-$\Avoid$ with stretch $n^{1.01}$?

At first glance, $\Avoid$ seems to be an easy problem since it can be solved by the following trivial randomized algorithm: output a random string $y \gets \{0, 1\}^\ell$ and it is not in $\Range(C)$ with high probability. However, since it is unclear how to \emph{verify} whether a string $y$ is in $\Range(C)$ or not (indeed, this is $\NP$-hard), it is also unclear how to turn this trivial randomized algorithm into a deterministic algorithm, even under standard derandomization assumptions~\cite{NisanW94, ImpagliazzoW97, Umans03}.

Somewhat surprisingly, we now have cryptographic evidence that this trivial randomized algorithm for $\Avoid$ \emph{cannot} be made deterministic. The first such evidence was given by Ilango, Li, and Williams~\cite{DBLP:conf/stoc/IlangoLW23}. They showed that if subexponentially-secure indistinguishability obfuscation exists~\cite{BarakGIRSVY12, GGHRSW16, JainLS21} and $\NP\ne\coNP$, then there is no deterministic polynomial-time algorithm for $\Avoid$. Based on this work, Chen and Li~\cite{DBLP:conf/stoc/ChenL24} showed that $\Avoid$ does not have \emph{nondeterministic polynomial-time} ($\SearchNP$) algorithms under certain hardness assumptions about LPN (Learning Parity with Noise) or LWE (Learning with Errors) against nondeterministic algorithms. More recently, a pair of concurrent works~\cite{Ilango25, RenWZ26} strengthened the results in~\cite{DBLP:conf/stoc/ChenL24}, showing that $\Avoid\not\in\SearchNP$ if there exist \emph{demi-bits generators} with suitable stretch~\cite{Rudich97} (an assumption weaker than that used in~\cite{DBLP:conf/stoc/ChenL24}). We will discuss this assumption soon in \autoref{sec: intro demi-bits generators vs proof complexity generators}.

\paragraph{Hardness for restricted circuits?} In light of the aforementioned research program, it is important to understand for which circuit classes $\ckt$ and which stretch functions $\ell(n)$ the $\ckt$-$\Avoid$ problem is hard. Except for~\cite{DBLP:conf/stoc/IlangoLW23}, all other three works imply hardness of restricted $\Avoid$. %

\begin{itemize}
    \item The construction in \cite{DBLP:conf/stoc/ChenL24} crucially relies on the structure of LWE and LPN. In particular, under a nondeterministic variant of the LPN assumption, \cite{DBLP:conf/stoc/ChenL24} showed that the $\RemotePoint$ problem for $\XOR\circ\AND_{O(\log n)}$ circuits (i.e., $O(\log n)$-degree polynomials over $\GF(2)$) is hard for $\SearchNP$ algorithms. However, \cite{DBLP:conf/stoc/ChenL24} mentioned that the hardness of $\XOR$-$\RemotePoint$, i.e., the original ``Remote Point Problem'' considered in~\cite{AlonPY09}, remains open.
    \item Given a secure demi-bits generator $G: \{0, 1\}^n \to \{0, 1\}^N$ and a pairwise independent hash family $\calH = \{h: \{0, 1\}^N \to \{0, 1\}^m\}$ (where $m > n$), \cite{RenWZ26} shows that any $\SearchNP$ algorithm fails to solve $\Avoid$ on some instance of the form $h\circ G$ where $h \in \calH$. Pairwise independent hash families can be computed by $\GF(2)$-linear circuits (circuits consisting of only $\XOR$ gates), therefore under suitable assumptions, \cite{RenWZ26} obtained hardness of $\Avoid$ where each output bit is computable in $\XOR\circ\AND_{O(1)}$ (i.e., a constant-degree $\GF(2)$-polynomial).
    \item Given a secure demi-bits generator $G: \{0, 1\}^n \to\{0, 1\}^N$, it is implicit in \cite{Ilango25} that any $\SearchNP$ algorithm fails to solve $\Avoid$ on some instance of the form
    \[C_{x_1, \dots, x_t}(s, i) := G(s) \oplus x_i,\]
    where $t = O(N)$ and $x_1, \dots, x_t\in \{0, 1\}^N$. This allows \cite{Ilango25} to refute the \emph{Oracle Derandomization Hypothesis} of Fortnow and Santhanam~\cite{FortnowS11} under suitable assumptions.
\end{itemize}

Unfortunately, one can see that due to the overhead in the constructions, none of the above works have the potential of implying the hardness of $\NC^0$-$\Avoid$. Ilango's result is the closest one: assuming demi-bits generators computable in $\NC^0$, \cite{Ilango25} implies that there is no $\SearchNP$ algorithm for $\Avoid$ on circuits with $O(\log N)$ locality. Furthermore, the techniques of~\cite{RenWZ26, Ilango25} do not seem applicable to the Remote Point Problem.

\subsection{Demi-Bits Generators and Proof Complexity Generators}\label{sec: intro demi-bits generators vs proof complexity generators}

The above line of work on the hardness of $\Avoid$ can be equivalently stated as connections between demi-bits generators and proof complexity generators, both of which are important objects in their own right. In this sub-section, a \emph{generator} is an efficiently computable function $G: \{0, 1\}^n \to \{0, 1\}^N$ with $N > n$, and an \emph{adversary} trying to break the generator $G$ is a \emph{nondeterministic} polynomial-time algorithm $\calA$ trying to \emph{certify} a certain string $y \in \{0, 1\}^N$ is not in $\Range(G)$. Equivalently, one can think of the adversary as a (Cook--Reckhow~\cite{CookR79}) \emph{proof system} by treating nondeterministic bits as ``proofs'' that $y\not\in\Range(G)$ (see~\autoref{def: Cook-Reckhow proof systems}). We also assume that nondeterministic adversaries discussed here correspond to \emph{sound} proof systems, i.e., for every $y \in \Range(G)$, they never prove the incorrect statement ``$y\not\in\Range(G)$''.

\paragraph{Demi-bits generators} are generators $G$ such that there is no efficient nondeterministic adversary $\calA$ that accepts a large fraction of strings $y \in \{0, 1\}^N$ but rejects every string $y \in \Range(G)$. That is, there is no sound proof system that proves (a suitably encoded version of) ``$y\not\in\Range(G)$'' efficiently for a large fraction of $y \in \{0, 1\}^N$. Introduced by Rudich~\cite{Rudich97}, demi-bits generators can be seen as the analogue of cryptographic \emph{hitting set generators} against efficient nondeterministic adversaries.\footnote{We note that Rudich~\cite{Rudich97} also introduced the notion of \emph{super-bits generators}, which is the analogue of cryptographic pseudorandom generators against efficient nondeterministic adversaries.} 

Demi-bits generators are also related to the \emph{natural proof} barrier of Razborov and Rudich~\cite{RazborovR97}. To prove a circuit lower bound, we need to design a property $\calP$ over truth tables of Boolean functions such that every Boolean function satisfying $\calP$ requires large circuits. We say that $\calP$ is a \emph{natural property}~\cite{RazborovR97} if it satisfies the following two requirements:
\begin{itemize}
    \item {\bf Constructivity:} Given the length-$2^n$ truth table of a Boolean function $f$, it is possible to decide whether $f$ satisfies $\calP$ in $\poly(2^n)$ time.
    \item {\bf Largeness:} A constant fraction of length-$2^n$ truth tables satisfy $\calP$.
\end{itemize}

\begin{mdframed}[innertopmargin=-0.3em, skipabove=-0.3em]
    \small
    \begin{example}\label{ex: TT as HSG}
        The main result of~\cite{RazborovR97} is that if secure one-way functions exist, then there is no natural property $\calP$ for proving lower bounds against general circuits ($\P/_\poly$). Roughly speaking, this is because the existence of one-way functions implies that $\TT$, the truth table generator, is a cryptographic hitting set generator~\cite{DBLP:journals/jacm/GoldreichGM86}, while every natural property against $\P/_\poly$ can be used to break $\TT$ as a hitting set generator.
    \end{example}
\end{mdframed}

\begin{mdframed}[innertopmargin=-0.3em, skipabove=-0.3em]
    \small
    \begin{example}\label{ex: rigid matrix as HSG}
        Under the LPN assumption, the circuit $C_{\rm Rigid}$ constructed in~\autoref{ex: rigid matrix} is a cryptographic pseudorandom generator. (Proof sketch: if one can distinguish $XY^T + A$ from random where $X, Y \in \F_2^{N\times N^{0.1}}$ and $A$ is some sparse random noise, then one can apply a hybrid argument and, given $X$, distinguish $Xs + e$ from random, where $s \in \F_2^{N^{0.1}}$ and $e$ is some sparse random noise.)

        Therefore, there is a ``natural proof'' barrier to constructing a rigid matrix: if there is a property $\calP$ of $N\times N$ matrices that is constructive (runs in $\poly(N)$ time), large (accepts a constant fraction of random matrices), and only accepts matrices that are rigid, then one can use $\calP$ to break the LPN assumption.
    \end{example}
\end{mdframed}

Rudich~\cite{Rudich97} considered a notion of $\NP$-constructivity, where the property $\calP$ is only required to be computable in \emph{nondeterministic} $\poly(2^n)$ time. Given the correspondence between proof systems and nondeterministic algorithms~\cite{CookR79}, $\NP$-constructivity is arguably a more suitable notion than (polynomial-time) constructivity in the study of the (metamathematical) difficulty of circuit lower bounds. By definition, if $\TT$ is a secure demi-bits generator, then there is a natural proof barrier to proving circuit lower bounds, even if the property is allowed to use nondeterminism. Similarly, if $C_{\rm Rigid}$ is a secure demi-bits generator, then there is a natural proof barrier to constructing rigid matrices.

\paragraph{Proof complexity generators} against a proof system $\calP$ are generators $G$ satisfying the stronger requirement that $\calP$ cannot efficiently prove ``$y\not\in\Range(G)$'', for \emph{any} $y \in \{0, 1\}^N$.\footnote{It is more natural to talk about proof complexity generators against a \emph{fixed} proof system instead of those against \emph{all} proof systems. One reason is that if the proof system is allowed to be non-uniform, then it can simply hardwire a string $y\not\in\Range(G)$ and add this fact as an axiom. There is also evidence that any uniform proof complexity generator cannot be secure against all uniform proof systems simultaneously~\cite{KortenS25}.} Introduced independently in~\cite{ABRW04} and~\cite{krajivcek2001weak}, this notion was heavily inspired by the notion of \emph{pseudorandomness}~\cite{Yao82}. Indeed, the motivation of~\cite{ABRW04} was to pin down a precise definition of ``pseudorandom generators against proof systems,'' and~\cite[Section 7]{krajivcek2001weak} conjectured that the existence of pseudorandom generators implies the existence of proof complexity generators secure against Extended Frege. The influence of pseudorandomness goes far beyond these two papers, as can be seen, for example, in the literature studying the Nisan--Wigderson generator (\cite{NisanW94}) as a proof complexity generator~\cite{ABRW04, Krajicek04, razborov2015pseudorandom, Pich11, Krajicek11, Kra12, Pich15, Khaniki22}. A comprehensive survey on proof complexity generators can be found in~\cite{Krajicek-generator-book}.

Motivated by the apparent relationship between pseudorandomness and proof complexity generators, Razborov~\cite{razborov2015pseudorandom} asked the following question: For what family of generators $G: \{0, 1\}^n \to \{0, 1\}^N$ can we base their hardness as proof complexity generators on their ``\emph{computational or combinatorial}'' hardness? Razborov coined this approach towards proof complexity lower bounds as the \emph{generator approach}.\footnote{We remark that our paper, as well as the recent line of works~\cite{DBLP:conf/stoc/ChenL24, RenWZ26, Ilango25}, does not contribute to the exact questions asked in~\cite{razborov2015pseudorandom} since we still need hardness assumptions against \emph{nondeterministic} adversaries.} Razborov also conjectured that plugging suitable hard functions (in the \emph{computational} sense) into the Nisan--Wigderson generator would result in proof complexity generators against proof systems such as Frege and Extended Frege~\cite[Conjectures 1 and 2]{razborov2015pseudorandom}. Unfortunately, progress on this conjecture has only been made for restricted proof systems such as $\AC^0$-Frege~\cite{Pich11, Khaniki22}.

\def\lb{\mathsf{lb}}

Apparently, many generators arising from explicit construction problems possess both cryptographic hardness and proof complexity hardness. Our inability to prove circuit lower bounds \emph{for any explicit function} and construct \emph{any} rigid matrices suggests that the generators $\TT$ and $C_{\rm Rigid}$ described in~\autoref{ex: circuit lower bounds} and \autoref{ex: rigid matrix} are plausible proof complexity generators. Indeed, great effort has been devoted to the unprovability of circuit lower bounds~\cite{Razborov_feasible_mathematics_II, DBLP:journals/jsyml/Krajicek97, Razborov98, Raz04, Razborov04, Krajicek04, razborov2015pseudorandom, Pich15, PichS19, SanthanamT21, PichS21, PichS22, GarlikGRT26}, which is equivalent to $\TT$ being a proof complexity generator. On the other hand, as discussed in \autoref{ex: TT as HSG} and \autoref{ex: rigid matrix as HSG}, under standard cryptographic assumptions, $\TT$ and $C_{\rm Rigid}$ are cryptographic hitting set generators as well! Understanding the role of (computational) pseudorandomness in the generator approach is profoundly related to the natural proof barrier, see~\autoref{sec: discussion natural proof largeness} for more discussions. %

Finally, another motivation for \cite{krajivcek2001weak} to define proof complexity generators is the \emph{dual weak pigeonhole principle}: For any function $f: [N] \to [2N]$, there exists at least one element $y \in [2N]$ that is not in the range of $f$. If $G:\{0, 1\}^n \to \{0, 1\}^{n+1}$ is a proof complexity generator against a certain proof system $\calP$, then the dual weak pigeonhole principle is ``effectively false'' for $\calP$! The same principle also underlies the Range Avoidance problem; indeed, $\Avoid$ on $f$ asks precisely to find such an element $y \in [2N]\setminus\Range(f)$. It is easy to see (in fact, \emph{by definition}) that
\begin{observation}[{\cite[Section 6]{RenSW22}}]
    $\Avoid\not\in\SearchNP$ if and only if for every propositional proof system $\calP$, there exists a proof complexity generator $G$ secure against $\calP$.
\end{observation}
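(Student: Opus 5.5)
Both directions follow by unfolding the definitions of $\SearchNP$ and of Cook--Reckhow proof systems, which correspond to each other (\autoref{def: Cook-Reckhow proof systems}). By $\Avoid\in\SearchNP$ I mean: there is a polynomial-time verifier $V$ such that for every circuit $C:\{0,1\}^n\to\{0,1\}^\ell$ ($\ell>n$), some witness $w$ with $|w|\le\poly(|C|)$ makes $V(C,w)$ output a string $y$, and every such output satisfies $y\notin\Range(C)$. A generator $G$ is a proof complexity generator against $\calP$ if, for every polynomial $p$ and all large $n$, no $y\in\{0,1\}^{N}$ has a $\calP$-proof of size $p(n)$ of the encoded statement ``$y\notin\Range(G_n)$''. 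I prove the contrapositive in each direction.

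\emph{($\Leftarrow$)} Suppose $\Avoid\in\SearchNP$ via $V$. Define a proof system $\calP$ as follows. A $\calP$-proof of the tautology encoding ``$y\notin\Range(C)$'' is a pair $(C,w)$ with $V(C,w)=y$. Checking such a proof only requires parsing the formula to recover $C$ and $y$, and running $V$, so it takes polynomial time. Soundness of $\calP$ is exactly the correctness of $V$. To make $\calP$ complete, I also accept, for arbitrary tautologies, any proof in a fixed complete system such as truth tables; this is allowed since Cook--Reckhow systems only need to be sound, complete, and polytime-checkable.

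Now let $G$ be any polynomial-time generator, so $G_n$ has a circuit $C_n$ of size $\poly(n)$. The $\SearchNP$ algorithm supplies a $w$ of size $\poly(n)$ with $V(C_n,w)=y_n\notin\Range(G_n)$. Hence $\calP$ has a polynomial-size proof of ``$y_n\notin\Range(G_n)$'' for every $n$, so $G$ is not a proof complexity generator against $\calP$.

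\emph{($\Rightarrow$)} Suppose some proof system $\calP$ breaks every generator. Consider the universal generator $U_n(C,x)=C(x)$ applied to a suitable padding of the input circuit. I define a $\SearchNP$ algorithm for $\Avoid$: on input $C$, guess $y$ together with a $\calP$-proof of ``$y\notin\Range(C)$'', verify the proof, and output $y$. Soundness of $\calP$ ensures every output is correct.

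For completeness I need short proofs for every $C$, not just for one $y$ per length of a single generator. I handle this by choosing the generator to be the evaluation map $G(C,x)=(C,C(x))$, with outputs trimmed so the output is longer than the input. Since $\calP$ breaks $G$, for each length there is some $(C',y)$ with a short proof that it is outside $\Range(G)$.

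This step is the main obstacle: the adversary controls which string $(C',y)$ it proves, so the proof may concern some $C'\neq C$. The standard fix is not to go through $G$ at all. Instead, for each fixed circuit $C$, the map $C$ itself, viewed as the generator $x\mapsto C(x)$, is a polytime generator. If the statement is read non-uniformly over circuit families, $\calP$ yields short proofs for it, which gives a $\SearchNP$ algorithm in the sense of~\cite{RenSW22}. I would follow their convention here, where hardness of $\Avoid$ is witnessed by circuit families. Under that reading, both directions reduce to the correspondence between nondeterministic certificates and proofs, and no counting argument is needed.
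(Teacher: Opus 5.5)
Your $(\Leftarrow)$ direction is correct, and it is the definitional unfolding the paper has in mind. The paper itself offers only ``by definition'' and a pointer to \cite{RenSW22}. The $\SearchNP$ verifier is turned into a proof system: the verifier re-encodes the formula from $(C,y)$ and checks it, and a fixed complete system serves as fallback. This is a Cook--Reckhow system, and it breaks every polynomial-size generator at every length.

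Your $(\Rightarrow)$ direction stops exactly where the work is. The step ``for each fixed circuit $C$, the map $C$ itself is a generator, so $\calP$ yields short proofs for it'' is not meaningful, for three reasons.
\begin{itemize}
\item Insecurity is an asymptotic property of a \emph{family} $\{G_n\}$. The hypothesis only gives, for each family, \emph{some} polynomial $p_G$ bounding the proofs, depending on the family and possibly holding only on infinitely many lengths.
\item For a single circuit the conclusion is vacuous: every tautology $\tau_y(C)$ has a $\calP$-proof of some finite size.
\item An input circuit $C:\{0,1\}^n\to\{0,1\}^\ell$ may have size superpolynomial in $n$, so it is not even a polynomial-size generator.
\end{itemize}
What $\Avoid\in\SearchNP$ requires is \emph{one} exponent $k$ such that every input $C$, up to finitely many exceptions, has some $y$ with a $\calP$-proof of size $|C|^k$. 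The obstacle is this order of quantifiers, not uniformity versus non-uniformity. Appealing to the conventions of \cite{RenSW22} does not resolve it. You are right that no counting argument is needed, but a diagonalization is.

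The missing argument runs as follows.
\begin{itemize}
\item For each $k$, let $A_k$ on input $C$ build the padded circuit $C'(x,z)=(C(x),z)$ of input length $s=|C|$. It guesses $(y,z)$ together with a $\calP$-proof of $\tau_{(y,z)}(C')$ of size at most $s^k$, and outputs $y$. Each $A_k$ is sound by soundness of $\calP$.
\item If some $A_k$ fails on only finitely many inputs, hardwire answers for those, and you obtain $\Avoid\in\SearchNP$.
\item Otherwise, pick for every $k$ an input $C_k$ on which $A_k$ fails, with sizes $s_1<s_2<\cdots$. Let $G$ be the non-uniform family whose member at input length $s_k$ is $C'_k$. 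It has size $O(s_k)$ and output longer than input.
\item For every polynomial $p$ and all large $k$, no $\tau_y(C'_k)$ has a $\calP$-proof of size $p(s_k)$. So $G$ is secure against $\calP$, a contradiction.
\end{itemize}
The padding is what makes $G$ polynomial-size in its input length. Since $A_k$ works with $C'$ directly, no transfer of proofs between $\tau_y(C)$ and $\tau_{(y,z)}(C')$ is needed.

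Note that $G$ lives only on the sparse set of lengths $\{s_k\}$. So this proves the equivalence under the convention that a generator need only be secure on infinitely many lengths. With almost-everywhere security, the $\Avoid$ side must correspondingly be strengthened to hardness at almost all lengths. This quantifier bookkeeping is what ``by definition'' hides, and it is precisely the part your proposal leaves out.
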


This connection allows us to import the hardness results for $\Avoid$ described in \autoref{sec: intro hardness of avoid}~\cite{DBLP:conf/stoc/ChenL24, RenWZ26, Ilango25} to the realm of proof complexity generators. These results need to assume the existence of demi-bits generators, hence they are in fact transformations from demi-bits generators to proof complexity generators. For example:
\begin{theorem}[Informal]\label{thm: previous results}
    Let $G: \{0, 1\}^n \to \{0, 1\}^N$ be a demi-bits generator, and $\calP$ be a ``nice'' proof system. Then:
    \begin{itemize}
        \item \textnormal{(\cite{RenWZ26})} For any suitable family of pairwise independent hash functions $\calH = \{h: \{0, 1\}^N \to \{0, 1\}^m\}$, there exists $h \in \calH$ such that $h\circ G$ is a proof complexity generator against $\calP$.
        \item \textnormal{(\cite{Ilango25})} There exists a family of ``shifts'' $z_1, z_2, \dots, z_{O(n)}$ such that the function
        \[C(s, i) := G(s) \oplus z_i\]
        is a proof complexity generator against $\calP$.
    \end{itemize}
\end{theorem}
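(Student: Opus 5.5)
Both bullets of \autoref{thm: previous results} follow one template, and I will argue by contraposition. Fix the ``nice'' proof system $\calP$ and suppose that every candidate in the family fails. For \cite{RenWZ26} the candidates are $h\circ G$ for $h\in\calH$; for \cite{Ilango25} they are $C_{z_1,\dots,z_t}$. Failure means that for each candidate there is a string $w$ outside its range together with a short $\calP$-proof that $w$ is outside the range. From these proofs I will build a nondeterministic polynomial-time adversary against $G$. It must accept a noticeable fraction of $y\in\{0,1\}^N$ and reject every $y\in\Range(G)$, which contradicts demi-bit security. Soundness of the adversary comes for free from soundness of $\calP$, so the real work is largeness.

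\textbf{Hash case.} On input $y$, the adversary guesses $h\in\calH$, a string $w\in\{0,1\}^m$ with $h(y)=w$, and a $\calP$-proof that $w\notin\Range(h\circ G)$. Here ``nice'' means $\calP$ can substitute $y$ and derive $y\notin\Range(G)$ from $h(y)=w$ and $w\notin\Range(h\circ G)$. For soundness: if $y=G(s)$ then $h(y)\in\Range(h\circ G)$, so no accepting path exists. For largeness, the hypothesis gives a map $h\mapsto w_h$ with $w_h\notin\Range(h\circ G)$. A random $y$ is accepted if $h(y)=w_h$ for some $h$. Pairwise independence lets me lower-bound this by second moments: the count $X=\#\{h : h(y)=w_h\}$ has $\mathbb{E}[X]=|\calH|2^{-m}$, which is large when $|\calH|\gg 2^m$. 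Its second moment is controlled because, for $h\ne h'$, the events $h(y)=w_h$ and $h'(y)=w_{h'}$ are nearly independent on average over a suitably chosen $\calH$. Paley--Zygmund then gives $\Pr_y[X>0]\ge\Omega(1)$.

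\textbf{Shift case.} The analogous argument picks shifts at random, say $t=O(N)$ of them. Suppose $C_{z}$ is broken by a proof of $w\notin\Range(C_z)$. Then $w\oplus z_i\notin\Range(G)$ holds for every $i$, with a proof the adversary can verify. So the broken instances yield certified non-range strings of the form $w\oplus z_i$. I need that some fixed choice of shifts makes the certified set $\{w\oplus z_i\}$, taken over all breakable $z$, cover a noticeable fraction of $y$. I would instead argue by a win--win/counting argument. Suppose every choice of $t$ shifts is breakable, and let $S$ be the set of strings the adversary certifies. If $S$ were small ($<\eps 2^N$), then $t$ random shifts would, with probability $1-(1-\eps)^{t\cdot(\ldots)}$-type bounds, avoid having every $w\oplus z_i$ in $S$, unioned over all $2^N$ choices of $w$. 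With $t=O(N/\eps)$ this union bound beats $2^N$. That yields unbreakable shifts, a contradiction; hence $S$ is large.

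\textbf{Main obstacle.} The hardest step is the largeness argument in the hash case. There the union bound over $w$ is unavailable, because $w$ depends on $h$ and $m$ is small. I would first try to mirror the shift-case counting with the roles reversed. Fix a small candidate set $S$ and bound $\Pr_h[\exists w\,\forall y\in h^{-1}(w): y\in S\cup\Range(G)]$ via pairwise-independent concentration of $|h^{-1}(w)\setminus(S\cup \Range(G))|$. Then union-bound over the $2^m$ choices of $w$, which is affordable precisely because $m$ is only slightly larger than $n$.
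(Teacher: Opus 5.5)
Your shift case is correct and is the argument the paper uses for Ilango's family (\autoref{sec: ilango seeded}, following Lautemann). The certified set $S$ does not depend on the shifts, each $w\oplus z_i$ is uniform, so $\Pr[\forall i:\ w\oplus z_i\in S]\le (|S|/2^N)^t$, and a union bound over the $2^N$ strings $w$ finishes once $t=O(N)$. (With $|S|<\eps 2^N$ the per-$w$ bound is $\eps^t$ rather than $(1-\eps)^t$; either suffices.)

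The gap is in your main hash-case argument. The string $w_h$ is chosen by whoever breaks $h\circ G$, after seeing $h$. So $\mathbb{E}_y[X]=\sum_h |h^{-1}(w_h)|/2^N$, which equals $|\calH|2^{-m}$ only if the selected fibers have average size. Pairwise independence is a statement about a random $h$ evaluated at fixed inputs. It says nothing about the fiber sizes of an individual $h$, and certainly not about the one fiber an adversary selects. The same problem affects the cross terms $\Pr_y[h(y)=w_h\wedge h'(y)=w_{h'}]=|h^{-1}(w_h)\cap h'^{-1}(w_{h'})|/2^N$, where both fibers are adversarial. ``Nearly independent on average over a suitably chosen $\calH$'' is an assertion, not a consequence of pairwise independence. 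Moreover, the statement is meant to hold for any suitable pairwise independent family, not a specially chosen one. So the Paley--Zygmund step is unsupported as written.

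Your last paragraph already contains the correct argument, and it is exactly the paper's route (\autoref{thm: disperser family from pairwise independence} combined with \autoref{thm:avoidgeneral}). Make it the proof rather than a fallback, and drop the claim that a union bound over $w$ is unavailable.

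\begin{itemize}
\item Fixing the certified set $S$ first, which is independent of $h$, is precisely what neutralizes the dependence of $w$ on $h$. You show that for most $h$ \emph{every} fiber meets $T:=\{0,1\}^N\setminus(S\cup\Range(G))$. Then no $w$ at all can break $h\circ G$, since a breaking $w$ would force $h^{-1}(w)\subseteq S$.
\item For a fixed $w$, the indicators $[h(x)=w]$ for $x\in T$ are pairwise independent with mean $2^{-m}$. Chebyshev gives $\Pr_h[h^{-1}(w)\cap T=\emptyset]\le 2^m/|T|$.
\item The union over all $2^m$ values of $w$ costs $2^{2m}/|T|$. This is small because $N\ge 2m+\log(1/\eps)+O(1)$ (e.g.\ $N\ge 4m$), which is what ``suitable'' encodes. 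It is not because $m$ is close to $n$; the condition $m>n$ is only needed for $h\circ G$ to stretch.
\end{itemize}

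This route also gives more than you aimed for: most $h$ work, not just one. It also needs only about $2^{2m+\log(1/\eps)}$ hard strings rather than a constant fraction, which is the paper's weak-demi-bits strengthening of \cite{RenWZ26}.
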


However, the same issue regarding circuit complexity overhead mentioned in \autoref{sec: intro hardness of avoid} arises here as well: What is the smallest complexity overhead of constructing proof complexity generators from demi-bits generators? In particular, are there proof complexity generators computable in $\NC^0$?

\subsection{Our Results}
\subsubsection{Proof Complexity Generators from Demi-Bits Generators with No Overhead}

Our main result is the following seemingly surprising statement: 
\begin{quote}
    \it For every demi-bits generator $G$, there are lots of proof complexity generators ``hidden'' in $G$.
\end{quote}

More precisely, let $m := 10n$, $G: \{0, 1\}^n \to \{0, 1\}^m$, and let $I \subseteq [m]$. Define $G|_I: \{0, 1\}^n \to \{0, 1\}^{|I|}$ to be the function consisting of the output bits of $G$ indexed by $I$. We show:
\begin{theorem}[Main; Informal]\label{thm: main informal}
    Let $\calP$ be a ``well-behaved''\footnote{Thanks to the simplicity of our arguments, the definition of ``well-behaved'' here is much more relaxed than the definition of ``nice'' in~\autoref{thm: previous results}; see \autoref{sec: proof complexity generators from demi-bits generators} for more details.} proof system and $G$ be a demi-bits generator against $\calP$. Then with constant probability over a uniformly random subset $I \subseteq [m]$, $G|_I$ is a proof complexity generator against $\calP$.
\end{theorem}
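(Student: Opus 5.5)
We argue by contraposition and use Pajor's Lemma, which the abstract names as the key tool. Suppose that for at least a $(1-c)$ fraction of subsets $I \subseteq [m]$, the restriction $G|_I$ is \emph{not} a proof complexity generator against $\calP$. For each such $I$ fix a witness string $y^I \in \{0,1\}^{|I|}$ with $y^I \notin \Range(G|_I)$ such that $\calP$ has a short proof of ``$y^I \notin \Range(G|_I)$''. We then build a nondeterministic adversary against $G$ as a demi-bits generator. On input $z \in \{0,1\}^m$ it guesses a set $I$ from this collection of bad sets such that $z|_I = y^I$, together with the short $\calP$-proof that $y^I \notin \Range(G|_I)$. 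From this it derives a proof that $z \notin \Range(G)$: if $G(s) = z$ then $G|_I(s) = z|_I = y^I$, a contradiction.

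\textbf{Restriction step.} Turning the proof for $G|_I$ into a proof for $G$ is exactly what ``well-behaved'' should guarantee. The encoding of ``$y \notin \Range(G|_I)$'' is obtained from that of ``$z\notin\Range(G)$'' by dropping the constraints on coordinates outside $I$. So a weakening (or substitution) rule turns a short $\calP$-proof of the former into a short $\calP$-proof of the latter. Soundness is automatic, since the proof system itself is sound.

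\textbf{Counting step.} This is the main obstacle. We must show that the set
\[ Z = \{ z \in \{0,1\}^m : \exists \text{ bad } I,\ z|_I = y^I \} \]
has noticeable density. For this we use Pajor's Lemma: every family $\mathcal{F}\subseteq\{0,1\}^m$ shatters at least $|\mathcal{F}|$ sets. Take $\mathcal{F} = \{0,1\}^m \setminus Z$, the strings the adversary fails on. If $\mathcal{F}$ shatters $I$, then every pattern on $I$ appears in $\mathcal{F}$, in particular $y^I$. Hence for a bad $I$, $\mathcal{F}$ does not shatter $I$, and so $\mathcal{F}$ shatters only good sets. Pajor then gives $|\mathcal{F}| \le \#\{\text{good } I\} \le c\cdot 2^m$. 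Therefore $|Z| \ge (1-c)2^m$, meaning the adversary accepts a $(1-c)$ fraction of inputs while rejecting all of $\Range(G)$. This contradicts the demi-bits security of $G$ once $1-c$ exceeds the largeness threshold in the definition. The same argument shows the probability that $G|_I$ is a proof complexity generator is at least the demi-bits density bound, which is constant.

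\textbf{Uniformity issues.} The adversary needs the map $I \mapsto y^I$, which is non-uniform. This is fine if the proof systems are allowed advice, or if we quantify over input lengths appropriately. The nondeterministic guess of $I$ costs only $m$ bits. I would also check that the choice of the regime $m = 10n$ does not matter beyond making $|I| > n$ hold with high probability, so that $G|_I$ is still expanding. Possibly we should restrict to $|I| \ge m/3$; the sets with $|I| < m/3$ are exponentially few and can be absorbed into the good side of the counting.
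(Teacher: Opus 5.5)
Your proof is correct and is essentially the paper's argument, run in contrapositive form. The paper applies Pajor's Lemma directly to the set $H$ of strings $z$ for which $\calP$ has no short proof of ``$z\notin\Range(G)$'' (your $\mathcal{F}$ contains $H$), and it lifts proofs from $G|_I$ to $G$ exactly as in your restriction step, using well-behavedness and the fact that the clauses of the $G|_I$ encoding appear in the $G$ encoding; your uniformity worry is unnecessary, because the contradiction is only a count of how many $z$ have short $\calP$-proofs (tracking the polynomial blowup $\ell\mapsto p(\ell)$), so no algorithm ever needs the map $I\mapsto y^I$.
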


The proof of \autoref{thm: main informal} is remarkably simple: It follows directly from the well-known \emph{Sauer--Shelah} lemma~\cite{Sauer72, Shelah72, VC68}. Recall that a set system $\calF$ over $[m]$ \emph{shatters} a subset of indices $I \subseteq [m]$, if for every subset $J \subseteq I$, there exists a set $S \in \calF$ such that $S\cap I = J$. The Sauer--Shelah lemma (in fact, its strengthening by Pajor~\cite{pajor1985sous}) states that every set system $\calF$ shatters at least $|\calF|$ many subsets.%

\begin{proof}[Proof Idea of \autoref{thm: main informal}]
    Let $H \subseteq \{0, 1\}^m$ denote the set of ``hard'' strings for $\calP$, i.e., the strings $z \in \{0, 1\}^m$ such that $\calP$ does not admit short proofs of ``$z \not\in\Range(G)$''. If $G$ is a demi-bits generator against $\calP$, then $|H| \ge \Omega(2^m)$. Let $I$ be a random subset of $[m]$, then by Pajor's lemma, $H$ shatters $I$ with constant probability. (Also, $|I| > n$ with high probability.)
    
    It is easy to verify that if $H$ shatters $I$, then $G|_I$ is a proof complexity generator against $\calP$: for every $y \in \{0, 1\}^I$, there exists a string $z \in H$ such that $z|_I = y$; hence, if $\calP$ can efficiently prove ``$y \not\in\Range(G|_I)$'', then $\calP$ can efficiently prove that ``$z\not\in\Range(G)$'' as well, a contradiction to $z \in H$.
\end{proof}

We remark that the main results in~\cite{RenWZ26, Ilango25} are also obtained by very simple proofs. In our opinion, the simplicity of our proofs indicates that there are potentially much more excitement in the direction of \emph{cryptography against nondeterministic adversaries} yet to be discovered. Moreover, the versatility of nondeterministic adversaries suggests that many results there are going to be vastly different from our existing intuition in the setting of deterministic adversaries.

\autoref{thm: main informal} gives a transformation from demi-bits generators to proof complexity generators with \emph{zero} circuit complexity overhead, which also allows us to make progress on the hardness of special cases of the Range Avoidance problem.

\paragraph{Range avoidance for $\NC^0$ circuits.} Let $k\ge 3$ be a constant. Assuming there are demi-bits generators $G: \{0, 1\}^n \to \{0, 1\}^{\ell(n)}$ computable in $\NC^0_k$ for $\ell(n) \ge 10n$, we immediately obtain hardness of $\NC^0_k$-$\Avoid$ against $\SearchNP$ algorithms. Moreover, the hard $\Avoid$ instances we obtain have the same stretch ($\Theta(\ell(n))$) as the demi-bits generator $G$ up to a constant factor.

\begin{corollary}[$\NC^0$-$\Avoid$ is Hard]\label{cor: NC0-Avoid}
    Let $k\in\N$, and $\ell(n) > n$ be a stretch function. The following holds for some universal constant $c\ge 1$: If there exists a demi-bits generator $G: \{0, 1\}^n \to \{0, 1\}^{c\ell(n)}$ computable in $\NC^0_k$, then $\NC^0_k$-$\Avoid[n, \ell(n)] \not\in\SearchNP$.
\end{corollary}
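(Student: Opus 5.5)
We argue by contraposition. Suppose $A$ is a $\SearchNP$ algorithm solving $\NC^0_k$-$\Avoid[n,\ell(n)]$. We will turn $A$ into a nondeterministic adversary (a sound proof system) that breaks $G: \{0,1\}^n \to \{0,1\}^m$, where $m = c\ell(n)$, as a demi-bits generator. The idea is the proof of \autoref{thm: main informal}, run in reverse.

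Fix a string $z \in \{0,1\}^m$. Let $H \subseteq \{0,1\}^m$ be the set of strings $z$ such that the adversary $\calP_A$ defined below does not certify ``$z \notin \Range(G)$''. We want to show $|H| < 2^m/2$, say; that contradicts $G$ being a demi-bits generator. The adversary $\calP_A$ works on input $z$ as follows. It nondeterministically guesses a subset $I \subseteq [m]$ with $|I| = \ell(n)$. It then runs $A$ on the restricted circuit $G|_I$, which is still an $\NC^0_k$ circuit from $n$ bits to $\ell(n)$ bits. On an accepting computation path $A$ outputs some $y \notin \Range(G|_I)$. The adversary accepts iff $y = z|_I$. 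We may also let the adversary guess $I$ directly, or take $I$ from a polynomial-size advice list, since $\NP/\poly$-style adversaries are allowed.

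Soundness is immediate: if $z = G(s)$, then $z|_I = G|_I(s) \in \Range(G|_I)$, so this $z|_I$ can never equal the output of $A$.

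The main step is the largeness bound, and it uses Pajor's lemma. By construction, whenever $z \in H$, for every $I$ the string $z|_I$ is \emph{not} among $A$'s outputs on $G|_I$. Pajor's lemma says that $H$ shatters at least $|H|$ subsets of $[m]$. If $|H| \ge 2^m/2$, a counting argument shows $H$ shatters some $I$ of size exactly $\ell(n)$. We take $c$ a large enough constant so that the sets of size below $\ell(n)$ number fewer than $2^m/2$, i.e. $\sum_{j<\ell}\binom{m}{j} < 2^{m-1}$ once $m \ge c\ell$; this holds with room to spare since $\ell < m/2$. Shattering is downward closed, so any shattered set of size $\ge \ell$ contains a shattered set of size exactly $\ell$. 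For that shattered $I$, every $y \in \{0,1\}^I$ equals $z|_I$ for some $z \in H$. In particular this holds for the string $y$ that $A$ outputs on $G|_I$ (some accepting path exists, because $A$ is a correct $\SearchNP$ algorithm and $\ell > n$). That contradicts the definition of $H$. Hence $|H| < 2^m/2$, so the adversary accepts at least half of $\{0,1\}^m$, and $G$ is broken.

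The expected obstacle is purely bookkeeping. First, the proof system must correctly simulate $A$ on the guessed instance $G|_I$, whose description is a projection of $G$'s description; this is where "no overhead" is used. Second, we must handle $A$ possibly outputting different $y$ on different accepting paths. This is harmless, because we only need one $y$ per $I$. Third, we must state the demi-bits security notion, e.g. against $\NP/\poly$ or uniform adversaries with guessed $I$, so that guessing $I$ is legitimate; the constant $c$ is fixed by the binomial-tail estimate above.
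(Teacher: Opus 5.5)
Your proof is correct and is essentially the paper's argument. The paper derives this corollary from the first bullet of \autoref{thm:avg-avoid}, which is \autoref{thm:avoidgeneral} instantiated with the projection family of \autoref{thm: disperser from Sauer--Shelah}; its adversary is exactly your test (guess $I$, run the $\SearchNP$ algorithm on $G|_I$, compare the output with $z|_I$), and its largeness comes from Sauer--Shelah, which is equivalent to your Pajor-plus-downward-closure count. The paper states this for $(1-2^{-\Delta})$-secure generators with $\Delta = o(N)$, and your binomial-tail estimate covers that case as well by requiring $\sum_{j<\ell}\binom{m}{j} < 2^{m-\Delta}$, still with a universal $c$.
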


Guruswami, Lyu, and Yuan~\cite{GLY26} explicitly asked the open problem of determining the ``computational threshold'' for $\NC^0_k$-$\Avoid$. \autoref{cor: NC0-Avoid} contributes to this problem by reducing it to determining the best stretch of demi-bits generators computable in $\NC^0_k$.

Every pseudorandom generator computable in $\NC^0_k$ with stretch $\Omega_k(n^{\lceil k/2\rceil})$ can be distinguished by a linear test with constant advantage~\cite{DBLP:journals/rsa/MosselST06}. To the best of our knowledge, when $k$ is a large enough constant (say $k\ge 10$), we are not aware of attacks for \emph{general, arbitrary} pseudorandom generators computable in $\NC^0_k$ with a smaller stretch. (In contrast, there is a long line of work on attacking specific $\NC^0_k$ PRGs such as Goldreich's PRG~\cite{CM01,BQ12,OW14,ABR16,AL18,CDMRR18,OST22,unal23}.) Plugging recent algorithmic advances for $\NC^0_k$-$\Avoid$~\cite{KPI25, GLY26} into \autoref{cor: NC0-Avoid} immediately implies a nondeterministic adversary that achieves a better stretch: For every $k\ge 3$, \cite{GLY26} presented a deterministic algorithm for $\NC^0_k$-$\Avoid$ with $\Omega_k(n^{(k-1)/2}\log n)$ stretch, hence we obtain nondeterministic adversaries breaking every demi-bits generator of stretch $\Omega_k(n^{(k-1)/2}\log n)$ as well.

\paragraph{Finding hard partial truth tables is hard.} \autoref{thm: main informal} also allows us to explain our lack of progress in the $\ckt$-$\PartialHard$ problem even for very simple circuit classes such as $\ckt = \DNF$. In fact, we show that this problem is at least as hard as \emph{PAC-learning} $\ckt$ by nondeterministic algorithms.\footnote{We actually define nondeterministic PAC-learning algorithms in terms of ``random right-hand-side refutation'' algorithms, as these two problems are equivalent in the deterministic setting~\cite{Vadhan17}, and the latter can be defined straightforwardly in the nondeterministic setting. See \autoref{def: nondet PAC learning} for details.} It is a fundamental and long-standing open problem in learning theory to design an efficient PAC-learning algorithm for DNFs, and the best algorithm to date requires $2^{\tilde{O}(n^{1/3})}$ time~\cite{KlivansS04}. It turns out that this long-standing question explains our inability to solve $\DNF$-$\PartialHard$ as well!

\begin{theorem}[Informal]
    For nondeterministic algorithms, generating hard partial truth tables for a circuit class $\ckt$ is at least as hard as learning $\ckt$. More precisely:
    \begin{itemize}
        \item A worst-case nondeterministic algorithm for 
        $\ckt$-$\PartialHard$ implies a distribution-free 
        nondeterministic PAC-learner for $\ckt$.
        \item An errorless average-case nondeterministic heuristic for 
        $\ckt$-$\PartialHard$ implies a distribution-specific 
        nondeterministic PAC-learner for $\ckt$.
    \end{itemize}
\end{theorem}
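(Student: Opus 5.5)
The plan is to go through the contrapositive in both items: from a nondeterministic algorithm $\calA$ for $\ckt$-$\PartialHard$ we build a nondeterministic refuter that distinguishes labeled samples $(x_i, f(x_i))$ with $f\in\ckt$ from samples whose labels are uniformly random (the ``random right-hand-side refutation'' formulation of nondeterministic PAC learning). The guiding observation is that $\PartialHard$ is exactly $\Avoid$ for the generator $G_{x_1,\dots,x_L}: C \mapsto (C(x_1),\dots,C(x_L))$. A nondeterministic learner is then exactly a demi-bits adversary against this generator: it certifies ``$b\notin\Range(G_{\vec x})$'' for a noticeable fraction of label vectors $b$.

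So it suffices to show that solving $\Avoid$ on (suitable restrictions of) $G_{\vec x}$ breaks its demi-bits security. This is the converse of \autoref{thm: main informal}, and I would reuse its proof. Suppose no refuter exists. Then for a domain $\vec x$ of length $m$ (a fixed worst-case domain in the first item, a random domain from the target distribution in the second), the set $H$ of label vectors $b$ whose non-membership $\calA$-style proofs cannot certify has density $\Omega(1)$. By Pajor's lemma, a random $I\subseteq[m]$ with $|I|\gg s\log s$ is shattered by $H$ with constant probability.

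The key point is that $G_{\vec x}|_I = G_{\vec x|_I}$, so the restricted generator is itself a $\PartialHard$ instance on the sub-domain $\vec x|_I$. Running $\calA$ on $\vec x|_I$ yields some $y$ together with an $\NP$-witness that the accepted output is not in the range; by soundness it cannot be computed by size-$s$ $\ckt$ circuits. Since $H$ shatters $I$, there is $z\in H$ with $z|_I=y$, and the run of $\calA$ certifies $z\notin\Range(G_{\vec x})$, contradicting $z\in H$. Turned around, this gives the refuter. On input labeled samples $(\vec x, b)$, it guesses $I$, runs $\calA$ on $\vec x|_I$ nondeterministically, and accepts iff $\calA$'s output equals $b|_I$. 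This certificate is sound because $b|_I\notin\Range$ implies $b\notin\Range$. Pajor's lemma (applied to the accepted set rather than $H$) then ensures that it accepts a noticeable fraction of random $b$.

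For the worst-case item the domain is arbitrary, so distribution-free learning follows. For the errorless average-case item, $\calA$ only needs to succeed on domains $\vec x|_I$ drawn i.i.d.\ from $\calD$, which matches distribution-specific learning. The main obstacle is the acceptance probability. A single run of $\calA$ produces only one string $y$, so the fraction of $b$ with $b|_I=y$ is $2^{-|I|}$, which is negligible. To get a noticeable fraction I would follow the shattering argument in reverse. I would let the refuter guess a set $I$ and use the fact that, over many sub-domains, the union of certified strings has density at least $1-|\text{shattered-free}|$ bound. Concretely, I would quantify over all $I$ of size $t>s\log s+1$. If the uncertified set had density $\Omega(1)$, Pajor's lemma would force it to shatter some such $I$, and then $\calA$'s output on $\vec x|_I$ would contradict as above. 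Making this counting quantitative (density $1-o(1)$ versus the needed noticeable advantage, plus sample-size bookkeeping) is the step I expect to require the most care.
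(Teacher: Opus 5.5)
Your proposal is correct and is essentially the paper's proof. The refuter guesses $I$ and checks that some accepting run of $\calA$ on $\vec{x}|_I$ outputs exactly $b|_I$, and soundness is inherited from $\calA$. The worst-case item follows from Sauer--Shelah applied to the set of rejected label vectors: this set can shatter no $I$ of size $L'$, so it has density $o(1)$ when $L'\le 0.1L$. The average-case item uses that $\vec{x}|_I$ is distributed as $\calD^{L'}$.

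Two small corrections. First, the shattering argument is applied to the rejected set, not the accepted set as your parenthetical says. Second, for the average-case item you need a uniformly random $I$ of fixed size $L'$ to be shattered with good probability, and Pajor's lemma alone does not give this. The paper combines it with Kruskal--Katona (\autoref{thm: disperser family by a projection}) to get probability at least $1-4L'\Delta/L$ whenever the rejected set has size at least $2^{L-\Delta}$.
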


Moreover, Daniely and Shalev-Shwartz~\cite{DanielyS16} proved the hardness of learning DNFs under a natural assumption about the complexity of random $k$-SAT. Applying their reduction, we establish the hardness of $\DNF$-$\PartialHard$ under a variant of the random $k$-SAT assumption against nondeterministic algorithms. (Since their reduction is randomized, we actually need the hardness of random $k$-SAT against $\AM$ algorithms.)

\begin{corollary}[$\DNF$-$\PartialHard$ Is Hard; Informal]
    If random $k$-SAT with suitable parameters is hard against $\AM$ algorithms, then there is no errorless nondeterministic heuristic that efficiently solves $\DNF$-$\PartialHard$, even on average over a certain samplable distribution.
\end{corollary}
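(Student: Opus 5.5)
The plan is to go through the main transformation (\autoref{thm: main informal}) in two stages. First, turn a hypothetical $\SearchNP$ algorithm for $\DNF$-$\PartialHard$ into a nondeterministic PAC-learner for DNFs, using the first part of the preceding informal theorem. Second, show that such a learner is ruled out by the Daniely--Shalev-Shwartz reduction~\cite{DanielyS16}, applied to random $k$-SAT against $\AM$. Throughout I argue by contrapositive: I assume an errorless nondeterministic heuristic $\calA$ that solves $\DNF$-$\PartialHard$ on average over a samplable distribution $\calD$ on domains $x_1,\dots,x_L$, and derive an $\AM$ refuter for random $k$-SAT.

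\textbf{Step 1: heuristic gives a refuter.} I use the random right-hand-side refutation formulation of learning (\autoref{def: nondet PAC learning}). For a domain $\vec x=(x_1,\dots,x_L)$, consider the generator
\[G_{\vec x}: C \mapsto (C(x_1),\dots,C(x_L)),\]
which maps descriptions of size-$s$ DNFs to labelings, so that $\Range(G_{\vec x})$ is exactly the set of DNF-realizable labelings. A solution of $\PartialHard$ on $\vec x$ is a string outside this range. Refuting random labels means certifying, for a noticeable fraction of uniformly random $b$, that $b\notin\Range(G_{\vec x})$. So what I need is the converse of the Pajor direction.

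\textbf{Step 2: the main obstacle, producing a certificate for a random $b$.} The $\PartialHard$ algorithm only outputs \emph{some} string $y$ outside the range, not a certificate for a given random $b$. My first idea is to use closure of DNFs under XOR-with-a-fixed-labeling via hardwired shifts, but DNFs are not closed under XOR, so this fails. Instead I plan to apply the algorithm to a domain built from the samples themselves, keeping only the positive-labeled points $\{x_i : b_i=1\}$ together with fresh samples. The second attempt is to invoke the errorless property directly. Feed $\calA$ the sample points as its domain. Its output is a certified hard labeling $y$, and $y$ coincides with the random $b$ on a subset $I$ of coordinates. If $I$ is large enough (larger than the VC-dimension-type bound $\approx s\log s$ of size-$s$ DNFs), this already shows $b$ restricted to $I$ is non-realizable, which refutes $b$. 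For a random $b$, agreement with any fixed $y$ on a prescribed large set is unlikely. So I would instead run $\calA$ on the sub-domain $\{x_i\}_{i\in I}$ for a guessed $I$, and accept if the output equals $b|_I$. The hardest part is arranging that this succeeds with noticeable probability. I would handle it by choosing $L$ only slightly larger than $s\log s$, so that the output space $2^{L}$ is small enough for an averaging/hashing argument to apply.

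\textbf{Step 3: plug in~\cite{DanielyS16}.} Their randomized reduction maps a random $k$-SAT formula to a sample set that is DNF-labeled if the formula is satisfiable and uniformly labeled when the formula is random. A nondeterministic refuter for random labels therefore yields a randomized nondeterministic ($\AM$) refuter for random $k$-SAT, contradicting the hypothesis. Samplability of $\calD$ is handled by taking $\calD$ to be the distribution of domains output by that reduction.
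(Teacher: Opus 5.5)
Your Step~3, and the algorithm you finally settle on in Step~2 (guess $I\subseteq[L]$ with $|I|=L'$, run $\calA$ on $\{x_i\}_{i\in I}$, accept iff some branch outputs $b|_I$), coincide with the paper's construction. Soundness of that algorithm follows directly from errorlessness. The gap is completeness, which is the whole content of the argument, and the route you propose for it fails. Every accepting branch of $\calA$ may print the same labeling, so for each fixed $I$ a random $b$ may be accepted with probability only $2^{-L'}$. No averaging or hashing over the output space shows that the union over all guesses $I$ covers most $b$. Taking $L$ close to $s\log s$ does not help either. What is needed is that $L$ be a large multiple of the sub-domain size $L'$ (the paper uses $L'\le 0.1L$); otherwise the bound $\sum_{i<L'}\binom{L}{i}$ used below is not $o(2^L)$. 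You also do not need a converse of the Pajor direction. This step is literally the contrapositive of the paper's main ``demi-bits $\Rightarrow$ hard $\Avoid$ on random restrictions'' argument, with $G_{\vec x}$ as the generator.

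The missing idea is to look at $\mathrm{Rej}(\vec x)\subseteq\{0,1\}^L$, the set of labelings your algorithm $\calB$ rejects. If $\mathrm{Rej}(\vec x)$ shatters $I$, then $\calA(\{x_i\}_{i\in I})$ has no accepting path at all. Indeed, any output $y\in\{0,1\}^I$ equals $b|_I$ for some $b\in\mathrm{Rej}(\vec x)$, and $\calB$ would then accept $b$. In the worst-case setting, Sauer--Shelah immediately gives $|\mathrm{Rej}(\vec x)|\le\sum_{i<L'}\binom{L}{i}=o(2^L)$. In the average-case setting, the existence of one shattered set is not enough, because $\calA$ is only guaranteed to succeed on i.i.d.\ domains. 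Here one uses the random projection disperser family (Pajor plus Kruskal--Katona): if $|\mathrm{Rej}(\vec x)|\ge 2^{L-\Delta}$, then a uniformly random $L'$-subset is shattered with probability $\ge 1-4\Delta L'/L$. Since $(x_i)_{i\in I}$ is then distributed as $\calD^{L'}$, this forces $\Pr_{\vec x}[|\mathrm{Rej}(\vec x)|\ge 2^{L-\Delta}]<2\eps$. So $\calB$ accepts a random labeling with probability $\ge 1-2^{-\Delta}-2\eps$ for $\Delta=0.1L/L'$. This high acceptance probability, rather than the merely noticeable one you aimed for in Step~1, is also what is needed to contradict the error-$0.01$ $\AM$ hypothesis via~\cite{DanielyS16}.
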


\paragraph{Hardness of the Remote Point problem.} \autoref{thm: main informal} also helps explain the hardness of $\XOR$-$\RemotePoint$, the original Remote Point problem introduced in~\cite{AlonPY09}.

To prove the hardness of the Remote Point problem, we
introduce a slight generalization of demi-bits generators
called remote-point demi-bits generators: these are generators $G:\{0, 1\}^n \to \{0, 1\}^N$ such that no proof system can prove ``$y$ is \emph{far from} $\Range(G)$'' for a noticeable fraction of $y$. It is easy to see that the nondeterministic hardness of LPN implies a remote-point demi-bits generator computable using only $\XOR$ gates. 
Combining this with a disperser family satisfying a
certain \emph{Lipschitz property} (which is satisfied by our main construction in~\autoref{thm: intro proj family}), we have:%

\begin{theorem}[Informal]\label{thm: intro remote-point demi-bits generator to hardness of remote point}
    Composing a remote-point demi-bits generator with a Lipschitz 
    disperser family yields hard instances for the Remote Point 
    Problem: no efficient errorless nondeterministic heuristic can find a point far from the range of the resulting circuit, 
    even on average over the choice of disperser.
\end{theorem}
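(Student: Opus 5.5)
We adapt the proof idea of \autoref{thm: main informal} from exact range avoidance to the ``far from range'' setting. Let $G:\{0,1\}^n\to\{0,1\}^N$ be a remote-point demi-bits generator with distance parameter $\delta$. Let $H\subseteq\{0,1\}^N$ be the set of strings $z$ for which the proof system cannot efficiently prove ``$z$ is $\delta$-far from $\Range(G)$''. By assumption $|H|\ge \eps 2^N$. The disperser family is a distribution over maps $D:\{0,1\}^N\to\{0,1\}^m$ given by projections (or, more generally, the family from \autoref{thm: intro proj family}). The resulting Remote Point instance is the circuit $D\circ G$.

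\textbf{Step 1 (disperser property).} With good probability over $D$, the image $D(H)$ is large. In the projection instantiation, Pajor's lemma gives that $H$ shatters a random $I\subseteq[N]$ with constant probability, so $D(H)=\{0,1\}^{I}$. For the average-case statement we need less: only that $D(H)$ covers all but a small fraction of $\{0,1\}^m$. This is exactly the ``zero-error disperser'' guarantee.

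\textbf{Step 2 (Lipschitz property).} We need the disperser to map far points to far points. Concretely: if $D(z)$ is $\delta'$-far from $D(\Range(G))=\Range(D\circ G)$, then $z$ is $\delta$-far from $\Range(G)$. This holds whenever $D$ is Lipschitz, i.e.\ $\mathrm{dist}(D(a),D(b))\le L\cdot \mathrm{dist}(a,b)$ with appropriate normalization. For a projection onto a random $I$ of density $p$, Hamming distance can only shrink in absolute terms, $|D(a)\oplus D(b)|\le|a\oplus b|$. So relative distance $\delta'$ on $m\approx pN$ coordinates implies absolute distance $\ge\delta' pN$ on $N$ coordinates, i.e.\ relative distance $\delta\ge p\delta'$. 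We then set $\delta'=\delta/p$, adjusting constants so that $\delta'<1/2$ remains meaningful.

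\textbf{Step 3 (reduction).} Suppose an efficient errorless nondeterministic heuristic $\calA$, given $D\circ G$, outputs (on some branch, with a witness) a point $y$ that is $\delta'$-far from $\Range(D\circ G)$, for a noticeable fraction of $D$. Since $\calA$ is errorless, its output is always correct when it outputs anything, so its accepted transcripts give short certificates. Now build a proof system for ``$z$ is $\delta$-far from $\Range(G)$''. Given $z$, guess $D$ from the family together with an accepting computation of $\calA(D\circ G)$ outputting $D(z)$. Soundness then follows from Step 2.

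\textbf{Step 4 (counting).} We must show this system succeeds on many $z\in H$, which is a contradiction. The tension is that $\calA$ outputs one $y$ per $D$, not all $y$. The fix is to let the proof system be a little stronger: to refute ``$y\in D(\Range(G))$'' it is enough that $\calA$ succeeds on $D$ with a possibly shifted instance. Alternatively, as in Ilango's argument, we use the fact that $H$ is closed under the symmetry the family provides.

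I expect Step 4 to be the main obstacle. The cleanest route mirrors the main theorem and argues the contrapositive pointwise. For a $D$ where $H$ is shattered, every $y$ equals $D(z)$ for some $z\in H$. So a proof for any $y$ that is $\delta'$-far from $\Range(D\circ G)$ would, by Step 2, yield a proof for $z$, contradicting $z\in H$. Hence $\calA$ must fail on every $D$ where shattering occurs, which is a constant fraction of $D$.
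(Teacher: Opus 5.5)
Your final ``cleanest route'' is the paper's proof of \autoref{thm:remote-point general}. The adversary $\calB$ guesses $f$ (your $D$) and an accepting branch of $\calA(C_f)$ that prints $f(y)$. The Lipschitz property makes $\calB$ reject the whole $(\tau/\alpha)$-ball around $\Range(G)$, so security forces the rejection set $H$ of this particular $\calB$ (not of a proof system fixed before $\calA$) to have size at least $2^k$. Then on every $f$ with $f(H)=\{0,1\}^m$, any output of $\calA(C_f)$ has a preimage in $H$ that $\calB$ accepts, a contradiction.

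Two corrections. First, your Step~1 claim that covering all but a small fraction of $\{0,1\}^m$ would suffice is false: that is an ordinary disperser, and $\calA$'s single output could land in the uncovered part. The full-coverage (zero-error) property that your last paragraph uses is essential. Second, the shifted-instance/symmetry detour in Step~4 is unnecessary, since one output per $f$ already yields the contradiction.
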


Combining \autoref{thm: intro remote-point demi-bits generator to hardness of remote point} with an assumption on the demi-hardness of LPN, we immediately obtain:

\begin{corollary}[Hardness of $\RemotePoint$ for Linear Circuits, Informal]
    Under the~\hyperref[assumption: demi-hardness of LPN]{Demi-hardness of LPN}, 
    no efficient nondeterministic heuristic can find a point far 
    from the range of a $\XOR$ circuit, even on average. The same 
    holds for $k$-$\XOR$ circuits under 
    the~\hyperref[assumption: sparse-demi-hardness of LPN]{Demi-hardness of Sparse LPN}.
\end{corollary}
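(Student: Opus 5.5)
The plan is to reduce the statement to the two ingredients named in \autoref{thm: intro remote-point demi-bits generator to hardness of remote point}: the remote-point demi-hardness of LPN and a Lipschitz disperser family. The model is the Pajor/shattering argument from the proof idea of \autoref{thm: main informal}, with ``exact membership in the range'' replaced by ``$\delta$-closeness to the range''.

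\emph{Step 1: the remote-point demi-bits generator.} Let $A \in \GF(2)^{N\times n}$ be a random LPN matrix, and set $G_A(s) := As$. This map is computable by $\XOR$ gates. A string $y$ that is $\delta$-far from $\Range(G_A)=\mathrm{colspan}(A)$ is exactly a string that is not of the form $As+e$ with $|e|\le \delta N$. Suppose a sound proof system certifies ``$y$ is far from $\Range(G_A)$'' for a noticeable fraction of uniform $y$. Soundness means it never accepts a noisy codeword $As+e$. It therefore distinguishes LPN samples from uniform, nondeterministically and with one-sided error. This is exactly what the demi-hardness of LPN forbids. For the $k$-$\XOR$ version we take $A$ to be $k$-sparse and invoke the sparse variant of the assumption.

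\emph{Step 2: composition with the disperser family.} We compose $G_A$ with a member $D$ of the projection (or Lipschitz) disperser family of \autoref{thm: intro proj family}. The simplest choice is the coordinate restriction $D = \cdot|_I$ with $I$ random. The composed circuit $D\circ G_A$ is still a $\XOR$ (respectively $k$-$\XOR$) circuit, because projections add no gates.

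Now suppose, for contradiction, that an errorless nondeterministic heuristic $\calA$ outputs, for a noticeable fraction of $D$ in the family, a point $y$ that is $\delta'$-far from $\Range(D\circ G_A)$. Define $H\subseteq\{0,1\}^N$ to be the set of $z$ for which ``$z$ is $\delta$-far from $\Range(G_A)$'' has no short proof. By Step 1, $|H|\ge\Omega(2^N)$.

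The zero-error disperser property says that for most $D$ the image $D(H)$ is everything. In the restriction case this is Pajor: with constant probability $H$ shatters $I$. Consequently, for such a $D$ the output $y$ of $\calA$ equals $D(z)$ for some $z\in H$.

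\emph{Step 3: the Lipschitz transfer.} The key point is that farness pulls back. If $D(z)$ is $\delta'$-far from $D(\Range(G_A))$, then $z$ is $\delta'\cdot c$-far from $\Range(G_A)$, by the Lipschitz property. The reason is contrapositive: if $\mathrm{dist}(z,G_A(s))\le \delta N$, then $\mathrm{dist}(D(z),D(G_A(s)))\le L\delta N$. For coordinate projections this holds with $L=1$ in absolute distance, which after normalising gives a factor $N/|I|\approx 2$.

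This lets us build a proof system that refutes hardness of $H$. On input $z$, it guesses a $D$ for which $\calA$ succeeds, runs $\calA$ nondeterministically to obtain $y$, and checks that $D(z)=y$. Soundness comes from $\calA$ being errorless, combined with the Lipschitz transfer.

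\emph{The obstacle.} The main difficulty is the quantifier order in this last step. The argument must show that some single $D$ causes every $z\in H$ to be certified, but $\calA$ outputs only one $y$ per $D$, not all of them. I would handle this the way the paper's main theorem does: fix a good $D$ in advance, so that $D(H)=\{0,1\}^{I}$ by shattering. Then $\calA$'s output $y$ on $D$ has a preimage $z\in H$. The short certificate ``$y$ far from $D(\Range)$'' lifts to a short certificate ``$z$ far from $\Range(G_A)$'', contradicting $z\in H$.

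The remaining care goes into the parameters. We need $|I|>n$, so that the instance is a valid $\RemotePoint$ instance. We need $\delta'$ below the union-bound threshold, so that far points exist. And the averaging over $D$ must use that the good $D$'s (those shattered by $H$) form a constant fraction. Then a heuristic that succeeds on a $1-o(1)$ fraction of $D$ must succeed on some good $D$, which gives the contradiction.
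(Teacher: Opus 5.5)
Your proposal is correct and follows the paper's route: \autoref{thm:remote-point general} builds exactly your adversary (on input $y$, guess $f$ and a branch of $\calA(C_f)$ that outputs $f(y)$). It gets soundness from the Lipschitz pull-back and largeness from the zero-error disperser property applied to the set $H$ of strings the adversary rejects, and it is instantiated in \autoref{thm:hardness-of-xor-RPP} with $\Proj_{N\to N/c}$ and the LPN (resp.\ sparse LPN) generator. Two small fixes: define $H$ as the rejection set of that specific adversary rather than of an unspecified proof system, and note that the pull-back divides by the Lipschitz constant rather than multiplying, so $D(z)$ being $\delta'$-far from $\Range(D\circ G_A)$ only makes $z$ $\delta'|I|/N$-far from $\Range(G_A)$ (which is why the paper assumes security at remoteness $\tau/\alpha$).
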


Finally, we consider the hardness of finding \emph{average-case hard partial truth tables} against a circuit class $\ckt$. This is the natural ``hybrid'' of $\PartialHard$ and $\RemotePoint$~\cite{ChenHLR23}: Given the domain of a partial function $x_1, \dots, x_L \in \{0, 1\}^n$ and parameters $s\ge 1$, $\delta > 0$, our goal is to output a sequence of bits $b_1, \dots, b_L \in \{0, 1\}$ such that the partial function $\{x_i \mapsto b_i\}_{i \in [L]}$ is \emph{$\delta$-average-case hard} against $\ckt$ circuits of size $s$. Namely, there is no $\ckt$ circuit $C$ of size $s$ such that
\[\Pr_{i\gets [L]}[C(x_i) = b_i] \ge 1-\delta.\]
We denote this problem as $\ckt$-$\PartialAvgHard$.

It turns out that average-case hard partial truth tables are hard to construct if \emph{agnostic PAC-learning}~\cite{KSS92} is hard for nondeterministic algorithms:

\begin{theorem}[Informal]\label{thm: intro hardness of agnostic PAC-learning implies hardness of avgHPTT}
    For nondeterministic algorithms, generating hard partial truth tables for a circuit class $\ckt$ 
    is at least as hard as agnostic PAC-learning $\ckt$. More precisely:
    \begin{itemize}
        \item A worst-case nondeterministic algorithm for 
        $\ckt$-$\PartialAvgHard$ implies a nondeterministic distribution-free 
        agnostic PAC-learner for $\ckt$.
        \item An errorless average-case nondeterministic heuristic for 
        $\ckt$-$\PartialAvgHard$ implies a nondeterministic distribution-specific 
        agnostic PAC-learner for $\ckt$.
    \end{itemize}
\end{theorem}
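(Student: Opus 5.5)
We reduce agnostic learning to $\ckt$-$\PartialAvgHard$ by the same refutation-style route as for the non-agnostic learning theorem, with the exact-consistency condition replaced by an approximate one. Following \autoref{def: nondet PAC learning}, we phrase nondeterministic agnostic learning as \emph{refutation with random right-hand sides}. The learner gets labeled samples $(x_1,b_1),\dots,(x_L,b_L)$ with $x_i$ drawn from $\calD$ (or from a fixed distribution in the distribution-specific case). It must nondeterministically certify ``no size-$s$ $\ckt$ circuit agrees with the samples on a $(1-\delta)$ fraction'' whenever the labels are uniformly random. It must never certify this when some circuit achieves agreement $1-\delta'$ on the sample, for a suitable gap $\delta'<\delta$. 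Assuming the refutation-to-learning equivalence \cite{Vadhan17} carries over to the agnostic setting, as the paper presumably arranges in its definitions, it suffices to build such a refuter.

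Given a nondeterministic algorithm $\calA$ for $\ckt$-$\PartialAvgHard$, the refuter works as follows. On samples $(x_i,b_i)_{i\in[L]}$, it runs $\calA$ on the domain $x_1,\dots,x_L$ and guesses an accepting computation path, which outputs some $b'\in\{0,1\}^L$ such that $\{x_i\mapsto b'_i\}$ is $\delta_0$-average-case hard. It accepts iff $b$ and $b'$ are close in Hamming distance, say $\mathrm{dist}(b,b')\le \eta L$.

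Soundness holds by the triangle inequality. Suppose a circuit $C$ has $\Pr_i[C(x_i)=b_i]\ge 1-\delta'$ and we accepted. Then $C$ agrees with $b'$ on at least a $1-\delta'-\eta$ fraction of $i$. This contradicts the hardness of $b'$ provided $\delta'+\eta\le \delta_0$. So no accepting path exists on well-approximable samples.

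Completeness is the main obstacle. Uniformly random $b$ is far from a single $b'$ with overwhelming probability, so the naive closeness test almost never accepts random labels. I would fix this by observing that hardness is preserved under the XOR shift trick used for the Remote Point reduction. Write $b=b'\oplus r$. Then $\{x_i\mapsto b_i\}$ is $(\delta_0-\delta')$-hard unless $r$ correlates with a circuit, but this is circular. Instead I would use the standard reduction from refuting random labels to weak agnostic learning: random labels are $1/2-o(1)$ far from every size-$s$ circuit with high probability, by a union bound over $2^{O(s\log s)}$ circuits once $L\gg s\log s/\epsilon^2$. The required certificate is therefore exactly a certificate that $b$ itself is $(1/2-\epsilon)$-hard.

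The refuter achieves this by running $\calA$ on an instance whose answer is forced to be informative about $b$. It passes the domain $\{(x_i)\}$ together with the labels embedded, for example by querying $\calA$ on the doubled domain $\{x_i\}\cup\{x_i\}$ restricted by $b$. Alternatively it uses $\calA$ as an $\NP$ oracle on many random restrictions, with a Pajor/shattering-style argument in the spirit of \autoref{thm: main informal}: the set of labelings certified hard by $\calA$'s outputs on sub-domains shatters random index sets, so random $b$ is covered with constant probability.

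The distribution-specific version follows identically, using errorless average-case success of $\calA$ over domains drawn from $\calD^L$. The remaining work is parameter bookkeeping to match $\delta,\delta',\eta,\epsilon$.
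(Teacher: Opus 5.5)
There is a genuine gap, and it sits exactly where you locate the main obstacle: completeness. Your first refuter, which runs $\calA$ on the whole domain $x_1,\dots,x_L$ and accepts iff $b$ is close to an output $b'$, is sound. But nothing forces it to accept random labels, and none of your repairs supplies this.
\begin{itemize}
\item The XOR-shift idea is circular, as you note yourself.
\item The union-bound fact that random labels are $(1/2-o(1))$-far from every circuit is information-theoretic and yields no short certificate.
\item $\calA$ receives only the domain, never the labels, so it cannot be queried on an instance ``informative about $b$''.
\end{itemize}
Your last remark names the right tool but applies it to the wrong set. Shattering by the set of labelings that $\calA$'s outputs certify says nothing about how many random $b$ that set covers.

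The missing construction is to run $\calA$ on a \emph{guessed sub-domain}. The refuter $\calB$ guesses $I\subseteq[L]$ with $|I|=L'\le 0.1L$. It accepts iff some accepting path of $\calA(\{x_i\}_{i\in I})$ outputs a string at relative distance less than $\tau/2$ from $b|_I$.

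Completeness is then argued via the set $\mathrm{Rej}(\vec{x})\subseteq\{0,1\}^L$ of labelings that $\calB$ rejects. Suppose $\mathrm{Rej}$ shattered an $I$ on which $\calA$ has an accepting path with output $b'$. Then some $b\in\mathrm{Rej}$ would satisfy $b|_I=b'$ exactly, so $\calB$ would accept $b$, a contradiction.
\begin{itemize}
\item In the worst case, $\calA$ succeeds on every $I$. So $\mathrm{Rej}$ shatters no $L'$-set, and Sauer--Shelah gives $|\mathrm{Rej}|\le\sum_{i<L'}\binom{L}{i}=o(2^L)$.
\item In the distribution-specific case, you need the stronger \autoref{thm: disperser family by a projection} (Pajor plus Kruskal--Katona). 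It says that if $|\mathrm{Rej}|\ge 2^{L-\Delta}$, then a \emph{random} $I$ is shattered with constant probability. This contradicts $\calA$'s success probability, because $\{x_i\}_{i\in I}\sim\calD^{L'}$.
\end{itemize}

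Finally, since $I$ is guessed, soundness must hold for every $I$. A circuit with agreement $1-\delta'$ on all $L$ samples may concentrate its errors inside $I$. This forces the agnostic parameter down to $\delta'=\tau L'/(2L)$, so that at most $\tau L'/2$ disagreements can land in any $I$. Your bookkeeping does not anticipate this $L'/L$ loss.
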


\autoref{thm: intro hardness of agnostic PAC-learning implies hardness of avgHPTT} allows us to base the hardness of $\PartialAvgHard$ for \emph{extremely simple circuit classes}, such as $k$-$\XOR$, on variants of nondeterministic hardness of LPN:

\begin{corollary}[Hardness of $k$-$\XOR$-$\PartialAvgHard$ (Informal)]
    Under the~\hyperref[assumption: demi-hardness of random-Sparse-LPN]{Sparse-Secret LPN Assumption for Random Matrices}, no efficient nondeterministic 
    heuristic can find a partial truth table that is far from 
    every $k$-$\XOR$ function, even on average.
\end{corollary}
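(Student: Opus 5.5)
The plan is to reduce the statement to \autoref{thm: intro hardness of agnostic PAC-learning implies hardness of avgHPTT}, in its distribution-specific, average-case form. First I fix the sparse-secret LPN assumption for random matrices. Samples are $(a, \langle a, s\rangle \oplus e)$ with $a$ a uniformly random vector in $\GF(2)^n$, $s$ a secret of Hamming weight $k$, and $e$ Bernoulli noise of rate $\eta < 1/2$. The assumption says that no efficient nondeterministic (errorless) algorithm can certify that a constant fraction of uniformly random labelings are \emph{not} close to any such planted labeling. The point is that $x \mapsto \langle x, s\rangle$ with $|s| = k$ is exactly a $k$-$\XOR$ function, so the concept class hidden in the LPN samples is $\ckt = k$-$\XOR$.

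Next I show that a nondeterministic distribution-specific agnostic PAC-learner for $k$-$\XOR$, over the uniform distribution on $\GF(2)^n$, breaks this assumption. This is the step I would handle through the refutation viewpoint of \autoref{def: nondet PAC learning}. Take $L$ samples $(x_i, b_i)$. In the planted case some $k$-$\XOR$ agrees with $b$ on roughly a $1-\eta$ fraction of points. In the uniform case, by Chernoff plus a union bound over the $\binom{n}{k}$ possible hypotheses, every $k$-$\XOR$ agrees on at most a $1/2 + o(1)$ fraction once $L \gg k\log n$. An agnostic learner outputs a hypothesis whose error is within $\epsilon$ of optimal, and its error can be estimated on held-out samples. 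So the learner distinguishes the two cases, and in particular certifies ``far from every $k$-$\XOR$'' on most random inputs while never doing so on planted inputs. Soundness on planted inputs is exactly what makes it errorless.

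Finally I chain the two steps. Suppose there is an efficient errorless nondeterministic heuristic for $k$-$\XOR$-$\PartialAvgHard$ that succeeds on average over uniformly random domains $x_1, \dots, x_L$. Then the second bullet of \autoref{thm: intro hardness of agnostic PAC-learning implies hardness of avgHPTT} yields a nondeterministic agnostic learner for $k$-$\XOR$ under the uniform distribution. By the previous paragraph, that learner refutes sparse-secret LPN, which is a contradiction.

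The main obstacle is matching parameters. The approximation parameter $\delta$ in $\PartialAvgHard$ must be compatible with both the noise rate $\eta$ and the $1/2 + o(1)$ agreement bound for random labels, so I would choose $\delta$ strictly between $\eta$ and $1/2$, adjusted for the learner's slack $\epsilon$. I also need to check that the domain distribution demanded by the heuristic, namely uniform $x_i$, coincides with the matrix distribution in the assumption; this is why the assumption is stated for random matrices. I expect this bookkeeping of $L$, $\delta$, $\eta$, $k$ to be the only delicate part. The rest is a direct instantiation of the theorem.
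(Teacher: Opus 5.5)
Your skeleton is the paper's: the second bullet of \autoref{thm: PartialAvgHard to PAC learning for NP} turns the heuristic into a nondeterministic agnostic learner $\calB$ for $k$-$\XOR$ over $\calD$. Its soundness rejects every $A\vec{s}+\vec{e}$, because $a\mapsto\langle a,\vec{s}\rangle$ is a $k$-$\XOR$.

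Your middle step is off, though. In this paper a nondeterministic agnostic learner \emph{is} a refuter (\autoref{def: nd-agnostic-pac-learning}); it outputs no hypothesis and uses no held-out estimate. A hypothesis-based distinguisher would have two-sided error, so it could not give a circuit rejecting \emph{every} planted string. That is exactly what \autoref{assumption: demi-hardness of random-Sparse-LPN} quantifies over, and it is also why the noise there is $(\mu m)$-sparse rather than Bernoulli. Your Chernoff bound on random labels plays no role in the reduction.

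More seriously, the parameter loss is not an additive slack. From remoteness $\tau$ (what you call $\delta$) on domain size $L'$, the learner $\calB$ is only $\frac{\tau L'}{2L}$-agnostic. The reason is that $\calB$ guesses the subset $I$ and must be sound for every $I$, and the $\mu L$ disagreements on $[L]$ can be concentrated inside one guessed $I$. The extra factor $2$ comes from $\calB$ accepting outputs that are only $\tau/2$-close. So you need $\mu<\frac{\tau L'}{2L}$, i.e.\ $\tau>20\mu$ when $L'=L/10$, as in \autoref{cor: sparse secret LPN to PartialAvgHard}. If you take $\tau$ just above the noise rate, soundness on planted strings can fail.

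The second missing step is the quantifier over matrices. The assumption concerns a \emph{fixed} $A$: for at least half of the matrices, no nondeterministic circuit works. So you must hardwire $A$, setting $\calB_A(y):=\calB((a_i,y_i)_{i\in[L]})$. You then need to show that $\calB_A$ accepts at least half of the $y$'s (enough, since $1/2\ge 1/2-\delta$) for \emph{more than} half of the matrices, so that this set meets the assumption's good set.

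The averaged completeness of \autoref{thm: PartialAvgHard to PAC learning for NP} does not give this. With the paper's parameters ($\eps=1/4$, $L/L'=10$) it only guarantees average acceptance $(1-5\eps/3)(1-2^{-0.1(L/L')})=7/24<1/2$. That is consistent with only a $7/24$ fraction of matrices being good. The paper instead uses the theorem's typical-completeness clause. Here $\calB$ is $(1/2,5/12)$-typical, so a $7/12$ fraction of matrices have acceptance at least $1/2$. Since $7/12+1/2>1$, some matrix lies in both sets and contradicts the assumption. Alternatively, with a larger success probability and larger $L/L'$, Markov's inequality applied to the averaged completeness would also work. Without some such argument, ``the learner refutes sparse-secret LPN'' does not follow.
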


\subsubsection{Abstraction: Zero-Error Disperser Families}

Intuitively, the results in~\cite{RenWZ26, Ilango25} and \autoref{thm: main informal} suggest that proof complexity generators can be obtained by composing a certain function with a demi-bits generator. We make this intuition formal by introducing the notion of \emph{zero-error disperser families}.

Let $\calX\sim \{0, 1\}^n$ be a random variable. A deterministic function $f: \{0, 1\}^n \to \{0, 1\}^m$ is a \emph{zero-error disperser} for $\calX$ if the support of the random variable $f(\calX)$ contains every string in $\{0, 1\}^m$. We are interested in constructing zero-error dispersers for a class $\mathcal{C}$ of random variables (weak sources) such that the function $f$ is a zero-error disperser for every $\calX \in \mathcal{C}$. It is well known that such dispersers cannot exist for the class of general weak random sources even if the min-entropy\footnote{The min-entropy of a random variable $\calX$ is denoted as $H_\infty(\calX)$. $H_{\infty}(\calX) \geq k$ iff $\forall x \in \Supp(\calX), \Pr[\calX=x] \leq 2^{-k}$.} of $\calX$ is as large as $n-1$. Therefore, historically such zero-error dispersers have been studied for restricted classes of sources such as bit-fixing sources \cite{GabizonS12, GabizonS12b}, affine sources \cite{Li11, GabizonS12b,LZ24}, and independent sources \cite{GabizonS12, Cohen21}. For our applications here, however, we study a \emph{family} of zero-error dispersers (or equivalently, seeded zero-error dispersers) for \emph{general weak random sources} with relatively large (e.g., $n-o(n)$) min-entropy. 

Specifically, we say that a distribution $\calF$ over functions $f: \{0, 1\}^n \to \{0, 1\}^m$ is a family of $(k, \eps)$ zero-error dispersers if for every source $\calX \sim \{0, 1\}^n$ with min-entropy at least $k$, with probability at least $1-\eps$ over a random function $f \sim \calF$, $f$ is a zero-error disperser for $\calX$. Note that this is stronger than standard $(k, \eps)$ dispersers, which only require that the output has support size at least $(1-\eps)2^m$. We also remark that such a family of zero-error dispersers follows from a \emph{seeded randomness extractor} with sufficiently small error (e.g., when the error is smaller than $2^{-(m+1)}$ where $m$ is the output length). However, for our applications, it is important that the disperser has the smallest possible circuit complexity. %

\paragraph{The projection family.} Let $n \ge 10m$ and denote by $\Proj_{n \to m}$ the family of random projections mapping $n$ bits to $m$ bits. That is, to sample a random function $f: \{0, 1\}^n \to \{0, 1\}^m$ in $\Proj_{n \to m}$, sample a random subset $I \subseteq [n]$ of size $|I| = m$. Suppose $I = \{i_1, i_2, \dots, i_m\}$ with $1\le i_1 < i_2 < \dots < i_m \le n$, then define
\[f(x) := x_{i_1}x_{i_2} \dots x_{i_m}.\]

\begin{theorem}[Corollary of~\autoref{thm: disperser family by a projection}]\label{thm: intro proj family}
    Let $n\ge 10m$, then $\Proj_{n \to m}$ is an $(n-1, 1/2)$ zero-error disperser family.
\end{theorem}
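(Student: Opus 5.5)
Fix a source $\calX\sim\{0,1\}^n$ with $H_\infty(\calX)\ge n-1$ and let $H := \Supp(\calX)$. Since every point has probability at most $2^{-(n-1)}$, we get $|H|\ge 2^{n-1}$. A projection $f$ indexed by $I$ with $|I|=m$ is a zero-error disperser for $\calX$ exactly when $\{z|_I : z\in H\} = \{0,1\}^I$, i.e., when $H$ (viewed as a set system over $[n]$) shatters $I$. So the task is to show that a uniformly random $m$-subset $I\subseteq[n]$ is shattered by $H$ with probability at least $1/2$. The only combinatorial input I plan to use is Pajor's lemma: $H$ shatters at least $|H|\ge 2^{n-1}$ subsets of $[n]$.

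Let $\mathcal{S}$ denote the family of shattered sets. It is downward closed, since shattering $I$ implies shattering every subset of $I$. It also has $|\mathcal{S}|\ge 2^{n-1}$. Now count the non-shattered sets. These form an upward-closed family $\mathcal{U}$ with $|\mathcal{U}|\le 2^{n-1}$. Let $u_k$ be the fraction of $k$-subsets lying in $\mathcal{U}$. By the LYM/Kruskal--Katona style monotonicity for up-sets, $u_k$ is nondecreasing in $k$; this follows from the standard double-counting of pairs $(A,B)$ with $A\subset B$, $|B|=|A|+1$. Therefore
\[
\sum_{k\ge m} \binom{n}{k} u_m \;\le\; \sum_{k\ge m}\binom{n}{k}u_k \;\le\; |\mathcal{U}| \;\le\; 2^{n-1}.
\]
Since $m\le n/10$, the sum $\sum_{k\ge m}\binom nk$ equals $2^n(1-o(1))$; more precisely it is at least $2^n - 2^{H(1/10)n}$, which is comfortably close to $2^n$. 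This yields only $u_m \le \tfrac12(1+o(1))$, which is slightly weaker than what we need.

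The main obstacle is exactly this loss of an additive $o(1)$ beyond $1/2$. I would tighten the estimate by using that $\mathcal{U}$ is an up-set containing no sets of size below the minimal non-shattered size. Concretely, $|\mathcal{U}| \le 2^n - |\mathcal{S}|$, and the middle layers of $\mathcal{S}$ alone already carry most of its mass. Either I sharpen Pajor to count the shattered sets of size $\le m-1$ as well (these are all of $\mathcal{S}$ below layer $m$, and they contribute to $|\mathcal{S}|$ but not to $\mathcal{U}$), or, if the stated constant $1/2$ is not tight, I accept that the precise bound in \autoref{thm: disperser family by a projection} handles this detail. Indeed, refining the inequality to $\sum_{k\ge m}\binom nk u_k \le 2^{n-1} - (\text{number of sets of size }<m \text{ in } \mathcal{U})$ together with $\sum_{k<m}\binom nk \le 2^{H(1/10)n}$ should close the gap for $n\ge 10m$ with $m$ large. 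Small $m$ would be handled separately: a single coordinate is shattered unless $H$ is constant on it, and there are few such coordinates since $|H|\ge 2^{n-1}$.

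Finally, I would deduce the $\Proj_{n\to m}$ statement. Since the ordering of $I$ is fixed, $f(\calX)$ has full support $\{0,1\}^m$ if and only if $I$ is shattered. Hence the family is an $(n-1,1/2)$ zero-error disperser family.
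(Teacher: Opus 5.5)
Your reduction to shattering and your use of Pajor's lemma are correct, but the counting step cannot reach the constant $1/2$, and none of the proposed repairs closes the gap. The only structural fact you extract from down-closure is that the layer densities $s_k:=|\mathcal{S}_k|/\binom nk$ are nonincreasing. That fact is consistent with the profile $s_k=1$ for $k<m$ and $s_k=c$ for $k\ge m$, where $c:=\bigl(2^{n-1}-\sum_{k<m}\binom nk\bigr)/\sum_{k\ge m}\binom nk<1/2$. This profile has total mass exactly $2^{n-1}$, so LYM together with $|\mathcal{S}|\ge 2^{n-1}$ can never yield $s_m\ge 1/2$, for any $m\ge 1$.

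The repairs fail on this same profile. It has no non-shattered sets below layer $m$, so subtracting them gains nothing. The shattered sets below layer $m$ are already counted in $|\mathcal{S}|\ge|H|$, so there is nothing to ``sharpen'' in Pajor. The $m=1$ remark does not extend to other $m$. What is missing is a quantitative fact: a down-set whose layer-$m$ density is bounded away from $1$ has rapidly decaying density above layer $m$.

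The paper obtains this from Kruskal--Katona in Lov\'asz's form (\autoref{thm: Kruskal--Katona}). Writing $|\mathcal{S}_m|=\binom xm$ gives $|\mathcal{S}_i|\le\binom xi$ for all $i\ge m$. Hence $2^{n-1}\le|\mathcal{S}|\le 2^{nH(m/n)}+2^x$, which forces $x\ge n-2$. Then $s_m\ge\bigl(1-\frac{2}{n-m}\bigr)^m\ge 1-\frac{2m}{n-m}\ge\frac79$. In fact the statement is just \autoref{thm: disperser family by a projection} with $\Delta=1$, whose error $4m/n\le 2/5$ is well below $1/2$. Your closing deferral to that theorem is therefore the paper's entire proof, but then your LYM computation plays no role.

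If you prefer to keep your elementary toolkit, one extra step suffices. Let $A_1,A_2$ be independent uniform $m$-subsets of $[n]$. Since $\mathcal{S}$ is down-closed, $A_1\cup A_2\in\mathcal{S}$ forces $A_1,A_2\in\mathcal{S}$, so $\Pr[A_1\cup A_2\in\mathcal{S}]\le s_m^2$. On the other hand, conditioned on $|A_1\cup A_2|=\ell\le 2m$, the union is a uniform $\ell$-set by symmetry, so your monotonicity gives $\Pr[A_1\cup A_2\in\mathcal{S}]\ge s_{2m}$. Hence $s_{2m}\le s_m^2$. Now suppose $s_m<1/2$. Then $s_k<1/4$ for all $k\ge 2m$, and so $|\mathcal{S}|<\sum_{k<2m}\binom nk+2^{n-2}\le 2^{nH(1/5)}+2^{n-2}\le 2^{n-1}$ for $n\ge 10m\ge 10$. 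This contradicts Pajor's lemma.
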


To showcase how zero-error disperser families turn a demi-bits generator into proof complexity generators, we prove the following strengthening of our main theorem~\autoref{thm: main informal}. Roughly speaking, this theorem states that under suitable choice of parameters, even if we fix in advance the output length $m$ of our desired proof complexity generator, we can still draw $m$ output bits uniformly at random from our demi-bits generator and obtain a secure proof complexity generator.
\begin{theorem}[Informal]\label{thm: informal main 2}
    Let $\calP$ be a ``well-behaved'' proof system and $G: \{0, 1\}^n \to \{0, 1\}^{20n}$ be a demi-bits generator against $\calP$. Then with constant probability over a uniformly random subset $I \subseteq [20n]$ of size $|I| = 2n$, $G|_I$ is a proof complexity generator against $\calP$.
\end{theorem}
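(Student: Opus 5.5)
The plan is to combine the projection disperser family of \autoref{thm: intro proj family} with the shattering argument from the proof idea of \autoref{thm: main informal}. Let $m := 20n$ and let $H \subseteq \{0,1\}^{m}$ be the set of outputs $z$ for which $\calP$ has no short proof of ``$z\notin\Range(G)$''. Since $G$ is a demi-bits generator against $\calP$, we have $|H| \ge \delta 2^{m}$ for some noticeable $\delta$; for the informal statement I will take $\delta$ to be a constant, or at least $2^{-o(n)}$.

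Let $\calX$ be the uniform distribution on $H$. Its min-entropy is $m - \log(1/\delta)$. The first step is to apply the general version of the disperser family (\autoref{thm: disperser family by a projection}) to $\Proj_{m\to 2n}$. I expect that for $m = 20n \ge 10\cdot 2n$, the family $\Proj_{m \to 2n}$ is an $(m - \log(1/\delta), \eps)$ zero-error disperser family for some constant $\eps<1$, provided the entropy deficiency $\log(1/\delta)$ is small relative to the slack $m - 10\cdot 2n$. Concretely, with constant probability over a uniformly random $I \subseteq [20n]$ of size $2n$, the projection $z \mapsto z|_I$ is a zero-error disperser for $\calX$. This means $H$ shatters $I$: every $y \in \{0,1\}^{I}$ equals $z|_I$ for some $z\in H$.

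If the general theorem is only available in the deficiency-one form stated in \autoref{thm: intro proj family}, I would instead prove the needed bound directly from Pajor's lemma. By Pajor's lemma, $H$ shatters at least $\delta 2^{m}$ subsets of $[m]$. Shattering is closed downward, so the shattered family is a down-set of measure at least $\delta$ under the uniform measure. A uniformly random subset $S\subseteq[m]$ has $|S|\ge 2n$ except with probability $2^{-\Omega(n)}$. Conditioned on $S$ being shattered, a uniformly random $2n$-subset of $S$ is also shattered. Averaging, a random $2n$-set is shattered with probability at least roughly $\delta - 2^{-\Omega(n)}$. To make this rigorous I need the distribution of a random $2n$-subset of a random set conditioned on $|S|\ge 2n$ to be exactly uniform over $2n$-sets. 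This holds by symmetry after conditioning on $|S|=s$ for each $s\ge 2n$. So this route works even without the general disperser theorem, and gives probability $\Omega(\delta)$, which is a constant when $\delta$ is.

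The final step is the reduction, which is where the ``well-behaved'' assumption on $\calP$ enters. Fix a shattered $I$ and any $y\in\{0,1\}^{2n}$, and pick $z\in H$ with $z|_I=y$. If $\calP$ had a short proof of ``$y\notin\Range(G|_I)$'', then a short proof of ``$z\notin\Range(G)$'' would follow by a substitution/restriction step. The encoding of the latter contains the clauses for the former once the bits outside $I$ are fixed or ignored, since any $s$ with $G(s)=z$ gives $G|_I(s)=y$. This contradicts $z\in H$. The main obstacle I expect is pinning down ``well-behaved'' so that this simulation costs only polynomial overhead: $\calP$ must be closed under weakening, meaning adding the extra output constraints of $G$, so that refuting a subformula refutes the full formula. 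Quantitatively, the delicate point is ensuring the shattering probability stays a constant (rather than just $\delta$) when $\delta$ is merely noticeable.
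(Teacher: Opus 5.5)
\paragraph{Verdict: correct.} Your main route is the paper's proof. When $|H|\ge 2^{20n}/2$, the uniform distribution on $H$ has min-entropy $20n-1$. \autoref{thm: disperser family by a projection} then says a random $2n$-subset is shattered with probability at least $1-4\cdot 2n/(20n)=0.6$, so the deficiency-one form \autoref{thm: intro proj family} already suffices. The reduction is clause inclusion (\autoref{fact: good encoding}) together with closure under adding clauses (\autoref{def: well-behaved proof systems}). Three small corrections:
\begin{itemize}
\item The operation needed is weakening, not ``substitution/restriction''.
\item $H$ should be defined with respect to proof length $p(\ell)$, so that you conclude security against length $\ell$.
\item Your condition ``$\log(1/\delta)$ small relative to the slack $m-10\cdot 2n$'' is vacuous here, since that slack is $0$. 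The relevant condition is $4|I|\log(1/\delta)/N<1$.
\end{itemize}

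\paragraph{Your fallback is a genuinely different route.} It is a more elementary proof of the disperser step, and it is rigorous as written. It uses only Pajor's lemma and downward closure (equivalently, for a down-set the level densities $|\calF_k|/\binom{20n}{k}$ are non-increasing in $k$), avoids Kruskal--Katona entirely, and gives shattering probability at least $\delta-2^{-\Omega(n)}$.

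What Kruskal--Katona buys is quantitative. It shows the shattered family contains essentially all sets at low levels, giving probability at least $(1-(\Delta+1)/(N-|I|))^{|I|}$ with $\Delta=\log(1/\delta)$. This tends to $1$ as $|I|/N\to 0$, which is what \autoref{thm: many proof complexity generators inside one demi-bits generator} uses. Before the paper's linearization to $1-4|I|\Delta/N$, it gives about $e^{-(\Delta+1)/9}=\delta^{\Theta(1)}$ at ratio $|I|/N=1/10$. Your averaging bound, by contrast, never exceeds about $\delta$.

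\paragraph{The delicate point you raise cannot be resolved.} Take $H=\{z: z_1=\cdots=z_d=0\}$ with $d=\log(1/\delta)$. Only sets avoiding the first $d$ coordinates are shattered. A random $2n$-subset of $[20n]$ avoids them with probability about $0.9^{d}=\delta^{\Theta(1)}$. So at this ratio, constant probability genuinely requires constant density, which is how the informal statement must be read; the paper does this by taking $|H|\ge 2^{20n}/2$.
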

\begin{proof}[Proof Idea of \autoref{thm: informal main 2}]
    This follows easily from definition and is nearly identical to~\autoref{thm: main informal}.
    
    Let $H \subseteq \{0, 1\}^{20n}$ denote the set of ``hard'' strings for $\calP$, i.e., the strings $z \in \{0, 1\}^{20n}$ such that $\calP$ does not admit short proofs that $z\not\in\Range(G)$. If $G$ is a demi-bits generator against $\calP$, then $|H| \ge 2^{20n} / 2$, hence the uniform distribution over $H$ has min-entropy $20n-1$. It follows from \autoref{thm: intro proj family} that a random function $f\gets \Proj_{20n \to 2n}$ is a zero-error disperser for $H$ with probability at least $1/2$. 

    It is easy to verify that if $f$ is a zero-error disperser for $H$, then $f\circ G$ is a proof complexity generator against $\calP$: for every $y \in \{0, 1\}^{2n}$, there exists a string $z \in H$ such that $f(z) = y$; hence, if $\calP$ can efficiently prove $y\not\in\Range(f\circ G)$, then $\calP$ can efficiently prove $z\not\in\Range(G)$ as well, a contradiction to $z \in H$.
\end{proof}

\paragraph{Proof complexity generators from barely non-trivial demi-bits generators.} In~\cite{RenWZ26}, the authors composed pairwise independent hash functions with demi-bits generators to obtain proof complexity generators. Inspired by their work, we show that $t$-wise independent hash functions are zero-error disperser families with extremely good parameters:

\begin{theorem}[Corollary of~\autoref{thm: disperser family from twise independence}]
    Any family of $m$-wise independent hash functions $\calH = \{h: \{0, 1\}^n \to \{0, 1\}^m\}$ is an $(m + \log m + O(1), 2^{-m})$ zero-error disperser family.
\end{theorem}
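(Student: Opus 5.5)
We want to show that for every source $\calX\sim\{0,1\}^n$ with $H_\infty(\calX)\ge k := m+\log m + C$, a random $h$ from an $m$-wise independent family $\calH$ satisfies $h(\Supp(\calX)) = \{0,1\}^m$ with probability at least $1-2^{-m}$. Let $S := \Supp(\calX)$, so $|S|\ge 2^k$. We may shrink $S$ to a subset of size exactly $K := 2^k$ (fixing a subset only makes surjectivity harder), so it suffices to handle a fixed set $S$ of size $K$.

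The plan is a union bound over targets $y\in\{0,1\}^m$:
\[
\Pr_h[h(S)\neq\{0,1\}^m]\le \sum_{y}\Pr_h[y\notin h(S)],
\]
so it suffices to show $\Pr_h[y\notin h(S)]\le 2^{-2m}$ for each fixed $y$. For fixed $y$, let $Z := \sum_{x\in S}\mathbf{1}[h(x)=y]$. By $m$-wise independence (indeed, marginals are uniform even with $1$-wise independence), $\Ex[Z]=\mu := K2^{-m}= m\cdot 2^{C}$, and the indicators are $m$-wise independent Bernoulli$(p)$ variables with $p=2^{-m}$.

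The key step is a $t$-wise independent tail bound with $t=m$ (take $t$ even, adjusting by one if necessary). Since the $m$-th central moment of a sum of $m$-wise independent indicators equals that of a sum of fully independent ones, the standard bound (e.g.\ Bellare--Rompel or Schmidt--Siegel--Srinivasan) gives
\[
\Pr[Z=0]\le\Pr[|Z-\mu|\ge\mu]\le \frac{\Ex[(Z-\mu)^t]}{\mu^t}\le \left(\frac{c\,t\mu}{\mu^2}\right)^{t/2}=\left(\frac{c\,t}{\mu}\right)^{t/2}
\]
for an absolute constant $c$, valid when $\mu\ge t$. With $t=m$ and $\mu=m2^C$, this is at most $(c2^{-C})^{m/2}$, which is at most $2^{-2m}$ once $C$ is a large enough constant (we need $c2^{-C}\le 2^{-4}$). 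Summing over the $2^m$ values of $y$ then gives failure probability at most $2^{-m}$.

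The main obstacle is getting the moment bound with the correct dependence. Naive Chebyshev, i.e.\ pairwise independence, only gives $1/\mu$, which would force $k\approx 2m$. So I would carefully invoke or reprove the bound $\Ex[(Z-\mu)^t]\le (c\,t\mu)^{t/2}$ for $\mu\ge t$. The reproof expands the $t$-th moment into products of centered indicators, notes that any term containing an index appearing exactly once vanishes, and counts the remaining terms by their number $j\le t/2$ of distinct indices. Each such term contributes at most $K^j p^j t^t$, which sums to $(O(t\mu))^{t/2}$ when $\mu\ge t$. The parity issue with odd $m$ is handled by using $t=m-1$ or $t=m$, whichever is even, and absorbing the difference into the constant $C$.
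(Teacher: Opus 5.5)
Your proposal is correct and is essentially the paper's proof. The paper first proves the general $t$-wise statement by the same union bound over targets $z\in\{0,1\}^m$, using the Bellare--Rompel tail bound $8(tnp+t^2)^{t/2}/(np)^t$ with $np=2^{k-m}\ge t$; it then sets $t=m$ and $\eps=2^{-m}$ to get $k=m+\log m+O(1)$.

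One caveat concerns your optional reproof of the moment bound, which as written would not give the stated entropy. Bounding the $j$-index terms by $K^jp^jt^t$ sums to $(O(t^2\mu))^{t/2}$, not $(O(t\mu))^{t/2}$, and that would force $k\approx m+2\log m$. You need to keep the $1/j!$ via $\binom{K}{j}\le (eK/j)^j$ and count only $j^t$ assignments. This gives a $j$-term of at most $(e\mu)^j j^{t-j}$, which is increasing in $j$ and hence at most $(O(t\mu))^{t/2}$ overall when $\mu\ge t$.
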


This family requires large circuit complexity overhead. Nevertheless, in the regime where we do not need to optimize the circuit complexity of our generators, the extremely nice parameters of this family enable us to obtain proof complexity generators from demi-bits generators that are \emph{barely non-trivial}.

Let $G: \{0, 1\}^n \to \{0, 1\}^m$ be any function and $\calP$ be a sound proof system. If $y \in \Range(G)$, then the (false) statement ``$y \not\in\Range(G)$'' is trivially unprovable in $\calP$, since $\calP$ is sound. Therefore, if $G$ is injective, then there are at least $2^n$ statements of the form ``$y \not \in \Range(G)$'' that are infeasible to prove in $\calP$. We say that $G$ is a \emph{barely non-trivial} demi-bits generator if there are at least $100n2^n$ many statements of the form ``$y\not\in\Range(G)$'' that are infeasible to prove in $\calP$. This bound only exceeds the trivial bound of $2^n$ by an $O(n)$ factor, hence, as a demi-bits generator, $G$ looks \emph{barely non-trivial}. However, composing $G$ with a hash function in an $m$-wise independent hash family, we obtain that:
\begin{theorem}[Informal]
    If there is a barely non-trivial demi-bits generator against $\calP$, then there is a proof complexity generator against $\calP$.
\end{theorem}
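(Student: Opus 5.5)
Take the demi-bits generator $G: \{0,1\}^n \to \{0,1\}^m$ that is barely non-trivial against $\calP$. Let $H \subseteq \{0,1\}^m$ be the set of strings $z$ such that ``$z\not\in\Range(G)$'' has no short $\calP$-proof, so $|H| \ge 100n2^n$. The uniform distribution on $H$ is a flat source of min-entropy $\log|H| \ge n + \log n + \log 100$. Pick an output length $m' > n$ with $m' + \log m' + O(1) \le \log |H|$; for instance $m' = n+1$ works once the constant $100$ beats the $O(1)$ term. Then, by the preceding corollary of \autoref{thm: disperser family from twise independence}, a random $h$ from an $m'$-wise independent family $\calH = \{h: \{0,1\}^m \to \{0,1\}^{m'}\}$ is a zero-error disperser for $H$ with probability at least $1 - 2^{-m'} > 0$. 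Fix such an $h$ and output $G' := h\circ G: \{0,1\}^n \to \{0,1\}^{m'}$. If the target stretch of a proof complexity generator must exceed $n+1$, I would first check whether the definition allows this. If not, I would use the freedom in $100n$, or consider composing further generators.

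Security follows by the same reduction as in the proof idea of \autoref{thm: informal main 2}. For every $y \in \{0,1\}^{m'}$ there is $z \in H$ with $h(z) = y$. Suppose $\calP$ had a short proof of ``$y \not\in \Range(h\circ G)$''. Then it would also have a short proof of ``$z \not\in \Range(G)$'': from $G(s) = z$, evaluate $h$ on $z$ (with $h$ hardwired) to derive $h(G(s)) = y$, and contradict the given proof. This contradicts $z\in H$, so $\calP$ has no short proof of ``$y\not\in\Range(G')$'' for any $y$.

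The main obstacle is this last step, which needs a ``well-behaved'' closure property of $\calP$. Specifically, $\calP$ must efficiently derive the encoding of ``$z\not\in\Range(G)$'' from that of ``$h(z)\not\in\Range(h\circ G)$'' when $h$ is a polynomial-size circuit, not just a projection. For projections this is essentially substitution and weakening. For $m'$-wise independent hashes (e.g., degree-$(m'-1)$ polynomials over $\GF(2^{m'})$) it requires $\calP$ to handle circuit evaluation through extension variables. So I would state the theorem for proof systems closed under such polynomial-size substitutions, as in Extended Frege or the generic nondeterministic-adversary setting. The parameter check (min-entropy versus $m'+\log m'+O(1)$) is routine.
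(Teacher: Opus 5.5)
Your proposal is correct under the closure hypothesis you impose on $\calP$. The disperser step is the same as the paper's: an $m'$-wise independent hash onto $n+O(1)$ bits, with entropy requirement $n+\log n+O(1)$. The paper takes output length $n+2$, so the small stretch is not a concern.

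The real difference is in the security step. You define the hard set with respect to $\calP$ itself, which forces you to ask $\calP$ to turn a proof of ``$h(z)\notin\Range(h\circ G)$'' into one of ``$z\notin\Range(G)$''. The paper instead reuses the argument of \autoref{thm:avoidgeneral}. Consider the adversary $\calB$ that, on input $z$, guesses $h$ \emph{together with} a short $\calP$-proof of ``$h(z)\notin\Range(h\circ G)$''. This $\calB$ is sound, is itself a polynomial-size $\NP/_\poly$ adversary, and has a rejected set $X$ that does not depend on $h$. The paper's weak demi-bits assumption is against all such adversaries, so $|X|\ge 2^k$ comes for free. A random $h$ then disperses $X$, and $h\circ G$ is secure against $\calP$ with no closure property whatsoever. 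Taking $\calP$ to be the Cook--\Krajicek optimal proof system with one bit of advice (\autoref{thm: one proof complexity generator against all proof systems}) gives a single $h\circ G$ secure against all uniform proof systems simultaneously.

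So the paper's route replaces any assumption on $\calP$ with demi-hardness against arbitrary nondeterministic adversaries, and gains universality. Your route assumes demi-hardness only against the single system $\calP$, which matches the informal definition of ``barely non-trivial against $\calP$'' literally. The price is a closure assumption and a generator tailored to $\calP$.

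Incidentally, that closure is milder than you fear. With a gate-by-gate encoding, restrict the hash's gate variables to their values on input $z$. This maps a refutation of ``$h(G(x))=y$'' to one whose axioms are clauses of ``$G(x)=z$'' or weakenings of them. Hence closure under restrictions and weakening, which resolution already has, suffices; no extension variables are needed.
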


Finally, we note that Ilango's proof~\cite{Ilango25} can be regarded as a construction of \emph{seeded} zero-error disperser families as well; see \autoref{sec: ilango seeded}.

\subsection{Interlude: Are Lower Bound Properties Inherently Large?}\label{sec: discussion natural proof largeness}
A major objection to the natural proof barrier is its \emph{largeness} requirement~\cite{Chow11, Williams16, CWY23}. For example, Williams~\cite{Williams16} showed that circuit lower bounds for $\NEXP$ are equivalent to properties that are constructive (but not necessarily large). Suppose a complexity theorist discovers a property $\calP$ of hard Boolean functions, why does $\calP$ have to accept many \emph{random} functions?

Moreover, the largeness issue was exposed in the recent line of work on \emph{hardness magnification}~\cite{OliveiraS18, OliveiraPS21, McKayMW19, ChenMMW19, ChenJW19, ChenJW20, Hirahara23, ChenHOPRS22, LiuP21, CLY22}: One can use ``simple tricks'' (such as kernelization) to reduce the task of proving strong lower bounds (which may be subject to the natural proof barrier) to that of proving weaker lower bounds. Moreover, in many scenarios, the weak lower bound is already known (just not for the specific hard function required for magnification) and can be proved by a natural property, hence a ``simple trick'' seems to bypass the barrier completely!

We suspect that the relationship between demi-bits generators and proof complexity generators is the key towards understanding the largeness condition. If it were the case that every demi-bits generator is also a proof complexity generator, then largeness would be inherent in any circuit lower bound proofs---when we prove a circuit lower bound, we break $\TT$ as a proof complexity generator, hence we also break it as a demi-bits generator, and this is equivalent to constructing an $\NP$-natural property (with the largeness requirement). However, if demi-bits generators exist at all, then one can construct contrived demi-bits generators that are not proof complexity generators.\footnote{Let $G:\{0, 1\}^n \to \{0, 1\}^N$ be a demi-bits generator, $y \in \{0, 1\}^N$, and $G_y$ be the same generator as $G$ except that when $G$ returns $y$, $G_y$ returns some string different from $y$. It is trivial to show that $y\not\in\Range(G_y)$, i.e., to break $G_y$ as a proof complexity generator.} Still, \autoref{thm: main informal} along with the recent works connecting demi-bits generators to proof complexity generators~\cite{RenWZ26, Ilango25} suggest that largeness might be somewhat inherent in circuit lower bound proofs.

Our results imply a version of the natural proof barrier that does not refer to the largeness of \emph{functions} (proven to be hard) but refers to the largeness of \emph{generators} (that the lower bound method applies to). Suppose that the truth table generator $\TT$ is a demi-bits generator. Then with constant probability over a random subset $S\subseteq \{0, 1\}^n$, $\TT|_S$ is a proof complexity generator, which implies hardness of a large family of explicit construction problems that look very similar to circuit lower bounds (namely, hard partial truth tables over a random domain $S\subseteq \{0, 1\}^n$). To summarize, the demi-hardness of $\TT$ implies infeasibility of not only proving circuit lower bounds for \emph{random} functions, but also proving circuit lower bounds for arbitrary functions over a \emph{random} domain.

In his influential work on ``the hardest explicit constructions,'' Korten~\cite{Korten21} concluded\footnote{Modulo the conjecture that $\Avoid\not\in\FP$, which is now confirmed under cryptographic assumptions~\cite{DBLP:conf/stoc/IlangoLW23}.} that to prove a circuit lower bound for $\E$, one must use specific properties of $\TT$ that distinguish it from an arbitrary generator. Going one step further, we conclude that one must use specific properties of $\TT$ that distinguish it from \emph{most} generators of the form $\TT|_S$! In any case, we believe that our work will lead to a better understanding of the role of largeness in circuit lower bound proofs.

\section{Preliminaries}

In this paper, a \emph{circuit class} $\ckt$ is just any class of Boolean functions closed under projections. In contrast to standard circuit classes considered in the literature such as $\AC^0$ and $\P/_\poly$, we will consider \emph{heavily restricted} circuit classes such as $\NC^0_k$ or the class of $k$-bit parities.

We use $\{x_i \mapsto b_i\}_{i \in [L]}$ to denote a partial function $f$ such that $f(x_i) = b_i$ for every $i \in [L]$.

\subsection{Explicit Construction Problems}
Let $\mathscr{C}$ be a (multi-output) circuit class, we consider the following problems.
    \begin{itemize}
        \item \underline{$\mathscr{C}$-$\Avoid[n,m]$} is the class of  $\Avoid$ problems where the circuits are in $\mathscr{C}$, with input length $n$ and output length $m$;
        
        \item \underline{$\mathscr{C}$-$\RemotePoint[n,m,c(n)]$} is the class of $\RemotePoint$ problems where the underlying circuits belong to $\mathscr{C}$, with input length~$n$ and output length~$m$, and where the desired output has relative Hamming distance~$> c(n)$ from every string in the range of circuits in~$\mathscr{C}$.

        \underline{$\mathscr{C}$-$\PartialHard[n,s,L]$}: Given $L$ input strings $x_1, x_2, \dots, x_L \in \{0,1\}^n$, output $L$ bits $b_1, b_2, \dots, b_L$ such that no circuit $C \in \ckt$ of size $\le s$ satisfies $C(x_i) = b_i$ for all $i \in [L]$. %
        \item          \underline{$\mathscr{C}$-$\PartialAvgHard[n,s,L,\delta]$}: Given $L$ input strings $x_1, x_2, \dots, x_L \in \{0,1\}^n$, output $L$ bits $b_1, b_2, \dots, b_L$ such that for every circuit $C \in \ckt$ of size $\le s$,
        \[
            \frac{1}{L} \left| \{ i \in [L] : C(x_i) \neq b_i \} \right| \ge \delta.
        \]
    \end{itemize}

\subsection{Demi-Bits Generators}\label{sec: def of demi-bits}
\begin{definition}[Demi-Bits Generators]
    Let $n,m$ be length parameters such that $n<m$. A function $G:\{0,1\}^n\to \{0,1\}^m$ is an \emph{$(s,\eps)$-secure demi-bits generator} if there is no $\NP/_\poly$ adversary $\Adv$ of size $s$ such that 
    \begin{align*}
        \Pr_{y\gets\{0,1\}^m}[\Adv(y)\text{ accepts}]\ge\eps \quad\text{and}\quad\Pr_{x\gets\{0,1\}^n}[\Adv(G(x))\text{ rejects}]=1.
    \end{align*}
In this paper we assume the existence of demi-bits generators for any $\NP/_\poly$ adversary with polynomial size, so we will just write $\eps$-secure demi-bits generators. Equivalently, no efficient proof system can prove the statement ``$y\not\in\Range(G)$'' for $\geq \eps 2^m$ many strings $y$ (every such proof system must fail on at least $(1-\eps)2^m$ strings), since if there is such a proof then an $\NP/_\poly$ adversary can first guess such a proof and verify it. Notice that the notion of demi-bits generators is weaker when $\eps$ is larger.
\end{definition}

We will also study the following notion of \emph{weak} demi-bits generators. To break a $K$-weak demi-bits generator $G$, the adversary needs to prove ``$y\not\in\Range(G)$'' for \emph{all but $K$} many strings $y$.

\begin{definition}[Weak Demi-Bits Generators]
    Let $n, m$ be length parameters such that $n < m$. A function $G: \{0, 1\}^n \to \{0, 1\}^m$ is an \emph{$(s, K)$-weak demi-bits generator} if there is no $\NP/_\poly$ adversary $\Adv$ of size $s$ such that
    \[\mleft|\{y \in \{0, 1\}^m: \Adv(y)\text{ rejects}\}\mright| < K\quad\text{and}\quad\Pr_{x\gets \{0, 1\}^n}[\Adv(G(x))\text{ rejects}] = 1.\]
\end{definition}

Again, we assume the existence of weak demi-bits generators for any $\NP/_\poly$ adversary with polynomial size, so we will just write $K$-weak demi-bits generator. Note that a $K$-weak demi-bits generator is equivalent to a $(1-K/2^m)$-secure demi-bits generator. Also, any injective function $G:\{0, 1\}^n \to \{0, 1\}^m$ is trivially a $2^n$-weak demi-bits generator. When we refer to weak demi-bits generators, we usually work in the regime where $K$ is just slightly larger than $2^n$ (and we deduce hardness of $\Avoid$ even in this regime).

We will also consider demi-bits generators against uniform (i.e., $\NP$) adversaries (or uniform propositional proof systems, see \autoref{def: Cook-Reckhow proof systems}). When the demi-bits generator $G$ is non-uniform but the adversary is uniform, we assume that the adversary gets the circuit description of $G$ as input as well.

\subsection{Proof Complexity}

Recall that $\TAUT$ is the $\coNP$-complete language consisting of all tautological DNFs. In one sentence, a \emph{proof system} is a nondeterministic algorithm for $\TAUT$.

\begin{definition}[{\cite{CookR79}}]\label{def: Cook-Reckhow proof systems}
    A \emph{Cook--Reckhow} proof system for a language $L$ (usually $L = \TAUT$) is a deterministic algorithm $V(\varphi, \pi)$ that takes a purported tautology $\varphi$ and a purported proof $\pi$ as inputs and satisfies the following requirements:
    \begin{itemize}
        \item {\bf Completeness:} For every $\varphi \in \TAUT$, there exists a proof $\pi$ such that $V(\varphi, \pi)$ accepts.
        \item {\bf Soundness:} For every $\varphi \not\in\TAUT$ and every purported proof $\pi$, $V(\varphi, \pi)$ rejects.
        \item {\bf Efficiency:} $V(\varphi, \pi)$ runs in time $\poly(|\varphi| + |\pi|)$.
    \end{itemize}
\end{definition}

Note that it is possible that $|\pi|$ is super-polynomially longer than $\varphi$. Indeed, there exists a proof system such that every $\varphi \in \TAUT$ has a polynomially long proof $\pi$ if and only if $\NP = \coNP$ (such proof systems are called ``p-bounded''), and the goal of propositional proof complexity is to exhibit tautologies requiring super-polynomial (or even exponential) proof length in natural proof systems. We refer to~\cite{Krajicek_proof_complexity, Segerlind07} for a comprehensive overview of proof complexity.

\paragraph{Proof complexity generators.} Let $G: \{0, 1\}^n \to \{0, 1\}^N$ be a polynomial-size circuit and $y \in \{0, 1\}^N$. We fix a propositional encoding of the statement ``$y \not \in \Range(G)$'' into DNFs, such as the ones in~\cite{ABRW04}. Such an encoding is possible because this statement can be written as ``$\forall x \in \{0, 1\}^n$, $G(x) \ne y$'', hence the propositional variables of ``$y \not \in \Range(G)$'' consists of $x$ (and possibly extension variables representing the computational history of $G(x)$). As discussed in \autoref{sec: proof complexity generators from demi-bits generators}, most natural encodings would fit in our purpose.

Let $\calP$ be a proof system. We say $G$ is a \emph{proof complexity generator} against $\calP$ if for every $y \in \{0, 1\}^N$, the DNF encoding of ``$y \not \in \Range(G)$'' does not admit polynomial-size proofs in $\calP$. We emphasize that if $y$ is actually in $\Range(G)$, then by soundness of $\calP$, ``$y \not \in \Range(G)$'' does not admit proofs in $\calP$ (of any length). Hence this definition is equivalent to saying that for every string $y \in \{0, 1\}^N \setminus \Range(G)$, the \emph{tautology} ``$y \not \in \Range(G)$'' is hard to prove in $\calP$.

Again, when the proof system is uniform but the proof complexity generator $G$ is non-uniform, we assume that the proof system gets the circuit description of $G$ as well. This is reflected in the encoding of ``$y \not \in \Range(G)$'' (which contains a description of $G$).

\subsection{Nondeterministic Algorithms}

\paragraph{\texorpdfstring{$\FNP$}{FNP} v.s. \texorpdfstring{$\SearchNP$}{SearchNP}.}

In this paper, we will distinguish between the two notions $\FNP$ and $\SearchNP$.

\begin{definition}[$\SearchNP$~\cite{DBLP:conf/stoc/ChenL24}]
    Let $P$ be a search problem and $R$ be the binary relation defining $P$. We say $P$ can be solved by a nondeterministic polynomial-time algorithm if there is a nondeterministic Turing machine $M$ such that for every input $x$,
    \begin{itemize}
        \item If $x$ has a solution, then $M(x)$ has an accepting computation path, and every accepting path will output a valid solution $y$, i.e., $R(x,y)$ is true.
        \item If $x$ has no solution, then $M(x)$ has no accepting computation path.
    \end{itemize}
    The class of search problems solvable by nondeterministic polynomial-time algorithm is defined as $\SearchNP$.
\end{definition}

\begin{definition}[$\FNP$~\cite{DBLP:conf/stoc/ChenL24}]
    The class of search problems defined by a polynomial-time relation, i.e.,~$R\in \mathsf{P}$ is defined as $\FNP$.
\end{definition}

While it is clear that $\FNP\subseteq \SearchNP$, the following example suggests that this inclusion is strict.

\begin{proposition}[\cite{DBLP:conf/stoc/ChenL24}]
    If $\mathsf{P}\neq \mathsf{NP}$, then there is a total search problem in $\SearchNP\setminus \FNP$.
\end{proposition}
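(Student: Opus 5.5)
Under the assumption $\P \ne \NP$, I will exhibit a total search problem that is in $\SearchNP$ but not in $\FNP$. The natural candidate is a problem whose valid solutions can be found nondeterministically, but cannot be verified in polynomial time unless $\P = \NP$. Concretely, I would fix an $\NP$-complete language, say $\mathrm{SAT}$, and define the total relation
\[
R := \{(\varphi, b) : b = 1 \text{ if } \varphi \in \mathrm{SAT},\ b = 0 \text{ if } \varphi \notin \mathrm{SAT}\}.
\]
Every input has exactly one solution, the bit $\mathrm{SAT}(\varphi)$, so the problem is total.

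The first step is to show membership in $\SearchNP$. This requires a nondeterministic machine with the following properties: on every input some path accepts, and every accepting path outputs the correct bit. Deciding $\mathrm{SAT}$ outright this way would need $\coNP$ power. I therefore expect this step to be the main obstacle, and I would resolve it by changing the relation so that its solutions are \emph{witnesses}. Set
\[
R' := \{(\varphi, w) : w \text{ is a satisfying assignment of } \varphi\} \cup \{(\varphi, \bot) : \varphi \notin \mathrm{SAT}\}.
\]
This relation is still total. It is not obviously in $\SearchNP$, though, since outputting $\bot$ correctly again requires certifying unsatisfiability.

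A cleaner choice is to make the hard part depend on the input only through a padding trick. Define $R''(x, y)$ to hold iff either $y = 1x$ and $x$ encodes a formula together with a satisfying assignment, or $y = 0$ and $x$ does not have that form. Here the two cases are distinguished by the input itself, which is checkable in polynomial time, so this relation is in fact in $\FNP$. That shows the dependence must be on something hard to check, so this route does not work either.

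The right formulation is therefore the following. Let $R_0(\varphi, y)$ hold iff $y$ is a satisfying assignment of $\varphi$, or $y = \bot$ and $\varphi$ is unsatisfiable. The $\SearchNP$ requirement only constrains accepting paths, and I will use an $\NP$ machine for $\mathrm{SAT}$ to handle the guessing. On input $\varphi$, the machine $M$ guesses an assignment $w$; if $w$ satisfies $\varphi$, it outputs $w$ and accepts. If $\varphi$ is unsatisfiable, $M$ has no way to certify this, so for this relation the problem lies in $\SearchNP$ only if $\NP = \coNP$. I would thus choose a relation where correctness of $\bot$ can be certified and the difficulty lies in \emph{rejecting a wrong answer}.

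The final relation makes the easily certified answer the hard-to-verify one. Let $R(\varphi, y)$ hold iff $y$ is a satisfying assignment of $\varphi$, or $y = \bot$ and $\varphi \notin \mathrm{SAT}$, or $y = \top$ and $\varphi \in \mathrm{SAT}$. The machine $M$ guesses $w$; if $w$ satisfies $\varphi$, it outputs $\top$ and accepts. Together with the other cases, this must be arranged so that every input still has an accepting path, and this again needs a certificate on the $\mathrm{SAT}$-negative side. I would resolve this by restricting inputs via a promise-free encoding: the input is a pair $(\varphi, \psi)$ where exactly one of $\varphi, \psi$ is required to be satisfiable. If that promise cannot be made, I would fall back on the relation $\{(\varphi, y) : y = \top \text{ and } \varphi \in \mathrm{SAT}\} \cup \{(\varphi, y) : y = \text{a satisfying assignment of } \neg\text{-free dummy}\}$.

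The last step is the non-membership in $\FNP$. This follows once solutions are bits of $\mathrm{SAT}(\varphi)$: a polynomial-time verifier for $R$ would decide $\mathrm{SAT}$ by testing the pair $(\varphi, \top)$, which gives $\P = \NP$ and contradicts the assumption.
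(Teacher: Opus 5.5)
Your proposal never settles on a single relation that you show to be total, in $\SearchNP$, and outside $\FNP$ at the same time.

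\begin{itemize}
\item The candidates you actually analyze ($R$, $R'$, $R_0$, and the three-case relation with $\top/\bot$) all fail the $\SearchNP$ condition on unsatisfiable inputs. Some accepting path would have to certify unsatisfiability, which would put $\mathrm{SAT}$ in $\coNP$, as you note yourself.
\item The pair-input variant is not a total search problem. ``Exactly one of $\varphi,\psi$ is satisfiable'' is a promise, and calling the encoding promise-free does not remove it. If you totalize it by declaring every $y$ valid when the promise fails, your guessing machine has no accepting path when both formulas are unsatisfiable. Adding a default output instead breaks soundness on inputs that satisfy the promise.
\item The fallback relation (``a satisfying assignment of $\neg$-free dummy'') is not precisely defined, and you never check its totality or its membership in $\SearchNP$.
\item Your $\FNP$ paragraph is phrased for the relation whose solutions are the bits $\mathrm{SAT}(\varphi)$. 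That is exactly the relation you showed is not in $\SearchNP$.
\end{itemize}

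The missing observation is that $\SearchNP$ only requires the machine to output \emph{some} valid solution. Membership in $\FNP$, by contrast, requires deciding the relation on \emph{every} pair $(x,y)$. So the problem need not be hard to solve at all. Let $R(x,y)$ hold iff $y=0$, or $y=1$ and $x\in\mathrm{SAT}$.
\begin{itemize}
\item Every $x$ has the solution $0$, so $R$ is total.
\item The deterministic machine that always outputs $0$ satisfies the $\SearchNP$ definition.
\item If $R\in\mathsf{P}$, then $x\mapsto R(x,1)$ decides $\mathrm{SAT}$, so $R\notin\FNP$ unless $\mathsf{P}=\mathsf{NP}$.
\end{itemize}
Read charitably, your fallback is this construction in disguise: take a fixed, trivially satisfiable dummy formula whose satisfying assignments are recognizable in polynomial time and distinct from $\top$. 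Your $(\varphi,\top)$ test is then the right $\FNP$ argument for it. But the proof as written neither identifies this as the construction nor verifies its properties.
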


We define \emph{errorless $\SearchNP$ heuristics} for search problems (such as $\Avoid$ or $\PartialHard$):
\begin{definition}
    Let $\Pi$ be a search problem. We say a nondeterministic algorithm $\cal A$ is an \emph{errorless heuristic} for $\Pi$ if for every input instance $x$, every computational path of $\calA(x)$ either outputs $\bot$ or outputs a correct $\Pi$-solution of $x$.

    If $\calA$ is an errorless heuristic for $\Pi$ and $x$ is an instance, then we say that $\calA$ solves $\Pi$ on input $x$ if there exists at least one computational path on which $\calA$ does not output $\bot$ (which means that $\calA$ outputs a correct $\Pi$-solution of $x$). Otherwise, we say that $\calA(x) = \bot$.

    Let $\Pi$ be a search problem, $\calD$ be a distribution over inputs of $\Pi$, and $\eps > 0$. We say that an errorless heuristic $\calA$ \emph{solves $\Pi$ on $\calD$ with success probability $1-\eps$} if
    \[\Pr_{x\gets \calD}[\calA(x) = \bot] \le \eps.\]
\end{definition}

\subsection{Arthur--Merlin Protocols}

An \emph{Arthur--Merlin} protocol~\cite{Babai85} for a language $L$ is a constant-round public-coin interactive protocol between a computationally unbounded $\Prover$ (Merlin) and a randomized polynomial-time $\Verifier$ (Arthur) that satisfies the following properties for every input $x$:
\begin{itemize}
    \item {\it (Completeness)} If $x\in L$, then there is a $\Prover$ that makes the $\Verifier$ accept w.p.~$\ge 2/3$.
    \item {\it (Soundness)} If $x\not\in L$, then no $\Prover$ can make the $\Verifier$ accept w.p.~$>1/3$.
\end{itemize}

Let $\AM$ denote the set of languages with an Arthur--Merlin protocol.

In this paper, we do not require the full power of $\AM$; 
we only require an average-case variant where completeness 
holds over random instances rather than all yes-instances:
\begin{itemize}
    \item {\it (Soundness)} If $x \not\in L$, then no 
    $\Prover$ can make the $\Verifier$ accept w.p.~$> 1/3$.
    \item {\it (Completeness)} If $x \leftarrow L$ is a 
    random yes-instance, then there exists a $\Prover$ 
    that makes the $\Verifier$ accept w.p.~$\ge 2/3$ over the randomness of the instance and 
    the $\Verifier$'s coins.
\end{itemize}
For example, when refuting random $k$-CNFs: soundness 
requires that no $\Prover$ can make the $\Verifier$ 
accept a satisfiable CNF w.p.~$> 1/3$, 
while completeness only requires that the $\Verifier$ 
accepts w.p.~$\ge 2/3$ over a random CNF 
and its own coins. (See, e.g.,~\autoref{def: AM algorithm for random k-SAT}.)

\subsection{Nondeterministic CSP Refutation}

\begin{definition}[Nondeterministic Strong Refutation of 
Semi-random $k$-$\XOR$]\label{def:strong-xor}
A \emph{semi-random $k$-$\XOR$ instance} $\phi$ on $n$ Boolean 
variables consists of an adversarial $k$-uniform hypergraph 
$\mathcal{H}$ on $[n]$ with $m$ hyperedges and i.i.d.\ 
uniform random right-hand sides 
$b_C \sim \{\pm 1\}$ for $C \in \mathcal{H}$. Associated to 
$\phi$ is the polynomial
\[
    \phi(x) \;=\; \frac{1}{m} \sum_{C \in \mathcal{H}} 
    b_C \prod_{i \in C} x_i,
\]
so that 
$\val(\phi) = \frac{1}{2} + \frac{1}{2} 
\max_{x \in \{\pm 1\}^n} \phi(x)$.
 
A \emph{nondeterministic $\varepsilon$-tight refutation 
algorithm} is a polynomial-time $\Verifier$ $V$ that takes as 
input an instance $\phi$, a witness 
$w \in \{0,1\}^{\poly(n)}$, and a claimed upper bound 
$\alpha \in [0,1]$, such that:
\begin{itemize}
    \item \emph{(Soundness)} For every instance $\phi$, 
        every witness $w$, and every $\alpha$,
        \[
            V(\phi, w, \alpha) = 1 
            \;\Longrightarrow\; 
            \max_{x \in \{\pm 1\}^n} \phi(x) 
            \;\leq\; \alpha.
        \]
    \item \emph{(Completeness)} For every adversarial 
        $\mathcal{H}$ with $m$ sufficiently large,
        \[
            \Pr_{b \sim \{\pm 1\}^m}\!\bigl[
            \exists\, w \in \{0,1\}^{\poly(n)}:\;
            V(\phi, w, \varepsilon) = 1
            \bigr] \;\geq\; 0.99.
        \]
\end{itemize}
In particular, whenever $V$ accepts, the certified bound 
$\val(\phi) \leq \frac{1}{2} + \frac{\alpha}{2}$ holds. 
For $\varepsilon \to 0$, this approaches the random 
assignment threshold of $\frac{1}{2}$.
\end{definition}
\subsection{Useful Results}
\paragraph{Pajor's lemma.} Let $H\subseteq \{0, 1\}^m$, and $I\subseteq [m]$ denote a subset of indices. We say that $H$ \emph{shatters} $I$ if for every string $z \in \{0, 1\}^I$, there exists a string $y\in H$ such that $y|_I = z$. 

\begin{lemma}[Pajor's Lemma~{\cite{pajor1985sous}}]\label{lemma: Pajor}
    Let $H\subseteq \{0, 1\}^m$. Then there are at least $|H|$ many subsets of indices $I$ such that $H$ shatters $I$.
\end{lemma}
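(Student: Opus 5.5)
We prove Pajor's Lemma by induction on the dimension $m$, tracking the stronger quantity $\mathrm{sh}(H)$, the number of subsets $I\subseteq[m]$ shattered by $H$. We show $\mathrm{sh}(H)\ge |H|$ for every $H\subseteq\{0,1\}^m$. The empty set is shattered by any nonempty $H$, and the statement is trivial when $H=\emptyset$. This settles the base case $m=0$: there $H$ is either empty or $\{\varepsilon\}$, and in the latter case exactly one set (namely $\emptyset$) is shattered.

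For the inductive step, split $H$ according to the last coordinate. Let $H_0=\{y\in\{0,1\}^{m-1}: y0\in H\}$ and $H_1=\{y\in\{0,1\}^{m-1}: y1\in H\}$. Then $|H|=|H_0|+|H_1|=|H_0\cup H_1|+|H_0\cap H_1|$. By induction, $\mathrm{sh}(H_0\cup H_1)\ge|H_0\cup H_1|$ and $\mathrm{sh}(H_0\cap H_1)\ge|H_0\cap H_1|$, where both quantities count subsets of $[m-1]$. It therefore suffices to exhibit an injection from these two families of shattered sets into the sets shattered by $H$, keeping the two images disjoint.

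\begin{itemize}
\item If $J\subseteq[m-1]$ is shattered by $H_0\cup H_1$, then $J$ is shattered by $H$. Indeed, each pattern on $J$ is realized by some $y\in H_0\cup H_1$, and then $y0$ or $y1$ lies in $H$ and has the same restriction to $J$.
\item If $J\subseteq[m-1]$ is shattered by $H_0\cap H_1$, then $J\cup\{m\}$ is shattered by $H$. Indeed, each pattern on $J$ is realized by some $y$ with both $y0,y1\in H$, so every extension to coordinate $m$ is also realized.
\end{itemize}

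The first family consists of sets not containing $m$, and the second of sets containing $m$. The two images are therefore disjoint, and each map is injective. This gives
\[\mathrm{sh}(H)\ge \mathrm{sh}(H_0\cup H_1)+\mathrm{sh}(H_0\cap H_1)\ge |H_0\cup H_1|+|H_0\cap H_1| = |H|,\]
which completes the induction.

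The one point requiring care is the choice to pass to $H_0\cup H_1$ and $H_0\cap H_1$ rather than to $H_0$ and $H_1$ themselves. Only the union/intersection pair gives both the counting identity and the disjoint embedding: sets shattered by the intersection can be enlarged by $m$, while sets shattered by the union are shattered as they are. Once this decomposition is chosen, the rest is the routine verification above.
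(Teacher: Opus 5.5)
Your proof is correct. The base case, the identity $|H| = |H_0\cup H_1| + |H_0\cap H_1|$, and the two embeddings $J\mapsto J$ and $J\mapsto J\cup\{m\}$, with disjoint images, are all checked properly; this is the standard inductive proof of Pajor's Lemma. The paper gives no proof of its own: it cites \cite{pajor1985sous} and uses the lemma as a black box, for example in \autoref{thm: disperser family by a projection}. So there is no competing route to compare against.

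One small correction to your closing remark: the union/intersection pair is not the only decomposition that works. Working with $H_0$ and $H_1$ directly also succeeds, for two reasons:
\begin{itemize}
\item Every set shattered by $H_0$ or by $H_1$ is shattered by $H$.
\item Every $J$ shattered by both $H_0$ and $H_1$ yields $J\cup\{m\}$ shattered by $H$: realize a pattern $(z,b)$ using an element of $H_b$.
\end{itemize}
Inclusion--exclusion on the two families of shattered sets then gives $\mathrm{sh}(H)\ge \mathrm{sh}(H_0)+\mathrm{sh}(H_1)\ge |H_0|+|H_1| = |H|$. This only affects the motivating remark, not the validity of your argument.
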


As a direct corollary of \hyperref[lemma: Pajor]{Pajor's lemma}, we have:
\begin{lemma}[Sauer--Shelah Lemma~{\cite{Sauer72, Shelah72, VC68}}]\label{lemma: Sauer--Shelah}
    Let $H\subseteq \{0, 1\}^m$ such that no subset $I\subseteq [m]$ with size $k$ shatters $H$. Then
    \[|H| \le \sum_{i=0}^{k-1}\binom{m}{i}.\]
\end{lemma}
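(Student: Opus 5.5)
The plan is to combine Pajor's Lemma (\autoref{lemma: Pajor}) with a counting bound on the family of subsets that $H$ can shatter. Read the hypothesis as saying that $H$ shatters no $I\subseteq[m]$ with $|I| = k$. Let $\calZ$ denote the collection of all $I\subseteq[m]$ shattered by $H$. Pajor's Lemma gives $|H| \le |\calZ|$, so it suffices to prove
\[|\calZ| \le \sum_{i=0}^{k-1}\binom{m}{i}.\]

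The key step is that shattering is downward closed. Suppose $H$ shatters $I$ and $I'\subseteq I$. Take any $z'\in\{0,1\}^{I'}$ and extend it arbitrarily to some $z\in\{0,1\}^{I}$. Because $H$ shatters $I$, there is $y\in H$ with $y|_I = z$. Then $y|_{I'} = z|_{I'} = z'$, so $H$ shatters $I'$ as well.

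Now suppose some $I\in\calZ$ had $|I|\ge k$. Choose any $I'\subseteq I$ with $|I'| = k$. By downward closure $H$ shatters $I'$, which contradicts the hypothesis. Hence every member of $\calZ$ has size at most $k-1$. Since there are exactly $\sum_{i=0}^{k-1}\binom{m}{i}$ subsets of $[m]$ of size at most $k-1$, this gives $|\calZ| \le \sum_{i=0}^{k-1}\binom{m}{i}$, and therefore $|H|\le\sum_{i=0}^{k-1}\binom{m}{i}$.

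There is no real obstacle here. The only care needed is the reading of the hypothesis: it concerns shattering of size exactly $k$, so the downward-closure observation is what upgrades it to ``no shattered set of size $\ge k$''. One edge case is worth checking: if $k > m$, there are no size-$k$ subsets, and the bound $\sum_{i=0}^{k-1}\binom{m}{i} = 2^m$ holds trivially.
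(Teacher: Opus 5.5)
Your proof is correct and is the derivation the paper intends when it calls this lemma a direct corollary of Pajor's Lemma: Pajor's Lemma bounds $|H|$ by the number of shattered sets, and since shattering is downward closed (a fact the paper also uses in the proof of \autoref{thm: disperser family by a projection}), every shattered set has size at most $k-1$. You are also right to read the slightly garbled hypothesis as ``$H$ shatters no $k$-subset of $[m]$,'' and your argument covers the edge cases $k>m$ and $k=0$.
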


\paragraph{Kruskal--Katona theorem.} Let $\calF$ be a collection of subsets of $[n]$, we say $\calF$ is \emph{downward closed} if for every two subsets $C\subseteq D\subseteq[n]$, if $D\in \calF$, then $C\in\calF$. For a positive real number $x > 0$ and an integer $i \le x$, define
\[\binom{x}{i} := \frac{x(x-1) \dots (x-i+1)}{i!}.\]

We need the following version of the Kruskal--Katona theorem~\cite{kruskal1963number,katona1968theorem, harper1966optimal,clements1969generalization}, formulated by Lovász~\cite[13.31b]{lovasz2007combinatorial}:
\begin{theorem}[{Kruskal--Katona Theorem, Lovász's Formulation}]\label{thm: Kruskal--Katona}
    Let $\calF$ be a collection of subsets of $[n]$ that is downward closed, and for each $k\in\N$ denote $\calF_k$ to be the collection of size-$k$ subsets in $\calF$. For every $i > j$, if $|\calF_i| = \binom{x}{i}$ where $x\ge i$ is real, then $|\calF_j| \ge \binom{x}{j}$.
\end{theorem}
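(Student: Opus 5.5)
The statement is the classical Kruskal--Katona theorem in Lov\'asz's real-parameter form. I would prove it by reducing to the case $j=i-1$ and then running the standard shifting argument.

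\textbf{Reduction to $j=i-1$.} Since $\calF$ is downward closed, $\calF_{i-1}$ contains the shadow $\partial\calF_i := \{S\setminus\{v\} : S\in\calF_i,\ v\in S\}$. So it suffices to prove the following shadow statement:
\begin{quote}
if $\calA$ is a family of $k$-sets with $|\calA|=\binom{x}{k}$ for real $x\ge k$, then $|\partial\calA|\ge\binom{x}{k-1}$.
\end{quote}
Granting this, suppose $|\calF_i|=\binom{x}{i}$. Write $|\calF_{i-1}|=\binom{y}{i-1}$. The shadow statement gives $\binom{y}{i-1}\ge\binom{x}{i-1}$, and since $t\mapsto\binom{t}{i-1}$ is increasing for $t\ge i-2$, this yields $y\ge x$. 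Applying the same argument to $\calF_{i-1}$, then $\calF_{i-2}$, and so on, we reach $|\calF_j|\ge\binom{y'}{j}\ge\binom{x}{j}$ for some $y'\ge x$.

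\textbf{Proof of the shadow statement.} I would induct on $k$, and for fixed $k$ on $|\calA|$. The base case $k=1$ is trivial, since then $\partial\calA=\{\emptyset\}$ and $\binom{x}{0}=1$.

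First apply the shifting operators $S_{1v}$, which replace $v$ by $1$ whenever the result is not already present. These preserve $|\calA|$ and do not increase $|\partial\calA|$. This is the standard compression lemma, and I would verify it directly by a short case analysis. So we may assume $\calA$ is stable under all $S_{1v}$.

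Split $\calA=\calA_0\cup\calA_1$ according to whether $1\notin S$ or $1\in S$, and let $\calA_1'=\{S\setminus\{1\}: S\in\calA_1\}$, a family of $(k-1)$-sets. Stability under shifting gives two facts:
\begin{itemize}
\item $\partial\calA_0\subseteq\calA_1'$. Indeed, if $S\in\calA_0$ and $v\in S$, then $(S\setminus\{v\})\cup\{1\}\in\calA$.
\item $\partial\calA\supseteq\calA_1'\,\cup\,\{T\cup\{1\}: T\in\partial\calA_1'\}$, and the two parts are disjoint.
\end{itemize}
Hence
\[|\partial\calA|\ge|\calA_1'|+|\partial\calA_1'|.\]

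Now I claim $|\calA_1'|\ge\binom{x-1}{k-1}$. Suppose not. Then $|\calA_0|>\binom{x}{k}-\binom{x-1}{k-1}=\binom{x-1}{k}$. Since $\calA_0$ is strictly smaller than $\calA$ (shiftedness forces $\calA_1\neq\emptyset$), the inner induction applies and gives $|\partial\calA_0|\ge\binom{x-1}{k-1}$. Together with $\partial\calA_0\subseteq\calA_1'$, this contradicts the assumption.

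With the claim in hand, the induction on $k$ gives $|\partial\calA_1'|\ge\binom{x-1}{k-2}$. Pascal's rule $\binom{x-1}{k-1}+\binom{x-1}{k-2}=\binom{x}{k-1}$, valid for real $x$, then finishes the proof.

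\textbf{Expected obstacle.} I expect the main difficulty to be the real-parameter bookkeeping rather than the combinatorics. Applying the induction to $\calA_0$ requires writing $|\calA_0|=\binom{x'}{k}$ with $x'\ge k$. When $x-1<k$, i.e.\ $x\in[k,k+1)$, the quantity $\binom{x-1}{k}$ is small or even negative. I would handle this range separately, using that the shadow of any nonempty family of $k$-sets has size at least $k\ge\binom{x}{k-1}$ there. I would also double-check monotonicity of $\binom{t}{k}$ on $t\ge k-1$ wherever it is used.
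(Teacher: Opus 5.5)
The paper does not prove this statement; it quotes it from Lov\'asz's book, so the only comparison is with the citation. Your argument is the standard compression proof (essentially Frankl's short proof of the real-parameter bound), and its main line is correct. The pieces that check out are: the reduction to consecutive levels via monotonicity of $t\mapsto\binom{t}{j}$ on $[j-1,\infty)$; the two consequences of stability under the $S_{1v}$; the Pascal-rule identities, which hold for real $x$; and the use of $\calA_1\neq\emptyset$ to justify the inner induction on $|\calA|$. You should also say why iterating the shifts terminates. For example, each nontrivial shift strictly decreases $\sum_{S\in\calA}\sum_{s\in S}s$.

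The one thing to fix is the ``expected obstacle'' paragraph, which is wrong on both counts.

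\emph{There is no obstacle.} For $x\ge k$, every factor of $\binom{x-1}{k}=\frac{1}{k!}\prod_{r=1}^{k}(x-r)$ is nonnegative, so $\binom{x-1}{k}\ge 0$. Hence $|\calA_0|>\binom{x-1}{k}$ already forces $\calA_0\neq\emptyset$, so $|\calA_0|=\binom{x'}{k}$ for some $x'\ge k$. Strict monotonicity of $\binom{\cdot}{k}$ on $[k-1,\infty)$ then gives $x'>x-1$, and the inner hypothesis yields $|\partial\calA_0|\ge\binom{x'}{k-1}\ge\binom{x-1}{k-1}$ for every $x\ge k$. The same nonemptiness remark, $|\calA_1'|\ge\binom{x-1}{k-1}\ge 1$, supplies the parameter needed to apply the outer induction to $\calA_1'$.

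\emph{The fallback you propose is false.} On $[k,k+1)$ the function $\binom{x}{k-1}$ is strictly increasing and equals $k$ only at $x=k$. So for $x\in(k,k+1)$, i.e.\ $2\le|\calA|\le k$, the bound $|\partial\calA|\ge k$ is strictly weaker than what is required. For instance, with $k=3$ and $|\calA|=3$ the target $\binom{x}{2}$ exceeds $5$, while the fallback gives only $3$. Had you relied on it, the proof would fail in exactly that range. Delete the special case; the uniform argument above covers it.
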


\paragraph{Prefix sum of binomial coefficients.} Let $H(p) = -p\log p - (1-p)\log(1-p)$ be the binary entropy function, then:
\begin{proposition}\label{prop: prefix sum of binomial coefficients}
For every integers $n, k$ such that $k\le n/2$, we have
    \[\sum_{i=0}^k\binom{n}{i} \le 2^{n\cdot H(k/n)}.\]
\end{proposition}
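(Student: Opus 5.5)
We will prove the bound $\sum_{i=0}^k\binom{n}{i}\le 2^{nH(k/n)}$ by the standard entropy/probabilistic (Chernoff-type) argument. Set $p := k/n \le 1/2$. If $k=0$ both sides equal $1$, and if $k = n/2$ the right-hand side is $2^n$, which dominates the full sum. So we may assume $0<p<1/2$.

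The key step is the identity
\[1 = (p + (1-p))^n = \sum_{i=0}^n \binom{n}{i} p^i (1-p)^{n-i} \ge \sum_{i=0}^k \binom{n}{i} p^i(1-p)^{n-i}.\]
Each term on the right is then bounded below uniformly in $i \le k$. Because $p\le 1/2$, we have $p/(1-p)\le 1$. Hence $p^i(1-p)^{n-i} = (1-p)^n\,(p/(1-p))^i$ is non-increasing in $i$. It follows that for every $i\le k$,
\[p^i(1-p)^{n-i}\ge p^k(1-p)^{n-k}.\]
This monotonicity is the only place the hypothesis $k\le n/2$ is used.

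Combining the two displays gives $\sum_{i=0}^k\binom{n}{i}\le p^{-k}(1-p)^{-(n-k)}$. It remains to rewrite the right-hand side using $k = pn$:
\[p^{-k}(1-p)^{-(n-k)} = 2^{-pn\log p - (1-p)n\log(1-p)} = 2^{nH(p)}.\]
This is exactly the claimed bound.

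There is no real obstacle here. The only points requiring care are the boundary cases $p=0$ and $p=1/2$ (handled directly above, with the convention $0\log 0=0$) and the direction of the monotonicity. Note that $k/n$ need not be a "nice" rational for the argument; it is just a real number in $[0,1/2]$.
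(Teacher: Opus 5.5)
Your proof is correct and complete. The paper states this proposition as a well-known fact without proof, and your argument is the standard one: expand $1=(p+(1-p))^n$, truncate at $i\le k$, and use that $p^i(1-p)^{n-i}$ is non-increasing in $i$ when $p\le 1/2$. Two minor remarks: the case $k=n/2$ needs no separate treatment, since the monotonicity holds trivially with ratio $p/(1-p)=1$; and the statement implicitly assumes $n\ge 1$ and $k\ge 0$.
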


\section{Disperser Families}\label{sec: disperser families}
\begin{definition}[$(k, \eps)$ Zero-Error Disperser Family]\label{def: disperser family}
    Let $\calF$ be a distribution over functions $f: \{0, 1\}^n \to \{0, 1\}^m$, $k\ge m$, and $\eps > 0$ be parameters. We say that $\calF$ is a \emph{$(k, \eps)$ zero-error disperser family}, if for every random variable $\calX$ over $\{0, 1\}^n$ with $H_\infty(\calX) \ge k$,
    \[\Pr_{f\gets \calF}[\Supp(f(\calX)) = \{0, 1\}^m] \ge 1-\eps.\]
    We say $\calF$ is efficient if every function $f \in \calF$ has a polynomial size description and is computable in polynomial time. 
\end{definition}

It is easy to see that in \autoref{def: disperser family}, it suffices to consider random variables $\calX$ that are uniform distributions over $2^k$ many strings.

\subsection{Random Projections}

Let $\Proj_{n \to m}$ be the distribution of a uniformly random projection from $n$ bits to $m$ bits. That is, to sample $f: \{0, 1\}^n \to \{0, 1\}^m$ from $\Proj_{n \to m}$, sample a random subset $I \subseteq [n]$ of size $m$, let $I = \{i_1, i_2, \dots, i_m\}$ where $1 \le i_1 < i_2 < \dots < i_m \le n$, and define $f(x) = x|_I := x_{i_1}x_{i_2} \dots x_{i_m}$.

We first observe that the \hyperref[lemma: Sauer--Shelah]{Sauer--Shelah Lemma} implies: 
\begin{theorem}\label{thm: disperser from Sauer--Shelah}
    For any $k \leq n$ and some $m := \Omega(k/\log(2n/k))$, $\Proj_{n\to m}$ is a $(k, 1-\delta)$ zero-error disperser family with some (exponentially-small) $\delta > 0$.
\end{theorem}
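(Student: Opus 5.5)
It suffices to handle $\calX$ uniform over a set $H\subseteq\{0,1\}^n$ with $|H| = 2^k$, as remarked after \autoref{def: disperser family}. The definition asks that the random projection $f(x) = x|_I$ satisfies $\Supp(f(\calX)) = \{0,1\}^m$. This holds exactly when $H$ shatters $I$. So the goal is a lower bound on the probability that a uniformly random $m$-subset $I\subseteq[n]$ is shattered by $H$. The statement only claims this probability is at least some $\delta>0$, and I aim to show $\delta$ can be taken $2^{-O(n)}$ or better.

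The first step uses the \hyperref[lemma: Sauer--Shelah]{Sauer--Shelah Lemma} to find at least one large shattered set. Choose $d$ so that $\sum_{i=0}^{d-1}\binom{n}{i} < 2^k$. By \autoref{prop: prefix sum of binomial coefficients} it suffices that $n H(d/n) < k$. Set $d = c\,k/\log(2n/k)$ for a small constant $c$. Then $nH(d/n) \le d\log(en/d) \le ck\log(e\log(2n/k)\cdot n/(ck))/\log(2n/k)$, which is less than $k$ for small enough $c$ (check the case $k$ close to $n$ separately). The lemma then gives a set $J$ with $|J|\ge d$ that $H$ shatters.

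The second step notes that shattering is downward closed: if $H$ shatters $J$, it shatters every $I\subseteq J$. Put $m := d = \Omega(k/\log(2n/k))$. Then every $m$-subset of $J$ is shattered, and there is at least one such subset. So
\[\Pr_{f\gets\Proj_{n\to m}}[\text{$f$ disperses $\calX$}] \ge \binom{n}{m}^{-1} \ge 2^{-n}.\]
This gives the claimed $(k, 1-\delta)$ family with $\delta = \binom{n}{m}^{-1}$, which is exponentially small.

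A better $\delta$ could come from combining \hyperref[lemma: Pajor]{Pajor's lemma} with the downward-closedness and Kruskal--Katona (\autoref{thm: Kruskal--Katona}). That would count many shattered $m$-sets rather than one. The statement as written, however, only needs one shattered set of size $m$.

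The main obstacle is the entropy computation that makes $\sum_{i<m}\binom{n}{i} < 2^k$ hold with $m = \Omega(k/\log(2n/k))$ uniformly over all $k\le n$. The regime $k = \Theta(n)$ needs the constant chosen so that $H(m/n) < k/n$. The regime $k\ll n$ needs the bound $H(p)\le p\log(e/p)$. I would handle both by bounding $H(p)\le p\log(e/p)$ throughout and picking $c$ small.
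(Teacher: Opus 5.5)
Your proposal is correct and matches the paper's proof. The paper also applies the Sauer--Shelah Lemma, using $\sum_{i<m}\binom{n}{i}\le (en/m)^m<2^k$, which is the same estimate as your $nH(d/n)\le d\log(en/d)<k$, to get one shattered $m$-set and hence success probability at least $\binom{n}{m}^{-1}$. Your extra downward-closure step and the Pajor plus Kruskal--Katona refinement (which the paper uses for \autoref{thm: disperser family by a projection}) are harmless but not needed here.
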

\begin{proof}
    Let $\calX$ be a random variable over $\{0, 1\}^n$ with min-entropy $k$. Without loss of generality, we may assume that $\calX$ is the uniform distribution over a subset of $2^k$ many strings, and abusing notation, we denote this size-$2^k$ subset as $\calX$ as well.

For some $m=\Omega(k/\log( 2n/k))$, we have \[\left(\frac{en}{m}\right)^m < 2^k.\]

    By \hyperref[lemma: Sauer--Shelah]{Sauer--Shelah Lemma}, since
    \[|\calX| = 2^k \ge \left(\frac{en}{m} \right)^m \ge \sum_{i=0}^{m-1}\binom{n}{i},\]
    there exists a subset $I \subseteq[n]$ with $|I| = m$ such that $I$ is shattered by $\calX$. It follows that the function $f(x) = x|_I$ in the support of $\Proj_{n\to m}$ satisfies that $\Supp(f(\calX)) = \{0, 1\}^m$.
\end{proof}

The success probability $\delta$ in \autoref{thm: disperser from Sauer--Shelah} is inherently exponentially small if $k \ll n$: Let $\calX$ be the uniform distribution over strings whose last $n-k$ bits are $0$ (but first $k$ bits can be arbitrary). Then, a random function $f$ from $\Proj_{n\to m}$ satisfies $\Supp(f(\calX)) = \{0, 1\}^m$ only when the underlying index set $I$ is a subset of $[k]$, which happens with probability $\approx (k/n)^m \ll o(1)$. However, when $k$ is very close to $n$, by combining \hyperref[lemma: Pajor]{Pajor's Lemma} and \hyperref[thm: Kruskal--Katona]{Kruskal--Katona Theorem}, we can indeed show that $\Proj_{n\to m}$ is a good disperser family with probability close to $1$ (a typical parameter setting of the following theorem is that $\Delta$ being a constant and $m = \eps n / \Delta$ for some small constant $\eps > 0$):

\begin{theorem}[{Random Projection Disperser Family}]\label{thm: disperser family by a projection}
    Let $m \le 0.4n$ and $\Delta \le o(n)$, then $\Proj_{n \to m}$ is an $(n-\Delta, 4m\Delta / n)$ zero-error disperser family.
\end{theorem}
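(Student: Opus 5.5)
The plan is to apply \hyperref[lemma: Pajor]{Pajor's Lemma} to the family of shattered index sets and then use the \hyperref[thm: Kruskal--Katona]{Kruskal--Katona Theorem} to show that this family must fill almost all of level $m$. As noted after \autoref{def: disperser family}, it suffices to take $\calX$ uniform over a set $H\subseteq\{0,1\}^n$ of size $2^{n-\Delta}$. If $\Delta=0$ then $H=\{0,1\}^n$ and everything is shattered, so assume $\Delta\ge 1$. A random $f\gets\Proj_{n\to m}$ with index set $I$ satisfies $\Supp(f(\calX))=\{0,1\}^m$ iff $H$ shatters $I$. So we must show that at least a $(1-4m\Delta/n)$ fraction of the $m$-subsets of $[n]$ are shattered by $H$.

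Let $\calF$ be the collection of subsets of $[n]$ shattered by $H$. This collection is downward closed, since shattering $I$ implies shattering every $J\subseteq I$. By \hyperref[lemma: Pajor]{Pajor's Lemma}, $|\calF|\ge 2^{n-\Delta}$. Choose a real $x\ge m$ with $|\calF_m|=\binom{x}{m}$; if $|\calF_m|$ is too small for such an $x$ to exist, the bound below is only easier. For every $i>m$ we claim $|\calF_i|\le\binom{x}{i}$. Indeed, if $|\calF_i|=\binom{x'}{i}$ with $x'>x$, then \autoref{thm: Kruskal--Katona} gives $|\calF_m|\ge\binom{x'}{m}>\binom{x}{m}$, a contradiction. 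Summing over levels and using \autoref{prop: prefix sum of binomial coefficients} with $m\le 0.4n$,
\[2^{n-\Delta}\le|\calF|\le\sum_{i<m}\binom{n}{i}+\sum_{i\ge m}\binom{x}{i}\le 2^{H(0.4)n}+2^{x}.\]
Here $\sum_i\binom{x}{i}\le 2^x$ for real $x$ needs a short justification: take floors and note that $\binom{x}{i}$ is monotone in $x$ for $x\ge i$. Since $\Delta=o(n)$ and $H(0.4)<0.98$, the first term is at most $2^{n-\Delta-1}$ for large $n$. Hence $x\ge n-\Delta-1$.

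It remains to lower-bound the shattered fraction at level $m$:
\[\frac{|\calF_m|}{\binom{n}{m}}=\frac{\binom{x}{m}}{\binom{n}{m}}=\prod_{j=0}^{m-1}\frac{x-j}{n-j}\ge\Bigl(1-\frac{n-x}{n-m}\Bigr)^{m}\ge 1-\frac{m(\Delta+1)}{n-m}.\]
Using $n-m\ge 0.6n$ and $\Delta+1\le 2\Delta$, this is at least $1-\frac{2m\Delta}{0.6n}\ge 1-4m\Delta/n$, as required.

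I expect the main obstacle to be the Kruskal--Katona step: correctly using the contrapositive in Lovász's real-parameter formulation to bound the upper levels by $\binom{x}{i}$, and carefully bounding $\sum_{i\ge m}\binom{x}{i}$ when $x$ is not an integer. The rest is bookkeeping of the $2^{-\Omega(n)}$ slack from the lower levels.
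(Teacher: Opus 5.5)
Your proposal is correct and follows the paper's proof essentially step for step. It applies Pajor's Lemma to the downward-closed family of shattered index sets, uses the Lov\'asz form of Kruskal--Katona to cap the levels above $m$ by $\binom{x}{i}$ (and empty those above $x$), bounds the levels below $m$ by entropy to force $x\ge n-\Delta-1$, and finishes with the product bound on $\binom{x}{m}/\binom{n}{m}$.

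The one slip is in your side justification of $\sum_{i\le\lfloor x\rfloor}\binom{x}{i}\le 2^x$: monotonicity in $x$ compares with $\lceil x\rceil$, not $\lfloor x\rfloor$, so it only gives $2^{\lceil x\rceil}$. That weaker bound still suffices, because integrality forces $\lceil x\rceil\ge n-\Delta$ and hence $x>n-\Delta-1$. Alternatively, the tail $\sum_{i>\lfloor x\rfloor}\binom{x}{i}$ of the binomial series $2^x=\sum_{i\ge 0}\binom{x}{i}$ alternates in sign with a nonnegative first term and decreasing magnitudes, so it is nonnegative.
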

\begin{proof}
    Let $\calX\subseteq \{0, 1\}^n$ be a subset of size $2^{n-\Delta}$. Let $\calF$ be the collection of subsets of indices $I \subseteq [n]$ that are shattered by $\calX$. Recall this means that for every $z \in \{0, 1\}^I$, there exists $y \in \calX$ such that $y|_I = z|_I$. By \hyperref[lemma: Pajor]{Pajor's Lemma}, $|\calF| \ge |\calX| \ge 2^{n-\Delta}$. Moreover, $\calF$ is downward closed: if $\calX$ shatters $I$, then $\calX$ shatters every subset of $I$.

    For every $k\in \N$, let $\calF_k$ denote the collection of size-$k$ sets in $\calF$. Suppose that $|\calF_m| = \binom{x}{m}$ for some real number $x\ge m$. Then by \hyperref[thm: Kruskal--Katona]{Kruskal--Katona Theorem}, for every integer $i \in [m, x]$, $|\calF_i| \le \binom{x}{i}$; while if $i > x$ then there is no size-$i$ subset in $\calF$. It follows that
    \[|\calF|\le \sum_{i=0}^{m-1}\binom{n}{i} + \sum_{i=m}^{\lfloor x \rfloor}\binom{x}{i} \le 2^{n\cdot H(m/n)} + 2^x \le 2^{0.9999n} + 2^x.\]
    Therefore $x \ge n-\Delta-1$. It follows that a random size-$m$ subset of $[n]$ is in $\calF$ with probability
    \[\binom{x}{m} / \binom{n}{m} \ge \mleft(\frac{x-m}{n-m}\mright)^m\ge \mleft(1-\frac{\Delta+1}{n-m}\mright)^m \ge 1-4m\Delta / n.\]

    Let $f:\{0, 1\}^n \to \{0, 1\}^m$ be a random projection sampled from $\Proj_{n\to m}$, and let $I \subseteq[n]$ be the subset of indices such that $f(x) = x|_I$. With probability at least $1-4m\Delta / n$, we have $I \in \calF$, and thus $\calX$ shatters $I$. This means that $\Supp(f(\calX)) = \{0, 1\}^m$.
\end{proof}

\subsection{\texorpdfstring{$t$}{t}-Wise Independent Hash Families}

We also show that $t$-wise independent hash functions are good disperser families in the suitable parameter regime. Although the circuit complexity overhead of these hash functions is much worse than projections, looking ahead, these disperser families will be used in \autoref{sec: improving RWZ26} to obtain extreme hardness results for (unrestricted) $\Avoid$ from mild demi-hardness assumptions and vastly improve the main results of~\cite{RenWZ26}. %

We start with pairwise independent hash families. We need the following easy corollary of Chebyshev Inequality.
\begin{lemma}\label{fact: Chebyshev}
    Let $x_1, x_2, \dots, x_n$ be pairwise independent variables where the marginal distribution of each $x_i$ is the Bernoulli distribution with success probability $p$. Then, the probability that every $x_i$ is equal to $0$ is at most $1/(pn)$.
\end{lemma}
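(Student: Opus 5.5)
The plan is to apply Chebyshev's inequality to the sum $S := \sum_{i=1}^n x_i$. The event that every $x_i$ equals $0$ is exactly the event $S = 0$. Since $\Ex[S] = pn$, this event is contained in $\{|S - \Ex[S]| \ge pn\}$. Chebyshev then gives
\[\Pr[S = 0] \le \frac{\mathrm{Var}(S)}{(pn)^2}.\]
What remains is to bound $\mathrm{Var}(S)$.

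For the variance, I would use pairwise independence. It gives $\mathrm{Cov}(x_i, x_j) = 0$ for $i \ne j$, since $\Ex[x_ix_j] = \Ex[x_i]\Ex[x_j] = p^2$. Hence the variance is additive:
\[\mathrm{Var}(S) = \sum_{i=1}^n \mathrm{Var}(x_i) = np(1-p) \le np.\]
Substituting this into the Chebyshev bound gives $\Pr[S=0] \le np/(pn)^2 = 1/(pn)$, which is the claim.

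There is no real obstacle here. The only point that needs care is the pairwise-independence step, which is exactly what justifies dropping the cross terms in the variance. The edge case $p = 0$ makes the bound vacuous (read as $+\infty$), so I would assume $p > 0$ throughout.
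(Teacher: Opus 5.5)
Your proof is correct and is exactly the argument the paper intends: the paper states this lemma as an easy corollary of Chebyshev's inequality without writing out a proof. Your steps supply it: pairwise independence makes the variance additive, giving $\mathrm{Var}(S)\le np$, and then $\Pr[S=0]\le \mathrm{Var}(S)/(pn)^2\le 1/(pn)$.
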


\begin{theorem}\label{thm: disperser family from pairwise independence}
    Let $k := 2m + \log(1/\eps)$. Let $\calH$ be a family of pairwise independent hash functions where each function $h: \{0, 1\}^n \to \{0, 1\}^m$. Then $\calH$ is a $(k,\eps)$ zero-error disperser family.
\end{theorem}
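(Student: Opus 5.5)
We prove \autoref{thm: disperser family from pairwise independence} by a union bound over the $2^m$ target strings, handling each target with \autoref{fact: Chebyshev}. As remarked after \autoref{def: disperser family}, it suffices to treat the case where $\calX$ is uniform over a set $S \subseteq \{0,1\}^n$ with $|S| = 2^k$. So fix such an $S$. We will show that for a random $h \gets \calH$, the image $h(S)$ equals all of $\{0,1\}^m$ except with probability at most $\eps$.

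Fix a target $z \in \{0,1\}^m$. For each $x \in S$, define the indicator $Y_x := \mathbbm{1}[h(x) = z]$. The randomness here is over $h \gets \calH$. By pairwise independence of $\calH$, we get two facts:
\begin{itemize}
    \item each $h(x)$ is uniform on $\{0,1\}^m$, so each $Y_x$ is Bernoulli with success probability $p = 2^{-m}$;
    \item for distinct $x, x' \in S$, the pair $(h(x), h(x'))$ is uniform on $\{0,1\}^m \times \{0,1\}^m$, so the variables $\{Y_x\}_{x\in S}$ are pairwise independent.
\end{itemize}
Thus \autoref{fact: Chebyshev} applies with $n$ replaced by $|S| = 2^k$. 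It gives
\[\Pr_h[z \notin h(S)] = \Pr_h[\forall x\in S,\ Y_x = 0] \le \frac{1}{p\,|S|} = 2^{m-k}.\]

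Now take a union bound over all $2^m$ choices of $z$:
\[\Pr_h\bigl[h(S) \ne \{0,1\}^m\bigr] \le 2^m \cdot 2^{m-k} = 2^{2m-k} = \eps,\]
where the last equality uses $k = 2m + \log(1/\eps)$. This is exactly the $(k,\eps)$ zero-error disperser property from \autoref{def: disperser family}.

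The only point needing care is checking that the hypotheses of \autoref{fact: Chebyshev} hold. The marginals are exactly Bernoulli$(2^{-m})$ and the indicators are pairwise independent, and both follow directly from the definition of a pairwise independent family, which requires $(h(x),h(x'))$ to be uniform for $x\neq x'$. If \autoref{fact: Chebyshev} is not taken as given, it has a short proof. With $Y := \sum_x Y_x$, pairwise independence gives $\mathrm{Var}(Y) \le \Ex[Y] = p|S|$. Chebyshev's inequality then yields $\Pr[Y=0] \le \mathrm{Var}(Y)/\Ex[Y]^2 \le 1/(p|S|)$.
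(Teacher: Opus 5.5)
Your proposal is correct and takes the same route as the paper's proof. Both reduce to a flat source $S$ of size $2^k$, bound $\Pr[z\notin h(S)]\le 2^{m-k}$ for each fixed target $z$ using the pairwise-independent Chebyshev fact, and take a union bound over the $2^m$ targets to get $2^{2m-k}=\eps$.
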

\begin{proof}%
    Let $\calX\subseteq\{0, 1\}^n$ be a subset of size $2^k$, and let $h\gets\calH$ be a random hash function. For every fixed $z \in \{0, 1\}^m$, we will argue that $\Pr_{h\gets \calH}[z\not\in h(\calX)] \le \eps\cdot 2^{-m}$. Indeed, for every $x \in \calX$, let $I_x$ denote the indicator variable of $h(x) = z$, then the marginal distribution of each $I_x$ is the Bernoulli distribution with success probability $2^{-m}$, and the variables $\{I_x\}_{x\in \calX}$ are pairwise independent. The probability that $z\not\in h(\calX)$ is equal to the probability that every $I_x$ is equal to $0$, hence by \autoref{fact: Chebyshev} is at most $2^{m-k} \le \eps\cdot 2^{-m}$. Finally, the theorem then follows from a union bound over all $z \in \{0, 1\}^m$.
\end{proof}

For $t$-wise independence, we need the following tail bound, which follows from \cite[Lemma 2.3]{BellareR94} by substituting $\mu := np$ and $A := np$:
\begin{lemma}\label{lemma: concentration for t-wise independence}
    Let $t\ge 4$ be an even integer and suppose that $x_1, x_2, \dots, x_n$ are $t$-wise independent random variables where the marginal distribution of each $x_i$ is the Bernoulli distribution with success probability $p$. Then, the probability that every $x_i$ is $0$ is at most $8(tnp + t^2)^{t/2} / (np)^t$.
\end{lemma}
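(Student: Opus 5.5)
The statement is Lemma~\ref{lemma: concentration for t-wise independence}. I plan to deduce it from the $t$-th moment tail bound of Bellare--Rompel~\cite[Lemma 2.3]{BellareR94}. That lemma says: if $X=\sum_i x_i$ is a sum of $t$-wise independent random variables with values in $[0,1]$, $t\ge 4$ is even, $\mu=\mathbb{E}[X]$ and $A>0$, then
\[\Pr\bigl[|X-\mu|\ge A\bigr]\le 8\left(\frac{t\mu+t^2}{A^2}\right)^{t/2}.\]

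The first step is to observe that the event ``every $x_i$ equals $0$'' is exactly the event $X=0$. Here $X=\sum_{i=1}^n x_i$ with $\mu=np$. Whenever $X=0$ we have $|X-\mu|=np$, so this event is contained in $\{|X-\mu|\ge np\}$.

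The second step is to apply the Bellare--Rompel bound with $\mu:=np$ and $A:=np$. This gives
\[\Pr[X=0]\le 8\left(\frac{tnp+t^2}{(np)^2}\right)^{t/2}=\frac{8(tnp+t^2)^{t/2}}{(np)^t},\]
which is precisely the claimed bound.

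The only point that needs any care is checking that the hypotheses of Bellare--Rompel hold. The $x_i$ are Bernoulli, so they take values in $[0,1]$; $t$ is an even integer at least $4$ by assumption; and $t$-wise independence of the $x_i$ is exactly what their lemma requires. If I wanted a self-contained argument instead, I would apply Markov's inequality to $\mathbb{E}[(X-\mu)^t]$ and expand the $t$-th moment. Terms containing some index exactly once vanish by $t$-wise independence, and counting the remaining terms gives $\mathbb{E}[(X-\mu)^t]\le 8(t\mu+t^2)^{t/2}$. That moment count is the only nontrivial calculation, and I would simply cite it from~\cite{BellareR94} rather than redo it.
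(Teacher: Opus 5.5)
Your proposal is correct and is exactly the paper's argument: the paper obtains this lemma directly from \cite[Lemma 2.3]{BellareR94} by substituting $\mu := np$ and $A := np$. Your observation that the event ``every $x_i$ is $0$'' is contained in $\{|X-\mu|\ge np\}$ is the implicit step that makes this substitution valid.
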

\begin{theorem}\label{thm: disperser family from twise independence}
    Let $t\ge 6$ be an even number and $k := m(1+2/t) + \log t + O(\log(1/\eps)/t) + 1$. Let $\calH$ be a family of $t$-wise independent hash functions where each function $h: \{0, 1\}^n \to \{0, 1\}^m$. Then $\calH$ is a $(k,\eps)$ zero-error disperser family.
\end{theorem}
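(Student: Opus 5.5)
The plan mirrors the proof of \autoref{thm: disperser family from pairwise independence}, with \autoref{lemma: concentration for t-wise independence} in place of \autoref{fact: Chebyshev}. As observed after \autoref{def: disperser family}, it suffices to take $\calX$ uniform over a set $\calX\subseteq\{0,1\}^n$ of size $2^k$. Fix a target $z\in\{0,1\}^m$. For each $x\in\calX$ let $I_x$ be the indicator of $h(x)=z$ for $h\gets\calH$. Because $\calH$ is $t$-wise independent, the variables $\{I_x\}_{x\in\calX}$ are $t$-wise independent, and each is Bernoulli with success probability $p=2^{-m}$. The event $z\notin h(\calX)$ is exactly the event that every $I_x=0$. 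So it suffices to show this probability is at most $\eps 2^{-m}$ and then take a union bound over the $2^m$ targets $z$.

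Write $\mu:=2^k p=2^{k-m}$ for the expected number of preimages of $z$. \autoref{lemma: concentration for t-wise independence} (valid since $t\ge 6$ is even) gives
\[\Pr[\forall x,\ I_x=0]\le \frac{8(t\mu+t^2)^{t/2}}{\mu^t}.\]
The choice of $k$ ensures $\mu\ge t$, because $k-m\ge \log t+1$. Hence $t\mu+t^2\le 2t\mu$, and the bound is at most $8(2t/\mu)^{t/2}$.

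It remains to check $8(2t/\mu)^{t/2}\le \eps 2^{-m}$. Taking logarithms, this is equivalent to
\[\frac{t}{2}\bigl(k-m-\log t-1\bigr)\ \ge\ m+\log(1/\eps)+3,\]
that is,
\[k\ \ge\ m+\frac{2m}{t}+\log t+1+\frac{2\log(1/\eps)+6}{t}.\]
This is exactly the stated form $k=m(1+2/t)+\log t+O(\log(1/\eps)/t)+1$. The constant hidden in the $O(\cdot)$ absorbs both the $2\log(1/\eps)$ term and the additive $6/t\le 1$ slack; if needed, I can assume $\eps\le 1/2$ so that this slack is dominated by $\log(1/\eps)/t$.

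The only delicate step is this parameter bookkeeping. I need to make sure the substitution into \autoref{lemma: concentration for t-wise independence} is dominated by the $t\mu$ term, which requires $\mu\ge t$, and that the constants fit inside the $+1$ and the $O(\cdot)$ term. Everything else is routine.
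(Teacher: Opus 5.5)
Your proposal is correct and follows the paper's proof essentially step for step: you reduce to a flat source of size $2^k$, apply \autoref{lemma: concentration for t-wise independence} to the indicators $I_x$ with mean $2^{k-m}\ge t$ to bound $\Pr[z\notin h(\calX)]$ by $8(2t)^{t/2}2^{-(k-m)t/2}\le \eps 2^{-m}$, and finish with a union bound over $z$. Your explicit logarithmic bookkeeping, including the remark that $\eps\le 1/2$ lets the constant $8$ and the $6/t$ slack be absorbed into the $O(\log(1/\eps)/t)$ term, makes precise the step where the paper hides these inside a $\poly(1/\eps)$ factor.
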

\begin{proof}
    We follow the same proof as \autoref{thm: disperser family from pairwise independence}, but use \autoref{lemma: concentration for t-wise independence} instead of \autoref{fact: Chebyshev}.

    Let $\calX\subseteq \{0, 1\}^n$ be a subset of size $2^k$, and let $h\gets \calH$ be a random hash function. For every fixed $z \in \{0, 1\}^m$, we will argue that $\Pr_{h\gets \calH}[z\not\in h(\calX)] \le \eps\cdot 2^{-m}$. Indeed, for every $x \in \calX$, let $I_x$ denote the indicator variable of $h(x) = z$, then the marginal distribution of each $I_x$ is the Bernoulli distribution with success probability $2^{-m}$, and the variables $\{I_x\}_{x\in \calX}$ are $t$-wise independent. The probability that $z\not\in h(\calX)$ is equal to the probability that every $I_x$ is equal to $0$, hence by \autoref{lemma: concentration for t-wise independence} is at most
    \begin{align*}
        \frac{8(t\cdot 2^{k-m} + t^2)^{t/2}}{2^{(k-m)t}} \le&\, \frac{8\cdot (2t\cdot 2^{k-m})^{t/2}}{2^{(k-m)t}}&(\text{Since }2^{k-m}\ge t)\\
        \le&\,\frac{8\cdot (2t)^{t/2}\cdot 2^{mt/2}}{2^{kt/2}}\\
        \le&\,\frac{8\cdot (2t)^{t/2}\cdot 2^{mt/2}}{2^{mt/2}2^m(2t)^{t/2}\cdot \poly(1/\eps)}\\
        \le&\, \eps\cdot 2^{-m}.
    \end{align*}
    Finally, the theorem then follows from a union bound over all $z \in \{0, 1\}^m$.
\end{proof}

\subsection{Ilango's Seeded Disperser Family}\label{sec: ilango seeded}

We also remark that Ilango's proof~\cite{Ilango25} implicitly uses a \emph{seeded} disperser family as well:

\begin{theorem}
    Let $\Delta \ge 1$, $\eps > 0$, $d := \Delta + \log n + \log(1/\eps) + O(1)$, and $\calF$ be the following distribution of functions $f: \{0, 1\}^n \times \{0, 1\}^d \to \{0, 1\}^n$. To sample a function $f\gets \calF$, sample $t := 2^d$ random strings (``shifts'') $s_1, s_2, \dots, s_t \gets \{0, 1\}^n$ and define $f(x, i) = x \oplus s_i$.
    
    Then $\calF$ is an $(n-\Delta, \eps)$ zero-error \emph{seeded} disperser family in the following sense. For every random variable $\calX$ over $\{0, 1\}^n$ with $H_\infty(\calX) \ge n-\Delta$, let $\calZ$ be the uniform distribution over $\{0, 1\}^d$ independent from everything else, then
    \[\Pr_{f\gets \calF}[\Supp(f(\calX, \calZ)) = \{0, 1\}^n] \ge 1-\eps.\]
\end{theorem}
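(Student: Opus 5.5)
Fix $\calX$ with $H_\infty(\calX)\ge n-\Delta$. As remarked after \autoref{def: disperser family}, we may take $\calX$ to be uniform over a set $X\subseteq\{0,1\}^n$ with $|X|= 2^{n-\Delta}$. Since the seed $\calZ$ is uniform and independent, we have $\Supp(f(\calX,\calZ))=\bigcup_{i\le t}(X\oplus s_i)$. So the goal is to show that $t$ random shifts of $X$ cover all of $\{0,1\}^n$ except with probability at most $\eps$ over $s_1,\dots,s_t$.

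The argument is a union bound over targets. Fix $y\in\{0,1\}^n$. We have $y\in X\oplus s_i$ if and only if $s_i\in X\oplus y$, and $|X\oplus y|=|X|=2^{n-\Delta}$. For a uniform $s_i$ this event therefore has probability exactly $2^{-\Delta}$. The shifts are independent, so
\[\Pr[y\notin \textstyle\bigcup_i (X\oplus s_i)] = (1-2^{-\Delta})^t \le \exp(-t\,2^{-\Delta}).\]
A union bound over all $2^n$ values of $y$ bounds the failure probability by $2^n\exp(-t2^{-\Delta})$.

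It remains to choose $t=2^d$ so that $2^n\exp(-t2^{-\Delta})\le\eps$. This requires $t\ge 2^{\Delta}(n\ln 2+\ln(1/\eps))$. With $d=\Delta+\log n+\log(1/\eps)+O(1)$ we get $t=2^{\Delta}\cdot n\cdot(1/\eps)\cdot 2^{O(1)}$, which is far more than needed, since $n/\eps\ge n\ln2+\ln(1/\eps)$ up to constants. (Even $\log(n+\log(1/\eps))$ in place of $\log n+\log(1/\eps)$ would suffice.)

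There is no real obstacle here. The only points needing care are the reduction to flat sources and the observation that each covering event has probability exactly $2^{-\Delta}$ irrespective of the structure of $X$.
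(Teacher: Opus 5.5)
Your proposal is correct and follows the paper's proof essentially step for step: reduce to a flat source, bound the probability that a fixed target $y$ is missed by all $t$ independent shifts by $(1-2^{-\Delta})^t$, and take a union bound over the $2^n$ targets. Your side remark that $d=\Delta+\log(n+\log(1/\eps))+O(1)$ already suffices is also right.
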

\begin{proof}
    This is basically the same proof as in \cite{Ilango25} and \cite{Lautemann83}. Again, without loss of generality, we may assume that $\calX$ is the uniform distribution over some $X\subseteq \{0, 1\}^n$ of size at least $2^{n-\Delta}$.

    Let $s_1, s_2, \dots, s_t \gets \{0, 1\}^n$ be sampled independently and uniformly at random. For each $z \in \{0, 1\}^n$, the probability over $s \gets \{0, 1\}^n$ that $z \not \in X\oplus s$ is at most $1-2^{-\Delta}$, hence the probability that $z\not\in X\oplus s_i$ for every $i\in [t]$ is at most $(1-2^{-\Delta})^t < \eps \cdot 2^{-n}$. It follows from a union bound over all $z \in \{0, 1\}^n$ that with probability at least $1-\eps$, the support of $f(\calX, \calZ)$ contains every length-$n$ string.
\end{proof}

\section{Hardness of Range Avoidance}\label{sec: main avoid}

Our main result is that composing zero-error disperser families with demi-bits gives us hard instances for the Range Avoidance problem. We have the following theorem.

\begin{theorem}\label{thm:avoidgeneral}
    Let $G: \{0, 1\}^n \to \{0, 1\}^N$ be a $2^k$-weak demi-bits generator, and $\calF = \{f: \{0, 1\}^N \to \{0, 1\}^m\}$ be an efficient $(k, \eps)$ zero-error disperser family. For any function $f \in \calF$ define the function $C_f: \{0, 1\}^n \to \{0, 1\}^m$ such that $\forall x \in  \{0, 1\}^n, C_f(x)=f(G(x))$. Let $\mathscr{C}$ be a circuit class where $C_f\in \mathscr{C}$ for all $f\in \calF$. Then, for any errorless $\SearchNP$ heuristic $\calA$ for $\mathscr{C}$-$\Avoid[n, m]$, $\Pr_{f \leftarrow \calF}\bigl[\calA(C_f) \neq \bot \bigr] \leq \eps$.
\end{theorem}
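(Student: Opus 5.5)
The plan is to argue by contradiction: from a good heuristic $\calA$ we will build a single $\NP/_\poly$ adversary $\Adv$ against $G$ that rejects fewer than $2^k$ strings and rejects all of $\Range(G)$. Suppose $\Pr_{f\gets\calF}[\calA(C_f)\ne\bot] > \eps$. Let $H\subseteq\{0,1\}^N$ be the set of strings on which $\Adv$ will \emph{reject}; we need $|H|<2^k$ to break $G$ as a $2^k$-weak demi-bits generator. The adversary is:
\begin{quote}
$\Adv(y)$: nondeterministically guess $f\in\calF$ (polynomial-size description, since $\calF$ is efficient) and a computation path of $\calA(C_f)$. Accept iff that path outputs some $w\ne\bot$ and $f(y)=w$.
\end{quote}
Computing $C_f$ from $f$ and (the hardwired circuit of) $G$, running $\calA$, and evaluating $f(y)$ are all polynomial time, so $\Adv$ is a polynomial-size $\NP/_\poly$ adversary.

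First, soundness on the range. If $y=G(x)$ and $\Adv$ accepts via $(f,w)$, then $w$ is a correct output of the errorless heuristic, i.e.\ $w\notin\Range(C_f)$. However $f(y)=f(G(x))=C_f(x)\in\Range(C_f)$, and $f(y)=w$, which is a contradiction. Hence $\Adv$ rejects every string in $\Range(G)$.

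Second, few rejected strings. Suppose for contradiction that the rejected set $H$ has $|H|\ge 2^k$, and let $\calX$ be uniform over $H$, so $H_\infty(\calX)\ge k$. By the $(k,\eps)$ zero-error disperser property, $\Pr_f[\Supp(f(\calX))=\{0,1\}^m]\ge 1-\eps$. Since $\calA$ succeeds on a $>\eps$ fraction of $f$, the two events intersect, giving some $f$ that both disperses $H$ and has $\calA(C_f)$ output some $w\ne\bot$ on some path. Dispersion yields $y\in H$ with $f(y)=w$, so $\Adv(y)$ accepts by guessing $(f,w)$. This contradicts $y\in H$. Therefore $|H|<2^k$, and $\Adv$ breaks $G$, contradicting the hypothesis.

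The step needing the most care is the choice of $H$. Defining $H$ as $\Adv$'s rejection set sidesteps any circularity, because the disperser guarantee applies to an arbitrary fixed $\calX$ and $\Adv$ does not depend on $H$. A secondary check is that $\Adv$ is a legitimate nonuniform adversary of the required size. This uses the efficiency of $\calF$, and, if one prefers a uniform adversary, the convention that $\calA$ receives the circuit description of $C_f$, which $\Adv$ can construct from $f$ and $G$.
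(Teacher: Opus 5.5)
Your proposal is correct and matches the paper's proof essentially step for step. It uses the same adversary (guess $f\in\calF$ and a branch of $\calA(C_f)$, and accept iff that branch outputs $f(y)$), the same argument that it rejects all of $\Range(G)$, and the same contradiction obtained by applying the disperser to the uniform distribution on the rejected set; your explicit use of the strict bound $|H|<2^k$ matches the definition of a $2^k$-weak demi-bits generator slightly more precisely than the paper's wording.
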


\begin{proof}
Suppose for contradiction that some errorless $\SearchNP$ heuristic $\calA$ satisfies
$\Pr_f[\calA(C_f) \neq \bot] > \eps$. We construct a nondeterministic polynomial-time adversary
$\calB$ breaking the demi-bits generator $G$. On input $y \in \{0,1\}^N$, the adversary $\calB$ accepts
if and only if there exists $f \in \calF$ such that some
nondeterministic branch of $\calA(C_f)$ outputs $f(y)$.
\begin{description}
    \item[$\calB$ rejects every $y \in \Range(G)$.] If $y = G(s)$, then for every $f \in \calF$ we
have $C_f(s) = f(G(s)) = f(y)$, so $f(y) \in \Range(C_f)$. Since $\calA$ is an errorless
solver for $\Avoid$, no accepting branch of $\calA(C_f)$ outputs an element of
$\Range(C_f)$, so $\calB(y)$ rejects.

\item [$\calB$ accepts at least $2^N-2^k$ strings $y\in \{0,1\}^N$.] Assume to the contrary that $\calB$ accepts $\le 2^N-2^k$ strings of length $N$. Let $X$ be the uniform distribution over the strings in $\{0,1\}^N$ rejected by $\calB$. Then we have $|\Supp(X)| \geq 2^k$ and  $H_\infty(X)\ge k$.  Since $\calF$ is an efficient $(k, \eps)$ zero-error disperser family, we have
\[
  \Pr_{f \leftarrow \calF} \bigl[\Supp(f(X)) = \{0,1\}^m\bigr] \;\geq\; 1 - \eps.
\]
Since $\Pr_f[\calA(C_f) \neq \bot] > \eps$, there must exist some $f \in \calF$ such that $\calA(C_f) \neq \bot$ and $\Supp(f(X)) = \{0,1\}^m$. This means that there exists some $x \in \Supp(X)$ such that some nondeterministic accepting branch of  $\calA(C_f)$ outputs exactly $f(x)$, hence $\calB$ accepts $x$, a contradiction.
\end{description}

Hence $\calB$ is a nondeterministic polynomial-time adversary that rejects all of $\Range(G)$
and accepts all but $2^k$ many strings in $\{0,1\}^N$, contradicting the demi-bits
security of $G$.
\end{proof}

\begin{remark}
    Recall that $G: \{0, 1\}^n \to \{0, 1\}^N$ is a $2^k$-weak demi-bits generator if any efficient proof system must fail to prove the statement ``$y \notin \Range(G)$'' for at least $2^k$ strings $y \in \{0, 1\}^N$. Applying any efficient function $f: \{0, 1\}^N \to \{0, 1\}^m$, one can ``convert'' a statement ``$y\not\in\Range(G)$'' that is hard to prove into another statement ``$f(y) \not\in\Range(C_f)$'' that is also hard to prove, where $C_f = f\circ G$. If the function $f$ is a zero-error disperser for the set of $2^k$ strings $y$ that are hard for $G$, then the image of these hard strings under $f$ covers the entire $\{0, 1\}^m$, and we get hard $\Avoid$ instances.%
\end{remark}

Similarly, composing zero-error disperser families with demi-bits generators gives us (non-uniform) proof complexity generators secure against every (uniform) proof system. This result can be proved by enumerating all such proof systems and using the Borel--Cantelli lemma; see~\cite[Theorem 3.6]{Ilango25}. Here we present an alternative proof due to \Krajicek~\cite{Krajicek26} using the existence of optimal proof systems with one bit of advice.

\begin{theorem}\label{thm: one proof complexity generator against all proof systems}
    Let $G: \{0, 1\}^n \to \{0, 1\}^N$ be a $2^k$-weak demi-bits generator, and $\calF = \{f: \{0, 1\}^N \to \{0, 1\}^m\}$ be an efficient $(k, \eps)$ zero-error disperser family. Then, with probability $\ge 1-\eps$ over the choice of $f\gets \calF$, the generator $C: \{0, 1\}^n \to \{0, 1\}^m$ defined as
    \[C(x) := f(G(x))\]
    is a proof complexity generator against all (uniform) proof systems.
\end{theorem}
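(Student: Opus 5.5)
We use the hint that the argument goes through an optimal proof system with one bit of advice (Krajíček's alternative proof). It is known (Cook--Krajíček) that there is a proof system $\calP^*$ with one bit of advice that polynomially simulates every uniform proof system. The simulation depends only on the proof-system index and not on the input length, so it is p-simulation up to a polynomial blowup. The plan therefore has three steps: first, prove the statement for the single system $\calP^*$; second, transfer hardness to all uniform systems by simulation; third, use the zero-error disperser property to carry hardness from $G$ to $C=f\circ G$, exactly as in the proof of \autoref{thm:avoidgeneral}.

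The main step works as follows. Fix the system $\calP^*$ and a polynomial proof-length bound. Let $H\subseteq\{0,1\}^N$ be the set of strings $y$ for which ``$y\notin\Range(G)$'' has no short $\calP^*$-proof. Note that the one bit of advice depends only on the input length, so it can be hardwired into an $\NP/_\poly$ adversary. That adversary guesses a short $\calP^*$-proof of ``$y\notin\Range(G)$'' and checks it. By soundness it rejects all of $\Range(G)$, so the $2^k$-weak demi-bits property gives $|H|\ge 2^k$. Applying \autoref{def: disperser family} to the uniform distribution on $H$, with probability at least $1-\eps$ over $f\gets\calF$ we get $f(H)=\{0,1\}^m$.

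Now fix such an $f$ and suppose some $z\in\{0,1\}^m$ has a short $\calP^*$-proof of ``$z\notin\Range(f\circ G)$''. Pick $y\in H$ with $f(y)=z$. We want a short $\calP^*$-proof of ``$y\notin\Range(G)$''. Semantically this holds: if $G(x)=y$ then $f(G(x))=z$. Syntactically, we substitute into the encoding and add a polynomial-size derivation of $f(y)=z$ from $G(x)=y$. Since $f$ is efficiently computable (a projection in the main application), this is trivial in any reasonable system. This is where ``well-behaved'' enters: we need $\calP^*$, or a suitable simulation of it, to be closed under this kind of reasoning. Because $\calP^*$ is optimal, it simulates the system ``$\calP^*$ plus this substitution/implication rule'' with polynomial overhead, so we may assume the closure. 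The conclusion contradicts $y\in H$.

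It remains to pass from $\calP^*$ to all uniform systems. Every uniform proof system is p-simulated by $\calP^*$, so superpolynomial lower bounds for $\calP^*$ transfer to every uniform system. The subtle point is quantifier order. The event ``$f$ is good'' must hold simultaneously for all polynomial bounds and all $n$. I would handle this by interpreting ``$f\gets\calF$'' as an independent choice at each input length and proving, for each length, hardness against $\calP^*$ with proof-size bound $N^{\omega(1)}$, for instance $N^{\log N}$. This is possible because the adversary's size is only required to be polynomial in the demi-bits assumption, so I would invoke security against adversaries of slightly superpolynomial size, or note that the argument goes through for any fixed polynomial bound $p$ and then take a union over $p$ via a diagonal choice. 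I expect this step — making the single random $f$ work against all polynomial bounds at once, given the one-bit-advice system and the precise encoding requirements — to be the main obstacle. The first thing I would try is to fix one superpolynomial size bound from the start.
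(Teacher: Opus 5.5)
Your architecture matches the paper's: hardwire the advice bit of the optimal system $\calP^*$ into an $\NP/_\poly$ adversary, apply the disperser to a hard set, and transfer to uniform systems by simulation. The middle step, however, has a genuine gap. You define $H$ through $\calP^*$-proofs of ``$y\notin\Range(G)$'', so you must convert a short proof of ``$f(y)\notin\Range(f\circ G)$'' into a short proof of ``$y\notin\Range(G)$''. The theorem assumes nothing about $\calP^*$ that allows this conversion, and your justification does not hold up. The optimality the paper uses is with respect to uniform (advice-free) proof systems. By contrast, ``$\calP^*$ plus a substitution rule'' is built on top of $\calP^*$, so it needs advice itself: namely $\calP^*$'s advice bit at the length of the formula ``$f(y)\notin\Range(f\circ G)$'', not at the length of its own input. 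Nothing guarantees that $\calP^*$ simulates such a system, so the contradiction with $y\in H$ is not established.

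The missing idea is that no proof translation is needed. $G$ is assumed weak demi-bits against \emph{all} $\NP/_\poly$ adversaries, not merely against $\calP^*$-proofs of ``$y\notin\Range(G)$''. That weaker hypothesis is the setting of \autoref{thm: many proof complexity generators inside one demi-bits generator}, and that is where well-behavedness is genuinely needed. Instead, argue as in \autoref{thm:avoidgeneral}:
\begin{itemize}
\item Let $\calB(y)$ guess $f\in\calF$ together with a short $\calP^*$-proof of ``$f(y)\notin\Range(f\circ G)$'', and verify it using the hardwired advice bit.
\item If $y=G(s)$, then $f(y)\in\Range(f\circ G)$. Soundness of $\calP^*$ therefore makes $\calB$ reject all of $\Range(G)$, so by the weak demi-bits property the rejected set $X$ has size at least $2^k$.
\item Guessing $f$ inside $\calB$ keeps $X$ independent of the sampled $f$. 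With probability at least $1-\eps$ we get $f(X)=\{0,1\}^m$.
\item For such $f$, a short proof for any $z=f(x)$ with $x\in X$ would directly make $\calB$ accept $x$. This is a contradiction that requires no reasoning inside $\calP^*$.
\end{itemize}
The quantifier-order concern at the end of your proposal is a separate matter. The paper's sketch argues at a fixed length and proof-length bound, exactly as in \autoref{thm:avoidgeneral}, and does not treat it further. Strengthening the hypothesis to superpolynomial-size adversaries is not part of its argument.
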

\begin{proof}[Proof Sketch]
    Cook and \Krajicek proved that there exists an optimal proof system $\calP$ with one bit of advice~\cite[Theorem 6.6]{CookK07}. Since $G$ is a secure demi-bits generator against $\calP$, the same argument as in~\autoref{thm:avoidgeneral} shows that with probability at least $1-\eps$ over $f\gets \calF$, $f\circ G$ is a proof complexity generator against $\calP$. Since $\calP$ simulates every (uniform) proof system, $f\circ G$ is a proof complexity generator against every (uniform) proof system.
\end{proof}

In the remainder of this section, we instantiate \autoref{thm:avoidgeneral} and \autoref{thm: one proof complexity generator against all proof systems} with various (candidate) demi-bits and disperser families to obtain hardness of $\Avoid$ in various settings.

\subsection{Hardness of \texorpdfstring{$\Avoid$}{Avoid} for Restricted Circuits}
Thanks to the~\hyperref[thm: disperser family by a projection]{Random Projection Disperser Family}, we can convert demi-bits generators in any complexity class into hard $\Avoid$ instances within the same complexity class.

Recall that for a circuit $G: \{0, 1\}^n \to \{0, 1\}^N$ and a subset of output indices $I \subseteq [N]$, $I = \{i_1, i_2, \dots, i_m\}$ where $1 \le i_1 < i_2 < \dots < i_m \le N$, we define $G|_I$ to be the circuit $G|_I: \{0, 1\}^n \to \{0, 1\}^m$ where for every input $x \in \{0, 1\}^n$,
\[G|_I(x) = G(x)|_I = G(x)_{i_1} G(x)_{i_2} \dots G(x)_{i_m}.\]

\begin{theorem}\label{thm:avg-avoid}
Let $G: \{0, 1\}^n \to \{0, 1\}^N$ be a $(1-2^{-\Delta})$-secure demi-bits generator with $N \geq 3n$ and $\Delta \le o(N)$. Then:
\begin{itemize}
    \item For some universal constant $\alpha < 1$, let $m = \alpha N$ and assuming $m > n$, then any $\SearchNP$ algorithm $\calA$ fails to solve $\Avoid$ on the instance $G|_I$ for some subset $I\subseteq[N]$, $|I| = m$.

    \item For any $n < m \leq 0.4 N$ and any errorless $\SearchNP$ heuristic $\calA$ for $\Avoid$, $\Pr_I\bigl[\calA(G|_I) \neq \bot \bigr] \leq \frac{4m \Delta}{N}$, where $I\subseteq [N]$ is a uniformly random subset of size $|I| = m$.
\end{itemize}

\end{theorem}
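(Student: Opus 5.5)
The plan is to instantiate \autoref{thm:avoidgeneral} with $\calF = \Proj_{N\to m}$ and get the disperser parameters from \autoref{thm: disperser family by a projection}. First, translate the hypothesis. A $(1-2^{-\Delta})$-secure demi-bits generator means no efficient adversary rejecting all of $\Range(G)$ can accept $\ge (1-2^{-\Delta})2^N$ strings. Equivalently, every such adversary rejects more than $2^{N-\Delta}$ strings. So $G$ is a $2^{N-\Delta}$-weak demi-bits generator, i.e.\ $k = N-\Delta$. Note that $C_f = f\circ G = G|_I$ for $f(y)=y|_I$. The class $\ckt$ can be taken to be any class containing all $G|_I$, for instance general circuits. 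Since $\Proj_{N\to m}$ is trivially efficient, the remaining work is parameter bookkeeping.

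For the second item, take $n<m\le 0.4N$ and $\Delta\le o(N)$. Then \autoref{thm: disperser family by a projection} (with $n$ replaced by $N$) says $\Proj_{N\to m}$ is an $(N-\Delta,\,4m\Delta/N)$ zero-error disperser family. Plugging this into \autoref{thm:avoidgeneral} with $k=N-\Delta$ and $\eps = 4m\Delta/N$ gives exactly $\Pr_I[\calA(G|_I)\neq\bot]\le 4m\Delta/N$ for every errorless $\SearchNP$ heuristic.

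For the first item, I would derive it from the second with $m=\alpha N$ for a small constant $\alpha$ (at most $0.4$), so that $\eps<1$. The difficulty is that $4m\Delta/N = 4\alpha\Delta$ is only below $1$ when $\Delta$ is bounded, whereas the statement allows any $\Delta=o(N)$. To handle general $\Delta$ I would not route through the Kruskal--Katona bound. Instead I would argue existence directly from \hyperref[lemma: Pajor]{Pajor's Lemma}. Let $X$ be the set of strings rejected by the adversary built in \autoref{thm:avoidgeneral}, so $|X|\ge 2^{N-\Delta}$. Pajor's lemma says $X$ shatters at least $2^{N-\Delta}$ subsets. Meanwhile \autoref{prop: prefix sum of binomial coefficients} bounds the number of subsets of size below $m$ by $2^{N H(\alpha)}$, which is much smaller than $2^{N-\Delta}$ once $\alpha$ is a small enough universal constant and $\Delta=o(N)$. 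Hence some shattered set has size at least $m$, and by downward closure some $I$ with $|I|=m$ is shattered.

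If an $\SearchNP$ algorithm solved $\Avoid$ on every $G|_I$, the adversary $\calB$ from the proof of \autoref{thm:avoidgeneral} (which ranges over all $I$) would then accept some string of $X$, contradicting the definition of $X$. Here $\calB$ guesses $I$ of size $m$ and a branch of $\calA(G|_I)$ whose output equals $y|_I$. Soundness of $\calB$ on $\Range(G)$ holds exactly as before. The main obstacle is this first-item bookkeeping: choosing $\alpha$ so that $2^{NH(\alpha)}+2^{N-\Delta-1}$-type counting works, and checking that $m=\alpha N>n$ is compatible with $N\ge 3n$, which the statement simply assumes.
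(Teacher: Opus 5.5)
Your proposal is correct and follows the paper's proof. The second item is exactly \autoref{thm:avoidgeneral} instantiated with the $(N-\Delta, 4m\Delta/N)$ random projection disperser family. For the first item the paper likewise bypasses Kruskal--Katona and uses the Sauer--Shelah-based \autoref{thm: disperser from Sauer--Shelah} with $k = N-\Delta$, a $(k,1-\delta)$ disperser family fed into \autoref{thm:avoidgeneral}; this is just a repackaging of your direct Pajor-plus-binomial-counting-plus-downward-closure contradiction.
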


\begin{proof}
    We combine \autoref{thm:avoidgeneral} with the Random Projection Disperser Family. For the first bullet, let $k := N-\Delta$, $m := \Omega(k/\log (N/k)) = \alpha N$, then by \autoref{thm: disperser from Sauer--Shelah}, $\Proj_{N \to m}$ is a $(k, 1-\delta)$ zero-error disperser family for some $\delta > 0$. Similarly, the second bullet follows from \autoref{thm: disperser family by a projection}.
\end{proof}

\begin{corollary}
    Let $k\in\N$, and $\ell(n) > n$ be a stretch function. The following holds for some universal constant $c\ge 1$: If there exists a demi-bits generator $G: \{0, 1\}^n \to \{0, 1\}^{c\ell(n)}$ computable in $\NC^0_k$, then $\NC^0_k$-$\Avoid[n, \ell(n)] \not\in\SearchNP$.
\end{corollary}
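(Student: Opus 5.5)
The plan is to derive the corollary from the first bullet of \autoref{thm:avg-avoid}, using the universal constant $\alpha<1$ it provides. We set $c := \lceil 1/\alpha \rceil$ (and enlarge $c$ to at least $3$ so that $N\ge 3n$ holds). Let $G:\{0,1\}^n\to\{0,1\}^{N}$ be the assumed $\NC^0_k$ demi-bits generator, with $N := c\ell(n)$. The phrase ``demi-bits generator'' with no quantitative qualifier will be read as $\eps$-secure for some constant $\eps<1$, say $\eps = 1/2$. Then $G$ is $(1-2^{-\Delta})$-secure with $\Delta = 1 = o(N)$, since being secure with a smaller $\eps$ only strengthens the notion.

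Next we apply the first bullet of \autoref{thm:avg-avoid} with $m := \alpha N \ge \ell(n) > n$. This shows that every $\SearchNP$ algorithm fails on some instance $G|_I$ with $|I| = \alpha N$. Two small adjustments are needed to reach output length exactly $\ell(n)$.

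The first adjustment concerns $\alpha N \ne \ell(n)$. We would rather choose $I$ of size exactly $\ell(n)$. Inspecting the proof of \autoref{thm: disperser from Sauer--Shelah}, the only requirement on $m$ is that $(eN/m)^m < 2^{N-\Delta}$, and this holds for every $m \le \alpha N$. So the disperser statement, and therefore \autoref{thm:avg-avoid}, applies with $m := \ell(n) \le \alpha N$.

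The second adjustment is the closure observation. Each output bit of $G|_I$ is an output bit of $G$, so it depends on at most $k$ input bits. Hence $G|_I$ is an $\NC^0_k$ circuit $\{0,1\}^n\to\{0,1\}^{\ell(n)}$, i.e., a valid $\NC^0_k$-$\Avoid[n,\ell(n)]$ instance. If some $\SearchNP$ algorithm solved $\NC^0_k$-$\Avoid[n,\ell(n)]$, it would in particular be an errorless heuristic succeeding on every $G|_I$, which contradicts the theorem.

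The main obstacle is bookkeeping rather than new ideas. We must check that the constant $\alpha$ from \autoref{thm: disperser from Sauer--Shelah} is independent of $n$ and $k$ when $\Delta=O(1)$; this holds because $k/N\to 1$, so $\log(2N/k)=O(1)$. We must also make sure the security parameter is interpreted so that $\Delta\le o(N)$. If ``demi-bits generator'' instead means $\eps$-secure for an inverse-polynomial $\eps$, then $\Delta = \log\frac{1}{1-\eps}$ is still tiny, so the argument goes through unchanged.
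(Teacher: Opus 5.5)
Your proposal is correct and is the argument the paper intends; the corollary is stated right after \autoref{thm:avg-avoid} with no separate proof. The argument is to take $N=c\ell(n)$, apply \autoref{thm:avg-avoid} to get a hard instance $G|_I$ with $|I|=\ell(n)$, and note that $G|_I$ is still an $\NC^0_k$ circuit. Your monotonicity adjustment for $m=\ell(n)\le\alpha N$ is valid, though the second bullet of \autoref{thm:avg-avoid} already allows any $n<m\le 0.4N$: taking $c=5$ and $\Delta=1$ bounds the success probability by $4/5<1$, with no need to reopen the proof of \autoref{thm: disperser from Sauer--Shelah}.
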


Combining the current best algorithm for $\NC^0$-$\Avoid$~\cite{KPI25, LZ25, GLY26}, we have:
\begin{corollary}
    For every $k\ge 3$ and $\eps > 0$:
    \begin{itemize}
        \item Every demi-bits generator computable in $\NC^0_k$ with stretch $\Omega_k(n^{(k-1)/2}\log n)$ can be broken in nondeterministic $n^{O(k)}$ time.
        \item Every demi-bits generator computable in $\NC^0_k$ with stretch $n^{1+\eps}$ can be broken in nondeterministic $2^{n^{1-\frac{2\eps}{k-3} + o(1)}}$ time.
    \end{itemize}
\end{corollary}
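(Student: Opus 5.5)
The plan is to argue by contraposition, directly from the adversary construction in the proof of \autoref{thm:avoidgeneral}, instantiated with the projection family and the counting argument of \autoref{thm: disperser from Sauer--Shelah}. The adversary will run a known deterministic $\NC^0_k$-$\Avoid$ algorithm on sub-generators $G|_I$. Fix $k\ge 3$ and a generator $G:\{0,1\}^n\to\{0,1\}^N$ computable in $\NC^0_k$. For every $I\subseteq[N]$, the restriction $G|_I$ is again an $\NC^0_k$ circuit, because projections keep locality. The Avoid algorithms of \cite{KPI25, GLY26} (and \cite{LZ25} for the subexponential regime) are deterministic and succeed on \emph{every} instance above their stretch threshold. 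Hence, unlike in \autoref{thm:avoidgeneral}, we never need a probability over $f$: any $I$ of the right size will do.

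Concretely, let $\alpha<1$ be the constant from the first bullet of \autoref{thm:avg-avoid}, and set $m:=\alpha N$. Assume $N$ is large enough that $m\ge C_k n^{(k-1)/2}\log n$, where $C_k$ is the constant required by \cite{GLY26}; this costs only a constant factor in the stretch. Let $\calA$ be the $n^{O(k)}$-time deterministic algorithm of \cite{GLY26}. Define the nondeterministic adversary $\calB(y)$ for $y\in\{0,1\}^N$ as follows: guess $I\subseteq[N]$ with $|I|=m$, compute $z:=\calA(G|_I)$, and accept iff $z=y|_I$.
\begin{itemize}
\item $\calB$ runs in nondeterministic time $\poly(N)\cdot n^{O(k)}$, which is $n^{O(k)}$ because $N\le n^{O(k)}$ in this regime.
\item \emph{Soundness.} If $y=G(s)$, then $y|_I=G|_I(s)\in\Range(G|_I)$ for every $I$. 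Since $\calA$ always outputs a non-range string, $\calB$ rejects $y$.
\item \emph{Largeness.} Let $R$ be the set of strings that $\calB$ rejects. Suppose $|R|>\sum_{i<m}\binom{N}{i}$. Then the Sauer--Shelah Lemma gives some $I$ with $|I|=m$ that is shattered by $R$. So there is $y\in R$ with $y|_I=\calA(G|_I)$, and $\calB$ accepts this $y$, which is a contradiction.
\end{itemize}
Therefore $|R|\le 2^{N\cdot H(\alpha)}$ by \autoref{prop: prefix sum of binomial coefficients}. So $\calB$ accepts all but a $2^{-\Omega(N)}$ fraction of strings while rejecting all of $\Range(G)$, and it breaks $G$ as a demi-bits generator with any constant (indeed near-$1$) $\eps$.

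The second bullet uses the identical adversary. We plug in the subexponential-time deterministic algorithm for $\NC^0_k$-$\Avoid$ with stretch $n^{1+\eps}$, with running time $2^{n^{1-\frac{2\eps}{k-3}+o(1)}}$ as stated for \cite{KPI25, LZ25, GLY26}. Since $m=\alpha N$ and $N=n^{1+\eps}$, the sub-generators $G|_I$ still have stretch $\Omega(n^{1+\eps})$. The constant factor $\alpha$ can be absorbed by treating the exponent $\eps$ as slightly smaller, or by noting that the cited algorithms tolerate constant factors. The rest of the overhead is polynomial and vanishes into the $o(1)$ in the exponent.

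The main obstacle is not the combinatorics, which is exactly Sauer--Shelah as in \autoref{thm: disperser from Sauer--Shelah}. It is checking the precise hypotheses of the cited algorithms. Three points need verification:
\begin{itemize}
\item the algorithms handle arbitrary $\NC^0_k$ circuits, including ones whose output bits are repeated or degenerate after restriction;
\item they work at stretch $\alpha N$ rather than $N$;
\item their running time is measured in $n$, so that the final nondeterministic time bounds come out as claimed.
\end{itemize}
If an algorithm needs a strictly larger stretch constant, I would simply take $N$ larger by the factor $1/\alpha$. This is why the statement is phrased with $\Omega_k(\cdot)$.
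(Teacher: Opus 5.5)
Your proposal is correct and is essentially the paper's argument. The paper plugs the deterministic $\NC^0_k$-$\Avoid$ algorithms of \cite{LZ25, GLY26} into the contrapositive of \autoref{cor: NC0-Avoid}, that is, \autoref{thm:avoidgeneral} instantiated with the Sauer--Shelah projection family of \autoref{thm: disperser from Sauer--Shelah}; your $\calB$ is exactly that theorem's adversary specialized to a deterministic solver. The one parameter to pin down is $\alpha<1/2$. The constant from \autoref{thm:avg-avoid} satisfies this, and it is what makes $\sum_{i<m}\binom{N}{i}\le 2^{N H(\alpha)}$ an exponentially small fraction of $2^N$.
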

\begin{proof}
    The first bullet follows from a deterministic algorithm for $\NC^0_k$-$\Avoid[n, \Omega_k(n^{(k-1)/2}\log n)]$ in $n^{O(k)}$ time, while the second bullet follows from a deterministic algorithm for $\NC^0_k$-$\Avoid[n, n^{1+\eps}]$ in $2^{n^{1-\frac{2\eps}{k-3}+o(1)}}$ time~\cite{LZ25, GLY26}.
\end{proof}

\subsection{Hardness of \texorpdfstring{$\Avoid$}{Avoid} from Barely Non-trivial Demi-Bits}\label{sec: improving RWZ26}
In this section, we instantiate \autoref{thm:avoidgeneral} and \autoref{thm: one proof complexity generator against all proof systems} with the $t$-wise independent hash families as dispersers, and derive various hardness results for $\Avoid$.

The main result of~\cite{RenWZ26} is that for every demi-bits generator $G: \{0, 1\}^n \to \{0, 1\}^N$, pairwise independent hash family $\calH = \{h: \{0, 1\}^N \to \{0, 1\}^m\}$, and every errorless $\SearchNP$ heuristic $\calA$, there exists a hash function $h \in \calH$ such that $\calA$ fails to solve $\Avoid$ on the instance $h\circ G$. Our new framework allows us to improve this theorem in two aspects: \emph{with high probability} over $h\gets \calH$, $\calA$ fails to solve $\Avoid$ on $h\circ G$; and this only requires $G$ to be a \emph{weak} demi-bits generator.

\begin{theorem}[Strengthening of {\cite[Theorem 1.2]{RenWZ26}}]
    Let $G: \{0, 1\}^n \to \{0, 1\}^N$ be a demi-bits generator, $\calH = \{h: \{0, 1\}^N \to \{0, 1\}^m\}$ be a family of pairwise independent hash functions, and $\calA$ be an errorless $\SearchNP$ heuristic for $\Avoid$. If $N \geq 4m$ and $m > n$, then with probability $\ge 1-2^{-m}$ over $h\gets \calH$, $\calA$ fails to solve $\Avoid$ on the input instance $h\circ G$.

    Moreover, this is still true even if $G$ is only a $2^{3m}$-weak demi-bits generator.
\end{theorem}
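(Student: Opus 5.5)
This is a direct instantiation of \autoref{thm:avoidgeneral} with the disperser family given by \autoref{thm: disperser family from pairwise independence}, with parameters chosen so that the failure probability is $2^{-m}$.

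\textbf{Choice of parameters.} Set $\eps := 2^{-m}$. Then \autoref{thm: disperser family from pairwise independence} says that any pairwise independent family $\calH = \{h: \{0,1\}^N \to \{0,1\}^m\}$ is a $(k,\eps)$ zero-error disperser family with
\[k = 2m + \log(1/\eps) = 3m.\]
Two side conditions need checking.
\begin{itemize}
\item \autoref{def: disperser family} requires $k \ge m$, which holds.
\item The weak demi-bits notion must make sense, i.e.\ $2^k \le 2^N$. This follows from $N \ge 4m > 3m$.
\end{itemize}
The family must also be efficient in the sense of \autoref{def: disperser family}. Standard pairwise independent families, such as affine maps $h(y) = Ay + b$ over $\GF(2)$, have $\poly(N)$-size descriptions and are polynomial-time computable, so this holds. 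For the class $\ckt$ we take all polynomial-size circuits, which contains every $h\circ G$. The assumption $m>n$ guarantees that $h\circ G$ is a valid $\Avoid[n,m]$ instance.

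\textbf{Applying \autoref{thm:avoidgeneral}.} Under the weakest hypothesis, that $G$ is a $2^{3m}$-weak demi-bits generator, \autoref{thm:avoidgeneral} with $k=3m$ and $\eps = 2^{-m}$ gives directly that
\[\Pr_{h\gets\calH}[\calA(h\circ G)\ne\bot] \le 2^{-m}.\]
This is exactly the ``moreover'' clause.

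\textbf{The main statement.} For the version with an ordinary demi-bits generator, I reduce to the weak case. An $\eps'$-secure demi-bits generator is a $(1-\eps')2^N$-weak demi-bits generator, since every efficient adversary must fail on at least $(1-\eps')2^N$ strings. With $\eps'$ any constant below $1$ (say $1/2$), we get
\[(1-\eps')2^N \ge 2^{N-1} \ge 2^{3m},\]
using $N \ge 4m$ and $m \ge 1$. A $K$-weak generator is also $K'$-weak for every $K'\le K$, so $G$ is $2^{3m}$-weak and the previous paragraph applies.

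\textbf{Where care is needed.} There is no real obstacle here; the substance sits in \autoref{thm:avoidgeneral} and \autoref{thm: disperser family from pairwise independence}. The only point to watch is the convention for ``demi-bits generator'' when $\eps'$ is not a fixed constant. If $\eps'$ were allowed to be, say, $1-2^{-\Delta}$, one would need $N-\Delta \ge 3m$ instead. The inequality $N\ge 4m$ leaves room of $m$ bits for this, so any $\Delta \le m$ is covered.
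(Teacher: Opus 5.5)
Your proposal is correct and is the same argument as the paper's: instantiate \autoref{thm:avoidgeneral} with the pairwise-independent disperser family of \autoref{thm: disperser family from pairwise independence} at $\eps = 2^{-m}$, so that $k = 3m$. The paper leaves implicit the step you spell out, namely that an ordinary demi-bits generator is already $2^{3m}$-weak because $N \ge 4m$. One small slip there: the bound $(1-\eps')2^N \ge 2^{N-1}$ needs $\eps' \le 1/2$, not just any constant below $1$, but your closing remark with $\Delta \le m$ covers the general case.
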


\begin{proof}
    We combine \autoref{thm:avoidgeneral} with the disperser family in \autoref{thm: disperser family from pairwise independence}. By setting $\eps=2^{-m}$ in \autoref{thm: disperser family from pairwise independence}, the theorem follows.
\end{proof}

Similarly, we can also get the following corollaries.

\begin{corollary}
    If $\calH$ is an $m$-wise independent hash family, then the above theorem still holds even if $G$ is only a $cm2^m$-weak demi-bits generator for some universal constant $c\ge 1$.
\end{corollary}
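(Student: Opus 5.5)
The corollary follows by combining \autoref{thm:avoidgeneral} with the $t$-wise independent disperser family of \autoref{thm: disperser family from twise independence}, taking $t := m$ (made even if necessary, and assuming $m \ge 6$). Concretely, I will show that an $m$-wise independent family $\calH = \{h: \{0,1\}^N \to \{0,1\}^m\}$ is a $(k, 2^{-m})$ zero-error disperser family for some $k$ with $2^k \le c m 2^m$. Then \autoref{thm:avoidgeneral}, applied to the $2^k$-weak demi-bits generator $G$, gives $\Pr_h[\calA(h\circ G)\neq\bot]\le 2^{-m}$. A $cm2^m$-weak generator is in particular $2^k$-weak whenever $2^k\le cm2^m$, since weakness is monotone in the threshold $K$. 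The hypotheses $N\ge 4m$ and $m>n$ are only needed so that $k \le N$ (so sources of min-entropy $k$ exist) and so that $h\circ G$ is a valid $\Avoid$ instance.

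The key step is the parameter computation. With $t=m$ and $\eps = 2^{-m}$, the bound of \autoref{thm: disperser family from twise independence} gives
\[
k = m(1+2/m) + \log m + O(m/m) + 1 = m + \log m + O(1).
\]
Hence $2^k = O(m 2^m)$, and choosing $c$ to be the universal constant hidden in the $O(1)$ term gives $2^k \le c m 2^m$.

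The only real obstacle is checking that this instantiation stays inside the proof of \autoref{thm: disperser family from twise independence}. That proof used $2^{k-m}\ge t$, which holds here since $2^{k-m} = \Theta(m)$ can be made at least $m$ by enlarging the constant. It also absorbed $\log(1/\eps)/t$ into the $O(\cdot)$, which here equals $1$. I would redo the chain of inequalities with these values: the per-$z$ failure probability is at most $8(2m\cdot 2^{k-m})^{m/2}/2^{(k-m)m}$. Requiring this to be at most $2^{-2m}$ amounts to $2^{k-m} \ge 2m\cdot 2^{4+O(1/m)}$, which holds for $2^{k-m} = c' m$ with a large enough constant $c'$. A union bound over the $2^m$ values of $z$ then gives overall failure probability at most $2^{-m}$, as required.
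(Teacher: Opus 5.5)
Your proposal is correct and is the paper's proof: instantiate \autoref{thm:avoidgeneral} with the $t$-wise independent disperser family of \autoref{thm: disperser family from twise independence}, taking $t \approx m$ and $\eps = 2^{-m}$, so that $k = m + \log m + O(1)$ and $2^k = O(m2^m)$. Your explicit re-check of the inequality chain goes beyond the paper's one-line argument without changing it. Two small points: for odd $m$ you must round $t$ down to $m-1$, since $m$-wise independence does not give $(m+1)$-wise independence. Also, $c$ should be $2$ raised to the hidden constant rather than the constant itself, which your explicit choice $2^{k-m} = c'm$ already handles.
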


\begin{proof}
    We combine \autoref{thm:avoidgeneral} with the disperser family in \autoref{thm: disperser family from twise independence} and set $\eps=2^{-m}$ there.
\end{proof}

Any injective function $G: \{0, 1\}^n \to \{0, 1\}^N$ is trivially a $2^n$-weak demi-bits generator, since for every $y\in\Range(G)$, it is impossible to prove that ``$y\not\in\Range(G)$'' in a sound proof system. In this regard, an $O(n2^n)$-weak demi-bits generator is \emph{barely non-trivial}. However, using an $n$-wise independent hash family, we show that such a barely non-trivially weak demi-bits generator implies proof complexity generators: 

\begin{corollary}
For some universal constant $c\ge 1$, if there exists a $cn2^n$-weak demi-bits generator $G: \{0, 1\}^n \to \{0, 1\}^N$ with $N \geq 2n$, then there exists a single (non-uniform family of) proof complexity generator secure against all (uniform) proof systems.
\end{corollary}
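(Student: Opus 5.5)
We combine \autoref{thm: one proof complexity generator against all proof systems} with the $t$-wise independent disperser family of \autoref{thm: disperser family from twise independence}, choosing the output length $m$ slightly larger than $n$ so that the hypothesis on $k$ is at most $\log(cn2^n)$.

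\textbf{Choice of parameters.} Take an $n$-wise independent hash family $\calH = \{h: \{0,1\}^N \to \{0,1\}^m\}$ with $t := n$ (rounded up to an even number at least $6$). Set the output length to $m := n+1$, so the resulting generator $h\circ G: \{0,1\}^n \to \{0,1\}^{n+1}$ is stretching. Set $\eps := 1/2$. This hash family is efficient, e.g.\ via random degree-$(t-1)$ polynomials over $\GF(2^N)$ truncated to $m$ bits. The truncation preserves $t$-wise independence, and each function has a $\poly(N)$-size description, hence a polynomial-size circuit for $h\circ G$.

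\textbf{Checking the entropy requirement.} \autoref{thm: disperser family from twise independence} makes $\calH$ a $(k,\eps)$ zero-error disperser family with
\[k = m(1+2/t) + \log t + O(\log(1/\eps)/t) + 1 = (n+1)(1+2/n) + \log n + O(1) = n + \log n + O(1).\]
We need $2^k \le cn2^n$, i.e.\ $k \le n + \log n + \log c$. This holds for a sufficiently large universal constant $c$. Since $G$ is a $cn2^n$-weak demi-bits generator, it is also a $2^k$-weak one, because weakness is monotone in the threshold: fewer required hard strings is a weaker demand. We also need $k \ge m$, which clearly holds, and $k \le N$, which holds since $N \ge 2n$.

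\textbf{Conclusion.} Apply \autoref{thm: one proof complexity generator against all proof systems}. With probability at least $1/2$ over $h \gets \calH$, $C := h\circ G$ is a proof complexity generator against all uniform proof systems. Fixing one such $h$ for each input length $n$ gives the claimed non-uniform family.

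\textbf{Main obstacle.} The delicate step is the constant bookkeeping in $k$. The $\log t$ term and the $m\cdot 2/t$ term are exactly what force $t = \Theta(n)$-wise independence, and together they consume the factor $n$ in $cn2^n$. We must confirm that the hidden $O(\cdot)$ in \autoref{thm: disperser family from twise independence}, together with the rounding of $t$ to an even integer, contributes only an additive constant, which is then absorbed into $\log c$.

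A second point to check is that the proof of \autoref{thm: one proof complexity generator against all proof systems} via the Cook--\Krajicek optimal proof system with one bit of advice uses only the $2^k$-weak hypothesis. It should, since it runs the argument of \autoref{thm:avoidgeneral}, which needs exactly that.
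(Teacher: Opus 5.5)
Your proposal is correct and follows the paper's proof. It combines \autoref{thm: one proof complexity generator against all proof systems} with the $t$-wise independent disperser family of \autoref{thm: disperser family from twise independence} for $t \approx n$, giving $k = n + \log n + O(1)$. The only differences are inessential parameter choices: you take $m = n+1$ and $\eps = 1/2$, whereas the paper takes $m = n+2$ and $\eps = 2^{-m}$. In both cases the term $O(\log(1/\eps)/t)$ is an additive constant absorbed into $\log c$.
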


\begin{proof}
    We combine \autoref{thm: one proof complexity generator against all proof systems} with the disperser family in \autoref{thm: disperser family from twise independence}, where we use an $n$-wise independent hash family, and set $m=n+2$ and error $\eps=2^{-m}$ to get a disperser family $\calH = \{h: \{0, 1\}^N \to \{0, 1\}^m\}$. By \autoref{thm: disperser family from twise independence}, the entropy requirement is $k \geq n+\log n + O(1)$. Therefore we only need a $O(n2^n)$-weak demi-bits generator.
\end{proof}

\subsection{Many Proof Complexity Generators Inside One Demi-Bits Generator}\label{sec: proof complexity generators from demi-bits generators}

In this subsection, we show that for every well-behaved proof system $\calP$, demi-bits generators against $\calP$ imply proof complexity generators against $\calP$.

\begin{definition}\label{def: well-behaved proof systems}
    We say that a proof system $\calP$ is \emph{well-behaved} if there exists a polynomial $p(\cdot)$ such that for every two CNFs $C \subseteq C'$ (i.e., every clause appearing in $C$ also appears in $C'$), if $\calP$ admits a length-$\ell$ refutation of $C$, then $\calP$ admits a length-$\poly(\ell)$ refutation of $C'$.
\end{definition}

That is, a proof system is \emph{well-behaved} if adding more axioms cannot make the proof harder. This is a very natural assumption that holds for essentially every proof system of interest: to prove $C'$ is unsatisfiable, it suffices to only look at the clauses in $C$. We are not aware of any proof system studied in the literature that is not well-behaved, and we believe that any such proof system needs to be very contrived.

Let $G: \{0, 1\}^n \to \{0, 1\}^m$ and $y \in \{0, 1\}^m$, we use the notation ``$y\not\in\Range(G)$'' to represent the propositional formula asserting that there is no $z \in \{0, 1\}^n$ such that $G(z) = y$. (This formula is also denoted as $\tau_y(G)$ in the literature~\cite{Krajicek-generator-book}.) We do not specify how to encode ``$y\not\in\Range(G)$'' as a CNF here; any encoding of ``$y\not\in\Range(G)$'' satisfying the following fact would suffice, and we also remark that the encoding described in \cite[Section 3.1]{RenWZ26} satisfies the fact.
\begin{fact}\label{fact: good encoding}
    Let $G: \{0, 1\}^n \to \{0, 1\}^m$ be a generator and $I\subseteq[m]$. Then every clause appearing in ``$y|_I\not\in\Range(G|_I)$'' also appears in ``$y\not\in\Range(G)$''.
\end{fact}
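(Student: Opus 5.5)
The plan is to fix one concrete, natural CNF encoding of ``$y\not\in\Range(G)$'' (the Tseitin-style encoding, which is the type used in \cite[Section 3.1]{RenWZ26}) and then check the containment clause by clause. The encoding is as follows. Fix the circuit $G$ with gates $g_1,\dots,g_s$ over inputs $x_1,\dots,x_n$. Introduce one propositional variable $x_j$ for each input and one extension variable $e_g$ for each gate $g$. Variable names are indexed by the names of the gates in the circuit description of $G$, not by their position in some restricted circuit.

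The clauses then split into two groups, which are the two terms I would analyse separately:
\begin{enumerate}[label=(\alph*)]
\item For every gate $g$ with inputs $u,v$ and operation $\circ$, the constant-size set of clauses expressing $e_g \leftrightarrow (e_u \circ e_v)$; these are the \emph{consistency clauses}.
\item For every output index $i\in[m]$, with output gate $o_i$, the unit clause $e_{o_i}$ if $y_i=1$, or $\neg e_{o_i}$ if $y_i=0$; these are the \emph{output clauses}.
\end{enumerate}
The conjunction is unsatisfiable exactly when $y\not\in\Range(G)$, as required.

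Next I would fix the convention that the circuit $G|_I$ is \emph{the same circuit} as $G$ (same gate names, same wiring), with only its list of designated outputs shortened to $o_{i_1},\dots,o_{i_{|I|}}$ for $I=\{i_1<\dots<i_{|I|}\}$. This is exactly how $G|_I$ was defined in the text: $G|_I(x)=G(x)|_I$, obtained by selecting output bits. With this convention the comparison is immediate.
\begin{itemize}
\item The consistency clauses of $G|_I$ are literally identical to those of $G$, since they depend only on the gates and wiring and use the same variable names.
\item The output clauses of ``$y|_I\not\in\Range(G|_I)$'' are the unit clauses for the indices $i\in I$, with required value $(y|_I)$ at position $i$, which equals $y_i$. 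Each of these is therefore one of the output clauses of ``$y\not\in\Range(G)$''.
\end{itemize}
Hence every clause of the former formula appears in the latter.

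The only real obstacle is a bookkeeping issue. If the encoding of $G|_I$ were allowed to prune gates that no longer feed any output in $I$, or to rename variables, the literal containment could fail. I would handle this in one of two ways. The first is to insist on the convention above, where variables are named by gates of the original circuit. The second is to observe that pruning only \emph{removes} consistency clauses, so that, as long as names are inherited from $G$, containment still holds.

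Either way, combined with well-behavedness (Definition~\ref{def: well-behaved proof systems}), this gives what is needed later. A short refutation of ``$y|_I\not\in\Range(G|_I)$'' yields a short refutation of ``$y\not\in\Range(G)$'', which is the step used in the proof idea of Theorem~\ref{thm: main informal}.
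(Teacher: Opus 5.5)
Your proposal is correct and follows the paper's approach. The paper does not prove this Fact; it imposes it as a requirement on the encoding and remarks that the encoding of \cite[Section 3.1]{RenWZ26} satisfies it, which is what your clause-by-clause check of the Tseitin-style encoding establishes once $G|_I$ keeps the gate names and wiring of $G$ and drops only output unit clauses. Your caveat about the variable-naming convention is apt, since the Fact is a property of the chosen encoding rather than of $G$ itself.
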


\begin{theorem}\label{thm: many proof complexity generators inside one demi-bits generator}
    Let $\calP$ be a well-behaved proof system, then there is a polynomial $p(\cdot)$ such that the following holds. Let $2^{-o(N)} < \eps \le 1/2$, $n < m < N/(10\log(1/\eps))$, and $G: \{0, 1\}^n \to \{0, 1\}^N$ be a $(1-\eps)$-secure demi-bits generator against $\calP$-proofs of length $p(\ell)$. Then w.p.~at least $1-4m\log(1/\eps)/N$ over $I \subseteq [N]$, $|I| = m$, $G|_I$ is a proof complexity generator secure against $\calP$-proofs of length $\ell$.
\end{theorem}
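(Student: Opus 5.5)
The plan is to instantiate the disperser-family argument from \autoref{thm:avoidgeneral}, with $\Proj_{N\to m}$ as the family and the set of hard statements as the source. The difference is that the adversary is now a fixed proof system $\calP$ rather than an $\SearchNP$ algorithm, and the transfer step uses well-behavedness together with \autoref{fact: good encoding}.

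Let $p$ be the polynomial from \autoref{def: well-behaved proof systems}, so that a length-$\ell$ refutation of a sub-CNF yields a length-$p(\ell)$ refutation of any super-CNF. Define
\[H := \{\, y\in\{0,1\}^N : \text{``}y\not\in\Range(G)\text{''} \text{ has no } \calP\text{-proof of length } p(\ell)\,\}.\]
Since $G$ is $(1-\eps)$-secure against such proofs, $\calP$ proves fewer than $(1-\eps)2^N$ statements. Hence $|H|\ge \eps 2^N = 2^{N-\Delta}$ with $\Delta:=\log(1/\eps)$. The assumption $\eps>2^{-o(N)}$ gives $\Delta=o(N)$, and $m<N/(10\Delta)\le 0.4N$ because $\Delta\ge 1$.

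Next I apply \autoref{thm: disperser family by a projection}, which needs a source of size exactly $2^{N-\Delta}$, so I pass to a subset of $H$; shattering is monotone, so this loses nothing. The theorem says that a uniformly random $I\subseteq[N]$ with $|I|=m$ is shattered by $H$ except with probability at most $4m\Delta/N = 4m\log(1/\eps)/N$. This is exactly the claimed failure probability. One technical point: $\Delta$ may not be an integer, so I would round $N-\Delta$ down, or observe that the Kruskal--Katona computation in that theorem goes through verbatim for $|\calX|\ge 2^{N-\Delta}$.

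Now fix an $I$ that $H$ shatters, and suppose for contradiction that for some $w\in\{0,1\}^I$, ``$w\not\in\Range(G|_I)$'' has a $\calP$-proof of length $\ell$. Shattering gives some $y\in H$ with $y|_I=w$. By \autoref{fact: good encoding}, every clause of ``$y|_I\not\in\Range(G|_I)$'' appears in ``$y\not\in\Range(G)$''. Well-behavedness then yields a refutation of the larger CNF of length $p(\ell)$, contradicting $y\in H$. Therefore $G|_I$ is secure against length-$\ell$ proofs on every $w$.

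I do not expect a real obstacle, since this is the earlier argument with the $\NP$ adversary replaced by $\calP$. The only care needed is in the parameter bookkeeping. First, the polynomial loss $p$ must be absorbed into the security hypothesis, which the statement already builds in. Second, the hypotheses of \autoref{thm: disperser family by a projection} ($m\le 0.4N$, $\Delta=o(N)$) must follow from the stated ranges of $m$ and $\eps$, which they do as checked above.
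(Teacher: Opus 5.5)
Your proposal is correct and follows the paper's proof essentially step for step. You define the set of strings whose non-membership statements have no length-$p(\ell)$ proofs, get density $\eps$ from demi-bits security, apply the random projection disperser family to show that a random size-$m$ index set is shattered with the stated probability, and transfer hardness via \autoref{fact: good encoding} and well-behavedness. Your extra bookkeeping fills in details the paper leaves implicit: checking $m\le 0.4N$ and $\Delta=o(N)$, and handling non-integer $\Delta$ by using $|H|\ge 2^{N-\Delta}$.
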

\def\Hard{\textsc{Hard}}
\begin{proof}
    Since $\calP$ is well-behaved, there is a polynomial $p(\cdot)$ such that for every two CNFs $C\subseteq C'$, if $\calP$ admits a length-$\ell$ refutation of $C$, then $\calP$ admits a length-$p(\ell)$ refutation of $C'$.

    Let $\Hard\subseteq \{0, 1\}^N$ be the set of strings $y \in \{0, 1\}^N$ such that $\calP$ does not admit length-$p(\ell)$ proofs of ``$y\not\in\Range(G)$''. Since $G$ is a $(1-\eps)$-secure demi-bits generator, we have $|\Hard| \ge \eps\cdot 2^N$. It follows from the \hyperref[thm: disperser family by a projection]{Random Projection Disperser Family} that a random subset $I \subseteq [N]$ of size $|I| = m$ is shattered by $\Hard$ with probability $\ge 1-4m\log(1/\eps) / N$. By \autoref{fact: good encoding}, $G|_I$ is a proof complexity generator in the sense that for every $y \in \{0, 1\}^I$, $\calP$ does not admit a length-$\ell$ proof of ``$y\not\in\Range(G|_I)$''.
\end{proof}

\begin{remark}[{Comparison with~\cite{RenWZ26}}]
    Previously, \cite{RenWZ26} proved the same theorem for every proof system $\calP$ closed under \emph{simple parity reductions} (\cite[Definition 3.3]{RenWZ26}). Roughly speaking, this is because~\cite{RenWZ26} needs to compose the demi-bits generator with a hash function in some universal hash family; since such hash functions need unbounded-fanin $\XOR$ gates to compute, the results in~\cite{RenWZ26} only hold for proof systems that can reason about such $\XOR$ gates. In particular, when $\calP$ is the \emph{resolution} proof system, \cite{RenWZ26} fails to transform demi-bits against $\calP$ to proof complexity generators against $\calP$.\footnote{Resolution is \emph{provably not} closed under simple parity reductions. The Tseitin formula is exponentially hard for resolution~\cite{Urquhart87}, but it reduces to a trivial formula $x_1\land \bar{x}_1$ under simple parity reductions.} Thanks to the \hyperref[thm: disperser family by a projection]{Random Projection Disperser Family}, our results hold as long as the proof system is \emph{well-behaved}, which is a much weaker (in fact, nearly trivial) condition compared to being closed under simple parity reductions, and includes weak proof systems such as resolution.
\end{remark}

As a corollary, for every well-behaved proof system $\calP$, the ability for $\calP$ to break demi-bits generators in $\NC^0$ coincides with its ability to break proof complexity generators in $\NC^0$ up to a constant factor in the stretch.

\begin{corollary}
    Let $\calP$ be a well-behaved proof system, $k\ge 2$ be a constant, and $\ell(n) > n$ be a function, then the following are equivalent:
    \begin{enumerate}
        \item For every constant $c\ge 1$, there is a demi-bits generator $G: \{0, 1\}^n \to \{0, 1\}^{c\cdot \ell(n)}$ computable in $\NC^0_k$ that is $0.99$-secure against $\calP$.\label{item: NC0 demi-bits against P}
        \item For every constant $c\ge 1$, there is a proof complexity generator $G: \{0, 1\}^n \to \{0, 1\}^{c\cdot \ell(n)}$ computable in $\NC^0_k$ that is secure against $\calP$. \label{item: NC0 proof complexity generator against P}
    \end{enumerate}
\end{corollary}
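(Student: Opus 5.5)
We prove the two directions separately. Direction (2)$\Rightarrow$(1) is essentially by definition. Direction (1)$\Rightarrow$(2) follows from \autoref{thm: many proof complexity generators inside one demi-bits generator} applied with a constant $\eps$.

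\textbf{(2)$\Rightarrow$(1).} Fix $c$ and let $G$ be the $\NC^0_k$ proof complexity generator from item~\ref{item: NC0 proof complexity generator against P} with output length $c\ell(n)$. By definition, $\calP$ has no polynomial-size proof of ``$y\not\in\Range(G)$'' for \emph{any} $y\in\{0,1\}^{c\ell(n)}$. Hence any adversary built from $\calP$-proofs, i.e.\ one that guesses a proof and verifies it, accepts no string at all. In particular it accepts fewer than a $0.01$ fraction of strings, so $G$ is $0.99$-secure against $\calP$ (indeed $\eps$-secure for every $\eps>0$). The stretch and locality are unchanged.

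\textbf{(1)$\Rightarrow$(2).} Fix a target constant $c$. Choose a larger constant $c'$ (to be set below) and take the $0.99$-secure $\NC^0_k$ demi-bits generator $G:\{0,1\}^n\to\{0,1\}^{N}$ from item~\ref{item: NC0 demi-bits against P}, with $N=c'\ell(n)$. In the notation of \autoref{thm: many proof complexity generators inside one demi-bits generator}, $0.99$-security is $(1-\eps)$-security with $\eps=0.01$, so $\log(1/\eps)=O(1)$.

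Set $m=c\ell(n)$ and pick $c'\ge 10c\log(1/\eps)\cdot C_0$ for a suitable absolute constant $C_0$. Then $n<m<N/(10\log(1/\eps))$, and the failure probability $4m\log(1/\eps)/N\le 4/(10C_0)<1$. So some $I\subseteq[N]$ with $|I|=m$ exists such that $G|_I$ is secure against $\calP$-proofs of length $\ell'$. Since $G|_I$ only selects output bits, it is still in $\NC^0_k$ and has output length $c\ell(n)$, as required.

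\textbf{Main obstacle: quantifying security.} Proof complexity generators need superpolynomial lower bounds for every polynomial length $\ell'$, while the theorem takes demi-bits security against length-$p(\ell')$ proofs as input. Since $p$ is a fixed polynomial determined by well-behavedness, security against all polynomial-length proofs transfers directly.

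The subtle point is that the set $I$ could a priori depend on $\ell'$. To get a single $G|_I$ that works for all polynomial lengths, I would handle this in one of two ways:
\begin{itemize}
\item Interpret security asymptotically at a fixed superpolynomial length bound, and apply the theorem once with $\ell'$ equal to that bound.
\item Alternatively, observe that the good event for larger $\ell'$ implies the good event for smaller $\ell'$. Then the set $I$ obtained for any superpolynomial $\ell'$ works simultaneously for all polynomial lengths.
\end{itemize}
The remaining checks are the condition $\eps>2^{-o(N)}$, which holds trivially for constant $\eps$, and the condition $m\le 0.4N$, which holds since $c'$ is chosen large.
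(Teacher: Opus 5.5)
Your proof is correct and follows the paper's argument. As in the paper, (2)$\Rightarrow$(1) is immediate from the definitions, and (1)$\Rightarrow$(2) applies \autoref{thm: many proof complexity generators inside one demi-bits generator} with $\eps=0.01$ to an $\NC^0_k$ demi-bits generator with a constant factor more stretch, using that $G|_I$ stays in $\NC^0_k$; the paper takes output length $1000\ell(n)$ and gets success probability $\ge 0.9$. Your concern that $I$ may depend on the proof-length bound is a point the paper leaves implicit, and your nested-events observation settles it once you add the routine diagonalization: security against every polynomial length yields security against some slowly growing superpolynomial length, and the theorem is then applied at that length.
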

\begin{proof}
    $(\ref{item: NC0 demi-bits against P}) \implies (\ref{item: NC0 proof complexity generator against P})$ follows from \autoref{thm: many proof complexity generators inside one demi-bits generator}: If $G: \{0, 1\}^n \to \{0, 1\}^{1000\ell(n)}$ is a demi-bits generator that is $0.99$-secure against $\calP$, then with probability $\ge 0.9$ over $I \subseteq [1000\ell(n)]$, $|I| = \ell(n)$, $G|_I$ is a proof complexity generator against $\calP$.

    $(\ref{item: NC0 proof complexity generator against P}) \implies (\ref{item: NC0 demi-bits against P})$ follows from definition: If $\calP$ can prove ``$y\not\in\Range(G)$'' for \emph{many} strings $y$, then $\calP$ can prove ``$y\not\in\Range(G)$'' for at least \emph{some} string $y$.
\end{proof}

A similar corollary concerns Goldreich's PRG~\cite{Goldreich11-PRG} where the underlying hypergraph is selected uniformly at random. Let $k\ge 3$, $P: \{0, 1\}^k \to \{0, 1\}$ be any function, and $\ell(n) > n$ be a stretch parameter. Let $\calG_{\ell(n)}$ denote the following distribution of candidate PRGs $G: \{0, 1\}^n \to \{0, 1\}^{\ell(n)}$: For each output bit $i \in [\ell(n)]$, we select $k$ random indices $j_1, j_2, \dots, j_k \gets [n]$ and set the $i$-th output bit as
\[G(x)_i := P(x_{j_1}, x_{j_2}, \dots, x_{j_k}).\]

\begin{corollary}
    Let $k\ge 3$ be a constant, $P: \{0, 1\}^k \to \{0, 1\}$ be any Boolean function, $\ell(n)$ be a stretch function, and $\calP$ be a well-behaved proof system. Then the following are equivalent:
    \begin{enumerate}
        \item For every constant $c\ge 1$ and $\eps > 0$, w.p.~at least $1-\eps$ over $G \gets \calG_{c\cdot \ell(n)}$, $G$ is a $(1-\eps)$-secure demi-bits generator against $\calP$.\label{item: Goldreich demi-bits}
        \item For every constant $c\ge 1$ and $\eps > 0$, w.p.~at least $1-\eps$ over $G\gets \calG_{c\cdot \ell(n)}$, $G$ is a proof complexity generator against $\calP$.\label{item: Goldreich proof complexity generators}
    \end{enumerate}
\end{corollary}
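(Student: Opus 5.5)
The plan is to mirror the proof of the preceding $\NC^0_k$ corollary, with one extra layer of randomness: the choice of the hypergraph. The key structural fact is that a uniformly random subset of output bits of $G\gets\calG_{c\ell(n)}$ is itself distributed as $\calG_{\ell(n)}$. Each output bit of $G$ has its $k$ indices chosen i.i.d.\ and independently of the other bits. So for any fixed $I\subseteq[c\ell(n)]$ with $|I|=\ell(n)$, the restriction $G|_I$ (bits taken in increasing order of $I$) is exactly distributed as $\calG_{\ell(n)}$. This stays true when $I$ is chosen independently of $G$.

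$(\ref{item: Goldreich proof complexity generators})\implies(\ref{item: Goldreich demi-bits})$ follows directly from the definitions. If $G$ is a proof complexity generator against $\calP$, then $\calP$ has no short proof of ``$y\not\in\Range(G)$'' for any $y$, so in particular $\calP$ proves such statements for no string at all. Hence $G$ is $(1-\eps)$-secure as a demi-bits generator for every $\eps>0$. The probability bound transfers unchanged: for each $c$ and $\eps$ we use item~(\ref{item: Goldreich proof complexity generators}) with the same $c$ and $\eps$.

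$(\ref{item: Goldreich demi-bits})\implies(\ref{item: Goldreich proof complexity generators})$ proceeds as follows. Fix a target $c\ge1$ and $\eps>0$. Let $c' := Cc$ for a large constant $C$ depending only on $\eps$, and set $\eps' := \eps/2$. Sample $G'\gets\calG_{c'\ell(n)}$ and an independent uniform $I\subseteq[c'\ell(n)]$ with $|I|=c\ell(n)$. By item~(\ref{item: Goldreich demi-bits}) applied with $(c',\eps')$, with probability at least $1-\eps/2$ the generator $G'$ is a $(1-\eps/2)$-secure demi-bits generator against $\calP$, at the polynomially larger proof length $p(\cdot)$ from well-behavedness. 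Condition on such a $G'$ and apply \autoref{thm: many proof complexity generators inside one demi-bits generator} with $N=c'\ell(n)$, $m=c\ell(n)$ and error parameter $\eps/2$. Its hypotheses hold: $\eps/2$ is a constant, so it lies in $(2^{-o(N)},1/2]$, and $m<N/(10\log(2/\eps))$ once $C>10\log(2/\eps)$. The theorem gives that, with probability at least $1-4\log(2/\eps)/C\ge 1-\eps/2$ over $I$, the restriction $G'|_I$ is a proof complexity generator against $\calP$. Here $C$ is chosen so that $8\log(2/\eps)/C\le\eps$, and the condition $n<m$ holds because $\ell(n)>n$.

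A union bound shows that $G'|_I$ is a proof complexity generator with probability at least $1-\eps$ over $(G',I)$. By the distributional fact above, $G'|_I\sim\calG_{c\ell(n)}$, which gives item~(\ref{item: Goldreich proof complexity generators}).

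The main point needing care is the bookkeeping of proof lengths. Item~(\ref{item: Goldreich demi-bits}) must be invoked against $\calP$-proofs of length $p(\ell)$ to obtain security against length-$\ell$ proofs. Since ``secure against $\calP$'' means against all polynomial-length proofs, this is absorbed because $p$ is a fixed polynomial. The other point to check is that the distributional identity $G'|_I\sim\calG_{c\ell(n)}$ survives conditioning only in the joint sense we use. We never condition on $G'$ when making the final distributional claim: the bad events are bounded jointly over $(G',I)$, and only then do we pass to the marginal law of $G'|_I$.
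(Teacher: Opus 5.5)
Your proof is correct and follows the paper's own argument. The paper likewise observes that $G|_I$, for $G\gets\calG_{c'\ell(n)}$ and an independent uniform $I$ of size $c\ell(n)$, is distributed as $\calG_{c\ell(n)}$; it then applies \autoref{thm: many proof complexity generators inside one demi-bits generator} to a good $G$ with $c'=\Theta(c\eps^{-1}\log(1/\eps))$, union-bounds, and treats the reverse direction as immediate from the definitions, so the only differences are cosmetic: you split the error as $\eps/2+\eps/2$ instead of $\eps/3+\eps/3$, and you check the theorem's hypotheses and the proof-length bookkeeping explicitly.
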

\begin{proof}
    $(\ref{item: Goldreich demi-bits}) \implies (\ref{item: Goldreich proof complexity generators})$ follows from \autoref{thm: many proof complexity generators inside one demi-bits generator}: Fix $c\ge 1$, $\eps > 0$, and let $c' := 1000c\eps^{-1}\log(\eps^{-1})$ and $\eps' := \eps / 3$. Suppose that $G\gets \calG_{c'\ell(n)}$ and $G' := G|_I$ for a random subset $I\subseteq [c'\ell(n)]$ with $|I| = c\ell(n)$, then the distribution of $G'$ is exactly $\calG_{c\ell(n)}$. Moreover, if $G$ is a $(1-\eps')$-secure demi-bits generator secure against $\calP$ (which happens w.p.~$\ge 1-\eps'$), then w.p.~$\ge 1-\eps/3$ we have that $G'$ is a proof complexity generator secure against $\calP$. Hence, a generator $G'\gets \calG_{c\ell(n)}$ is a proof complexity generator secure against $\calP$ w.p.~$\ge 1-\eps$.

    $(\ref{item: Goldreich proof complexity generators}) \implies (\ref{item: Goldreich demi-bits})$ is trivial, as every proof complexity generator against $\calP$ is a demi-bits generator secure against $\calP$.
\end{proof}

\section{Hard Partial Truth Tables is Hard}
Let $\ckt$ be a class of Boolean functions; the reader is encouraged to think of $\ckt$ as a circuit class with a certain resource bound, such as the class of DNFs of $n^{\log n}$ size. In this section, we first apply the \hyperref[thm: disperser family by a projection]{Random Projection Disperser Family} to establish the hardness of $\ckt$-$\PartialHard$ from the nondeterministic hardness of \emph{PAC-learning} $\ckt$. %
Then we apply a reduction in~\cite{DanielyS16} to connect the hardness of PAC-learning DNFs to the \hyperref[assumption: nondeterministic Feige]{Random $k$-SAT Hypothesis Against $\AM$}.

Recall that for input-output pairs $\{(x_i, b_i)\}_{i\in [L]}$, we use the notation $\{x_i \mapsto b_i\}_{i\in [L]}$ to denote the partial function with domain $\{x_i: i\in [L]\}$ that maps each $x_i$ to $b_i$. Without loss of generality, we assume that no two inputs $x_i, x_j$ are the same, as otherwise the input becomes trivial for both PAC-learning and $\PartialHard$. We also denote $\vec{x} = (x_1, \dots, x_L)$ and $\vec{b} = (b_1, \dots, b_L)$ for convenience of notation.

We now define nondeterministic algorithms for PAC-learning $\ckt$. The definition actually refers to a notion of \emph{random right-hand-side (RRHS) refutation} of the dual class of $\ckt$~\cite{Vadhan17}. With respect to randomized algorithms, it is known that PAC-learning $\ckt$ is equivalent to RRHS refutation of the dual class of $\ckt$~\cite{Vadhan17}. To simplify our terminology, we will also refer to RRHS refutation algorithms for the dual class of $\ckt$ as ``PAC-learning algorithms for $\ckt$'' in the nondeterministic setting.

\yan{We should be able to show that the following definition implies the one in Luming's thesis.}

\hanlin{About the name ``nondeterministic PAC-learning'': are there better names? What about simply ``nondeterministic $\ckt$-refutation''?}

\begin{definition}[Nondeterministic PAC-Learning]\label{def: nondet PAC learning}
    Let $\ckt$ be a class of Boolean functions, $L\ge 1$, $p\in (0, 1)$, and $\calD$ be a distribution over $\{0, 1\}^n$. A nondeterministic algorithm $\calA$ is said to \emph{PAC-learn} $\ckt$ over input distribution $\calD$ using $L$ samples with success probability $p$ if:
    \begin{itemize}
        \item {\it (Soundness)} for all inputs $x_1, \dots, x_L\in \{0, 1\}^n$ and labels $b_1, \dots, b_L \in \{0, 1\}$, if there exists a function $C \in \ckt$ such that $C(x_i) = b_i$ for every $i\in [L]$, then $\calA((x_1, b_1), \dots, (x_L, b_L))$ rejects. 
        \item {\it (Completeness)} If we sample i.i.d.~$x_i\gets \calD$ and $b_i \gets \{0, 1\}$, then
        \[\Pr[\calA((x_1, b_1), \dots, (x_L, b_L))\text{ accepts}] \ge p.\]
    \end{itemize}
    We also say that a nondeterministic algorithm $\calA$ \emph{distribution-freely} PAC-learns $\ckt$ with success probability $p$ if it also satisfies the following stronger (Completeness) condition:
    \begin{itemize}
        \item {\it (Completeness')} For every $x_1, \dots, x_L\in \{0, 1\}^n$, if we choose i.i.d.~labels $b_1, \dots, b_L \gets \{0, 1\}$, then
        \[\Pr[\calA((x_1, b_1), \dots, (x_L, b_L))\text{ accepts}]\ge p.\]
    \end{itemize}
\end{definition}

We also recall the definition of errorless heuristics for $\PartialHard$:

\begin{definition}[Errorless Heuristics for $\PartialHard$]
    Let $\ckt$ be a class of functions, $L\ge 1$, $p\in (0, 1)$, and $\calD$ be a distribution over $\{0, 1\}^n$. A $\SearchNP$ algorithm $\calA$ is an \emph{errorless heuristic} for $\ckt$-$\PartialHard$ over $\calD$ on domain size $L$ with success probability $p$ if:
    \begin{itemize}
        \item {\it (Soundness)} For every $x_1, \dots, x_L \in \{0, 1\}^n$, any accepting path of $\calA(\vec{x})$ prints a sequence of bits $b_1, \dots, b_L \in \{0, 1\}$ such that the partial function $\{x_i \mapsto b_i\}$ disagrees with every function in $\ckt$.
        \item {\it (Completeness)} Let $x_1, \dots, x_L\gets \calD$ be i.i.d.~samples, then with probability at least $p$, $\calA(\vec{x})$ has at least one accepting path.
    \end{itemize}
    
    We remark that the (Soundness) condition needs to hold for every $x_1, \dots, x_L \in \{0, 1\}^n$, hence the term ``\emph{errorless} heuristic''. If for \emph{every} $x_1, \dots, x_L \in \{0, 1\}^n$, $\calA(\vec{x})$ has at least one accepting path, then we say $\calA$ is a \emph{(worst-case) algorithm} for $\ckt$-$\PartialHard$.
\end{definition}

Our main theorem in this section is that hardness of nondeterministic PAC-learning implies hardness of $\PartialHard$:

\begin{theorem}\label{thm: PartialHard to PAC learning for NP}
    Let $\ckt$ be a class of functions and $L, L'\ge \omega(1)$ be parameters such that $L' \le 0.1L$. Then:
    \begin{itemize}
        \item Suppose there is a (worst-case) $\SearchNP$ algorithm $\calA$ for $\ckt$-$\PartialHard$ on domain size $L'$. Then there is a nondeterministic polynomial-time algorithm $\calB$ that distribution-freely PAC-learns $\ckt$ using $L$ samples with success probability $1-o(1)$.
        \item Let $\calD$ be a distribution over $\{0, 1\}^n$ and $\eps > 0$. Suppose there is an errorless $\SearchNP$ heuristic $\calA$ for $\ckt$-$\PartialHard$ over $\calD$ on domain size $L'$ with success probability $1-\eps$. Then there is a nondeterministic polynomial-time algorithm $\calB$ that PAC-learns $\ckt$ on input distribution $\calD$ using $L$ samples with success probability $1-2^{-0.1(L/L')} - 2\eps$.
    \end{itemize}
\end{theorem}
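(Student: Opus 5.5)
The plan is to build $\calB$ by nondeterministically guessing a sub-sample and running $\calA$ on it. On input $((x_1,b_1),\dots,(x_L,b_L))$, $\calB$ guesses an index set $I\subseteq[L]$ with $|I|=L'$ and guesses a computation path of $\calA(\vec x|_I)$. It accepts iff that path accepts and prints exactly $\vec b|_I$.

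\emph{Soundness} is immediate. Suppose some $C\in\ckt$ satisfies $C(x_i)=b_i$ for all $i\in[L]$. By soundness of $\calA$, every accepting path of $\calA(\vec x|_I)$ prints labels disagreeing with every function in $\ckt$, in particular with $C$ on $\{x_i\}_{i\in I}$. So the printed string differs from $\vec b|_I$, and $\calB$ rejects.

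Completeness is the shattering argument of \autoref{sec: disperser families}. Fix $\vec x$ and let $R\subseteq\{0,1\}^L$ be the set of label vectors $\vec b$ that $\calB$ rejects. The key observation: if $R$ shatters some $I$ of size $L'$ on which $\calA(\vec x|_I)$ has an accepting path, we get a contradiction. That path prints some $z\in\{0,1\}^I$. Shattering gives $\vec b\in R$ with $\vec b|_I=z$, but $\calB$ accepts this $\vec b$ by guessing $I$ and that path. Hence $R$ shatters no ``good'' $I$, where $I$ is good if $\calA(\vec x|_I)\neq\bot$.

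\textbf{Worst-case bullet.} Every $I$ is good, so $R$ shatters no set of size $L'$. By the \hyperref[lemma: Sauer--Shelah]{Sauer--Shelah Lemma} and \autoref{prop: prefix sum of binomial coefficients},
\[|R|\le\sum_{i<L'}\binom{L}{i}\le 2^{L\cdot H(0.1)}.\]
Thus the fraction of labelings rejected is at most $2^{-(1-H(0.1))L}=o(1)$, since $L=\omega(1)$.

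\textbf{Average-case bullet.} Draw $x_1,\dots,x_L$ i.i.d.\ from $\calD$. For a uniformly random $I$ of size $L'$, the tuple $\vec x|_I$ consists of $L'$ i.i.d.\ samples from $\calD$. Hence $\Ex_{\vec x}\Pr_I[I\text{ good}]\ge1-\eps$, and by Markov, with probability at least $1-2\eps$ over $\vec x$ at least half of all size-$L'$ subsets are good.

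Fix such an $\vec x$ and suppose $|R|\ge 2^{L-\Delta}$ with $\Delta:=0.1L/L'$. Then $\Delta=o(L)$ because $L'=\omega(1)$, and $L'\le0.4L$. So \autoref{thm: disperser family by a projection}, applied to the uniform distribution on $R$ (via \hyperref[lemma: Pajor]{Pajor's Lemma} plus \hyperref[thm: Kruskal--Katona]{Kruskal--Katona}), says $R$ shatters a random size-$L'$ set with probability at least $1-4L'\Delta/L=0.6>1/2$. Since more than half of the size-$L'$ sets are shattered and at least half are good, some good set is shattered, which is the contradiction.

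Therefore $|R|<2^{L-\Delta}$, i.e.\ the rejection probability over random labels is below $2^{-0.1L/L'}$. Combining with the Markov step gives success probability at least $1-2^{-0.1(L/L')}-2\eps$.

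The main obstacle I expect is this average-case step. Random-projection shattering must be combined with the fact that only a constant fraction of sub-samples are good for $\calA$. This is exactly why the ``high probability'' version (\autoref{thm: disperser family by a projection}) is needed rather than mere existence of one shattered set. The constants in $\Delta$ and in the Markov step also have to be tuned so that the shattered fraction and the good fraction overlap.
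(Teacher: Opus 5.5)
Your proposal is correct and follows the paper's proof essentially step for step: the same adversary $\calB$ that guesses a subset $I$ and a path of $\calA$, the same observation that the rejected set $R$ cannot shatter any $I$ on which $\calA$ succeeds, Sauer--Shelah for the worst-case bullet, and the random-projection disperser theorem with $\Delta = 0.1L/L'$ (shattering probability $0.6$) for the average-case bullet. The only cosmetic difference is the order of the averaging. You first apply Markov to get a good $\vec x$ and then argue by overlap ($0.6+0.5>1$). The paper skips Markov and argues that if $|R|\ge 2^{L-\Delta}$ with probability at least $2\eps$ over $\vec x$, then $\calA$ fails on at least a $0.6\cdot 2\eps>\eps$ fraction of pairs $(\vec x, I)$. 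Both give the stated bound.
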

\def\Rej{\textsc{Rej}}
\begin{proof}
    Let $\calA$ be an errorless $\SearchNP$ heuristic (or worst-case algorithm) for $\ckt$-$\PartialHard$ on domain size $L'$. Our nondeterministic algorithm $\calB$, given inputs $(x_1, b_1), \dots, (x_L, b_L)$, guesses a subset $I \subseteq[L]$ of size $|I| = L'$, and verifies that there is a nondeterministic branch of $\calA(\{x_i\}_{i \in I})$ that outputs exactly $\{b_i\}_{i \in I}$. If this is true, then $\calB$ accepts the input, otherwise $\calB$ rejects the input.
    
    The soundness of $\calB$ is easy to see: Consider any inputs $x_1, \dots, x_L \in \{0, 1\}^n$ and labels $b_1, \dots, b_L \in \{0, 1\}$. Suppose there is a function $C \in \ckt$ such that $C(x_i) = b_i$ for every $i \in [L]$. Then for every subset $I\subseteq [L]$, the partial function $\{x_i\mapsto b_i\}_{i \in I}$ agrees with the function $C \in \ckt$, and the soundness of $\calA$ implies that $\calB$ will never accept $\{(x_i, b_i)\}_{i \in [L]}$.

    Now we argue the completeness of $\calB$. Let $x_1, \dots, x_L \in \{0, 1\}^n$ and define
    \[\Rej := \Rej(x_1, \dots, x_L) := \{\vec{b} \in \{0, 1\}^L: \calB(\{(x_i, b_i)\})\text{ rejects}\}.\]
    Suppose $I \subseteq [L]$, $|I| = L'$ is a subset shattered by $\Rej$. We argue that $\calA(\{x_i\}_{i\in I})$ does not have any accepting branch. Indeed, if $\calA(\{x_i\}_{i\in I})$ outputs any sequence of bits $\{b_i\}_{i \in I}$, letting $\vec{b}' \in \Rej$ be the element with $b'_i = b_i$ for every $i\in I$, then $\calB(\{(x_i, b_i)\})$ accepts, contradicting the definition of $\Rej$.

    \begin{itemize}
        \item Suppose that $\calA$ is a worst-case $\SearchNP$ algorithm for $\ckt$-$\PartialHard$ on domain size $L'$. Then for every $x_1, \dots, x_L$, there is no subset $I\subseteq[L]$ of size $L'$ that is shattered by $\Rej(x_1, \dots, x_L)$. By the \hyperref[lemma: Sauer--Shelah]{Sauer--Shelah Lemma}, $|\Rej(x_1, \dots, x_L)|\le \sum_{i=0}^{L'-1}\binom{L}{i} \le o(2^L)$. It follows that $\calB$ distributional-freely PAC-learns $\ckt$ using $L$ samples with success probability $1-o(1)$.
        \item Suppose that $\calA$ is an errorless $\SearchNP$ heuristic for $\ckt$-$\PartialHard$ over $\calD$ on domain size $L'$ with completeness $1-\eps$. Set $\Delta := 0.1(L/L')$ and $\delta := 2\eps$. Assume, towards a contradiction, that w.p.~at least $\delta$ over $x_1, \dots, x_L \gets \calD$ we have $|\Rej(x_1, \dots, x_L)| \ge 2^{L-\Delta}$. By \autoref{thm: disperser family by a projection}, $\Proj_{L\to L'}$ is an $(L-\Delta, 4\Delta\frac{L'}{L})$-zero error disperser family, which means that for such $(x_1, \dots, x_L)$, a random subset $I\subseteq[L]$ with size $L'$ shatters $\Rej(x_1, \dots, x_L)$ w.p.~$\ge 1-4\Delta\frac{L'}{L}$. Since the distribution of $\{x_i\}_{i\in I}$ (where the randomness is over samples $\{x_i\}_{i\in [L]}$ and subset $I$) is exactly $L'$ i.i.d.~samples from $\calD$, we have that $\calA(\{x_i\}_{i\in [L']})$ fails to produce an answer with probability at least $\delta(1-4\Delta\frac{L'}{L}) \ge \eps$, contradicting the completeness of $\calA$. Hence, $\calB$ PAC-learns $\ckt$ over input distribution $\calD$ using $L$ samples with success probability
        \[\ge1-2^{-\Delta} - \delta \ge 1-2^{-0.1(L/L')} - 2\eps.\qedhere\]
    \end{itemize}
\end{proof}

\subsection{Hardness of \texorpdfstring{$\DNF$-$\PartialHard$}{DNF-Partial-Hard}}
The nondeterministic hardness of $\PartialHard$ for a circuit class $\ckt$ follows from the nondeterministic hardness of PAC-learning $\ckt$. When $\ckt$ is the class of DNFs, hardness of PAC-learning $\ckt$ follows from the hardness of random $k$-SAT~\cite{DanielyS16}; we show here that the nondeterministic hardness of PAC-learning DNFs follows from the nondeterministic hardness of random $k$-SAT as well. Since the reduction in~\cite{DanielyS16} uses randomness, here we need a version of the random $k$-SAT hypothesis against $\AM$ refuting algorithms: 

\begin{definition}\label{def: AM algorithm for random k-SAT}
    Let $k\in\N$, $m(n) \ge n$, and $\eps \in (0, 1/2)$. An $\AM$ algorithm $\calA$ is said to \emph{refute random $k$-SAT} with $m(n)$ clauses with error $\eps$, if:
    \begin{itemize}
        \item {\it (Soundness)} for every $k$-CNF formula $C$ with $m(n)$ clauses that is satisfiable, $\Pr[\calA(C)\text{ accepts}] \le \eps$, where the probability is over the internal (Arthur's) randomness of $\calA$.
        \item {\it (Completeness)} let $C$ be a random $k$-CNF formula with $m(n)$ clauses, then $\Pr[\calA(C)\text{ accepts}] \ge 1-\eps$, where the probability is over the choice of $C$ and the internal (Arthur's) randomness of $\calA$.
    \end{itemize}
\end{definition}

Feige~\cite{Feige02} conjectured that random $3$-SAT with $O(n)$ clauses is hard to refute (against deterministic polynomial-time algorithms). O'Donnell made the same conjecture against nondeterministic polynomial-time algorithms (cf.~\cite{burgisser2016complexity, HiraharaS17}); this nondeterministic version has been useful in the context of meta-complexity~\cite{HiraharaS17, Hirahara18} and proof complexity~\cite{PichS19} as well.

The main result in~\cite{DanielyS16} requires hardness of random $k$-SAT in a somewhat different regime: for every constant $c\ge 1$, there exists a constant $k\ge 3$ such that random $k$-SAT with $n^c$ clauses is hard. We make the following assumption asserting that this is still true for $\AM$ algorithms:

\begin{assumption}[Random $k$-SAT Hypothesis Against $\AM$]\label{assumption: nondeterministic Feige}
    For every constant $c\ge 1$, there exists a constant $k\ge 3$ such that no polynomial-time $\AM$ algorithm refutes random $k$-SAT with $n^c$ clauses with error $0.01$.
\end{assumption}

We remark that hardness of random $k$-SAT against $\AM$ algorithms is implied by hardness of random $k$-SAT against \emph{non-uniform} nondeterministic algorithms ($\NP/_\poly$):
\begin{restatable}{fact}{FactAMtoNPpoly}\label{fact: AM adversaries to NPpoly adversaries}
    If there is an $\AM$ algorithm that refutes random $k$-SAT with $m(n)$ clauses with error $\eps$, then there is an $\NP/_\poly$ algorithm that refutes random $k$-SAT with $m(n)$ clauses with error $O(\eps)$.
\end{restatable}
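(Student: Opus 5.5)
This is the standard Adleman-style derandomization: amplify, then hardwire Arthur's coins as advice. The one twist is that completeness holds only on average over random formulas, so we fix coins that are good on average rather than for every input. Let $\calA$ be the $\AM$ refuter with error $\eps$. By the standard collapse of constant-round public-coin protocols, we may assume $\calA$ is two-message: Arthur sends random coins $r$, Merlin replies with a witness $w$, and Arthur runs a deterministic polynomial-time predicate $V(C, r, w)$. We do not amplify naively with a majority vote, since that would destroy the average-case guarantee in an uncontrolled way. Instead we use the following two-step plan.

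\textbf{Step 1: amplify soundness only.} Run $t = O(n \cdot m(n))$ independent copies of the protocol, with Merlin answering all copies in parallel. Accept iff at least a $(1-2\eps)$ fraction of the copies accept, assuming $\eps$ is a sufficiently small constant with $\eps < 1/4$, so that $\eps < 1-2\eps$ leaves a constant gap.

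For every satisfiable formula $C$, each copy accepts with probability at most $\eps$. By a Chernoff bound over the independent coins, the probability that the threshold is met is then $2^{-\Omega(t)}$, which we make smaller than $2^{-|C|}$ (note $|C| = O(m(n)\log n)$). A union bound over all satisfiable $C$ of this size shows that a random coin vector $\vec r$ is sound for every satisfiable $C$ simultaneously, except with probability $2^{-\Omega(n)}$.

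\textbf{Step 2: handle completeness on average.} Let $a(C) := \Pr_r[\exists w\, V(C,r,w)]$. For a random formula $C$ we have $\Ex_C[a(C)] \ge 1-\eps$, so by Markov's inequality $\Pr_C[a(C) < 1-\sqrt{\eps}] \le \sqrt{\eps}$. This only gives $O(\sqrt{\eps})$. To get $O(\eps)$, we measure completeness directly. Say copy $j$ is \emph{good} if $\exists w\, V(C, r_j, w)$. Then the expected fraction of good copies, over random $C$ and random $\vec r$, is at least $1-\eps$. The probability that more than a $2\eps$ fraction of copies are bad is at most $1/2$ by Markov, which is too weak.

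So we lower the threshold instead: accept iff at least a $1/2$ fraction of the copies are good. Soundness still holds by the Chernoff bound in Step 1, provided $\eps < 1/4$. For completeness, Markov's inequality on the fraction of bad copies gives that, over random $C$ and $\vec r$, a fraction of at least $1/2$ is bad with probability at most $2\eps$.

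\textbf{Step 3: fix the coins.} Choose $\vec r$ uniformly. With probability at least $1/2$, soundness holds for all satisfiable $C$ (the failure probability from Step 1 is tiny). Again by Markov, with probability at least $1/2$, $\Pr_C[\text{reject}] \le 4\eps$. Hence some fixed $\vec r$ achieves both. Hardwire this $\vec r$ as advice. The resulting $\NP/_\poly$ algorithm guesses $w_1,\dots,w_t$ and accepts iff at least half of the checks $V(C, r_j, w_j)$ accept. It is perfectly sound (satisfiable formulas are always rejected), and it refutes random formulas with error $4\eps = O(\eps)$.

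The main obstacle is Step 2. Completeness is only on average, so standard amplification of both sides fails. The fix is to choose a threshold that is robust to Markov's inequality while the Chernoff bound still controls soundness, and this needs $\eps$ bounded away from $1/4$. For larger $\eps$ the statement is trivial, since an error of $O(\eps)$ is then a constant.
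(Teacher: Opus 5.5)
Your proof is correct and follows essentially the same route as the paper's: you amplify by a majority vote over $t$ parallel copies, use a Chernoff bound plus a union bound over all $2^{\tilde{O}(m(n))}$ formulas to get perfect soundness once the coins are hardwired, and use Markov's inequality over the coins to preserve the average-case completeness. The one difference is in the completeness step: you apply Markov directly to the expected fraction of failing copies, by linearity of expectation, instead of the paper's ``good formula'' argument followed by a second Chernoff bound; this makes $t$ independent of $\eps$ and improves the final error from $10\eps$ to $4\eps$ (your opening worry that majority voting would spoil the average-case guarantee is unfounded, since your final protocol is exactly a majority vote, and the detours through $\sqrt{\eps}$ and the $(1-2\eps)$ threshold can be cut).
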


The proof basically follows from the argument showing that $\AM \subseteq \NP/_\poly$; however, as the (Completeness) item in \autoref{def: AM algorithm for random k-SAT} is only average-case, some details are different, hence we provide a proof in \autoref{appendix: proof of AM to NPpoly}. By \autoref{fact: AM adversaries to NPpoly adversaries}, if one believes the nondeterministic hardness of random $k$-SAT even against $\NP/_\poly$ algorithms, then \autoref{assumption: nondeterministic Feige} should also be treated as a ``plausible'' assumption. In any case, refuting \autoref{assumption: nondeterministic Feige} would be a breakthrough in SAT algorithms.

We also need the following reduction:

\begin{restatable}[{\cite{DanielyS16}}]{theorem}{ThmDS}\label{thm: reduction in DS16}
    Let $n, k, m\in\N$, $\delta > 0$, and let $t := O(2^k \log (m/\delta))$. Then the random $k$-SAT problem with $n$ inputs and $m$ clauses reduces to PAC-learning CNFs of input length $n' := 2nt$, size $s := O(nt)$, and top fan-in $t$, using $L := \lfloor m/t\rfloor$ samples.

    More precisely, there exists a polynomial-time randomized reduction $\calR$ that takes a $k$-SAT instance $\varphi$ as input and outputs a partial function $\{(x_i \mapsto b_i)\}_{i\in [L]}$, such that the following holds:
    \begin{itemize}
        \item {\bf $\calR$ maps satisfiable instances to easy partial functions.}

        For every $\varphi$ that is satisfiable, with probability at least $1-\delta$ over the internal randomness of $\calR$, the partial function $\{(x_i \mapsto b_i)\}\gets \calR(\varphi)$ can be computed by CNFs of size $s$ and top fanin $t$.

        \item {\bf $\calR$ maps random instances to random partial functions.}

        The following holds for some input distribution $\calD$ over $\{0, 1\}^{n'}$. Let $\varphi$ be a random $k$-SAT instance and let $\{(x_i\mapsto b_i)\} \gets \calR(\varphi)$, then each $(x_i, b_i)$ is i.i.d.~distributed according to $\calD \times \{0, 1\}$.
    \end{itemize}
\end{restatable}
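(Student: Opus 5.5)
We follow the Daniely--Shalev-Shwartz encoding: each example packs $t$ clauses, and the hypothesis attached to an assignment is a CNF with one clause per packed clause. Encode a $k$-clause $C$ over variables $z_1,\dots,z_n$ as a string $\mathrm{enc}(C)\in\{0,1\}^{2n}$. Coordinate $2j-1$ is $1$ iff the literal $z_j$ appears in $C$, and coordinate $2j$ is $1$ iff $\bar z_j$ appears.

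For a fixed assignment $\psi\in\{0,1\}^n$, let $S_\psi\subseteq[2n]$ be the set of literal-coordinates made true by $\psi$. Then ``$C(\psi)=1$'' is exactly the monotone disjunction $\bigvee_{r\in S_\psi}\mathrm{enc}(C)_r$, which has $n$ literals. A block $x=(\mathrm{enc}(C_1),\dots,\mathrm{enc}(C_t))\in\{0,1\}^{2nt}$ is then mapped by
\[f_\psi(x):=\bigwedge_{j=1}^{t}\ \bigvee_{r\in S_\psi}\mathrm{enc}(C_j)_r ,\]
which is a CNF with top fan-in $t$ and size $O(nt)$ on $n'=2nt$ inputs. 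It outputs $1$ iff $\psi$ satisfies all $t$ clauses in the block.

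\textbf{The reduction $\calR$.} Given $\varphi$ with clauses $C_1,\dots,C_m$, split the first $Lt$ clauses into $L=\lfloor m/t\rfloor$ consecutive blocks $B_1,\dots,B_L$ of $t$ clauses each. For each $i\in[L]$, $\calR$ flips a fair coin $b_i$.
\begin{itemize}
\item If $b_i=1$, set $x_i:=\mathrm{enc}(B_i)$.
\item If $b_i=0$, set $x_i$ to the encoding of $t$ fresh, independent, uniformly random $k$-clauses.
\end{itemize}
Then $\calR$ outputs $\{x_i\mapsto b_i\}_{i\in[L]}$. This runs in polynomial time.

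\textbf{Satisfiable instances give easy partial functions.} Let $\psi$ satisfy $\varphi$. Every example with $b_i=1$ consists of clauses of $\varphi$, so $f_\psi(x_i)=1=b_i$. For an example with $b_i=0$, each fresh random $k$-clause is falsified by the fixed $\psi$ with probability $2^{-k}$, independently. Hence
\[\Pr[f_\psi(x_i)=1]\le(1-2^{-k})^t\le e^{-t/2^k}\le\delta/m\]
once $t=O(2^k\log(m/\delta))$ has a large enough constant. A union bound over the at most $L\le m$ negative examples shows that, with probability at least $1-\delta$, $f_\psi$ agrees with every pair $(x_i,b_i)$. So the partial function is computed by a size-$O(nt)$ CNF of top fan-in $t$.

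\textbf{Random instances give random partial functions.} Let $\calD$ be the distribution of $\mathrm{enc}$ of $t$ i.i.d.\ uniformly random $k$-clauses. If $\varphi$ is a random $k$-SAT instance, its clauses are i.i.d.\ uniform, so the blocks $B_i$ are i.i.d.\ samples of $\calD$, independent of the coins. The fresh blocks are also i.i.d.\ $\calD$ and independent of everything else. Therefore, conditioned on any $b_i$, the example $x_i$ is distributed as $\calD$, independently across $i$. So the pairs $(x_i,b_i)$ are i.i.d.\ from $\calD\times\{0,1\}$.

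\textbf{Main obstacle.} The only delicate point is this last matching of distributions. It requires that the random $k$-SAT model draws clauses i.i.d.\ from exactly the same distribution used for the fresh negative clauses, for example each clause being $k$ uniformly random literals with replacement. If the model instead forbids repeated variables, the fresh clauses must be sampled from that same model. With this convention fixed, the rest is a routine Chernoff-type calculation and a union bound.
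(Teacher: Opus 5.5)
Your proposal is correct and follows the paper's proof essentially step for step. It groups the clauses into $\lfloor m/t\rfloor$ blocks, replaces each block with probability $1/2$ by a fresh random $t$-clause $k$-CNF labeled $0$, bounds $(1-2^{-k})^t\le\delta/m$ and takes a union bound, and uses the $2n$-bit literal-indicator encoding, under which a satisfying assignment becomes a CNF of size $O(nt)$ and top fan-in $t$. The only differences are that you write this encoding out explicitly, whereas the paper cites it from Vadhan's work, and that you point out that the fresh clauses must come from the same clause model as the random instance, which the paper leaves implicit.
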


For completeness, we provide a proof of \autoref{thm: reduction in DS16} in \autoref{appendix: proof of DS16}, verifying that the parameters indeed follow from arguments in~\cite{DanielyS16}.

We also note that by De Morgan's law, learning CNFs and learning DNFs reduce to each other with no change in parameters (size, top fanin, sample complexity, input length).

\begin{theorem}\label{thm: random kSAT to learning DNF}
    Let $\ckt$ denote the class of DNFs over $n' := 2nt$ inputs with size $s := O(nt)$ and top fanin $t := O(2^k\log(m/\eps))$, and $\calD$ be the distribution in \autoref{thm: reduction in DS16}. Suppose there is an $\NP$ algorithm that PAC-learns $\ckt$ over $\calD$ using $L := \lfloor m/t\rfloor$ samples with success probability $\ge 1-\eps$. Then there is an $\AM$ algorithm for refuting random $k$-SAT instances with $m(n)$ clauses with error $\eps$.
\end{theorem}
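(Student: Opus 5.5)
We compose the nondeterministic PAC-learner with the randomized reduction $\calR$ of \autoref{thm: reduction in DS16}, with $\delta$ set to a small constant multiple of $\eps$. The $\AM$ algorithm $\calA'$ for refuting random $k$-SAT works as follows on input a $k$-CNF $\varphi$ with $m$ clauses. Arthur first samples the internal randomness of $\calR$ and computes the partial function $\{(x_i \mapsto b_i)\}_{i\in[L]} \gets \calR(\varphi)$, where $L = \lfloor m/t\rfloor$. This random string is sent to Merlin, which is a single public-coin message. Merlin then replies with a nondeterministic witness for the $\NP$ learner $\calB$ on input $((x_1,b_1),\dots,(x_L,b_L))$. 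Arthur accepts iff $\calB$ accepts with that witness. Since $\calR$ runs in polynomial time and $\calB$ is an $\NP$ algorithm, this is a polynomial-time two-message $\AM$ protocol. The learner is for DNFs, but \autoref{thm: reduction in DS16} produces CNF-easy functions. We handle this by De Morgan: we negate all labels $b_i \mapsto 1-b_i$ (equivalently, feed $\calB$ the complemented labels). This preserves the uniform label distribution and turns CNF-computable partial functions into DNF-computable ones with the same size and fan-in.

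\textbf{Soundness.} Suppose $\varphi$ is satisfiable. By the first bullet of \autoref{thm: reduction in DS16}, with probability at least $1-\delta$ over Arthur's coins, the produced partial function is computable by a CNF of size $s$ and top fan-in $t$. After complementing the labels, it is computable by some $C\in\ckt$. By the soundness of $\calB$ in \autoref{def: nondet PAC learning}, $\calB$ then rejects on every witness, so no Merlin strategy makes Arthur accept. Hence the acceptance probability is at most $\delta$, which is at most $\eps$ for $\delta \le \eps$. Here we use that $t = O(2^k\log(m/\eps))$ matches $O(2^k\log(m/\delta))$ up to constants.

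\textbf{Completeness.} Let $\varphi$ be a uniformly random $k$-CNF. By the second bullet of \autoref{thm: reduction in DS16}, the pairs $(x_i,b_i)$ are i.i.d.\ according to $\calD\times\{0,1\}$, and complementing the labels keeps this distribution. So the input to $\calB$ is distributed exactly as in the completeness condition of \autoref{def: nondet PAC learning}. Therefore $\calB$ has an accepting witness with probability at least $1-\eps$ over $\varphi$ and Arthur's coins. An honest Merlin supplies that witness, so $\calA'$ accepts with probability at least $1-\eps$, as required by \autoref{def: AM algorithm for random k-SAT}.

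The only delicate point, and the step I expect to be the main obstacle, is the parameter bookkeeping. The error $\delta$ of the reduction must be tied to $\eps$ so that soundness error is at most $\eps$ while $t$ stays $O(2^k\log(m/\eps))$. If a slack is needed (e.g.\ error $O(\eps)$), I would rescale $\eps$ by a constant, which only changes the hidden constant in $t$. A second point to check is that the distribution $\calD$ and the sample count $L$ are exactly those for which the learner is assumed to work. Everything else is a direct composition.
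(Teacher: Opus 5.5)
Your proposal is correct and follows the paper's proof. Like the paper, you compose the $\NP$ learner with the reduction $\calR$ of \autoref{thm: reduction in DS16}, taking $\delta := \eps$. Soundness then comes from the first bullet, which makes the learner reject on every witness, and completeness from the i.i.d.\ $\calD\times\{0,1\}$ samples given by the second bullet. Your explicit label complementation is exactly the De Morgan remark the paper relies on without spelling out. Taking $\delta=\eps$ exactly, rather than some other constant multiple, ensures that $t$, $L$, and $\calD$ coincide with those in the hypothesis.
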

\begin{proof}
    Let $\calA$ be an $\NP$ algorithm for PAC-learning DNF with success probability $\ge 1-\eps$. Let $\calR$ be the randomized reduction in \autoref{thm: reduction in DS16} with error probability $\delta := \eps$. Given an input $k$-SAT instance, we accept if and only if $\calA(\calR(\varphi))$ accepts. It is easy to see that this is an $\AM$ algorithm.

    If $\varphi$ is satisfiable, then with probability $\ge 1-\eps$, $\calR(\varphi)$ can be computed by CNFs of size $s$ and top fanin $t$. Therefore, our algorithm accepts with probability $\le \eps$.

    If $\varphi$ is random, then $\calR(\varphi)$ consists of $L$ i.i.d.~samples from $\calD\times \{0, 1\}$, where $\calD$ is the distribution in \autoref{thm: reduction in DS16}. It follows that our algorithm accepts with probability $\ge 1-\eps$.
\end{proof}

Combining \autoref{thm: random kSAT to learning DNF} with \autoref{thm: PartialHard to PAC learning for NP}, we have:

\begin{corollary}\label{cor: hardness of DNF-HPTT from random k-SAT}
    Let $\ckt$ denote the class of DNFs over $n' := 2nt$ inputs with size $s := O(nt)$ and top fanin $t := O(2^k\log(m/\eps))$, and $\calD$ be the distribution in \autoref{thm: reduction in DS16}. Suppose there is an errorless $\SearchNP$ heuristic for $\ckt$-$\PartialHard$ over $\calD$ on domain size $L' := \frac{m}{20t\log(1/\eps)}$ with success probability $1-\eps/4$. Then there is an $\AM$ algorithm for refuting random $k$-SAT instances with $m(n)$ clauses with error $\eps$.
\end{corollary}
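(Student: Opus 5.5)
The plan is to chain \autoref{thm: PartialHard to PAC learning for NP} (second bullet) with \autoref{thm: random kSAT to learning DNF}, choosing the sample count $L := \lfloor m/t\rfloor$ so that the PAC-learning instance produced by the \cite{DanielyS16} reduction matches the learner obtained from the $\PartialHard$ heuristic.

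First, I would start from the assumed errorless $\SearchNP$ heuristic $\calA$ for $\ckt$-$\PartialHard$ over $\calD$ on domain size $L' = \frac{m}{20t\log(1/\eps)}$ with success probability $1-\eps/4$. I would apply the second bullet of \autoref{thm: PartialHard to PAC learning for NP} with heuristic error $\eps/4$ and $L = \lfloor m/t\rfloor$.

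Next, I would check that $L' \le 0.1L$. This holds because $L/L' \approx 20\log(1/\eps) \ge 10$ once $\eps \le 1/2$; the floor contributes only lower-order losses, assuming $m/t$ is large. Both $L$ and $L'$ are $\omega(1)$, as in the regime of interest. The theorem then gives a nondeterministic polynomial-time $\calB$ that PAC-learns $\ckt$ over $\calD$ using $L$ samples with success probability
\[
1-2^{-0.1(L/L')} - 2\cdot(\eps/4) \;\ge\; 1 - 2^{-2\log(1/\eps)} - \eps/2 \;=\; 1-\eps^2-\eps/2 \;\ge\; 1-\eps,
\]
where the last step uses $\eps \le 1/2$.

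Then I would feed $\calB$ into \autoref{thm: random kSAT to learning DNF}, which uses the same $\ckt$, $\calD$, $t$, and $L$ and requires success probability $\ge 1-\eps$. This yields an $\AM$ algorithm refuting random $k$-SAT with $m(n)$ clauses with error $\eps$.

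The only delicate step is the parameter bookkeeping. First, $t$ is defined with $\log(m/\eps)$, where the reduction's $\delta$ is set to $\eps$; this choice is consistent between the two theorems. Second, the exponent $0.1\cdot L/L'$ must still be at least $\log(1/\eps)$ after the floors. Should the constant come out marginal, I would absorb the slack into the factor $20$ in $L'$ or note that $\eps^2 \le \eps/2$ leaves room to spare.
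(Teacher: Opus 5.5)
Your proposal is correct and matches the paper, which obtains this corollary by directly combining the second bullet of \autoref{thm: PartialHard to PAC learning for NP} (heuristic error $\eps/4$, $L=\lfloor m/t\rfloor$) with \autoref{thm: random kSAT to learning DNF}. Your bookkeeping supplies the arithmetic the paper leaves implicit: $L/L'\approx 20\log(1/\eps)$, so the learner's success probability is at least $1-\eps^2-\eps/2\ge 1-\eps$ for $\eps<1/2$.
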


Let $\DNF_{t(n)}$ denote the class of DNFs over $n$ inputs with linear size and top fanin $t(n)$. We show that $\DNF_{t(n)}$-$\PartialHard$ is hard for suitable choices of parameters $t(n)$. In fact, $\DNF$-$\PartialHard$ does not even admit errorless $\SearchNP$ heuristics over the distribution $\calD$ in \autoref{thm: reduction in DS16}:

\begin{corollary}[$\DNF$-$\PartialHard$ Is Hard]
    The \hyperref[assumption: nondeterministic Feige]{Random $k$-SAT Hypothesis Against $\AM$} implies that, for every integer $c\ge 1$, there is an integer $q\ge 1$ such that there is no errorless $\SearchNP$ heuristic for $\DNF_{q\log n}$-$\PartialHard$ over some distribution $\calD$ on domain size $n^c$ with success probability $0.999$.
\end{corollary}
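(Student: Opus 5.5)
We prove the contrapositive. Fix $c\ge 1$ and let $\calD$ be the distribution from \autoref{thm: reduction in DS16}. Suppose that for the relevant $q$ there is an errorless $\SearchNP$ heuristic for $\DNF_{q\log n}$-$\PartialHard$ over $\calD$ on domain size $n^c$ with success probability $0.999$. We will build a polynomial-time $\AM$ algorithm that refutes random $k$-SAT with error $0.01$, for a clause density for which \autoref{assumption: nondeterministic Feige} says this is impossible. The vehicle is \autoref{cor: hardness of DNF-HPTT from random k-SAT}, which chains \autoref{thm: PartialHard to PAC learning for NP} (heuristic $\Rightarrow$ nondeterministic PAC-learner) with \autoref{thm: random kSAT to learning DNF} (learner $\Rightarrow$ $\AM$ refuter). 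So the work is choosing parameters consistently.

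\textbf{Parameter choices.} Set $\eps:=0.004$, so that $\eps/4\le 0.001$ and the error is $\le 0.01$. Write $N$ for the number of $k$-SAT variables, and let $n':=2Nt$ be the DNF input length, which we identify with the statement's $n$.

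\begin{itemize}
\item Take $c':=2c+2$. \autoref{assumption: nondeterministic Feige} gives a constant $k$ such that random $k$-SAT with $m:=N^{c'}$ clauses has no $\AM$ refuter with error $0.01$.
\item Then $t=O(2^k\log(m/\eps))=O_k(\log N)$, so $n'=\Theta_k(N\log N)$ and $\log n'=\Theta(\log N)$.
\item Hence $t\le q\log n'$ for some constant $q$ depending only on $k$, and therefore only on $c$.
\item The size bound $s=O(Nt)=O(n')$ is linear, so these CNFs/DNFs belong to $\DNF_{q\log n'}$.
\end{itemize}

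\textbf{Domain size.} \autoref{cor: hardness of DNF-HPTT from random k-SAT} requires domain size $L'=m/(20t\log(1/\eps))=\Theta(N^{c'}/\log N)$. We need $L'\ge (n')^c=\tilde{\Theta}(N^c)$, which holds comfortably because $c'=2c+2$.

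If the heuristic is stated for domain size exactly $(n')^c$ while the corollary wants $L'$, we adjust one of two ways:
\begin{itemize}
\item Choose $m$ exactly so that $L'=(n')^c$. This is possible since $m$ is a free polynomial parameter, and the assumption still applies to $N^{c''}$ clauses for a suitable constant $c''$.
\item Alternatively, note that the proof of \autoref{thm: PartialHard to PAC learning for NP} only needs $L'\le 0.1L$, so a smaller domain is harmless.
\end{itemize}
I would take the first route: fix $c''$ large and define $m$ via the formula for $L'$, rounding as needed.

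\textbf{Obstacle.} The main subtlety is the circularity between $k$ and $q$. The constant $k$ depends on $c''$ through \autoref{assumption: nondeterministic Feige}, and $q$ depends on $k$ through $t$. The statement allows "there is an integer $q$" to depend on $c$, so this is fine, provided $c''$ is chosen as a function of $c$ alone. Since $n'=\tilde{\Theta}(N)$, choosing $c'' = c+2$ and absorbing logarithmic factors works.

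There are two smaller checks:
\begin{itemize}
\item The success probability $0.999\ge 1-\eps/4$ is as required.
\item The heuristic being "over some distribution $\calD$" is consistent, because $\calD$ is exactly the reduction's distribution on $\{0,1\}^{n'}$.
\end{itemize}

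Concluding, the $\AM$ refuter produced by \autoref{cor: hardness of DNF-HPTT from random k-SAT} contradicts \autoref{assumption: nondeterministic Feige}.
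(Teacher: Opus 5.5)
\paragraph{Overall chain.} Your chain of reductions is the intended one. You instantiate \autoref{cor: hardness of DNF-HPTT from random k-SAT} (that is, \autoref{thm: PartialHard to PAC learning for NP} followed by \autoref{thm: random kSAT to learning DNF}) with $\eps=0.004$, take $k$ from \autoref{assumption: nondeterministic Feige}, note that $t=O_k(\log N)$, and pick $q$ with $t\le q\log n'$. The paper does not write out a proof of this corollary, but this is how it argues the analogous \autoref{cor: hardness of AND-AvgHPTT from random k-SAT}.

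\paragraph{The gap.} The route you say you would take to match domain sizes does not work as described. Imposing $m/(20t\log(1/\eps))=(2Nt)^c$ forces $m=\Theta(N^{c}t^{c+1})=\Theta(N^c\log^{c+1}N)$, which is not $N^{c''}$ for any constant $c''$. The $\AM$ refuter you produce therefore lives at a clause count about which the hypothesis says nothing, since it is stated for exactly $n^{c''}$ clauses. With $c''=c+2$ the two quantities are simply unequal, so ``absorbing logarithmic factors'' is hiding a missing step. There is also a circularity to break: $t$ depends on $\log m$ while $m$ is defined from $t$, so you would need to fix $t$ using an upper bound on $m$.

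\paragraph{How to close it.} The missing ingredient is a monotonicity argument, and your own second route already supplies one.
\begin{itemize}
\item \emph{Your second route.} Keep $m=N^{c'}$, so that $L'_0:=(n')^c\le 0.1\lfloor m/t\rfloor=0.1L$ for large $N$. Apply \autoref{thm: PartialHard to PAC learning for NP} directly with domain size $L'_0$. Since $L/L'_0\to\infty$, the learner succeeds with probability $1-2^{-0.1L/L'_0}-0.002\ge 0.996=1-\eps$ for large $N$. Then \autoref{thm: random kSAT to learning DNF} gives the contradiction at $N^{c'}$ clauses.
\item \emph{The paper's fix} (see the proof of \autoref{cor: hardness of AND-AvgHPTT from random k-SAT}) is monotonicity in the domain. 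Run the given heuristic on the first $(n')^c$ of the $L'$ i.i.d.\ inputs and append arbitrary labels. No DNF in the class computes the labels even on that prefix, and the prefix is distributed as $(n')^c$ i.i.d.\ samples from $\calD$. So the success probability $0.999$ is preserved, and \autoref{cor: hardness of DNF-HPTT from random k-SAT} applies verbatim.
\item \emph{Rescuing your first route.} Add monotonicity in the clause count: an $\AM$ refuter for $m$ clauses yields one for every $M\ge m$ clauses, by feeding it only the first $m$ clauses. This works because a sub-CNF of a satisfiable CNF is satisfiable, and a prefix of a random CNF is random. You can then invoke the hypothesis at $M=N^{c+2}$.
\end{itemize}
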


\section{The Remote Point Problem}
Under suitable demi-hardness assumptions, our techniques directly imply the hardness of the Remote Point problem for $\GF(2)$-linear circuits and the hardness of Average-Case Hard Partial Truth Tables for extremely simple circuit classes such as conjunctions and narrow parities. %

We begin with a standard notion of neighborhood in Hamming space.

\begin{definition}[Hamming Ball Around a Set]
    For a set $S \subseteq \{0,1\}^m$ and a remoteness parameter $\tau \in [0,1]$, we define the \emph{$\tau$-ball} around $S$ as
    \[
        \calB(S, \tau) \coloneqq \left\{ z \in \{0,1\}^m : \delta(z, S) \le \tau \right\},
    \]
    where $\delta(z, S) \coloneqq \min_{x \in S} \delta(z, x)$ denotes the (normalized/relative) Hamming distance from $z$ to the nearest point in $S$.
\end{definition}

Using this, we define a generalization of demi-bits generators 
that requires the adversary to certify remoteness from the range, 
rather than merely being a non-output.

\begin{definition}[Remote-Point Demi-Bits Generators]\label{def: remote-demi-bits}
    Let $n,m$ be length parameters such that $n<m$. Let $\tau$ be a remoteness parameter. A function $G:\{0,1\}^n\to \{0,1\}^m$ is an \emph{$(s,\eps,\tau)$-secure remote-point demi-bits generator} if there is no $\NP/\poly$ adversary $\Adv$ of size $s$ such that 
    \begin{align*}
        \Pr_{y\leftarrow\{0,1\}^m}[\Adv(y)=1]\ge\eps \quad\text{and}\quad\Pr_{z\leftarrow \calB(\Range(G),\tau)}[\Adv(z)=1]=0.
    \end{align*}
    In this paper, we consider security against all polynomial-size $\NP/\poly$ adversaries, so we simply write $(\eps,\tau)$-secure remote-point demi-bits generator. Equivalently, no efficient proof system can prove the (suitably encoded) statement ``$\delta(y,\Range(G))>\tau$'' for at least $\eps \cdot 2^m$ many strings $y$.%
\end{definition}

Our main construction also requires the disperser family to 
satisfy a Lipschitz property, ensuring that nearby inputs map 
to nearby outputs.

\begin{definition}[$(k, \eps, \alpha)$-Lipschitz Zero-Error Disperser Family]\label{def: Lipschitz disperser family}
    Let $\calF$ be a distribution over functions 
    $f: \{0,1\}^n \to \{0,1\}^m$, $k \ge m$, $\eps > 0$, 
    and $\alpha > 0$ be parameters. We say that $\calF$ is a 
    \emph{$(k, \eps, \alpha)$-Lipschitz zero-error disperser 
    family} if:
    \begin{enumerate}
        \item For every random variable $\calX$ over 
            $\{0,1\}^n$ with $H_\infty(\calX) \ge k$,
            \[\Pr_{f \gets \calF}[\Supp(f(\calX)) 
            = \{0,1\}^m] \ge 1 - \eps.\]
        \item For every $f \in \calF$ and every 
            $x_1, x_2 \in \{0,1\}^n$,
            \[\delta(f(x_1), f(x_2)) 
            \;\leq\; \alpha \cdot \delta(x_1, x_2).\]
    \end{enumerate}
\end{definition}

We note that the \hyperref[thm: disperser family by a projection]{Random Projection Disperser Family} $\Proj_{n \to m}$ is $(n-\Delta, 4m\Delta / n, n/m)$-Lipschitz: If $\delta(x_1, x_2) = \eps$, then $x_1$ and $x_2$ differs in only $\eps\cdot n$ many coordinates, which means that for any $f \in \Proj_{n\to m}$, $f(x_1)$ and $f(x_2)$ differs in only $\eps\cdot n$ many coordinates as well, therefore $\delta(f(x_1), f(x_2)) < \eps\cdot \frac{n}{m}$. %

We show that composing a remote-point demi-bits generator with a Lipschitz zero-error disperser family yields hard instances for the Remote Point problem.

\begin{theorem}\label{thm:remote-point general}
Let $G: \{0, 1\}^n \to \{0, 1\}^N$ be a $(1-2^{k-N},\tau/\alpha)$-secure remote-point demi-bits generator, and $\calF: \{0,1\}^N \to \{0,1\}^m$ be an efficient $(k,\eps,\alpha)$-Lipschitz zero-error disperser family. For any function $f\in \calF$ define the function $C_f:\{0,1\}^n \to \{0,1\}^m$ such that $\forall x\in \{0,1\}^n, C_f(x) = f(G(x))$. Let $\mathscr{C}$ be a circuit class where $C_f\in \mathscr{C}$ for all $f\in \calF$. 
Then for any errorless $\SearchNP$ heuristic $\calA$ for $\mathscr{C}$-$\RemotePoint[n,m,\tau]$, it holds that $\Pr_{f \leftarrow \calF}[\calA(C_f) \neq \bot] \le \eps$.

\end{theorem}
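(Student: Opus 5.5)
We mirror the proof of \autoref{thm:avoidgeneral}, with the Lipschitz property carrying remoteness back from $\{0,1\}^m$ to $\{0,1\}^N$. Suppose for contradiction that some errorless $\SearchNP$ heuristic $\calA$ for $\mathscr{C}$-$\RemotePoint[n,m,\tau]$ has $\Pr_{f\gets\calF}[\calA(C_f)\ne\bot] > \eps$. We define a nondeterministic adversary $\calB$ on inputs $y\in\{0,1\}^N$. It guesses $f\in\calF$ (possible since $\calF$ is efficient) and a nondeterministic branch of $\calA(C_f)$. It accepts iff that branch outputs exactly $f(y)$. We must show that $\calB$ never accepts a point of $\calB(\Range(G),\tau/\alpha)$, and that it accepts more than a $(1-2^{k-N})$ fraction of $\{0,1\}^N$. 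This contradicts the $(1-2^{k-N},\tau/\alpha)$-security of $G$.

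\textbf{Soundness via Lipschitzness.} Let $z$ satisfy $\delta(z,\Range(G))\le\tau/\alpha$, so $\delta(z,G(s))\le\tau/\alpha$ for some seed $s$. For every $f\in\calF$, item 2 of \autoref{def: Lipschitz disperser family} gives
\[\delta(f(z),C_f(s)) = \delta(f(z),f(G(s)))\le \alpha\cdot\tau/\alpha=\tau.\]
Hence $f(z)$ is within relative distance $\tau$ of $\Range(C_f)$ and is not a valid $\RemotePoint[n,m,\tau]$ answer. Since $\calA$ is errorless, no non-$\bot$ branch of $\calA(C_f)$ outputs $f(z)$, so $\calB(z)$ rejects. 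This holds for all such $z$, so the rejection probability over $\calB(\Range(G),\tau/\alpha)$ is $1$. We should check that the paper's strict/non-strict conventions (remoteness ``$>\tau$'' versus the ball ``$\le\tau$'') line up; they do as written.

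\textbf{Largeness via the disperser property.} Suppose $\calB$ rejects at least $2^k$ strings. Let $X$ be the uniform distribution over the rejected set, so $H_\infty(X)\ge k$. By item 1 of \autoref{def: Lipschitz disperser family}, with probability $\ge 1-\eps$ over $f$ we have $\Supp(f(X))=\{0,1\}^m$. Since $\Pr_f[\calA(C_f)\ne\bot]>\eps$, there is some $f$ with both properties. Take any output $w$ of an accepting branch of $\calA(C_f)$. Then $w=f(x)$ for some rejected $x$, but $\calB$ accepts $x$ by guessing this $f$ and this branch, a contradiction. Therefore $\calB$ rejects fewer than $2^k$ strings, and its acceptance probability exceeds $1-2^{k-N}$. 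Together with soundness, this breaks $G$.

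The only genuinely new ingredient is the soundness step, and the main subtlety there is the parameter bookkeeping: the generator must be secure at remoteness $\tau/\alpha$ so that the Lipschitz constant $\alpha$ lands exactly at $\tau$. The largeness step is verbatim from \autoref{thm:avoidgeneral}, with the weak-demi-bits threshold $2^k$ rephrased as security parameter $1-2^{k-N}$.
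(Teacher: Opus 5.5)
Your proof is correct and follows the paper's argument essentially verbatim. You use the same adversary $\calB$ (guess $f$ and a branch of $\calA(C_f)$, accept iff it outputs $f(y)$), the same soundness step in which the Lipschitz property maps the $\tau/\alpha$-ball around $\Range(G)$ into the $\tau$-ball around $\Range(C_f)$, and the same largeness step applying the zero-error disperser property to the uniform distribution on the strings $\calB$ rejects.
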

\begin{proof}
Suppose for contradiction that some nondeterministic polynomial-time algorithm $\calA$ satisfies $\Pr_{f \leftarrow \calF}[\calA(C_f) \neq \bot] >\eps$. We construct a nondeterministic polynomial-time adversary $\calB$ breaking the remote-point demi-bits generator $G$. On input $y\in \{0,1\}^N$, the adversary $\calB$ accepts if and only if there exists $f\in \calF$ such that some nondeterministic branch of $\calA(C_f)$ outputs $f(y)$.
\begin{description}
    \item[$\calB$ rejects every $y$ such that $\delta(y,\Range(G))\le \tau/\alpha$.] If there exists some seed $s\in \{0,1\}^{n}$ such that $\delta(y,G(s)) \le \tau/\alpha$, then for every $f \in \calF$ we have $C_f(s) = f(G(s))$ and $\delta(C_f(s),f(y))\le \tau$, hence $\delta(f(y),\Range(C_f))\le \tau$. Since $\calA$ is an errorless solver for $\mathscr{C}$-$\RemotePoint$, no accepting branch of $\calA(C_f)$ outputs an element $\tau$-close to $\Range(C_f)$, so $\calB(y)$ rejects.
    \item [$\calB$ accepts at least $2^N-2^k$ strings in $\{0,1\}^N$.] Assume to the contrary that $\calB$ accepts $<2^N-2^k$ strings of $\{0,1\}^N$. Let $X$ be the uniform distribution over the strings in $\{0,1\}^N$ rejected by $\calB$. Then we have $|\Supp(X)| \geq 2^k$ and  $H_\infty(X)\ge k$.  Since $\calF$ is an efficient $(k, \eps, \alpha)$-Lipschitz zero-error disperser family, we have
    \[
      \Pr_{f \leftarrow \calF} \bigl[\Supp(f(X)) = \{0,1\}^m\bigr] \;\geq\; 1 - \eps.
    \]
    Since $\Pr_f[\calA(C_f) \neq \bot] > \eps$, there must exist some $f \in \calF$ such that $\calA(C_f) \neq \bot$ and $\Supp(f(X)) = \{0,1\}^m$. This means that there exists some $x \in \Supp(X)$ such that some non-deterministic accepting branch of $\calA(C_f)$ outputs exactly $f(x)$, hence $\calB$ accepts $x$, a contradiction.

\end{description}

Hence $\calB$ is a nondeterministic polynomial-time adversary that rejects all strings $\tau/\alpha$-close to $\Range(G)$
and accepts $\ge 1-2^{k-N}$ fraction of $\{0,1\}^N$, contradicting the demi-bits
security of $G$.
\end{proof}

\subsection{Hardness of \texorpdfstring{$\XOR$-$\RemotePoint$}{XOR-Remote-Point}}

\autoref{thm:remote-point general} allows us to base the hardness of $\XOR$-$\RemotePoint$ on (demi-hardness variants of) the Learning Parity with Noise (LPN) assumption. LPN is a standard assumption in cryptography dating back to~\cite{BFKL93}: Roughly speaking, given a set of $m$ random linear equations over $n$ variables in $\GF(2)$ where $m \gg n$, the LPN assumption states that it is infeasible to distinguish between the case that the equations are randomly generated and that there exists a solution satisfying a $(1-\mu)$ fraction of equations simultaneously. We also consider a sparse variant of LPN formulated in~\cite{Feige02, ale03, ApplebaumBW10}, where each equation only mentions $k = O(1)$ many variables.

The LPN-style demi-bits assumption can be viewed as a nondeterministic analogue of the standard LPN assumption, with the mild relaxation that the noise vector is required to be sparse (bounded Hamming weight) rather than i.i.d.~Bernoulli. %

\begin{assumption}[LPN-Style Demi-Bits~\cite{BFKL93,DBLP:conf/stoc/ChenL24,RenWZ26}]\label{assumption: demi-hardness of LPN}
    For some (public) matrix $A\in \F_2^{m\times n}$, there is no polynomial-size non-uniform nondeterministic circuit $\calB: \{0, 1\}^m \to \{0, 1\}$ such that $\calB$ accepts a constant fraction of random strings but rejects every string of the form $A\vec{s} + \vec{e}$, where $\vec{e} \in \F_2^m$ is $(\mu\cdot m)$-sparse and $\vec{s} \in \F_2^n$.
\end{assumption}

It is easy to see that~\hyperref[assumption: demi-hardness of LPN]{LPN-Style Demi-Bits} Assumption implies a remote-point demi-bits generator computable by $\GF(2)$-linear circuits (circuits with only $\XOR$ gates).

Analogously, we define the \emph{Sparse LPN-Style Demi-Bits} assumption by further restricting each row of $A$ to be 
$k$-sparse, mirroring the relationship between standard LPN and Sparse LPN~\cite{ale03}. This can be viewed as a nondeterministic analogue of the Sparse LPN assumption.%
 
\begin{assumption}[Sparse LPN-Style Demi-Bits~\cite{ale03}]\label{assumption: sparse-demi-hardness of LPN}
    For some (public) matrix $A\in \F_2^{m\times n}$ where each row of $A$ has Hamming weight $\le k$, there is no polynomial-size non-uniform nondeterministic circuit $\calB: \{0, 1\}^m \to \{0, 1\}$ such that $\calB$ accepts a constant fraction of random strings but rejects every string of the form $A\vec{s} + \vec{e}$, where $\vec{e} \in \F_2^m$ is $(\mu\cdot m)$-sparse and $\vec{s} \in \F_2^n$.
\end{assumption}

Similarly, the \hyperref[assumption: sparse-demi-hardness of LPN]{Sparse LPN-Style Demi-Bits} Assumption implies a $\GF(2)$-linear circuit computable in $\NC^0_k$ that is a remote-point demi-bits generator.

Thanks to~\hyperref[assumption: demi-hardness of LPN]{LPN-Style Demi-Bits} Assumption, and the~\hyperref[thm: disperser family by a projection]{Random Projection Disperser Family}, we are able to show the hardness of $\XOR$-$\RemotePoint$ based on~\autoref{thm:remote-point general}.

\begin{theorem}[Hardness of $\XOR$-$\RemotePoint$]\label{thm:hardness-of-xor-RPP}
    Let $c \ge 10$, $\Delta \ge 1$, $\tau > 0$ be constants and let $N > cn$. If there exists a $(1-2^{-\Delta}, \tau/c)$-secure remote-point demi-bits generator $G: \{0, 1\}^n \to \{0, 1\}^N$ computable using only $\XOR$ gates, then there is a distribution $\calD$ of $\GF(2)$-linear circuits $C: \{0, 1\}^n \to \{0, 1\}^{N/c}$ (efficiently samplable given $G$) such that no errorless $\SearchNP$ heuristic for $\XOR$-$\RemotePoint[n, N/c, \tau]$ has success probability $> 4\Delta/c$ on $\calD$.
\end{theorem}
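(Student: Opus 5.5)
We instantiate \autoref{thm:remote-point general} with the \hyperref[thm: disperser family by a projection]{Random Projection Disperser Family}. Set $m := N/c$ and let $\calF := \Proj_{N\to m}$. Take $\calD$ to be the distribution of $C_f = f\circ G = G|_I$ for a uniformly random $I\subseteq[N]$ with $|I| = m$. This distribution is efficiently samplable given $G$.

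The first step checks the circuit class. Each output bit of $G|_I$ is an output bit of $G$, so $G|_I$ is still computed by a circuit of $\XOR$ gates. Hence every $C_f$ is a $\GF(2)$-linear circuit, and we may take $\ckt$ to be the class of $\XOR$ circuits in \autoref{thm:remote-point general}.

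The second step checks the disperser parameters. Since $c\ge 10$, we have $m = N/c \le 0.4N$. As $\Delta$ is a constant, $\Delta = o(N)$. So \autoref{thm: disperser family by a projection} shows that $\Proj_{N\to m}$ is an $(N-\Delta, 4m\Delta/N) = (N-\Delta, 4\Delta/c)$ zero-error disperser family. By the remark after \autoref{def: Lipschitz disperser family}, it is moreover $\alpha$-Lipschitz with $\alpha = N/m = c$. To confirm this: if $x_1,x_2$ differ in $\delta N$ coordinates, their projections differ in at most $\delta N$ of the $m$ coordinates, so the relative distance is at most $\delta N/m = c\delta$. Thus $\calF$ is an $(N-\Delta, 4\Delta/c, c)$-Lipschitz zero-error disperser family, and it is efficient.

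The third step matches the generator hypothesis. With $k := N-\Delta$, \autoref{thm:remote-point general} requires $G$ to be $(1-2^{k-N}, \tau/\alpha) = (1-2^{-\Delta}, \tau/c)$-secure as a remote-point demi-bits generator. This is exactly what the statement assumes. Applying \autoref{thm:remote-point general}, any errorless $\SearchNP$ heuristic $\calA$ for $\XOR$-$\RemotePoint[n, N/c, \tau]$ satisfies $\Pr_{f}[\calA(C_f)\ne\bot]\le 4\Delta/c$. This is the claimed bound on success probability over $\calD$.

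The only real obstacle is parameter bookkeeping. First, the security parameter $1-2^{k-N}$ must line up with $(1-2^{-\Delta})$. Second, the Lipschitz constant must be exactly $c$, so that the remoteness $\tau$ at output length $m$ pulls back to $\tau/c$ at length $N$. Finally, the side conditions $k\ge m$ (i.e.\ $N-\Delta\ge N/c$) and $m>n$ (from $N>cn$) must hold; both are immediate for $c\ge 10$ and constant $\Delta$.
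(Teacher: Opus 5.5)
Your proposal is correct and matches the paper's proof, which is exactly the one-line instantiation of \autoref{thm:remote-point general} with $\Proj_{N\to N/c}$ viewed as an $(N-\Delta, 4\Delta/c, c)$-Lipschitz zero-error disperser family. Your parameter checks spell out the bookkeeping the paper leaves implicit: $m\le 0.4N$, $k=N-\Delta$ giving security $1-2^{-\Delta}$, and Lipschitz constant $N/m=c$ pulling $\tau$ back to $\tau/c$.
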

\begin{proof}
    This follows from instantiating~\autoref{thm:remote-point general} with the remote-point demi-bits generator $G$ and the \hyperref[thm: disperser family by a projection]{Random Projection Disperser Family} $\Proj_{N \to (N/c)}$ which is a $(N-\Delta, 4\Delta/c, c)$-Lipschitz zero-error disperser family. %
\end{proof}

\begin{corollary}\label{cor:hardness-of-xor-RPP-lpn}
    Fix constants $c \ge 10$, parameter $\tau > 0$, and let $N > cn$. If~\autoref{assumption: demi-hardness of LPN} holds with $\mu = \tau/c$ and $m = N$, then there is no errorless $\SearchNP$ heuristic for $\XOR$-$\RemotePoint[n, N/c, \tau]$ with success probability $\eps$ over the distribution $\calD$ from~\autoref{thm:hardness-of-xor-RPP}.
\end{corollary}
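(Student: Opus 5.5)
The plan is to instantiate \autoref{thm:hardness-of-xor-RPP} with the generator $G_A(\vec{s},\vec{e}) := A\vec{s}$ given by the public matrix $A \in \F_2^{N\times n}$ from \autoref{assumption: demi-hardness of LPN} with $m = N$ and $\mu = \tau/c$. This map $G_A:\{0,1\}^n\to\{0,1\}^N$ is computable using only $\XOR$ gates, so the conclusion of \autoref{thm:hardness-of-xor-RPP} gives exactly the required distribution $\calD$. It is obtained by sampling $I\subseteq[N]$ with $|I| = N/c$ and outputting the $\GF(2)$-linear circuit $G_A|_I$. What remains is to verify that $G_A$ is a $(1-2^{-\Delta},\tau/c)$-secure remote-point demi-bits generator for a suitable constant $\Delta$, and to check that the resulting bound $4\Delta/c$ is compatible with the success probability $\eps$ in the statement.

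The first step is to translate the assumption. The set $\calB(\Range(G_A),\tau/c)$ consists precisely of the strings $A\vec{s}+\vec{e}$ with $\vec{s}\in\F_2^n$ and $|\vec{e}|\le (\tau/c)N = \mu N$. Therefore an $\NP/_\poly$ adversary that rejects every point of this ball is exactly a circuit $\calB$ that rejects every string of the form $A\vec{s}+\vec{e}$ with $\vec{e}$ being $(\mu N)$-sparse. \autoref{assumption: demi-hardness of LPN} says that no such polynomial-size $\calB$ accepts a constant fraction of $\{0,1\}^N$.

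Taking this constant fraction to be $1-2^{-\Delta}$ for a constant $\Delta$, we conclude that $G_A$ is $(1-2^{-\Delta},\tau/c)$-secure. Since the assumption rules out \emph{every} constant fraction, $\Delta$ may be any constant, and in particular can be chosen small, e.g.\ $\Delta = 1$ or smaller. We then apply \autoref{thm:hardness-of-xor-RPP}: every errorless $\SearchNP$ heuristic for $\XOR$-$\RemotePoint[n,N/c,\tau]$ has success probability at most $4\Delta/c$ on $\calD$.

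The main obstacle is the quantification of $\eps$, which the corollary leaves implicit. The statement should be read as saying that no heuristic achieves success probability $\eps$ for any $\eps > 4\Delta/c$. Since $\Delta$ can be taken to be an arbitrarily small positive constant, this covers every constant $\eps>0$. A subtlety is that \autoref{thm: disperser family by a projection} is stated for $\Delta\ge1$ in \autoref{thm:hardness-of-xor-RPP}. If needed, I would instead use $\Delta=1$ and get the threshold $4/c\le 0.4$; alternatively I would rerun the Kruskal--Katona computation with $|\Hard|\ge \eps' 2^N$ for a small constant $\eps'$, giving a failure bound of $4(N/c)\log(1/\eps')/N = 4\log(1/\eps')/c$. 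I would also confirm that the Lipschitz constant of $\Proj_{N\to N/c}$ is $c$, so that $\tau/c$-closeness in $\{0,1\}^N$ maps to $\tau$-closeness in $\{0,1\}^{N/c}$, as \autoref{thm:remote-point general} requires.
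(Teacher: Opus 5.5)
Your argument is the paper's implicit proof, and with $\Delta=1$ it is correct. Take $G(\vec s)=A\vec s$; drop the stray $\vec e$ argument in $G_A(\vec s,\vec e)$, since the generator has input length $n$ and its range is the column space of $A$. Note that $\calB(\Range(G),\tau/c)$ is exactly the set of strings $A\vec s+\vec e$ with $|\vec e|\le \mu N$. Hence \autoref{assumption: demi-hardness of LPN} makes $G$ a $(1-2^{-\Delta},\tau/c)$-secure remote-point demi-bits generator for every constant $\Delta>0$. Then apply \autoref{thm:hardness-of-xor-RPP} with $\Delta=1$, using that projections to $N/c$ bits are $c$-Lipschitz, as you check. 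This shows every errorless $\SearchNP$ heuristic succeeds with probability at most $4/c\le 0.4$ on $\calD$, which does not depend on $\Delta$. That is how the unspecified $\eps$ in the corollary should be read.

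Your stronger claim, that shrinking $\Delta$ rules out every constant $\eps>0$, does not follow from what the paper proves. Besides the hypothesis $\Delta\ge 1$ in \autoref{thm:hardness-of-xor-RPP}, the proof of \autoref{thm: disperser family by a projection} only derives $x\ge N-\Delta-1$, where $x$ is the real number with $|\calF_m|=\binom{x}{m}$. That bounds the failure probability by roughly $m(\Delta+1)/(N-m)=(\Delta+1)/(c-1)$, which never drops below $1/(c-1)$.

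Your proposed workaround also goes the wrong way. A hard set of density $\eps'$ with small $\eps'$ corresponds to a large $\Delta=\log(1/\eps')$, so the bound $4\log(1/\eps')/c$ gets worse, not better. To actually beat $4/c$, apply the disperser to the rejected set of density at least $1-\gamma$ for a small constant $\gamma$; this is available because the assumption excludes every constant acceptance fraction. Then sharpen the Kruskal--Katona step to $x\ge N-\Delta-2^{-\Omega(N)}$, which is possible because $2^{N H(1/c)}$ is exponentially smaller than $2^{N-\Delta}$. This gives failure probability about $\Delta/(c-1)$.
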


Based on~\autoref{thm:hardness-of-xor-RPP}, we comment on the optimality of the state-of-the-art algorithm for $\XOR$-$\RemotePoint$.

\begin{mdframed}[innertopmargin=-0.1em]
\small
\begin{remark}[Is the~{\cite{AlonPY09}} Algorithm Optimal?]
    Given a $\GF(2)$-linear circuit $C: \{0, 1\}^n \to \{0, 1\}^m$ with $m \ge 2n$, the $\XOR$-$\RemotePoint$ algorithm in~\cite{AlonPY09} finds a string $y \in \{0, 1\}^m$ that is $\tau$-far from $\Range(C)$, where $\tau = \Omega((\log n) / n)$. On the other hand, LPN with noise rate $\mu = O((\log n)/n)$ and $m \ge n/(1-\mu)$ is solvable in polynomial time via Gaussian elimination~\cite{Prange62}. LPN with noise rate $\mu = O((\log^2 n)/n)$ or even $\mu = O((\log^{1+\eps}n)/n)$ are not known to be polynomial-time solvable~\cite{BLSV18, AbramMR25, BCLV26}. However, LPN with noise rate $\mu = O((\log^2 n)/n)$ is in $\BPP^\SZK \subseteq \AM\cap\coAM$~\cite{BLVW19}, hence it is likely insecure against nondeterministic adversaries.
    
    It still seems possible that LPN with noise rate $\mu = \polylog(n) / n$ is secure against nondeterministic adversaries. If this is true, it would imply that the distance parameter in the~\cite{AlonPY09} algorithm for $\XOR$-$\RemotePoint$ cannot be improved to $\log^{\omega(1)} n$.
\end{remark}
\end{mdframed}

The same framework extends to 
$k$-$\XOR$-$\RemotePoint$ by replacing the LPN-style demi-bits assumption with its sparse variant~(\autoref{assumption: sparse-demi-hardness of LPN}).
\begin{corollary}\label{cor:hardness-of-k-xor-RPP}
    Fix constants $c \ge 10$, $k \ge 3$, parameter $\tau > 0$, and let $N > cn$. If~\autoref{assumption: sparse-demi-hardness of LPN} holds with $\mu = \tau/c$ and $m = N$, then there is no errorless $\SearchNP$ heuristic for $k$-$\XOR$-$\RemotePoint[n, N/c, \tau]$ with success probability $\eps$ over the distribution $\calD$ from~\autoref{thm:hardness-of-xor-RPP}.
\end{corollary}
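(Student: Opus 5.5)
This corollary (\autoref{cor:hardness-of-k-xor-RPP}) should follow by the same route as \autoref{cor:hardness-of-xor-RPP-lpn}: instantiate \autoref{thm:hardness-of-xor-RPP}, or directly \autoref{thm:remote-point general}, with a $k$-sparse linear generator. The plan has three steps.

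First, I turn \autoref{assumption: sparse-demi-hardness of LPN} into a remote-point demi-bits generator. Take the public matrix $A\in\F_2^{N\times n}$ with $k$-sparse rows and set $G(\vec s) := A\vec s$. Each output bit is a parity of at most $k$ input bits, so $G$ is a $k$-$\XOR$ circuit and in particular lies in $\NC^0_k$.

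Next I check that the rejection condition matches. A string $z$ satisfies $\delta(z,\Range(G))\le \tau/c$ exactly when $z = A\vec s+\vec e$ with $|\vec e|\le (\tau/c)N = \mu N$. So an $\NP/_\poly$ adversary that rejects every point of $\calB(\Range(G),\tau/c)$ rejects every string of the form $A\vec s+\vec e$ with $\vec e$ being $\mu N$-sparse. Hence the assumption says no such adversary accepts a constant fraction of $\{0,1\}^N$. This gives that $G$ is $(1-2^{-\Delta},\tau/c)$-secure for every constant $\Delta$, since $1-2^{-\Delta}$ is a constant.

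Second, I apply \autoref{thm:remote-point general} with $\calF=\Proj_{N\to N/c}$. As noted after \autoref{def: Lipschitz disperser family}, this family is an $(N-\Delta, 4\Delta/c, c)$-Lipschitz zero-error disperser family. The composed circuits $C_f = f\circ G = G|_I$ just select $N/c$ rows of $A$. Each is therefore still a $k$-sparse $\GF(2)$-linear circuit, so the class $\ckt$ can be taken to be $k$-$\XOR$ circuits, which is exactly what the theorem requires. The theorem then says every errorless $\SearchNP$ heuristic for $k$-$\XOR$-$\RemotePoint[n,N/c,\tau]$ succeeds with probability at most $4\Delta/c$ on the distribution $\calD$ of $G|_I$. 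This is the same distribution as in \autoref{thm:hardness-of-xor-RPP}, now with a sparse $A$.

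The main obstacle is the quantifier on $\eps$, since the bound $4\Delta/c$ is not obviously below an arbitrary $\eps$. I would handle it by reading $\eps$ as a constant and choosing the parameters to fit. The demi-hardness assumption holds for every constant acceptance fraction, so I can take $\Delta$ to be any constant, and I can take $c$ large enough that $4\Delta/c<\eps$. Alternatively I keep $c$ fixed and pick $\Delta$ small. Because $\Delta$ must be at least $1$ for $2^{-\Delta}$ to be meaningful, I would instead use the form of \autoref{thm: disperser family by a projection} with $\Delta=\log(1/\eps')$. Either way, a success probability exceeding $\eps$ contradicts the assumption, and the proof of \autoref{thm:hardness-of-xor-RPP} carries over verbatim.
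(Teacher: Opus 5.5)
Your Steps 1 and 2 are correct and are the intended argument. The corollary is \autoref{thm:hardness-of-xor-RPP}, i.e., \autoref{thm:remote-point general} with $\Proj_{N\to N/c}$, applied to $G(\vec s)=A\vec s$ for the $k$-row-sparse $A$ of \autoref{assumption: sparse-demi-hardness of LPN}. Each $G|_I$ only selects rows of $A$, so it is still a $k$-$\XOR$ circuit. The paper obtains the corollary by exactly this substitution and gives no further argument. Read as intended, with the unquantified $\eps$ standing for the bound $4\Delta/c$ of \autoref{thm:hardness-of-xor-RPP}, your Steps 1--2 already form a complete proof.

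Your final paragraph, which tries to cover an arbitrary constant $\eps$, does not work as written.
\begin{itemize}
\item You cannot enlarge $c$. It is fixed in the statement, and it fixes the output length $N/c$, the noise rate $\mu=\tau/c$, and hence $\calD$. A larger $c$ gives a statement about a different problem and distribution.
\item Your second option misidentifies the constraint. Nothing about $2^{-\Delta}$ needs $\Delta\ge1$: as your own Step 1 notes, the assumption gives $(1-2^{-\Delta},\tau/c)$-security for every constant $\Delta>0$.
\item Rewriting $\Delta=\log(1/\eps')$ changes nothing. The form in \autoref{thm: many proof complexity generators inside one demi-bits generator} needs $\eps'\le 1/2$, i.e., $\Delta\ge1$ again.
\end{itemize}
The actual obstruction is the proof of \autoref{thm: disperser family by a projection}. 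Because of the lossy step $x\ge n-\Delta-1$, it bounds the failure probability by $m(\Delta+1)/(n-m)$, which is at most $4m\Delta/n$ only when $\Delta$ is bounded away from $0$. So the paper's results as proved only give success-probability bounds of order $1/c$, and your \emph{either way} is unjustified for smaller $\eps$.

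To get every constant $\eps>0$ at fixed $c$, tighten that step. With $\calF$ the family of shattered sets as in that proof, $2^{n-\Delta}\le|\calF|\le 2^{0.9999n}+2^x$ gives $n-x\le\Delta+2^{-\Omega(n)}$. Hence the failure probability is at most $\frac{m}{n-m}\bigl(\Delta+2^{-\Omega(n)}\bigr)$, which is about $\Delta/(c-1)$ for $\Proj_{N\to N/c}$. Now choose a constant $0<\Delta<(c-1)\eps$ and invoke \autoref{thm:remote-point general} directly, rather than \autoref{thm:hardness-of-xor-RPP}, which assumes $\Delta\ge1$. This yields the statement for every constant $\eps$.
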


\begin{mdframed}[innertopmargin=-0.2em, skipabove=-0.2em]
\small
\begin{remark}\label{rem: k-xor-rpp}
\cite{GLY26} gives a polynomial-time algorithm for
$k$-$\XOR$-$\RemotePoint[n, \tilde{\Theta}(n^{k/2}), \eps]$.
This may be essentially optimal: note that nondeterministically strongly refuting $k$-$\XOR$ (in the sense of~\autoref{def:strong-xor} where ``strong'' means certifying the system is far from satisfiable) is equivalent to breaking $k$-$\XOR$ Remote-Point Demi-Bits against $\NP$. By~\autoref{cor:hardness-of-k-xor-RPP},
any improvement in stretch would yield improved
nondeterministic algorithms for strongly refuting
semi-random $k$-$\XOR$ instances, which in turn imply better such algorithms for refuting semi-random CSPs~\cite{Feige02,COCF10}. The current
$\widetilde{O}(n^{k/2})$ threshold for strong refutation
is believed to be tight for polynomial-time
algorithms~\cite{KMOW17}, and no nondeterministic approach
is known to improve upon it. (See~\autoref{sec:csp-literature} for a table of the
literature on refuting random, semi-random, and smoothed
CSPs, covering both weak and strong refutation as well as
deterministic and nondeterministic algorithms.)

While nondeterministic
algorithms that work with polynomially fewer constraints
do exist~\cite{FKO06, GKM22}, they only achieve
\emph{weak} refutation (i.e., they only certify the system is \emph{unsatisfiable}, not \emph{far from satisfiable}). These algorithms are based on
even-cover witnesses, where each disjoint even cover
certifies that at most one constraint must be violated.
Strong refutation by this counting argument would require $\Omega(m)$ disjoint even
covers, hence covers of constant length, but by Feige's
conjecture~\cite{fei08} (resolved in~\cite{GKM22}),
constant-length even covers already require
$m \geq \widetilde{\Omega}(n^{k/2})$ hyperedges ---
precisely the spectral threshold at which deterministic
algorithms suffice.  
\end{remark}
\end{mdframed}

\subsection{Average-Case Hard Partial Truth Tables is Hard}

With respect to randomized algorithms, agnostic PAC-learning 
$\ckt$ is equivalent to strong RRHS-refutation of the dual 
class $\ckt^*$. This extends~\cite{Vadhan17}, who established 
the analogous equivalence in the realizable setting; 
see~\autoref{sec:agnostic} for a self-contained proof. 
We note that~\cite{KL18} obtained a similar 
result (in different terminology) in the agnostic setting, 
but their equivalence is distribution-specific, whereas ours 
is distribution-free. The proof techniques also differ: we use 
Yao's next-bit prediction combined with agnostic boosting, 
while~\cite{KL18} use a hybrid argument combined with 
Kalai--Kanade boosting~\cite{KalaiK09}.

To simplify our terminology, we will also refer to strong RRHS refutation algorithms for the dual class of $\ckt$ as ``Agnostic PAC-learning algorithms for $\ckt$'' in the nondeterministic setting.
\begin{definition}[Nondeterministic $\tau$-Agnostic PAC-Learning]\label{def: nd-agnostic-pac-learning}
    Let $\ckt$ be a class of Boolean functions, $L\ge 1$, $p\in (0, 1)$, and $\calD$ be a distribution over $\{0, 1\}^n$. A $\SearchNP$ algorithm $\calA$ is said to \emph{$\tau$-agnostic PAC-learn} $\ckt$ over input distribution $\calD$ using $L$ samples with success probability $p$ if:
    \begin{itemize}
        \item {\it (Soundness)} for every inputs $x_1, \dots, x_L\in \{0, 1\}^n$ and labels $b_1, \dots, b_L \in \{0, 1\}$, if there exists a function $C \in \ckt$ such that $C(x_i) = b_i$ for $1-\tau$ fraction of $i\in [L]$, then $\calA((x_1, b_1), \dots, (x_L, b_L))$ rejects. 
        \item {\it (Completeness)} If we sample i.i.d.~$x_i\gets \calD$ and $b_i \gets \{0, 1\}$, then
        \[\Pr[\calA((x_1, b_1), \dots, (x_L, b_L))\text{ accepts}] \ge p.\]
    \end{itemize}
    Given $q\in(0,1]$ and $\eta\in[0,1)$, we say that
$\calA$ has $(q,\eta)$-typical completeness over $\calD$ if
    \begin{itemize}
        \item {\it ($(q,\eta)$-Typical Completeness)} 
        \[\Pr_{\vec{x}\leftarrow \calD^L}\sbra{\Pr_{\vec{b}\leftarrow\{0,1\}^L}\sbra{\calA((x_1, b_1), \dots, (x_L, b_L))\text{ accepts}}\ge q}\ge 1-\eta.\]
    \end{itemize}
    Notice that $(q,\eta)$-typical completeness implies ordinary
completeness with success probability at least $(1-\eta)q$.
    
    We also say that a $\SearchNP$ algorithm \emph{distribution-freely} agnostic PAC-learns $\ckt$ with success probability $p$ if it also satisfies the following stronger (Completeness) condition:
    \begin{itemize}
        \item {\it (Completeness')} For every $x_1, \dots, x_L\in \{0, 1\}^n$, if we sample i.i.d.~labels $b_1, \dots, b_L \gets \{0, 1\}$, then
        \[\Pr[\calA((x_1, b_1), \dots, (x_L, b_L))\text{ accepts}]\ge p.\]
    \end{itemize}
\end{definition}

Before connecting~\autoref{def: nd-agnostic-pac-learning} with the hardness of $\PartialAvgHard$, we recall the definition of errorless heuristics for $\PartialAvgHard$.
\begin{definition}[Errorless Heuristics for $\PartialAvgHard$]
    Let $\ckt$ be a class of functions, $L \ge 1$, $\tau \in (0,1)$, $p \in (0,1)$, and $\calD$ be a distribution over $\{0,1\}^n$. A $\SearchNP$ algorithm $\calA$ is an \emph{errorless heuristic} for $\ckt$-$\PartialAvgHard$ with remoteness $\tau$, success probability $p$, and domain size $L$ over $\calD$ if:
    \begin{itemize}
        \item {\it (Soundness)} For every $x_1, \dots, x_L \in \{0,1\}^n$, every accepting path of $\calA(\vec{x})$ outputs bits $b_1, \dots, b_L \in \{0,1\}$ such that the partial function $\{x_i \mapsto b_i\}_{i \in [L]}$ has relative distance $\ge \tau$ from every function in $\ckt$.
        \item {\it (Completeness)} When $x_1, \dots, x_L \leftarrow \calD$ are i.i.d.~samples, the algorithm $\calA(\vec{x})$ has at least one accepting path with probability at least $p$.
    \end{itemize}

    If $\calA(\vec{x})$ has at least one accepting path for \emph{every} $x_1, \dots, x_L \in \{0,1\}^n$, we say $\calA$ is a \emph{worst-case algorithm} for $\ckt$-$\PartialAvgHard$.
\end{definition}

The main result of this section shows that nondeterministic hardness of agnostic PAC-learning yields nondeterministic hardness of $\PartialAvgHard$:

\begin{theorem}\label{thm: PartialAvgHard to PAC learning for NP}
    Let $\ckt$ be a class of functions and $L, L'\ge \omega(1)$ be parameters such that $L' \le 0.1L$. Then:
    \begin{itemize}
        \item Suppose there is a (worst-case) $\SearchNP$ algorithm $\calA$ for $\ckt$-$\PartialAvgHard$ with remoteness $\tau$ on domain size $L'$. Then there is a $\SearchNP$ algorithm $\calB$ that distribution-freely $\frac{\tau L'}{2L}$-agnostic PAC-learns $\ckt$ using $L$ samples with success probability $1-o(1)$.
        \item Let $\calD$ be a distribution over $\{0, 1\}^n$ and $\eps\in(0,3/5)$. Suppose there is an errorless $\SearchNP$ heuristic $\calA$ for $\ckt$-$\PartialAvgHard$ with remoteness $\tau$ over $\calD$ on domain size $L'$ with success probability $1-\eps$. Then there is a $\SearchNP$ algorithm $\calB$ that $\frac{\tau L'}{2L}$-agnostic PAC-learns $\ckt$ on input distribution $\calD$ using $L$ samples with success probability at least $(1-5\eps/3)(1-2^{-0.1(L/L')}) \ge 1 - 2^{-0.1(L/L')} - 2\eps$. Moreover, $\calB$ has $(1-2^{-0.1(L/L')},5\eps/3)$-typical completeness. %
    \end{itemize}
\end{theorem}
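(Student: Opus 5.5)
I will mirror the proof of \autoref{thm: PartialHard to PAC learning for NP}, with Hamming balls replacing exact agreement, in the same way that \autoref{thm:remote-point general} relaxes \autoref{thm:avoidgeneral}. The learner $\calB$ is as follows. On input $(x_1,b_1),\dots,(x_L,b_L)$, it nondeterministically guesses a subset $I\subseteq[L]$ with $|I|=L'$ and a branch of $\calA(\{x_i\}_{i\in I})$. It accepts iff that branch outputs exactly $\{b_i\}_{i\in I}$.

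\textbf{Soundness.} Suppose some $C\in\ckt$ disagrees with $\vec b$ on at most $\tau' L$ indices, where $\tau':=\frac{\tau L'}{2L}$. Then $C$ disagrees on at most $\tau' L=\tau L'/2<\tau L'$ indices of \emph{any} $I$. So the partial function $\{x_i\mapsto b_i\}_{i\in I}$ has relative distance $<\tau$ from $C$. Soundness of $\calA$ forbids any accepting branch of $\calA(\{x_i\}_{i\in I})$ from printing $\{b_i\}_{i\in I}$, hence $\calB$ rejects. The slack factor $1/2$ is only needed to make the inequality strict. This is where the $\frac{\tau L'}{2L}$ comes from: the worst case is that all disagreements land inside $I$.

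\textbf{Completeness.} Fix $\vec x$ and let $\Rej(\vec x)$ be the set of label vectors that $\calB$ rejects. The key observation is identical to the earlier proof: if $\Rej(\vec x)$ shatters some $I$ with $|I|=L'$, then $\calA(\{x_i\}_{i\in I})$ has no accepting branch. Indeed, any output $\{b_i\}_{i\in I}$ extends to some $\vec b'\in\Rej$, and $\calB$ would then accept $\vec b'$, a contradiction.

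In the worst-case bullet, no $L'$-subset is shattered. The \hyperref[lemma: Sauer--Shelah]{Sauer--Shelah Lemma} then gives $|\Rej|\le\sum_{i<L'}\binom{L}{i}=o(2^L)$ since $L'\le 0.1L$ and $L\to\infty$. So for every $\vec x$, random labels are accepted with probability $1-o(1)$.

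In the heuristic bullet, set $\Delta:=0.1(L/L')$ and call $\vec x$ bad if $|\Rej(\vec x)|\ge 2^{L-\Delta}$. For bad $\vec x$, \autoref{thm: disperser family by a projection} (with $n=L$, $m=L'$) says a random $L'$-subset $I$ is shattered with probability at least $1-4\Delta L'/L=0.6$.

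Since $(\{x_i\}_{i\in I})$ with $\vec x\gets\calD^L$ and $I$ random is distributed as $\calD^{L'}$, this gives
\[
\eps\ \ge\ \Pr[\calA\text{ fails}]\ \ge\ 0.6\cdot\Pr[\vec x\text{ bad}].
\]
Hence $\Pr[\vec x\text{ bad}]\le 5\eps/3$. This is exactly the typicality parameter, and it explains the constraint $\eps<3/5$.

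For good $\vec x$, random labels are accepted with probability at least $1-2^{-\Delta}$. This yields $(1-2^{-0.1(L/L')},5\eps/3)$-typical completeness, and hence success probability at least $(1-5\eps/3)(1-2^{-\Delta})\ge 1-2^{-\Delta}-2\eps$.

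The only step needing care is soundness. One must check that the agnostic distance threshold transfers from the full sample to an arbitrary $L'$-subset, which fixes the $\tau L'/(2L)$ loss. The rest is a direct reuse of Pajor/Kruskal--Katona through \autoref{thm: disperser family by a projection}.
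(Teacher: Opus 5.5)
Your proof is correct and follows the paper's argument. It uses the same guess-a-subset learner $\calB$ and the same completeness analysis: shattering plus the Sauer--Shelah Lemma in the worst case, and the random-projection disperser family with $\Delta = 0.1(L/L')$ in the heuristic case. The only difference is the acceptance rule. The paper's $\calB$ accepts whenever some branch of $\calA$ outputs a string $(\tau/2)$-close to $\vec{b}|_I$, so its soundness needs the triangle inequality. Your exact-match rule (the same learner as for $\ckt$-$\PartialHard$) makes soundness immediate and, as you observe, would tolerate any agnostic parameter strictly below $\tau L'/L$.
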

\begin{proof}
Let $\calA$ be an errorless $\SearchNP$ heuristic (or a worst-case algorithm) for $\ckt$-$\PartialAvgHard$ with remoteness $\tau$ on domain size $L'$. Our $\SearchNP$ algorithm $\calB$, given inputs $(x_1,b_1),\dots,(x_L,b_L)$, guesses a subset $I\subseteq [L]$ of size $|I| = L'$, and verifies that there is a nondeterministic branch of $\calA(\cbra{x_i}_{i\in I})$ that outputs a sequence $\cbra{b'_i}_{i\in I}$ that is $(\tau/2)$-close to $\cbra{b_i}_{i\in I}$. If this is true, then $\calB$ accepts the input, otherwise $\calB$ rejects the input.

\begin{description}
    \item[Soundness of $\calB$:] Consider any inputs $x_1,\dots,x_L \in \{0,1\}^n$ and labels $b_1,\dots,b_L\in \{0,1\}$. Suppose there is a function $C\in\ckt$ such that $C(x_i) = b_i$ for $\ge 1-\frac{\tau L'}{2L}$ fraction of $i\in [L]$. Then for every subset $I\subseteq [L]$ of size $L'$, the partial function $\cbra{(x_i \mapsto b_i)}_{i\in I}$ agrees with $C$ for $\ge 1-\tau/2$ fraction of $i\in I$. Now suppose $\calA(\cbra{x_i}_{i\in I})$ outputs bits $\cbra{b'_i}_{i\in I}$ with $\delta(\vec{b}', \vec{b}|_I) < \tau/2$. By the triangle inequality,
    \[\delta(\vec{b}', C|_I) \le \delta(\vec{b}', \vec{b}|_I) + \delta(\vec{b}|_I, C|_I) < \tau/2 + \tau/2 = \tau,\]
    contradicting the soundness of $\calA$ (which guarantees $\delta(\vec{b}', C|_I) \ge \tau$). Hence $\calB$ never accepts $\cbra{(x_i,b_i)}_{i\in [L]}$.

    \item [Completeness of $\calB$:] Let $x_1,\dots,x_L \in \{0,1\}^n$ and define 
    \begin{align*}
        \Rej := \Rej(x_1, \dots, x_L) := \{\vec{b} \in \{0, 1\}^L: \calB(\{(x_i, b_i)\})\text{ rejects}\}.
    \end{align*}
    Suppose $I \subseteq [L]$, $|I| = L'$ is a subset shattered by $\Rej$. We argue that $\calA(\cbra{x_i}_{i\in I})$ does not have any accepting branch. Indeed, if $\calA(\cbra{x_i}_{i\in I})$ outputs any sequence of bits $\cbra{b'_i}_{i\in I}$, there exists $\vec{b} \in \Rej$ such that $b_i=b'_i$ for every $i\in I$. Then $\delta(\vec{b}', \vec{b}|_I) = 0 < \tau/2$, so $\calB(\cbra{(x_i,b_i)})$ accepts, contradicting $\vec{b} \in \Rej$.
        \begin{itemize}
        \item Suppose that $\calA$ is a worst-case $\SearchNP$ algorithm for $\ckt$-$\PartialAvgHard[n,s,L',\tau]$. Then for every $x_1, \dots, x_L$, there is no subset $I\subseteq[L]$ of size $L'$ shattered by $\Rej(x_1, \dots, x_L)$. By the \hyperref[lemma: Sauer--Shelah]{Sauer--Shelah Lemma}, $|\Rej(x_1, \dots, x_L)|\le \sum_{i=0}^{L'-1}\binom{L}{i} \le o(2^L)$. It follows that $\calB$ distribution-freely PAC-learns $\ckt$ using $L$ samples with success probability $1-o(1)$.
        \item Suppose that $\calA$ is an errorless $\SearchNP$ heuristic for $\ckt$-$\PartialAvgHard[n,s,L',\tau]$ over $\calD$ with completeness $1-\eps$. Set $\Delta := 0.1(L/L')$ and define
        \[q:=\Pr_{(x_1, \dots, x_L)\gets \calD^L}\sbra{|\Rej|\ge 2^{L-\Delta}}.\]
        Fix any $\vec{x} = (x_1, \dots, x_L)$ such that $|\Rej|\ge 2^{L-\Delta}$.
        By~\autoref{thm: disperser family by a projection}, a random subset $I \subseteq [L]$ of size $L'$ is 
        shattered by $\Rej$ with probability at least~$1-4\Delta\frac{L'}{L}=0.6$. 
    
    Since the marginal distribution of $(x_i)_{i\in I}$ is exactly $\calD^{L'}$, we obtain 
    \[q\mleft(1-4\Delta\frac{L'}{L}\mright)\le\Pr_{\vec{z}\leftarrow\calD^{L'}}[
    \calA(\vec{z})\text{ has no accepting path}]\le\eps.\] Hence, $q\le 5\eps/3 $.
    For every $\vec{x}$ such that $|\Rej(\vec{x})|<2^{L-\Delta}$, we have 
    \[\Pr_{\vec{b}\leftarrow\{0,1\}^L}
\left[
    \calB\bigl((x_i,b_i)_{i\in[L]}\bigr)
    \text{ accepts}
\right]=
1-\frac{|\Rej(\vec{x})|}{2^L}
>
1-2^{-\Delta}.\]
     Hence, it holds that 
        \[\Pr_{\vec{x}\leftarrow \calD^L}\sbra{\Pr_{\vec{b}\leftarrow\{0,1\}^L}\sbra{\calB((x_1, b_1), \dots, (x_L, b_L))\text{ accepts}}\ge 1-2^{-\Delta}}\ge 1-q \ge 1-5\eps/3.\]
        Thus $\calB$ has $\left(1-2^{-\Delta},5\eps/3\right)$-typical completeness.
        Finally, 
        \[\Pr_{\overset{\vec{x}\leftarrow \calD^L}{\vec{b}\leftarrow\{0,1\}^L}}\sbra{\calB((x_1, b_1), \dots, (x_L, b_L))\text{ accepts}}\ge (1-q)(1-2^{-\Delta})\ge 1-2^{-0.1(L/L')} - 2\eps.\]
        Hence, $\calB$ PAC-learns $\ckt$ over input distribution $\calD$ using $L$ samples with success probability
        \[1-2^{-0.1(L/L')} - 2\eps.\qedhere\]
    \end{itemize}
\end{description}
\end{proof}

\subsubsection{Hardness of \texorpdfstring{$k$-$\XOR$-$\PartialAvgHard$}{k-XOR-Partial-AvgHard}}

To prove the hardness of $k$-$\XOR$-$\PartialAvgHard$, we introduce 
an assumption closely related to~\autoref{assumption: sparse-demi-hardness of LPN}, that essentially states the hardness of agnostic learning $k$-$\XOR$.

\begin{assumption}[Sparse-Secret LPN-Style Demi-Bits for Random Matrices~\cite{ale03}]\label{assumption: demi-hardness of random-Sparse-LPN}
    Let $k=k(n)$ satisfy $\omega(1)\le k \le n$, $\mu  = \mu(n) \in (0,1/2)$, $\delta \in (0,1/2)$, and $m=m(n)=\poly(n)$.
    Fix a distribution $\calD$ (say, the uniform distribution) over $\F_2^n$. We additionally assume
        \begin{equation}\label{eq:sparse-secret-LPN-parameters}
            m(1-H(\mu)) \ge \log_2\pbra{\sum_{j=0}^k {n \choose j}}+1,
        \end{equation}
    where $H(\cdot)$ is the binary entropy function. 
    For a random matrix $A\in \F_2^{m\times n}$ with rows drawn 
    i.i.d.~from $\calD$, with probability $\ge 1/2$ over $A$, 
    there is no polynomial-size non-uniform nondeterministic 
    circuit $\calB: \{0, 1\}^m \to \{0, 1\}$ such that $\calB$ 
    accepts at least a $1/2-\delta$ fraction of random strings but rejects 
    every string of the form $A\vec{s} + \vec{e}$, where 
    $\vec{e} \in \F_2^m$ is $(\mu\cdot m)$-sparse and 
    $\vec{s} \in \F_2^n$ is $k$-sparse.
\end{assumption}

The condition \eqref{eq:sparse-secret-LPN-parameters} guarantees that the assumption is not information-theoretically vacuous: the number of strings of the form $A\vec{s} + \vec{e}$ is at most $2^{mH(\mu)}\cdot \mleft(\sum_{j=0}^k\binom{n}{j}\mright) \le 2^{m-1}$. If $\mu < 1/2$ is a constant and $k\le n/2$, it suffices that $m = \Omega(k\log(en/k))$. Also note that the assumption is meaningful only in the regime $k=\omega(1)$. If $k = O(1)$, one can always break the assumption in deterministic $O(n^k)$ time.%

The two assumptions are not directly comparable:~\autoref{assumption: sparse-demi-hardness of LPN} restricts the rows of $A$ to be $k$-sparse and allows an unrestricted secret, whereas the present assumption allows unrestricted rows but requires the secret to be $k$-sparse. In addition, the present assumption requires hardness for at least half of the random matrices rather than for one fixed matrix.
Concretely, by~\autoref{thm: PartialAvgHard to PAC learning for NP}, nondeterministic hardness of $\PartialAvgHard$ for a circuit class $\ckt$ follows from nondeterministic hardness of agnostically PAC-learning $\ckt$. When $\ckt$ is the class of $k$-$\XOR$s, the relevant LPN-style assumption must restrict the secret $\vec{s}$ to be $k$-sparse, since $a \mapsto \langle a, \vec{s}\rangle$ is a $k$-$\XOR$ function under precisely this restriction. No sparsity condition on the rows of $A$ is required for this implication.

\begin{corollary}\label{cor: sparse secret LPN to PartialAvgHard}
    Assuming the~\hyperref[assumption: demi-hardness of random-Sparse-LPN]{Sparse-Secret LPN Assumption for Random Matrices} with distribution $\calD$, let $m=L$, and assume that $10\mid L$.
    For any constant $\tau \in (20\mu, 1/2)$, there is no errorless 
    $\SearchNP$ heuristic for 
    $k$-$\XOR$-$\PartialAvgHard[n, s, L/10, \tau]$ over $\calD$ 
    with success probability at least $3/4$.
\end{corollary}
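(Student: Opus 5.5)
We prove the contrapositive. Suppose $\calA$ is an errorless $\SearchNP$ heuristic for $k$-$\XOR$-$\PartialAvgHard[n, s, L/10, \tau]$ over $\calD$ with success probability $1-\eps$, where $\eps \le 1/4$. We then apply the second bullet of \autoref{thm: PartialAvgHard to PAC learning for NP} with $L' := L/10$. This is a valid choice since $10 \mid L$ and $L' \le 0.1L$. The theorem yields a $\SearchNP$ algorithm $\calB$ on $L = m$ labeled samples which rejects every labeled sample set that is $\frac{\tau L'}{2L} = \tau/20$-close to some $k$-$\XOR$ function. It also has $(1-2^{-0.1(L/L')}, 5\eps/3) = (1-2^{-1}, 5/12)$-typical completeness.

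Next we translate $\calB$ into an LPN adversary. Fix a matrix $A$ whose rows $x_1,\dots,x_m$ are drawn i.i.d.\ from $\calD$, and define the circuit $\calB_A(y) := \calB((x_1,y_1),\dots,(x_m,y_m))$. This is a polynomial-size nondeterministic circuit with $A$ hardwired as advice.

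Soundness against $A\vec{s}+\vec{e}$ goes as follows. Take $\vec{s}$ $k$-sparse and $\vec{e}$ $(\mu m)$-sparse. The function $a\mapsto\langle a,\vec{s}\rangle$ is a $k$-$\XOR$ that agrees with $y = A\vec{s}+\vec{e}$ on at least a $1-\mu$ fraction of the coordinates. Since $\tau > 20\mu$, we have $\mu < \tau/20$, so soundness of $\calB$ forces $\calB_A(y)$ to reject. We should check that $k$-$\XOR$ in the circuit class means parities of at most $k$ variables, which covers every $k$-sparse secret, and that the size parameter $s$ suffices. This is presumably how $s$ is implicitly chosen.

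For acceptance, typical completeness gives: with probability at least $1-5\eps/3 \ge 7/12 > 1/2$ over $A$, the circuit $\calB_A$ accepts at least a $1/2$ fraction of uniformly random $y$. This beats the required threshold $1/2-\delta$.

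The remaining step, which is the main subtlety, is to reconcile quantifiers. The assumption says that for at least half of the matrices $A$, no adversary exists. We have shown that for more than half of the matrices $A$ (strictly, at least $7/12$), the adversary $\calB_A$ exists. These two sets of matrices must intersect, giving a matrix $A$ for which $\calB_A$ breaks the assumption, a contradiction. The tightness of this step is exactly why the success probability is taken to be $3/4$: it makes $5\eps/3 = 5/12 < 1/2$.

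I would also double-check two things. First, that the theorem's hypothesis $\eps \in (0, 3/5)$ holds, which it does. Second, that the parameter condition \eqref{eq:sparse-secret-LPN-parameters} plays no role beyond non-vacuity.
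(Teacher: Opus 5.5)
Your proposal is correct and follows the paper's proof essentially step for step: apply the second bullet of \autoref{thm: PartialAvgHard to PAC learning for NP} with $L'=L/10$ (so $\Delta=1$) to get a $\tau/20$-agnostic learner with $(1/2,5/12)$-typical completeness, hardwire $A$ to obtain $\calB_A$, verify soundness via $\mu<\tau/20$, and intersect the set of matrices of measure at least $7/12$ with the set of measure at least $1/2$ from the assumption. The only cosmetic difference is that the paper simply fixes $\eps:=1/4$, which also avoids the edge case $\eps=0$ excluded by the theorem's hypothesis $\eps\in(0,3/5)$.
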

\begin{proof}
    Suppose for contradiction that such a heuristic $\calA$ exists.
    Set $L':=L/10$ and $\eps:=1/4$.
    Then $\Delta:=0.1L/L'=1$.
    By \autoref{thm: PartialAvgHard to PAC learning for NP},
    there is a $\SearchNP$ algorithm $\calB$ that
    $\tau/20$-agnostically PAC-learns $k$-$\XOR$ using $L$
    samples and has $(1/2,5/12)$-typical completeness. Equivalently,
    \[
    \Pr_{A\leftarrow\calD^L}
    \left[
        \Pr_{y\leftarrow\{0,1\}^L}
        \left[
            \calB\bigl((a_i,y_i)_{i\in[L]}\bigr)
            \text{ accepts}
        \right]
        \ge \frac12
    \right]
    \ge \frac{7}{12},
    \]
    where $a_1,\ldots,a_L$ are the rows of $A$.

    For each fixed matrix $A$, define
    \[
        \calB_A(y)
        :=
        \calB\bigl((a_i,y_i)_{i\in[L]}\bigr).
    \]
    Since $A$ is fixed, it can be hardwired into $\calB$, so
    $\calB_A$ is a polynomial-size non-uniform nondeterministic
    circuit.

    We claim that $\calB_A$ rejects every sparse-secret LPN string.
    Indeed, suppose $y=A\vec{s}+\vec{e}$, $|\vec{s}|\le k$, and $|\vec{e}|\le\mu L$.
    The function $f_{\vec{s}}(a):=\abra{a,\vec{s}}$ is a $k$-$\XOR$, and $\delta(y,(f_{\vec{s}}(a_i))_{i\in[L]})\le\mu<\tau/20$.
    Therefore, the soundness of the agnostic learner implies that
    $\calB_A(y)$ rejects.

    Thus, for at least a $7/12$ fraction of
    $A\leftarrow\calD^L$, there is a sound nondeterministic circuit
    $\calB_A$ accepting at least half of all uniformly random
    right-hand sides.

    On the other hand,
    \hyperref[assumption: demi-hardness of random-Sparse-LPN]{Sparse-Secret LPN Assumption for Random Matrices}
    states that a set of matrices of probability at least $1/2$
    admits no sound circuit accepting even a
    $1/2-\delta$ fraction of uniformly random right-hand sides.
    Since $7/12+1/2>1$, these two sets of matrices intersect. For a matrix in the
    intersection, $\calB_A$ accepts at least $1/2\ge1/2-\delta$ of all random strings while rejecting every sparse-secret LPN
    string, contradicting the assumption.
\end{proof}

\subsubsection{Hardness from Random \texorpdfstring{$k$}{k}-SAT}
Finally, we observe that in suitable parameter regimes, $\DNF$-$\PartialHard$ reduces to both $\AND$-$\PartialAvgHard$ and $\XOR$-$\PartialAvgHard$. This is because both conjunctions and parities can weakly approximate DNFs:
\begin{fact}\label{fact: conjunctions and parities weakly approximate DNFs}
    Let $F = T_1\lor T_2 \lor\dots \lor T_t$ be a DNF with $t\ge1$ terms, each of width at most $w$. Then, for every distribution $\calD$ over the input space, there exist a conjunction $C$ of width at most $w$ and a possibly complemented parity function $P$ on at most $w$ variables that both have $\Omega(1/t)$ correlation with $F$ over $\calD$:
    \begin{align*}
        \Pr_{x\gets \calD}[C(x) = F(x)]\ge&\, \frac{1}{2} + \frac{1}{O(t)}; & \textnormal{(\cite[Section 3.5]{schapire90})}\\
        \text{ and }\Pr_{x\gets \calD}[P(x) = F(x)]\ge&\, \frac{1}{2} + \frac{1}{O(t)}. & \textnormal{(\cite[Fact 8]{Jackson97})}
    \end{align*}
    Here constant functions are allowed.
\end{fact}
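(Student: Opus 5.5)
We need to prove the Fact: a DNF with $t$ terms of width $\le w$ has a conjunction of width $\le w$ (or a constant) and a possibly complemented parity on $\le w$ variables (or a constant), each with agreement $\ge 1/2 + 1/O(t)$ over any distribution $\calD$. The plan is a single averaging argument, done separately for the two classes.

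\emph{Conjunctions.} Let $p := \Pr_{x\gets\calD}[F(x)=1]$. If $p \le 1/2 - 1/(4t)$, the constant $0$ agrees with $F$ with probability $1-p \ge 1/2 + 1/(4t)$. Otherwise, by a union bound some term $T_j$ satisfies $\Pr[T_j(x)=1] \ge p/t \ge 1/(4t)$ (using $t\ge1$, so $p\ge 1/4$ when $p > 1/2-1/(4t)$). Take $C := T_j$. Since $T_j(x)=1$ implies $F(x)=1$, $C$ errs only when $F(x)=1$ and $T_j(x)=0$, so
\[\Pr[C(x)=F(x)] = 1 - p + \Pr[T_j(x)=1] \ge 1 - p + p/t.\]
If $p$ is close to $1$, this is already large. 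The delicate range is $p \in (1/2 - 1/(4t), 1)$: there $1-p+p/t \ge 1/2 + 1/(4t)$ holds whenever $p \le 1/2 + c/t$ for a suitable small constant $c$. For larger $p$, use the constant $1$ instead, which agrees with probability $p \ge 1/2 + c/t$. So in every regime one of the constant $0$, the constant $1$, or $T_j$ achieves $1/2 + \Omega(1/t)$. Working out the constants to cover all three regimes is routine; this is the argument of~\cite[Section 3.5]{schapire90}.

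\emph{Parities.} Switch to $\pm1$ notation. If $|\Ex_\calD[F]|$ is $\Omega(1/t)$, a constant function works, so assume $F$ is nearly balanced. Then, as above, some term $T=T_j$ has $\Pr[T=1]\ge \Omega(1/t)$ and $F=1$ on $T^{-1}(1)$. Consider the $0/1$ indicator $\mathbb{1}_T$, which depends on a set $S$ of at most $w$ variables. Its Fourier expansion over $\{\chi_U\}_{U\subseteq S}$ is
\[\mathbb{1}_T(x) = 2^{-|S|}\sum_{U\subseteq S}\pm\chi_U(x).\]
Pair $\mathbb{1}_T$ with $g := (F - \Ex[F]\cdot\mathbf{1})$, or more simply with $F$ plus a constant correction. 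Then $\Ex_\calD[F\cdot \mathbb{1}_T] = \Pr[T=1]$, which exceeds $\Pr[T=1]\Ex[F]$ by $\Omega(1/t)$ because $F$ is nearly balanced. Expanding $\mathbb{1}_T$ gives an average over $U\subseteq S$ of $\pm\Ex_\calD[F\chi_U]$, minus the correction term. Since the coefficients have magnitude exactly $2^{-|S|}$ and there are $2^{|S|}$ of them, some $U$ has $|\Ex_\calD[F\chi_U]| \ge \Omega(1/t)$.

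The main obstacle is handling the constant term $U=\emptyset$ and the non-uniformity of $\calD$ cleanly; the correction by $\Ex[F]$ is what addresses both. Finally, the resulting correlation translates to agreement $1/2 + \Omega(1/t)$ for $\pm\chi_U$, a possibly complemented parity on $|U|\le w$ variables, or a constant when $U=\emptyset$. This is~\cite[Fact 8]{Jackson97}.
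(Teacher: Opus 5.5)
Your proposal is correct and is the standard argument behind the two references the paper cites for this fact (the paper gives no proof of its own): a heavy term or a constant for conjunctions, and the Fourier expansion of a heavy term's indicator for parities. One simplification: the $\Ex[F]$ correction is unnecessary, since for every $\calD$ the identity $\Ex_\calD[F\cdot\mathbf{1}_T]=\Pr_\calD[T=1]$ together with the expansion of $\mathbf{1}_T$ already gives $\max_{U\subseteq S}|\Ex_\calD[F\chi_U]|\ge\Pr_\calD[T=1]$, and the case $U=\emptyset$ is just a constant function, which is allowed, so the ``obstacle'' you mention does not arise.
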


This allows us to derive hardness of $\AND$-$\PartialAvgHard$ and $\XOR$-$\PartialAvgHard$ directly from \autoref{cor: hardness of DNF-HPTT from random k-SAT}.

\begin{corollary}\label{cor: hardness of AND-AvgHPTT from random k-SAT}
    Under the \hyperref[assumption: nondeterministic Feige]{Random $k$-SAT Hypothesis Against $\AM$}, for every constant $\eta\in(0,1)$ and integer $c\ge1$, there are an integer $q\ge1$ and an input distribution $\calD$ over $\{0,1\}^n$ such that neither $\AND_{\lceil n^\eta\rceil}$-$\PartialAvgHard$ nor $\XOR_{\lceil n^\eta\rceil}$-$\PartialAvgHard$ has an errorless $\SearchNP$ heuristic over $\calD$ on domain size $n^c$, with remoteness $\frac12-\frac{1}{q\log n}$ and success probability $0.999$.
\end{corollary}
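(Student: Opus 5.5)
We argue by contraposition, reducing to \autoref{cor: hardness of DNF-HPTT from random k-SAT}. Fix $\eta$ and $c$. Given $c$, we will pick the DNF parameters (number of clauses $m=n^{c'}$, hence $k$ via \autoref{assumption: nondeterministic Feige}, top fanin $t=O(2^k\log n)=q'\log n$, size $s=O(nt)$) so that the instances produced by the \cite{DanielyS16} reduction live on input length $n'=2nt$. The domain size in \autoref{cor: hardness of DNF-HPTT from random k-SAT} is $L'=m/(20t\log(1/\eps))$, and we choose $c'$ so that $L'=(n')^{c}$. The distribution $\calD$ is the one in \autoref{thm: reduction in DS16}.

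\textbf{Plan.} Suppose $\calA$ is an errorless $\SearchNP$ heuristic for $\AND_{\lceil n^\eta\rceil}$-$\PartialAvgHard$ (respectively $\XOR_{\lceil n^\eta\rceil}$-$\PartialAvgHard$) over $\calD$ with remoteness $\frac12-\frac1{q\log n}$ and success probability $0.999$. We claim that the same algorithm $\calA$ is an errorless heuristic for $\DNF$-$\PartialHard$ for the class $\ckt$ of \autoref{cor: hardness of DNF-HPTT from random k-SAT}.

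\textbf{Completeness.} This is inherited directly, since the input distribution is the same.

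\textbf{Soundness.} Suppose $\calA$ outputs $\vec b$ on $\vec x$, but some DNF $F\in\ckt$ satisfies $F(x_i)=b_i$ for all $i$. Apply \autoref{fact: conjunctions and parities weakly approximate DNFs} with $\calD$ taken to be the uniform distribution over the domain $\{x_i\}$. This yields a conjunction (or a possibly complemented parity) $C$ of width at most $t$ that agrees with $F$, and hence with $\vec b$, on at least a $\frac12+\frac1{O(s)}$ fraction of the $x_i$.

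Here we need the number of terms to be small. The DNF has size $s=O(nt)$ terms, which is too many: the correlation $1/O(nt)$ is far below $1/(q\log n)$. So the main obstacle is that the DNFs coming from \autoref{thm: reduction in DS16} must have only $O(t)$ or $\polylog$ many terms, not $O(nt)$.

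To handle this, I would inspect the \cite{DanielyS16} construction (or its dual after De Morgan). Its DNFs have $t$ terms, each of width roughly $2n$ or so. The top fanin $t$ is the number of terms, and the size $O(nt)$ counts literals. Reading it this way, the correlation is $\frac12+\frac1{O(t)}=\frac12+\frac1{O(q'\log n)}$, and each term has width at most $O(n)\le (n')^\eta$ after choosing the input length appropriately. If needed, I would pad the inputs with dummy variables so that $n'$ is large enough relative to the term width $n$, i.e.\ choose $n'\ge n^{1/\eta}$. Setting $q$ larger than the hidden constant times $q'$, the agreement exceeds $\frac12+\frac1{q\log n}$, so the distance is below $\frac12-\frac1{q\log n}$. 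This contradicts the remoteness guaranteed by soundness of $\calA$. Constant functions are permitted in the classes, which covers the degenerate cases.

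\textbf{Conclusion.} Then \autoref{cor: hardness of DNF-HPTT from random k-SAT} with $\eps$ a small constant gives an $\AM$ refuter for random $k$-SAT with error $0.01$, contradicting \autoref{assumption: nondeterministic Feige}.

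The delicate bookkeeping is the polynomial relations among $n,n',m,L'$ needed to get domain size exactly $(n')^c$ and width $\le (n')^\eta$. Both are achievable by taking $c'$ large, padding the inputs, and restricting the domain to a prefix, since restricting to a prefix preserves the i.i.d.\ structure.
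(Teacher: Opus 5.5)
Your overall route matches the paper's. You use \autoref{fact: conjunctions and parities weakly approximate DNFs} on the uniform distribution over the domain to show that labels far from all width-bounded conjunctions (parities) cannot be computed by any DNF with top fanin $t$. You then pad the inputs so the width fits under $\lceil \tilde n^\eta\rceil$, restrict to a prefix of the $L'$ samples, and invoke \autoref{cor: hardness of DNF-HPTT from random k-SAT} with $\eps=0.01$, which needs success probability $0.9975\le 0.999$. Your eventual reading of $t$ as the number of terms is also correct.

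However, the soundness step has a gap. The Fact produces a conjunction that may contain negated literals or be a constant, and a \emph{possibly complemented} parity. You simply assert that these lie in $\AND_{\lceil n^\eta\rceil}$ and $\XOR_{\lceil n^\eta\rceil}$; your remark that ``constant functions are permitted'' conflates the Fact's remark (the approximator may be constant) with membership in the target class. For parities this fails under the natural reading of $\XOR_w$ as the functions $x\mapsto\bigoplus_{i\in S}x_i$ with $|S|\le w$, as in the $k$-$\XOR$ class of \autoref{cor: sparse secret LPN to PartialAvgHard}. If $\neg P$ agrees with $\vec b$ on a $\frac12+\gamma$ fraction, then $P$ is at distance $\frac12+\gamma$ from $\vec b$. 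That is perfectly consistent with remoteness $\frac12-\frac1{q\log n}$, so no contradiction arises. Likewise, a class of monotone conjunctions contains neither negated literals nor the constant $0$.

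Padding with ``dummy variables'' does not fix this. The paper's padding is specifically $x\mapsto(x,\neg x,1,0,\dots,0)$, chosen for this purpose:
\begin{itemize}
\item negated literals become positive variables;
\item the $0$ coordinate realizes the constant $0$;
\item $\neg P(x)=P(x)\oplus 1$ becomes a genuine parity that includes the constant-one coordinate.
\end{itemize}
This costs one extra variable, which is why the paper requires $n'+1\le\lceil\tilde n^\eta\rceil$.

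There is also a smaller issue with the width bound. Since you invoke \autoref{cor: hardness of DNF-HPTT from random k-SAT} as a black box, your heuristic must defeat \emph{every} DNF in $\ckt$ (size $s=O(nt)$, top fanin $t$), not only the DNFs produced by the \cite{DanielyS16} reduction. Terms of a general DNF in $\ckt$ can have width up to $n'=2nt$, not $O(n)$, so the bound to use is $n'$, which is the paper's choice. This is harmless, because padding to $\tilde n=n^a$ with $a>1/\eta$ absorbs it.

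Three further small points:
\begin{itemize}
\item In your soundness paragraph, ``width at most $t$'' should be the term width, not the top fanin.
\item The exact equality $L'=(n')^c$ cannot generally be arranged. Your prefix restriction handles it, provided the remaining samples get arbitrary labels, which preserves hardness; this is exactly as in the paper.
\item The distribution must be defined at every input length. The paper uses $n=\lfloor\tilde n^{1/a}\rfloor$ for lengths not of the form $n^a$.
\end{itemize}
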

\begin{proof}
    We first claim that every errorless $\SearchNP$ heuristic producing labels at relative distance at least $\frac12-\frac1{Kt}$ from all width-$w$ conjunctions also gives an errorless $\SearchNP$ heuristic for $\PartialHard$ against size-$s$ DNFs with top fanin at most $t$ and term width at most $w$, for a sufficiently large absolute constant $K$. Here conjunctions may use negated literals and constants. Indeed, let $\{x_i \mapsto b_i\}_{i\in[L]}$ be a partial function computable by such a DNF $F$, and let $\calD$ be the uniform distribution on the indexed samples. Then by \autoref{fact: conjunctions and parities weakly approximate DNFs}, there is a width-$w$ conjunction $C$ such that
    \[
        \Pr_{i\gets[L]}[C(x_i)=b_i]\ge\frac12+\frac{1}{Kt}.
    \]
    Hence, any partial function at distance at least $\frac12-\frac1{Kt}$ from all width-$w$ conjunctions cannot be computed by such a DNF.

    Fix $\eta,c$, choose integers $a>1/\eta$ and $d>ac$, and choose $k$ from the hypothesis for density exponent $d$. Let $t:=O(d2^k\log n)$. By \autoref{cor: hardness of DNF-HPTT from random k-SAT}, to refute random $k$-SAT instances on $n$ variables with $n^d$ clauses with error $0.01$ by an $\AM$ algorithm, it suffices to design an errorless $\SearchNP$ heuristic for $\ckt$-$\PartialHard$ over $\calD$ on domain size
    \[
        L':=\left\lfloor\frac{n^d}{20t\log 100}\right\rfloor
    \]
    with success probability at least $0.9975$, where $\ckt$ is the class of DNFs over $n':=2nt=O_{k,d}(n\log n)$ inputs with size $s:=O(nt)$ and top fanin $t$, and $\calD$ is the distribution in \autoref{thm: reduction in DS16}.

    Set $\widetilde{n}=n^a$ and pad each input $x$ to $(x,\neg x,1,0,\ldots,0)\in\{0,1\}^{\widetilde{n}}$; let $\widetilde{\calD}$ be the resulting distribution. For all sufficiently large $n$, $2n'+2\le \widetilde{n}$, $n'+1\le\lceil \widetilde{n}^\eta\rceil$, $\widetilde{n}^c\le L'$.
    Each DNF term has width at most $n'$ after simplification, and the padding realizes negated literals and constants. Choose an integer $q$ large enough that $Kt\le q\log \widetilde{n}$. If the stated $\AND_{\lceil \widetilde{n}^\eta\rceil}$-$\PartialAvgHard$ heuristic exists, apply it to the padded first $\widetilde{n}^c$ of the $L'$ inputs and extend its output by arbitrary labels. By the first claim, no DNF in $\ckt$ computes these labels, already on the first $\widetilde{n}^c$ inputs. Its success probability remains $0.999\ge0.9975$, giving the required $\ckt$-$\PartialHard$ heuristic and contradicting the hypothesis. For other input lengths, define $\widetilde{\calD}$ using $n=\lfloor \widetilde{n}^{1/a}\rfloor$; the contradiction uses $\widetilde{n}=n^a$.

    The proof for $\XOR_{\lceil \widetilde{n}^\eta\rceil}$-$\PartialAvgHard$ is analogous: use the parity guarantee in~\autoref{fact: conjunctions and parities weakly approximate DNFs} and the constant-one coordinate to realize complemented parities with at most one additional input.
\end{proof}

However, compared to~\autoref{cor: sparse secret LPN to PartialAvgHard}, one weakness of \autoref{cor: hardness of AND-AvgHPTT from random k-SAT} is that it only derives hardness of $\PartialAvgHard$ with an inapproximability parameter very close to $1/2$.

\section*{AI Disclosure}
During the preparation of this work, large language models (Gemini and Claude) were employed to aid in literature review, parameter computation, and the proofs in~\autoref{sec: disperser families}. The authors verified the correctness and originality of all content including references, and take full responsibility for the contents in this paper.

\ifnum\Anonymity=0
\section*{Acknowledgments}
    We thank Yunqi Li for discussions regarding~\cite{AlonPY09}, Jan \Krajicek for pointing us to a proof of~\autoref{thm: one proof complexity generator against all proof systems} using an optimal proof system with one bit of advice, and anonymous reviewers for helpful comments.
\fi

\printbibliography[heading=bibintoc]

\appendix

\section{Omitted Proofs}
\subsection{Proof of \autoref{fact: AM adversaries to NPpoly adversaries}}\label{appendix: proof of AM to NPpoly}
\FactAMtoNPpoly*
\begin{proof}

    Let $\calA(C, r, w)$ denote an $\AM$ algorithm that refutes random $k$-SAT with $m(n)$ clauses with error $\eps$, where $r$ denotes Arthur's randomness and $w$ denotes Merlin's proof. We repeat this algorithm $t := \poly(m(n), \log (1/\eps))$ times: let $\calB(C, r_{1 \dots t}, w_{1\dots t})$ denote another $\AM$ algorithm that accepts if the majority of $\calA(C, r_1, w_1), \dots, \calA(C, r_t, w_t)$ accepts. As usual, we use ``$\calB(C)$ accepts'' to denote the event (over the choice of $r_{1\dots t}$) that there exists $w_{1\dots t}$ such that $\calB(C, r_{1\dots t}, w_{1\dots t})$ accepts. We can see that:
    \begin{itemize}
        \item If $C$ is a satisfiable $k$-CNF with $m(n)$ clauses, then by a Chernoff bound, $\Pr[\calB(C)\text{ accepts}]\le 2^{-\Omega(t)}$.
        \item We say that a $k$-CNF is \emph{good} if $\Pr[\calA(C)\text{ accepts}] \ge 3/4$, then by a Markov bound, a random $k$-CNF is good w.p.~$\ge 1-4\eps$. If $C$ is good, then by a Chernoff bound, $\Pr[\calB(C)\text{ accepts}] \ge 1-2^{-\Omega(t)}\ge 1-\eps$. It follows from a union bound that a random $k$-CNF is accepted by $\calB$ with probability $\ge 1-5\eps$.
    \end{itemize}
    Now choose $r_{1\dots t}$ uniformly at random and hardwire it into the algorithm $\calB$. Call the resulting algorithm $\calB'$; clearly, $\calB'$ is an $\NP/_\poly$ algorithm. We claim that with nonzero probability, $\calB'$ refutes random $k$-SAT with $m(n)$ clauses.
    \begin{itemize}
        \item Since there are only $2^{\tilde{O}(m(n))}$ many $k$-CNFs, it follows from a union bound that with probability at least $1 - 2^{-\Omega(t)}\cdot 2^{m(n)}\ge 0.99$, $\calB'$ rejects every satisfiable $k$-CNF.
        \item By a Markov bound, at least half of random strings $r_{1\dots t}$ satisfy that $\calB'$ accepts a random $k$-CNF with probability at least $1 - 10\eps$.
    \end{itemize}
    Hence, with probability at least $0.49$, the algorithm $\calB'$ is an $\NP/_\poly$ algorithm that refutes random $k$-SAT with $m(n)$ clauses with error $10\eps$.
\end{proof}

\subsection{Proof of \autoref{thm: reduction in DS16}}\label{appendix: proof of DS16}
\ThmDS*
\begin{table}[h]
    \centering
    \begin{tabular}{|c|c|c||c||c|c|c|c|}
        \hline
        \multicolumn{3}{|c||}{\bf Random $k$-SAT} &  & \multicolumn{4}{|c|}{\bf Learning CNFs} \\
        \hline
        $n$ & $k$ & $m$ & $\delta$ & $s$ & $t$ & $L$ & $n'$\\
        \hline 
        \#inputs & arity & \#clauses & fail.~prob. & CNF size & top fanin & domain size & input length\\
        \hline
        & & & & $O(nt)$ & $O(2^k\log(m/\delta))$ & $\lfloor m/t\rfloor$ & $2nt$\\
        \hline
    \end{tabular}
    \caption{Parameters in \autoref{thm: reduction in DS16}}
\end{table}

\begin{proof}
    Let $\varphi(x) = C_1(x) \land C_2(x) \land\dots \land C_m(x)$ denote the input CNF, where $C_i$ denotes the $i$-th clause of $\varphi$. Let $t := O(2^k \log(m/\delta))$, we first group the clauses into CNFs of top fanin $t$. That is, we define
    \[C'_i(x) = C_{(i-1)t + 1}(x) \land C_{(i-1)t + 2}(x) \land \dots \land C_{it}(x).\]
    Consider the list $\{(C'_i, 1)\}_{i\in [\lfloor m/t\rfloor]}$. We say that an input $x$ \emph{satisfies} the list if for every $i\in [\lfloor m/t\rfloor]$, $C'_i(x) = 1$. Observe that if $\varphi$ is satisfiable, then the same assignment satisfies the list; if $\varphi$ is a random $k$-CNF, then every $C'_i$ is a random $k$-CNF of top fanin $t$.

    Then, we replace each item in the list $\{(C'_i, 1)\}$ with $(\text{random $k$-CNF}, 0)$ with probability $1/2$. More precisely, we obtain a new list $\{(C''_i, b_i)\}$ where for each $i\in [\lfloor m/t\rfloor]$:
    \begin{itemize}
        \item with probability $1/2$, $C''_i = C'_i$ and $b_i = 1$;
        \item otherwise, $C''_i$ is a random $k$-CNF of top fanin $t$, and $b_i = 0$.
    \end{itemize}
    Note that if $\varphi$ is satisfied by an assignment $x$, then with probability $\ge 1-\delta$, $x$ also satisfies the new list in the sense that $C''_i(x) = b_i$ for every $i \in [\lfloor m/t\rfloor]$. This is because the probability that $x$ satisfies a random $k$-CNF is at most $(1-2^{-k})^t \le \delta/m$; hence by a union bound, with probability at least $1-\delta$, for every $i \in [\lfloor m/t\rfloor]$, if $b_i = 0$ then $C''_i(x) = 0$. On the other hand, if $\varphi$ is a random $k$-CNF, then every $C''_i$ is a random $k$-CNF of top fanin $t$ and every $b_i$ is a random bit.

    Finally, it is well known that one can encode a $k$-CNF $C$ over $n$ input bits with top fanin $t$ into $2n\cdot t$ bits $\langle C\rangle \in \{0, 1\}^{2n\cdot t}$, and transform an input $x \in \{0, 1\}^n$ into a CNF $\phi_x$ over $2n\cdot t$ input bits with size $s = O(nt)$ and top fanin $t$, such that $C(x) = \phi_x(\langle C\rangle)$. (See, e.g., \cite[Proposition 10]{Vadhan17}.) Our reduction $\calR(\varphi)$ outputs the partial function
    \[\{(\langle C''_i\rangle, b_i)\}_{i \in [\lfloor m/t\rfloor]}.\]

    Suppose that $\varphi$ is satisfied by an assignment $x$. Then with probability $\ge 1-\delta$, the CNF $\phi_x$ computes the above partial function. On the other hand, let $\calD$ denote the encoding of a random $k$-CNF with top fanin $t$. If $\varphi$ is a random $k$-CNF, then every $(\langle C''_i\rangle, b_i)$ is an i.i.d.~sample from $\calD \times \{0, 1\}$.
\end{proof}

\hanlin{Perhaps we should also incorporate ``agnostic PAC learning = strong RRHS refutation'' and ``refuter implies learning'' into this section.}

\section{Literature on Refuting CSP}\label{sec:csp-literature}

The table below summarizes known results on CSP refutation,
following the presentation in Table~1 of~\cite{AOW15}.
Throughout, the arity $k$ is fixed and $m=\poly(n)$.
Unless otherwise specified, the instances are fully random.

The ``CSP'' column specifies the predicate and instance model.
The ``Strength'' column describes the certified upper bound on
the fraction of simultaneously satisfiable constraints:
a ``refutation'' certifies unsatisfiability;
a ``$1/n^{\Omega(1)}$-refutation'' certifies an upper bound of
$1-1/n^{\Omega(1)}$;
and a ``strong'' refutation, as used here, certifies an upper bound of
$\rho(P)+\eta$ for any fixed $\eta>0$, where
$\rho(P):=|P^{-1}(1)|/2^k$.
The last guarantee is near-optimal, since a uniformly random assignment
satisfies exactly a $\rho(P)$ fraction of the constraints in expectation.

The ``Eff./Exist.''~column distinguishes \emph{efficient} refutation
by a deterministic polynomial-time algorithm from
\emph{existential} refutation by polynomial-size certificates
verifiable in deterministic polynomial time (i.e.,~a \emph{nondeterministic} polynomial-time
refutation algorithm). %
All certificates are sound on every input instance, while their
existence or efficient discovery is guaranteed with high probability
over the randomness of the specified instance model.

In the bounds below, implicit constants may depend on $k$
and, for strong refutation, on $\eta$.
The notation $\widetilde{\Omega}$ suppresses polylogarithmic
factors in $n$.
For the smoothed model, all literal-perturbation rates are fixed
positive constants.
\begin{table}[ht]
\centering \small
\renewcommand{\arraystretch}{1.4}
\begin{tabular}{|c|c|c|c|c|}
\hline
\textbf{CSP}
& \textbf{Regime for $m$}
& \textbf{Strength}
& \textbf{Eff./Exist.}
& \textbf{Reference} \\
\hline
3-SAT
& $\Omega(n^{7/5})$
& Refutation
& Existential
& \cite{FKO06} \\
\hline
3-$\XOR$
& $m\ge 2n$
& $1/n^{\Omega(1)}$-refutation
& Efficient
& Gaussian elimination \\
\hline
3-SAT
& $\Omega(n^{3/2})$
& Refutation
& Efficient
& Claimed in \cite{FKO06} \\
\hline
Boolean $k$-CSP
& $\widetilde{\Omega}(n^{\lceil k/2\rceil})$
& Strong
& Efficient
& \cite{COCF10} \\
\hline
$k$-$\XOR$
& $\widetilde{\Omega}(n^{k/2})$
& Strong
& Efficient
& \cite{BM16,AOW15} \\
\hline
Random $k$-CSP
& $\widetilde{\Omega}(n^{k/2})$
& Strong
& Efficient
& \cite{AOW15,RSS17} \\
\hline
Semi-random $k$-$\XOR$/CSP
& $\widetilde{\Omega}(n^{k/2})$
& Strong
& Efficient
& \cite{AGK21} \\
\hline
Smoothed $k$-CSP
& $\widetilde{\Omega}
   (n^{\frac{k}{2}-\frac{k-2}{2(k+8)}})$
& Refutation
& Existential
& \cite{GKM22} \\
\hline
\end{tabular}
\caption{Known results on deterministic and nondeterministic
CSP refutation.
For smoothed $3$-CSPs with constant smoothing rates,
\cite{GKM22} improves the general bound to
$\widetilde{O}(n^{7/5})$ constraints. Note that as instance models,
random $\subset$ semi-random $\subset$ smoothed, so
results for more general models subsume those for
more restricted ones.}
\label{tab:results}
\end{table}

The lower bounds of~\cite{KMOW17} show that polynomial-time
sum-of-squares refutation of fully random $k$-$\XOR$ requires
$\widetilde{\Omega}(n^{k/2})$ constraints, essentially matching
the corresponding upper bound.
These results provide evidence for the optimality of the spectral
threshold for strong refutation, but do not establish a lower bound
for unrestricted polynomial-time or nondeterministic algorithms.

\section{Equivalence between Agnostic PAC-learning \texorpdfstring{$\ckt$}{C} and Strong RRHS-refutation of \texorpdfstring{$\ckt^*$}{C*}}\label{sec:agnostic}

We prove the equivalence between agnostic PAC-learning and strong RRHS-refutation, extending~\cite[Theorem~5]{Vadhan17} (which establishes PAC-learning $\iff$ RRHS-refutation in the realizable setting) to the agnostic setting. Throughout, we use the notation in~\cite{Vadhan17}: an evaluation function $\mathsf{Eval}\colon \{0,1\}^s \times \{0,1\}^t \to \{0,1\}$ gives rise to a circuit class $\mathscr{C} = \{C_x\}_{x \in \{0,1\}^s}$ and its dual $\mathscr{C}^* = \{C^*_y\}_{y \in \{0,1\}^t}$, where $C_x(y) = C^*_y(x) = \mathsf{Eval}(x,y)$. We suppress unary encodings of the accuracy parameters from the displayed algorithm inputs, and all probabilities include the algorithms' internal randomness.

We begin by recalling the relevant definitions.

\begin{definition}[Agnostic PAC-learning, prediction version]\label{def:agnostic-pac}
  $\mathscr{C}$ is \emph{agnostically PAC-learnable} with sample complexity $m = m(s, t, 1/\eps)$ if there is a polynomial-time algorithm~$\calA$ such that for every $0<\eps\le 1/2$, every $s, t \in \mathbb{N}$, and every distribution~$\calD$ on $\{0,1\}^t \times \{0,1\}$, given $m+1$ i.i.d.~samples $(y_1, b_1), \ldots, (y_{m+1}, b_{m+1})$ from $\calD$, it holds that
  \[
    \Pr\!\Big[\calA\!\big(1^s, 1^t, (y_1, b_1), \ldots, (y_m, b_m),\; y_{m+1}\big) = b_{m+1}\Big]
    \;\ge\;
    \max_{x \in \{0,1\}^s} \Pr_{(y,b) \leftarrow \calD}\!\big[C_x(y) = b\big] - \eps.
  \]
  We say $\mathscr{C}$ is agnostically PAC-learnable if $m(s,t,1/\eps) \leq \mathrm{poly}(s, t, 1/\eps)$.
\end{definition}

\begin{remark}\label{rem:realizable-reduction}
  In the realizable case (i.e., $b = C_{x^*}(y)$ for some $x^*$), the maximum equals~$1$ and the definition reduces to requiring prediction probability $\ge 1 - \eps$, recovering~\cite[Definition~1]{Vadhan17} up to the parametrization of the constant.
\end{remark}

\begin{definition}[Strong RRHS-refutation]\label{def:strong-rrhs}
  For every $0<\gamma\le 1/2$, let $N=N(s,t,1/\gamma)$ satisfy $N \ge \lceil 2/\gamma^2 \rceil$.
  We say that $\mathscr{C}^*$ is \emph{strongly RRHS-refutable} with $N$ equations if there is a polynomial-time algorithm~$\calB$ such that for every such~$\gamma$ and every $y_1, \ldots, y_N \in \{0,1\}^t$:
  \begin{itemize}
    \item \textit{(Soundness)}
      For every $b_1, \ldots, b_N \in \{0,1\}$, if there exists $x \in \{0,1\}^s$ with
      \[
        \frac{1}{N}\big|\{i \in [N] : C^*_{y_i}(x) = b_i\}\big|
        \;\ge\;
        \tfrac{1}{2} + \gamma,
      \]
      then $\Pr\!\big[\calB(1^s, 1^t, (y_1, b_1), \ldots, (y_N, b_N)) = 1\big] \leq 1/3$.

    \item \textit{(Completeness)}
      If we sample i.i.d.~$b_1, \ldots, b_N \leftarrow \{0,1\}$, then
      $\Pr\!\big[\calB(1^s, 1^t, (y_1, b_1), \ldots, (y_N, b_N)) = 1\big] \ge 2/3$.
  \end{itemize}
  We say $\mathscr{C}^*$ is strongly RRHS-refutable if $N(s, t, 1/\gamma) \leq \mathrm{poly}(s, t, 1/\gamma)$.
\end{definition}

\begin{remark}\label{rem:standard-rrhs}
  At $\gamma=1/2$, the soundness condition reduces to that of standard RRHS-refutation~\cite[Definition~4]{Vadhan17}: it applies only when the system is \emph{fully satisfiable}. Strong refutation extends this condition to partial satisfiability.
\end{remark}

The main result of this appendix shows that these two notions are equivalent.

\begin{theorem}\label{thm:agnostic-rrhs}
  Let $\mathscr{C} = (\mathscr{C}^*)^*$ be a circuit class given by $\mathsf{Eval}\colon \{0,1\}^s \times \{0,1\}^t \to \{0,1\}$. Then:
  \begin{itemize}
    \item If $\mathscr{C}$ is agnostically PAC-learnable with sample complexity $m(s,t,1/\eps)$, then $\mathscr{C}^*$ is strongly RRHS-refutable using $N=O(m(s,t,3/\gamma)/\gamma^3)$ equations.

    \item If $\mathscr{C}^*$ is strongly RRHS-refutable using $N(s,t,1/\gamma)$ equations, then $\mathscr{C}$ is agnostically PAC-learnable with sample complexity $m=\mathrm{poly}(N(s,t,3/\eps))$.

  \end{itemize}
  In particular, $\mathscr{C}$ is agnostically PAC-learnable if and only if $\mathscr{C}^*$ is strongly RRHS-refutable.
\end{theorem}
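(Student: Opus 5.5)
The plan is to follow the two directions of Vadhan's realizable-case proof, replacing the ingredients that use realizability with agnostic analogues.

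\textbf{Learner $\Rightarrow$ refuter.} Given $(y_1,b_1),\dots,(y_N,b_N)$, the refuter $\calB$ runs the agnostic learner $\calA$ with accuracy $\eps:=\gamma/3$ in an online, leave-one-out fashion. We split the $N$ equations into blocks of $m+1$, where $m=m(s,t,3/\gamma)$. In each block, $\calA$ is given the first $m$ labeled samples and asked to predict the label of the last point. Alternatively, each of the $N$ points can be predicted from a random $m$-subset of the others, treating the empirical distribution over the indices as $\calD$.

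\emph{Soundness.} Suppose some $x$ agrees with a $\tfrac12+\gamma$ fraction of the equations. Take $\calD$ to be the uniform distribution on $\{(y_i,b_i)\}_{i\in[N]}$; sampling i.i.d.\ with replacement from it is legitimate. The agnostic guarantee then gives a per-trial prediction accuracy of at least $\tfrac12+\gamma-\gamma/3$.

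\emph{Completeness.} When the labels are uniformly random, the label of a fresh index is independent of the learner's view, so any prediction is correct with probability exactly $\tfrac12$.

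$\calB$ therefore runs $T=O(1/\gamma^2)$ trials and accepts iff the empirical success rate is below $\tfrac12+\gamma/3$; a Chernoff bound separates the two cases. The one subtlety is that with-replacement sampling may repeat the target index. To rule this out, condition on the $m+1$ indices being distinct (this costs a factor in $N$, which is why $N$ carries the extra $1/\gamma$ power), or simply exclude the target index from the training sample. This accounts for the bound $N=O(m/\gamma^3)$.

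\textbf{Refuter $\Rightarrow$ learner.} This is the harder direction, and I expect it to be the main obstacle. Following Vadhan and the paper's stated technique (Yao's next-bit prediction plus agnostic boosting), I would first build a \emph{weak} agnostic learner and then boost it.

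Fix $\calD$ and draw $N$ samples. Let $\mathrm{OPT}=\max_x\Pr[C_x(y)=b]$. The case of interest is $\mathrm{OPT}\ge\tfrac12+\gamma$, since otherwise predicting a random bit already suffices up to $\eps$, after a small case split handled by the boosting step. In that case, with high probability the sample is $(\tfrac12+\gamma/2)$-satisfiable, so $\calB$ rejects the true labels but accepts uniformly random labels.

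The hybrid argument goes as follows. Define hybrid $H_j$ as the first $j$ labels true and the remaining labels random. Some index $j$ gives $\calB$ an $\Omega(1)/N$ advantage between $H_{j-1}$ and $H_j$. Yao's next-bit predictor then guesses $b_j$ from $y_j$ and the true labels $b_{<j}$: it fills in random labels for positions $\ge j$, with a guess $r$ at position $j$, runs $\calB$, and outputs $r$ iff $\calB$ rejects. Since the $(y_i,b_i)$ are i.i.d., position $j$ can be replaced by the test point. This yields a predictor with advantage $\Omega(1/N)$ over $\tfrac12$.

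The delicate part is that this advantage is measured against $\tfrac12$ rather than against $\mathrm{OPT}$, so this gives only a weak agnostic learner. Converting it to accuracy $\mathrm{OPT}-\eps$ requires an agnostic boosting theorem, applied with the hypothesis class being $\mathscr C$ itself. The weak learner must work under every reweighted distribution $\calD'$ that the booster produces, and that holds here because our construction is distribution-free.

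Tracking parameters, the refuter is invoked at $\gamma=\eps/3$, and boosting uses $\poly(1/\eps)$ rounds, each with $\poly(N)$ samples. This gives $m=\poly(N(s,t,3/\eps))$.

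The main technical care lies in choosing an agnostic booster whose weak-learning requirement matches the guarantee above: advantage relative to the correlation of the best concept under $\calD'$, needed only when that correlation is at least $\gamma$. Such a booster is the Kalai–Kanade one, or the distribution-specific smooth booster of Feldman.
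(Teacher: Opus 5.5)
Your proposal is essentially the paper's proof. In the first direction the paper also runs the learner on i.i.d.\ samples from the empirical distribution of each of $O(1/\gamma^2)$ disjoint blocks of size $\lceil 10m/\gamma\rceil$ and takes a majority; in the second it turns the refuter into a real-versus-random distinguisher, applies Yao's predictor at a uniformly random hybrid index where the test point is inserted, and boosts via the guesser-to-weak-learner lemma and agnostic booster of~\cite{KMV08} rather than Kalai--Kanade or Feldman.

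Some details need tightening:
\begin{itemize}
\item Only a collision of the test index with a training index matters. The paper simply refutes on that event, which has probability at most $\gamma/10$. Requiring all $m+1$ indices to be distinct would force blocks of size about $m^2/\gamma$.
\item Your fixed ``train on the first $m$, predict the last'' blocking is unsound on adversarial instances.
\item In the second direction the paper restricts to $\mathrm{OPT}\ge\frac12+2\gamma$, so the sample is $(\frac12+\gamma)$-satisfiable and the threshold-$\gamma$ refuter's soundness applies. Your $(\frac12+\gamma/2)$-satisfiable sample would instead need the refuter at threshold $\gamma/2$.
\end{itemize}
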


\begin{proof}
We prove each direction in turn.

\paragraph{Agnostic PAC-learning $\implies$ Strong RRHS-refutation.}
Let $\calA$ be the agnostic PAC-learner with sample complexity $m(s, t, 1/\eps)$. We may assume that $m(s,t,1/\eps)\ge1$, since the learner can ignore an additional sample. Given the refutation threshold $1/2\ge \gamma > 0$, set $\eps = \gamma/3$, $m_0 = m(s, t, 3/\gamma)$, and $q=\lceil 10m_0/\gamma\rceil$.

Define the following one-block procedure~$\calB_0$ on input $(1^s, 1^t, (y_1, b_1), \ldots, (y_q, b_q))$ as follows:
\begin{itemize}
  \item Sample i.i.d.~$i_1, \ldots, i_{m_0+1} \leftarrow [q]$.
  \item Output~$1$ if and only if $\calA(1^s, 1^t, (y_{i_1}, b_{i_1}), \ldots, (y_{i_{m_0}}, b_{i_{m_0}}), y_{i_{m_0+1}}) \neq b_{i_{m_0+1}}$ or $i_{m_0+1}\in\{i_1,\ldots,i_{m_0}\}$.
\end{itemize}

\begin{claim}
    Let $E$ be the event of collision: $E:=\{i_{m_0+1}\in \{i_1,\dots,i_{m_0}\}\}$, then it holds that $\gamma/11\le \Pr[E] \le \gamma/10$.
\end{claim}
\begin{proof}
    Let $p:=\Pr[E]=1-(1-1/q)^{m_0}$. By the union bound,
    \[
        p\le\frac{m_0}{q}\le\frac{\gamma}{10}.
    \]
    For the other direction, writing $u=m_0/q$, we have
    \[
        u
        \ge
        \frac{\gamma m_0}{10m_0+\gamma}
        \ge
        \frac{\gamma}{10+\gamma}.
    \]
    Hence
    \[
        p
        \ge 1-e^{-u}
        \ge\frac{u}{1+u}
        \ge\frac{\gamma}{10+2\gamma}
        \ge\frac{\gamma}{11},
    \]
    where the last inequality uses $\gamma\le1/2$.
\end{proof}

\begin{description}
    \item[Soundness of $\calB_0$:] Suppose there exists $x^* \in \{0,1\}^s$ with $\frac{1}{q}|\{i : C^*_{y_i}(x^*) = b_i\}| \ge \frac{1}{2} + \gamma$. Define the empirical distribution~$\calD$ as uniform on the multiset $\{(y_i, b_i) : i \in [q]\}$. Then
    \[
      \max_x \Pr_{(y,b) \leftarrow \calD}\!\big[C_x(y) = b\big]
      \;\ge\;
      \Pr_\calD\!\big[C_{x^*}(y) = b\big]
      = \frac{1}{q}\big|\{i : C^*_{y_i}(x^*) = b_i\}\big|
      \;\ge\; \frac{1}{2} + \gamma.
    \]
    Since $i_1,\dots, i_{m_0+1}$ are sampled independently with replacement, the samples fed to~$\calA$ are i.i.d.~from~$\calD$, and the agnostic learning guarantee gives
    \[
      \Pr[\calA \text{ correct}]
      \;\ge\;
      \Big(\frac{1}{2} + \gamma\Big) - \eps
      = \frac{1}{2} + \gamma - \frac{\gamma}{3}
      = \frac{1}{2} + \frac{2\gamma}{3}.
    \]
    Accounting for collisions:
    \[
      \Pr[\calB_0 = 1]
      = \Pr[\calA \text{ incorrect} \lor E]
      \;\leq\;
      \Big(\frac{1}{2} - \frac{2\gamma}{3}\Big)  + \frac{\gamma}{10}
      = \frac{1}{2} - \frac{17\gamma}{30}.
    \]
    \item[Completeness of $\calB_0$:] Suppose $b_1, \ldots, b_q \leftarrow \{0,1\}$ are i.i.d.~samples. Conditioned on $i_{m_0+1} \notin \{i_1, \ldots, i_{m_0}\}$, the bit $b_{i_{m_0+1}}$ is uniform and independent of the view of~$\calA$, so $\calA$ predicts it correctly with probability exactly~$1/2$. Therefore
    \[
      \Pr[\calB_0 = 1]
      = \frac{1}{2}\Pr[i_{m_0+1} \notin \{i_1, \ldots, i_{m_0}\}]+ \Pr[i_{m_0+1} \in \{i_1, \ldots, i_{m_0}\}]
      \;\ge\; \frac{1}{2}+\frac{\gamma}{22}.
    \]   
\end{description}

\paragraph{Amplification.}
Let $R$ be the smallest odd integer satisfying $R\ge242\ln 3/\gamma^2$, and set $N=Rq$. Define the final refuter~$\calB$ on $N$ equations by
partitioning them into $R$ disjoint blocks of size~$q$, independently
running $\calB_0$ on each block to obtain bits
$X_1,\ldots,X_R$, and outputting their majority.

For soundness, suppose some $x^*\in\{0,1\}^s$ satisfies at least a
$1/2+\gamma$ fraction of all $N$ equations. Let $a_r$ be the fraction
of equations in block~$r$ satisfied by $x^*$. Then
\[
    \frac1R\sum_{r=1}^R a_r\ge\frac12+\gamma.
\]
Writing $p_r=\Pr[X_r=1]$, the preceding argument gives
\[
    p_r
    \le
    1-a_r+\frac{\gamma}{3}+\frac{\gamma}{10}.
\]
Therefore
\[
\begin{aligned}
    \frac1R\sum_{r=1}^R p_r
    \le
    1-\frac1R\sum_{r=1}^R a_r
      +\frac{\gamma}{3}+\frac{\gamma}{10} \le\frac12-\frac{17\gamma}{30}.
\end{aligned}
\]
For a fixed soundness instance, the variables
$X_1,\ldots,X_R$ are independent because independent internal
randomness is used across the blocks. Hence Hoeffding's inequality
gives
\[
\begin{aligned}
    \Pr[\calB=1]
    \le
    \exp\left(-2R\left(\frac{17\gamma}{30}\right)^2\right) \le
    \exp\left(-\frac{R\gamma^2}{242}\right)\le\frac{1}{3}.
\end{aligned}
\]

For completeness, suppose all right-hand sides are independent
uniform bits. Since the blocks are disjoint and use independent
internal randomness, $X_1,\ldots,X_R$ are independent. Moreover,
for every $r\in[R]$,
\[
    \Pr[X_r=1]\ge\frac12+\frac{\gamma}{22}.
\]
Thus Hoeffding's inequality gives
\[
    \Pr[\calB=0]
    \le
    \exp\left(-2R\left(\frac{\gamma}{22}\right)^2\right)
    =
    \exp\left(-\frac{R\gamma^2}{242}\right)
    \le\frac13.
\]
Therefore $\calB$ has soundness at most~$1/3$ and completeness at
least~$2/3$. The total number of equations is
\[
    N=Rq
    =
    O\left(\frac{m_0}{\gamma^3}\right)
    =
    \mathrm{poly}(s,t,1/\gamma).
\]
Moreover, $N=Rq\ge R\ge 2/\gamma^2$, as required by
\autoref{def:strong-rrhs}.
\paragraph{Strong RRHS-refutation $\implies$ Agnostic PAC-learning.}
This is the more interesting direction. We follow the strategy in~\cite{Vadhan17} --- applying Yao's next-bit prediction \cite[Lemma~6]{Vadhan17} --- but the analysis must account for the fact that the labels~$b_i$ need not come from any concept~$C_x$.

Let $\calB$ be the strong RRHS-refuter for~$\mathscr{C}^*$ using $N = N(s, t, 1/\gamma)$ equations with threshold~$\gamma$. It suffices to consider $0<\gamma<1/4$. We construct an agnostic PAC-learner for~$\mathscr{C}$ with error $3\gamma$.

Fix any distribution~$\calD$ on $\{0,1\}^t \times \{0,1\}$ with
$\max_x \Pr_{(y,b)\leftarrow\calD}[C_x(y) = b] \ge 1/2 + 2\gamma$, and let $(Y_1, B_1), \ldots, (Y_N, B_N) \leftarrow \calD$ be i.i.d.~samples. We first observe that the refuter~$\calB$ can be turned into a distinguisher between real and random labels. Define
\[
  T\!\big((y_1, \ldots, y_N),\, (b_1, \ldots, b_N)\big)
  \;:=\;
  1 - \calB\!\big(1^s, 1^t, (y_1, b_1), \ldots, (y_N, b_N)\big),
\]
so that $T$ outputs~$1$ when $\calB$ does not refute. Write $\mathbf{Y} = (Y_1, \ldots, Y_N)$, $\mathbf{B} = (B_1, \ldots, B_N)$, and let $\mathbf{R} = (R_1, \ldots, R_N)$ be independent uniform bits, also independent of $(\mathbf{Y},\mathbf{B})$.

\begin{claim}\label{claim:distinguishing}
  $T$ distinguishes $(\mathbf{Y}, \mathbf{B})$ from $(\mathbf{Y}, \mathbf{R})$ with advantage $\ge 1/4$.
\end{claim}

\begin{proof}[Proof of \autoref{claim:distinguishing}]
  \emph{When bits are real (i.e., $\mathbf{B}$):}
  Let $x^*$ achieve $\Pr_{(y,b)\leftarrow\calD}[C_{x^*}(y) = b] \ge 1/2 + 2\gamma$. By Hoeffding's inequality and the lower bound on $N$ in~\autoref{def:strong-rrhs},
  \[
    \Pr\!\bigg[\frac{1}{N}\big|\{i : C^*_{Y_i}(x^*) = B_i\}\big| \ge \frac{1}{2} + \gamma\bigg]
    \;\ge\; 1-e^{-2N\gamma^2}
    \;\ge\; \frac{11}{12}.
  \]
  When this event holds, the soundness of the strong RRHS-refuter gives $\Pr[\calB = 1] \leq 1/3$, hence $\Pr[T = 1] \ge 2/3$. Therefore
  \[
    \Pr\!\big[T(\mathbf{Y}, \mathbf{B}) = 1\big]
    \;\ge\; \frac{11}{12} \cdot \frac{2}{3}
    \;=\; \frac{11}{18}.
  \]

  \emph{When bits are random (i.e., $\mathbf{R}$):}
  By completeness, $\Pr[\calB = 1] \ge 2/3$, so $\Pr[T(\mathbf{Y}, \mathbf{R}) = 1] \leq 1/3$.

  \medskip\noindent
  Thus the distinguishing advantage is
  \[
    \Pr\!\big[T(\mathbf{Y}, \mathbf{B}) = 1\big] - \Pr\!\big[T(\mathbf{Y}, \mathbf{R}) = 1\big]
    \;\ge\; \frac{11}{18} - \frac{1}{3}
    = \frac{5}{18}
    > \frac{1}{4}. \qedhere
  \]
\end{proof}

By Yao's lemma \cite{Yao82} (see \cite[Lemma~6]{Vadhan17}), this distinguisher with advantage $\alpha \ge 1/4$ yields a next-bit predictor~$T'$ satisfying
\[
  \Pr\!\big[T'(\mathbf{Y}, B_1, \ldots, B_{I-1}) = B_I\big]
  \;\ge\; \frac{1 + \alpha/N}{2}
  \;\ge\; \frac{1}{2} + \frac{1}{8N},
\]
where $I \leftarrow [N]$ is uniform. Concretely, $T'$ on input $((y_1, \ldots, y_N), b_1, \ldots, b_{i-1})$ draws independent uniform bits $r_i, r_{i+1}, \ldots, r_N$, runs $T\!\big((y_1, \ldots, y_N),\, (b_1, \ldots, b_{i-1}, r_i, \ldots, r_N)\big)$, and outputs~$r_i$ if $T$ accepts and $1-r_i$ otherwise. Crucially, $T'$ is constructed purely from~$\calB$ and does not depend on the unknown distribution~$\calD$ or any particular concept~$C_{x^*}$.

We now reinterpret $T'$ as a weak agnostic prediction algorithm $\calA_{\mathrm{weak}}$ using $N-1$ labeled examples. Given $(y_1,b_1),\ldots,(y_{N-1},b_{N-1})$ and a challenge point~$y$, choose $I\leftarrow[N]$ uniformly, insert $y$ as the $I$th point in $(y_1,\ldots,y_{N-1})$, and give $T'$ the labels of the points preceding~$y$. The labels of the remaining points are simply ignored. By exchangeability, the resulting sequence of $N$ points is distributed exactly as $\mathbf{Y}$. Thus
\[
  \Pr\!\big[\calA_{\mathrm{weak}} \text{ predicts correctly}\big]
  \;\ge\; \frac{1}{2} + \frac{1}{8N}
\]
whenever
\[
  \min_x\Pr_{(y,b)\leftarrow\calD}[C_x(y)\ne b]
  \le \frac12-2\gamma.
\]
Set $\eta=1/(8N)$; note that $0<\eta/4<2\gamma<1/2$ by the lower bound on~$N$. In the terminology of~\cite[Definition~4]{KMV08}, $\calA_{\mathrm{weak}}$ is an $(N-1,2\gamma,\eta)$-expected weak agnostic guesser. By~\cite[Lemma~4]{KMV08}, it yields an $(M,2\gamma,\eta/4)$-weak agnostic learner (with the guesser's internal randomness fixed in the resulting hypotheses), where
\[
  M
  =\frac{N-1}{\eta}
   +\frac{32}{\eta^2}\log\frac{20}{\eta}
  =O(N^2\log N).
\]
Applying the agnostic boosting theorem~\cite[Theorem~3]{KMV08} with accuracy parameter~$\gamma/2$ and failure probability~$\gamma/2$, we obtain a hypothesis~$h$ such that, except with probability at most~$\gamma/2$,
\[
  \Pr_{(y,b)\leftarrow\calD}[h(y)\ne b]
  \le
  \min_x\Pr_{(y,b)\leftarrow\calD}[C_x(y)\ne b]
  +\frac{5\gamma}{2}.
\]
Using the trivial error bound~$1$ on the failure event, the resulting prediction algorithm has expected error at most the optimum plus~$3\gamma$. The sample complexity is polynomial in $N$; here we use again that $N=\Omega(1/\gamma^2)$. Taking $\gamma=\eps/3$ proves the result. \qedhere
\end{proof}

\section{Refuter Implies Learner}\label{appendix: refuter-implies-learner}

In this appendix, we justify our definition of nondeterministic PAC learning
(\autoref{def: nondet PAC learning}) by showing that our nondeterministic refuter notion~(\autoref{def: nondet PAC learning})
yields PAC learning with nondeterministic hypotheses in the sense
of~\cite{Zhang2022NondeterministicVariants}. The learner construction and
hybrid argument below are the same as in
\cite[Lemma~4.2.1]{Zhang2022NondeterministicVariants}, which reformulates
\cite[Theorem~7]{BFKL93}. We include the proof to record the parameters in
our setting; averaging directly over the random hybrid index gives confidence
$p/L$ and accuracy $1/2+p/(2L)$. We begin by restating
the two definitions for the reader's convenience.

\begin{definition}[PAC Learning with Nondeterministic Hypotheses~{\cite[Definition~4.1.1]{Zhang2022NondeterministicVariants}}]
Let $\ckt$ be a class of Boolean functions and let
$\calD=(\calD_n)_{n\ge1}$ be a distribution ensemble, where $\calD_n$ is
supported on $\{0,1\}^n$.
A randomized $\P/\poly$ algorithm $\mathfrak{L}$ with oracle access to $f$ and sample access to $\calD_n$ is a \emph{learner with
nondeterministic hypothesis} for $(\ckt,\calD)$ at confidence $\delta$ and
accuracy $1-\eps$ if for every $f\in\ckt$ and all sufficiently large $n$:
\begin{itemize}
    \item $\mathfrak{L}^f(1^n,w)$ outputs a string $\abra{h}$ which is a
    description of a nondeterministic (or co-nondeterministic) circuit
    $h_w:\{0,1\}^n \to \{0,1\}$ of size $\poly(n)$;
    \item with probability $\ge \delta$ over the random tape $w$, the
    produced hypothesis satisfies
    $\Pr_{x\gets \calD_n}[h_w(x)=f(x)]\ge 1-\eps$.
\end{itemize}
We require $h_w = \mathfrak{L}^f(1^n,w)$ to be
evaluable\footnote{For every $f$ and $w$,
$h_w=\mathfrak{L}^f(1^n,w)$ is $\P/\poly$-evaluable if there is another
circuit family $\Eval\in\P/\poly$ such that for every possible output
$h_w=\mathfrak{L}^f(1^n,w)$ of $\mathfrak{L}$ and every
$x\in\{0,1\}^n$ given as input to $\Eval$, $\Eval$ computes $h_w(x)$.
When $h_w$ is (co-)nondeterministic or randomized, $\Eval$ is allowed to be
(co-)nondeterministic or randomized, respectively.} by an algorithm of
matching nondeterministic type (this is automatic when $h_w$ is itself a
circuit of polynomial size).
\end{definition}

\begin{definition}[Nondeterministic PAC Learning; restatement of~\autoref{def: nondet PAC learning}]
\label{def:nondet-pac-restated}
Let $\ckt$ be a class of Boolean functions, $L\ge 1$, $p\in (0, 1)$, and
let $\calD=(\calD_n)_{n\ge1}$ be a distribution ensemble, where $\calD_n$
is a distribution over $\{0, 1\}^n$. A nondeterministic algorithm $\calA$
is said to \emph{PAC-learn} $\ckt$ over input distribution $\calD_n$ using
$L$ samples with success probability $p$ if:
    \begin{itemize}
        \item {\it (Soundness)} for every $x_1, \dots, x_L\in \{0, 1\}^n$
        and labels $b_1, \dots, b_L \in \{0, 1\}$, if there exists a
        function $C \in \ckt$ such that $C(x_i) = b_i$ for every $i\in [L]$,
        then $\calA((x_1, b_1), \dots, (x_L, b_L))$ rejects.
        \item {\it (Completeness)} If we sample i.i.d.~$x_i\gets \calD_n$
        and $b_i \gets \{0, 1\}$, then
        \[\Pr[\calA((x_1, b_1), \dots, (x_L, b_L))\text{ accepts}] \ge p.\]
    \end{itemize}
    We also say that a nondeterministic algorithm $\calA$
    \emph{distribution-freely} PAC-learns $\ckt$ with success probability
    $p$ if it also satisfies the following stronger (Completeness) condition:
    \begin{itemize}
        \item {\it (Completeness')} For every
        $x_1, \dots, x_L\in \{0, 1\}^n$, if we choose i.i.d.~labels
        $b_1, \dots, b_L \gets \{0, 1\}$, then
        \[\Pr[\calA((x_1, b_1), \dots, (x_L, b_L))\text{ accepts}]\ge p.\]
    \end{itemize}
\end{definition}

\begin{theorem}[Refuter Implies Learner, adapted from~{\cite[Theorem~7]{BFKL93}}
and~{\cite[Lemma~4.2.1]{Zhang2022NondeterministicVariants}}]
Let $L=L(n)$ be polynomially bounded and let $p=p(n)\in(0,1)$. Let $\calA$
be a nondeterministic polynomial-time refuter parameterized by
$(\ckt,\calD_n,L,p)$ as in~\autoref{def:nondet-pac-restated}. Then there is a
randomized $\P/\poly$ learner $\mathfrak{L}$ for $(\ckt,\calD)$ with
confidence $\delta \ge p/L$ and accuracy
$1-\eps\ge 1/2+p/(2L)$, where each output hypothesis is either a
nondeterministic or co-nondeterministic circuit, and $\mathfrak{L}$ makes at
most $L-1$ queries to $f$.
\end{theorem}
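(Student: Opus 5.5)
We follow the hybrid argument of \cite[Theorem~7]{BFKL93} as reformulated in \cite[Lemma~4.2.1]{Zhang2022NondeterministicVariants}. Fix $f\in\ckt$, let $x_1,\dots,x_L\gets\calD_n$ be i.i.d.\ and $r_1,\dots,r_L\gets\{0,1\}$ uniform. For $j\in\{0,\dots,L\}$ define the hybrid $H_j$ as the labelled sample $((x_1,f(x_1)),\dots,(x_j,f(x_j)),(x_{j+1},r_{j+1}),\dots,(x_L,r_L))$, and set $a_j:=\Pr[\calA(H_j)\text{ accepts}]$. By completeness $a_0\ge p$, and by soundness $a_L=0$, since $H_L$ is labelled consistently by $f\in\ckt$. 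Telescoping gives $\Ex_{j\gets[L]}[a_{j-1}-a_j]\ge p/L$.

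\textbf{The learner.} $\mathfrak{L}$ picks $j\gets[L]$, draws $x_1,\dots,x_L\gets\calD_n$ and $r_{j+1},\dots,r_L$, and queries $f$ on $x_1,\dots,x_{j-1}$ (at most $L-1$ queries). It discards $x_j$, which will be replaced by the test point, and picks a guess bit $c\gets\{0,1\}$. Let $S_j(x,b)$ denote the sample with true labels on positions $<j$, the pair $(x,b)$ at position $j$, and random labels after. The hypothesis is
\[h_w(x):=\begin{cases}1-c & \text{if } \calA(S_j(x,c)) \text{ accepts},\\ c & \text{otherwise.}\end{cases}\]
Evaluating $h_w$ means deciding whether $\calA$ accepts, which is an $\NP$ predicate. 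Hence $h_w$ is a nondeterministic circuit when $c=0$; when $c=1$ it is its negation, a co-nondeterministic circuit. Either way it is $\P/\poly$-evaluable of the matching type, with $w$ hardwired.

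\textbf{Analysis.} With everything but $x$ fixed, let $A_b(x)$ be the acceptance indicator of $\calA(S_j(x,b))$. A uniform $c$ equals $f(x)$ with probability $1/2$. Comparing the case $c=f(x)$ with the case $c\ne f(x)$ gives
\[\Pr[h_w(x)=f(x)] = \tfrac12 + \tfrac12\pbra{\Ex A_{\text{rand}}-\Ex A_{f}} = \tfrac12+\tfrac12(a_{j-1}-a_j)\]
after averaging. Here $A_{\text{rand}}$ is $A_c$ with uniform $c$, which is exactly hybrid $H_{j-1}$, and $A_f$ is hybrid $H_j$. So the averaged accuracy of $h_w$ is at least $1/2+p/(2L)$.

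\textbf{Main obstacle.} The hard step is converting this averaged advantage into the statement's form: confidence $\delta\ge p/L$ over $w$, with accuracy $\ge 1/2+p/(2L)$ for each good $w$. A plain average does not directly yield both simultaneously. I would first try defining the random variable $Z(w):=2\pbra{\Pr_x[h_w(x)=f(x)]-1/2}\in[-1,1]$, whose mean is at least $p/L$. I would then use a reverse-Markov argument, possibly strengthened by noting that $a_{j-1}-a_j\ge0$ need not hold termwise. If this loses constants, the fallback is the cleaner bound $\Pr_w[Z(w)\ge p/(2L)]\ge p/(2L)$. I would then check whether the stated parameters follow by slightly adjusting, e.g., by conditioning on the event that $c$ equals the correct guess, which doubles the bias on the good half. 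This bookkeeping is where I expect the real work to be.
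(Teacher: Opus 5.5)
Your learner and hybrids are exactly the paper's: your guess bit $c$ plays the role of $r_i$, and $h_w(x)=c\oplus[\calA(S_j(x,c))\text{ accepts}]$. However, your per-index accuracy computation loses a factor of $2$, and this slip is the entire source of the ``main obstacle'' you describe. Splitting on whether $c=f(x)$ gives $\Pr[h_w(x)=f(x)]=\tfrac12(1-\Ex A_{f(x)})+\tfrac12\Ex A_{1-f(x)}$. The second term is acceptance with the \emph{flipped} label, not with a uniform label. You identified it with $a_{j-1}$. But $H_{j-1}$ is the equal mixture of $H_j$ and $H_j$ with its $j$-th label flipped, so in fact $\Ex A_{1-f(x)}=2a_{j-1}-a_j$. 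Hence $\Pr[h_w(x)=f(x)]=\tfrac12+(a_{j-1}-a_j)$ for fixed $j$. Averaging over $j$ gives $\Ex_w[X(w)]\ge\tfrac12+\tfrac pL$ for $X(w):=\Pr_{x\gets\calD_n}[h_w(x)=f(x)]$, which is twice the advantage you claimed.

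With the correct mean, the final step is a one-line reverse Markov bound, as in the paper. Let $\delta:=\Pr_w[X\ge\tfrac12+\tfrac p{2L}]$. Since $X\le1$, one has $\tfrac12+\tfrac pL\le\delta+(1-\delta)(\tfrac12+\tfrac p{2L})$, so $\delta\ge\frac{p/(2L)}{1/2-p/(2L)}\ge\frac pL$.

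From your halved mean $\tfrac12+\tfrac p{2L}$, by contrast, nothing nontrivial follows about $\Pr_w[X\ge\tfrac12+\tfrac p{2L}]$. Your fallback only gives confidence $p/(2L)$ at accuracy $\tfrac12+\tfrac p{4L}$, which is strictly weaker than the statement. The suggested repair of ``conditioning on $c$ being the correct guess'' is not available either. The bit $c$ is fixed in $w$, whereas $f(x)$ varies with the test point $x$, so the learner cannot condition on $c=f(x)$.

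As written, then, the proposal does not establish the stated parameters. Correcting the identification of the flipped-label hybrid is what closes the gap.
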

\begin{proof}
We will construct the learner with access to the refuter.

\begin{algorithm2e}[H]
\DontPrintSemicolon
\SetAlgoLined
\caption{The randomized oracle learner $\mathfrak{L}^f(1^n)$}
\label{alg:learner}

\KwIn{$1^n$ and oracle access to $f$ and sample access to $\calD_n$}
\KwOut{A hypothesis $h$}

Sample $i \gets [L]$ uniformly at random\;
Sample $r_1,\ldots,r_L \gets \{0,1\}$ independently and uniformly\;
Sample $x_1,\ldots,x_{i-1},x_{i+1},\ldots,x_L \gets \mathcal{D}_n$ independently\;
Query the oracle to obtain $f(x_1),\ldots,f(x_{i-1})$\;

\Return the hypothesis $h$ defined, for each $x \in \{0,1\}^n$, by
\[
\begin{aligned}
h(x)
=
r_i \oplus
\mathcal{A}\bigl(
  &(x_1,f(x_1)),\ldots,(x_{i-1},f(x_{i-1})), (x,r_i),\\
  &(x_{i+1},r_{i+1}),\ldots,(x_L,r_L)
\bigr).
\end{aligned}
\]
\end{algorithm2e}
Note that for nondeterministic $\calA$, ``$\calA(\cdot) = 1$'' is an
$\exists$-statement over the witness.
\begin{itemize}
    \item If $r_i=0$, then $h(x) = \calA(\cdots)$, so $h(x)=1$ iff $\calA$
    has an accepting path, i.e.,~$h$ is nondeterministic.
    \item If $r_i=1$, then $h(x)= 1\oplus \calA(\cdots)$, so $h(x)=1$ iff
    $\calA$ has no accepting path, i.e.,~$h$ is co-nondeterministic.
\end{itemize}
Since $L$ is polynomially bounded and $\calA$ runs in nondeterministic
polynomial time, hardwiring the sampled points, labels, and oracle answers
into a circuit for $\calA$ gives a polynomial-size hypothesis, and its
description can be produced by a randomized $\P/\poly$ learner.

Now we define $(L+1)$ hybrid distributions. For
$j\in \cbra{0,1,\dots, L}$, define the random variable
\begin{align*}
    H_j:=\pbra{(x_1,f(x_1)),\dots,(x_j,f(x_j)),(x_{j+1},r_{j+1}),\dots,(x_L,r_L)},
\end{align*}
where $x_k\gets \calD_n$ i.i.d.~and $r_k\gets \{0,1\}$ uniform i.i.d.

Let
\begin{align*}
    \alpha_j:=\Pr[\calA(H_j)\text{ accepts}].
\end{align*}
Note that for $H_0$, all labels are uniformly at random and are independent
of $x_k$'s, so $\alpha_0\ge p$ by completeness. Also, since $H_L$ is
realizable by $f\in \ckt$, $\alpha_L = 0$ by soundness. Hence, by
telescoping, we have
\begin{align*}
    \sum^{L-1}_{i=0}(\alpha_i-\alpha_{i+1}) = \alpha_0 -\alpha_L \ge p.
\end{align*}
We now proceed to bound the accuracy of the hypothesis at a fixed $i$.

Fix $i\in [L]$ chosen by the learner. Consider the experiment in which
$x\gets \calD_n$ is the evaluation point and $r_i$, $x_k$ $(k \neq i)$,
$r_k$ $(k>i)$ are sampled as in the construction. The string fed to $\calA$
when computing $h(x)$ is
\begin{align*}
    Z:= \pbra{(x_1,f(x_1)),\dots,(x_{i-1},f(x_{i-1})),(x,r_i),(x_{i+1},r_{i+1}),\dots,(x_L,r_L)}.
\end{align*}
Conditioning on $r_i = f(x)$, $Z$ is distributed as $H_i$ (recall that
$x\gets \calD_n$ in our experiment). Hence
\[\Pr[\calA(Z) = 1 \mid r_i = f(x)] = \alpha_i.\]
On the other hand, let $\widetilde{H}_i$ denote the distribution of $Z$
conditioning on $r_i \neq f(x)$. That is, $\widetilde{H}_i$ is $H_i$ with
the $i$-th label flipped. The distribution of $H_{i-1}$ is the equal mixture
of $H_i$ and $\widetilde{H}_i$. Hence, we have
\begin{align*}
    \alpha_{i-1} = \alpha_i/2 + \Pr_{\widetilde{H}_i}[\calA = 1]/2,
\end{align*}
or in other words,
\begin{align*}
    \Pr_{\widetilde{H}_i}[\calA = 1] = 2\alpha_{i-1} - \alpha_i.
\end{align*}
Since $h(x) = f(x)$ iff $\calA(Z) = r_i \oplus f(x)$, we have
\begin{align*}
    \Pr[h(x) = f(x)] &= \Pr[\calA(Z) = 0 \mid r_i = f(x)]/2 + \Pr[\calA(Z)=1 \mid r_i\neq f(x)]/2 \\
    & = (1-\alpha_i)/2 + (2\alpha_{i-1}-\alpha_i)/2 \\
    & = 1/2 + (\alpha_{i-1}-\alpha_i).
\end{align*}

Let $w$ denote all the randomness used by the learner, including the choice
of $i$. Averaging also over an independent $x\gets\calD_n$, we obtain
\begin{align*}
    \Ex_{w,x}\sbra{\mathbf{1}[h(x) = f(x)]}
    = \frac{1}{L}\sum^L_{i=1}\pbra{\frac{1}{2}+\alpha_{i-1}-\alpha_i}
    = \frac{1}{2} + \frac{\alpha_0 - \alpha_L}{L}
    \ge \frac{1}{2} + \frac{p}{L}.
\end{align*}

Let $X = X(w):= \Pr_{x\gets \calD_n}[h_w(x) = f(x)]$ be a random variable
in the learner's randomness $w$. We have $\Ex[X] \ge 1/2 + p/L$. Writing
\[
\delta := \Pr\left[X\ge \frac12+\frac{p}{2L}\right],
\]
since $0\le X\le1$, we have
\begin{align*}
\frac12+\frac pL
\le \Ex[X]
\le \delta\cdot1+(1-\delta)\left(\frac12+\frac{p}{2L}\right).
\end{align*}
Therefore,
\begin{align*}
\delta \ge \frac{p/(2L)}{1/2-p/(2L)} \ge \frac{p}{L}.
\end{align*}

Hence, with probability $\ge p/L$ over the learner's randomness, the
produced hypothesis agrees with $f$ on a $(1/2+p/(2L))$ fraction of inputs.
\end{proof}

\newpage
\listoffixmes
\end{document}